\documentclass[11pt,b5paper,notitlepage]{article}
\usepackage[b5paper, margin={0.5in,0.65in}]{geometry}
\usepackage{amsmath,amscd,amssymb,amsthm,mathrsfs,amsfonts,layout,indentfirst,graphicx,caption,mathabx, stmaryrd,appendix,calc,imakeidx,upgreek} 
\usepackage[dvipsnames]{xcolor}
\usepackage{palatino}  
\usepackage{slashed} 
\usepackage{mathrsfs} 
\usepackage{extarrows} 
\usepackage{enumitem} 

\usepackage{csquotes} 
\usepackage{epigraph}   

\usepackage{fancyhdr} 

\usepackage{ulem}  

\usepackage{relsize} 

\usepackage{verbatim}  

\usepackage{halloweenmath} 

\usepackage{tcolorbox}
\tcbuselibrary{theorems}

\usepackage{pdfrender}

\usepackage{tikz}
\usetikzlibrary{fadings}
\usetikzlibrary{patterns}
\usetikzlibrary{shadows.blur}
\usetikzlibrary{shapes}

\usepackage{tikz-cd}
\usepackage[nottoc]{tocbibind}   

\usepackage{float}

\usepackage{lipsum}  
\let\OLDthebibliography\thebibliography
\renewcommand\thebibliography[1]{
	\OLDthebibliography{#1}
	\setlength{\parskip}{0pt}
	\setlength{\itemsep}{2pt} 
}

\allowdisplaybreaks  
\usepackage{latexsym}
\usepackage{chngcntr}
\usepackage[colorlinks,linkcolor=blue,anchorcolor=blue, linktocpage,
]{hyperref}
\hypersetup{ urlcolor=cyan,
	citecolor=[rgb]{0,0.5,0}}

\counterwithin{figure}{section}

\theoremstyle{definition}
\newtheorem{df}{Definition}[section]
\newtheorem{eg}[df]{Example}

\newtheorem{rem}[df]{Remark}

\newtheorem{cv}[df]{Convention}

\newtheorem{que}[df]{Question}
\newtheorem{prob}[df]{Problem}

\theoremstyle{plain}
\newtheorem{thm}[df]{Theorem}

\newtheorem{pp}[df]{Proposition}
\newtheorem{co}[df]{Corollary}
\newtheorem{lm}[df]{Lemma}

\newtheorem{mrs}[df]{Main Result}

\usepackage{calligra}
\DeclareMathOperator{\shom}{\mathscr{H}\text{\kern -3pt {\calligra\large om}}\,}
\DeclareMathOperator{\sext}{\mathscr{E}\text{\kern -3pt {\calligra\large xt}}\,}
\DeclareMathOperator{\Rel}{\mathscr{R}\text{\kern -3pt {\calligra\large el}~}\,}
\DeclareMathOperator{\sann}{\mathscr{A}\text{\kern -3pt {\calligra\large nn}}\,}
\DeclareMathOperator{\send}{\mathscr{E}\text{\kern -3pt {\calligra\large nd}}\,}
\DeclareMathOperator{\stor}{\mathscr{T}\text{\kern -3pt {\calligra\large or}}\,}

\usepackage{aurical}
\DeclareMathOperator{\VVir}{\text{\Fontlukas V}\text{\kern -0pt {\Fontlukas\large ir}}\,}

\newcommand{\fk}{\mathfrak}
\newcommand{\mc}{\mathcal}
\newcommand{\wtd}{\widetilde}
\newcommand{\wht}{\widehat}

\newcommand{\ovl}{\overline}

\newcommand{\End}{\mathrm{End}} 
\newcommand{\idt}{\mathbf{1}}
\newcommand{\id}{\mathrm{id}}
\newcommand{\Hom}{\mathrm{Hom}}

\newcommand{\op}{\mathrm{op}}
\newcommand{\Rep}{\mathrm{Rep}}

\newcommand{\RepGA}{\mathrm{Rep}^G(\mathcal A)}

\newcommand{\Diff}{\mathrm{Diff}}

\newcommand{\DiffS}{\mathrm{Diff}^+(\mathbb S^1)}
\newcommand{\PSU}{\mathrm{PSU}(1,1)}
\newcommand{\Vir}{\mathrm{Vir}}

\newcommand{\bk}[1]{\langle {#1}\rangle}

\newcommand{\GA}{\mathscr G_{\mathcal A}}

\newcommand{\scr}{\mathscr}

\newcommand{\Jtd}{\widetilde{\mathcal J}}
\newcommand{\gk}{\mathfrak g}

\newcommand{\Ad}{\mathrm{Ad}}

\newcommand{\im}{\mathbf{i}}
\newcommand{\Co}{\complement}

\newcommand{\RepA}{\mathrm{Rep}(\mathcal A)}

\newcommand{\mbb}{\mathbb}
\newcommand{\mbf}{\mathbf}

\newcommand{\Cbb}{\mathbb C}

\newcommand{\Zbb}{\mathbb Z}

\newcommand{\Rbb}{\mathbb R}

\renewcommand{\Bbb}{\mathbb B}

\newcommand{\UPSU}{\widetilde{\mathrm{PSU}}(1,1)}
\newcommand{\Sbb}{{\mathbb S}}
\newcommand{\Gc}{\mathscr G_c}

\newcommand{\bpr}{{}^\backprime}

\newcommand{\cl}{\mathrm{cl}}

\newcommand{\rk}{\mathfrak r}

\newcommand{\Jbb}{\mathbb J}

\newcommand{\Aut}{\mathrm{Aut}}

\newcommand{\Int}{\mathrm{Int}}

\newcommand{\eps}{\varepsilon}
\newcommand{\MA}{\mathcal A}

\newcommand{\MH}{\mathcal H}
\newcommand{\MJ}{\mathcal J}
\newcommand{\MU}{\mathcal U}
\newcommand{\MK}{\mathcal K}
\newcommand{\MM}{\mathcal M}
\newcommand{\MO}{\mathcal O}
\newcommand{\MN}{\mathcal N}
\newcommand{\SG}{\mathscr G}
\newcommand{\Maa}{{\mathcal A^{\otimes 2}}}
\newcommand{\sqz}{{\sqrt 0}}
\newcommand{\sbsc}[1]{_{\raisebox{-0.6ex}{$\scriptstyle {#1}$}}}

\newcommand{\MHcl}{\mathcal H_{\mathrm{cl}}}
\newcommand{\Qbf}{\mathbf Q}
\newcommand{\Roo}{\mathbb R^{1,1}}
\newcommand{\Rcl}{\mathbb R^{1,1}_{\mathrm{cl}}}
\newcommand{\Rop}{\mathbb R^{1,1}_{\mathrm{op}}}
\newcommand{\Rpp}{\mathbb R^{1,1}_{(-\pi,\pi)}}
\newcommand{\Rzp}{\mathbb R^{1,1}_{(0,\pi)}}
\newcommand{\Bcl}{\mathcal B_{\mathrm{cl}}}
\newcommand{\Opp}{\mathcal O_{(-\pi,\pi)}}
\newcommand{\Ozp}{\mathcal O_{(0,\pi)}}
\newcommand{\ZOpp}{\sqrt{\mathcal O_{(-\pi,\pi)}}}
\newcommand{\Hcl}{\mathcal H_{\mathrm{cl}}}
\newcommand{\SL}{\mathscr L}
\newcommand{\Ucl}{U_{\mathrm{cl}}}
\newcommand{\SA}{\scr A}
\newcommand{\SB}{\scr B}
\newcommand{\Std}{\widetilde {\mathbb S}^1}
\newcommand{\Jcl}{\mathfrak J_{\mathrm{cl}}}
\newcommand{\Bop}[1]{\mathcal B_{\mathrm{op}({#1})}}
\newcommand{\Bopi}{\mathcal B_{\mathrm{op}(i)}}
\newcommand{\Bopj}{\mathcal B_{\mathrm{op}(j)}}
\newcommand{\Hop}[1]{\mathcal H_{\mathrm{op}({#1})}}
\newcommand{\Hopij}{\mathcal H_{\mathrm{op}(i,j)}}
\newcommand{\Hopji}{\mathcal H_{\mathrm{op}(j,i)}}
\newcommand{\Hopii}{\mathcal H_{\mathrm{op}(i,i)}}
\newcommand{\opij}{\mathrm{op}(i,j)}
\newcommand{\opii}{\mathrm{op}(i,i)}
\newcommand{\opjj}{\mathrm{op}(j,j)}

\newcommand{\hqed}{\hfill\qedsymbol}

\usepackage{tipa} 

\usepackage{tipx}

\numberwithin{equation}{section}

\title{Minkowskian open/closed conformal field theory possibly without vacuum: the Cardy case}
\author{{\sc Bin Gui}
}
\date{}
\begin{document}\sloppy 
	\pagenumbering{arabic}


	\maketitle

\begin{abstract}
For any conformal net, not necessarily rational, we construct the associated Cardy-type conformal field theory on the Minkowski spacetimes $\Rcl=(\Rbb/2\pi\Zbb)\times\Rbb$ for closed strings and $\Rop=[0,\pi]\times\Rbb$ for open strings within the framework of algebraic quantum field theory. In addition to verifying some of their basic properties, we prove three forms of Haag duality for multi-double-cones and boundary intervals, interpreted respectively as the Minkowskian versions of modular invariance, the Cardy consistency condition, and the Morita equivalence of boundary field algebras.
\end{abstract}

\tableofcontents

\section{Introduction}

\subsection{CFT beyond discrete-type}

One of the ultimate goals in the mathematical study of two-dimensional conformal field theory (CFT) is the rigorous construction of the theory describing both open and closed strings, known as an \textbf{open/closed CFT} or \textbf{full/boundary CFT}. Like ordinary quantum field theories, a CFT can be formulated either on Euclidean or on Minkowskian spacetimes.  

In the Euclidean setting, the theory is defined on compact Riemann surfaces with physical boundaries, together with parametrized boundary circles and intervals, and is required to satisfy Segal's axioms (cf. \cite{Seg88,Seg04}). In the Minkowskian setting, particularly in the framework of algebraic quantum field theory (AQFT), the theory is defined on the Minkowski spacetimes $\Rcl=(\Rbb/2\pi\Zbb)\times\Rbb$ and $\Rop=[0,\pi]\times\Rbb$ and is required to satisfy e.g. the axioms of a Haag--Kastler net in \cite{Haag96}.\footnote{In the standard AQFT literature, the spacetime for the open CFT is usually taken to be the half-plane $[0,+\infty)\times\Rbb$ rather than the strip $[0,\pi]\times\Rbb$ . We will discuss this difference in more detail in Sec. \ref{lb159}.}

Over the past three decades, substantial progress has been made in the rigorous construction of \textit{rational} CFTs, both in the Euclidean and Minkowskian settings. In the Euclidean setting, the construction of rational CFTs using vertex operator algebras (VOAs) has been achieved in genus 0 and 1 in \cite{HK07,Kong07,Kong08a,Kong08b}; see \cite{KR10,Kong11} for surveys. For Riemann surfaces of arbitrary genus, a complete construction remains open, but a topological field theory (TFT) approach has been developed in \cite{FFFS02,FRS02,FRS05,FFRS06}. In the Minkowskian setting, the construction of rational CFTs using conformal nets has been carried out in \cite{LR95,KL04,LR04,LR09,CKL13,BKL15}; see \cite{KR10,Kaw15} for reviews. In all these approaches, the method of Q-systems ($\approx$ Frobenius algebras) plays a crucial role.

Beyond rational CFTs, recent progress has also been made in the construction of Euclidean closed CFTs (cf. \cite{Mor20,Mor23}) and Minkowskian closed CFTs (cf. \cite{AGT23,AGT25}). The theories constructed in these works, while not necessarily rational, are still \textit{discrete} in the sense that their state spaces are given by finite or infinite direct sums, rather than direct integrals, of irreducible representations of the tensor product of the chiral and antichiral algebras.

However, the frameworks developed in the previously mentioned approaches do not encompass (possibly) non-discrete CFTs, e.g., those whose closed-string state space $\MH_\cl$ admits a direct integral
\begin{align*}
\MH_\cl\simeq\int_X^\oplus \MH_x\otimes_\Cbb\MK_x~d\mu(x)
\end{align*}
where $(X,\mu)$ is a measure space and $\MH_x,\MK_x$ are (irreducible) representations of the chiral and antichiral algebras. The most prominent example is Liouville CFT, which has recently been constructed rigorously in the Euclidean setting using probabilistic methods; see \cite{KRV20,GKRV24,GKRV21} for the closed theory, and \cite{RZ22,Wu22,ARS25,ARSZ23,GRW24} for the open theory (whose construction is still being completed).

Liouville CFT is regarded as a Cardy-type open/closed CFT associated with the chiral algebra $\Vir_c$, the unitary Virasoro algebra (or Virasoro conformal net) with central charge $c=1+6Q^2$ where $Q=b+b^{-1}$ (and hence $c\geq 25$). Its closed-string state space decomposes, as a $\Vir_c\otimes\Vir_c$-module, as
\begin{align}\label{eq83}
\Hcl\simeq\int_{\Rbb_{\geq0}}^\oplus \MH_{\frac{Q^2}4+P^2}\otimes\ovl{\MH_{\frac{Q^2}4+P^2}}~dP
\end{align}
where, for each $h\geq0$, $\MH_h$ denotes the Hilbert-space completion of the irreducible positive-energy representation of $\Vir_c$ with lowest weight $h$. Besides its continuous spectrum, a remarkable feature of Liouville CFT is that \uwave{the vacuum vector does not belong to the state space $\Hcl$}. In fact, the vacuum vector (which has conformal weight $0$)  does not even exist as a distributional state in $\Hcl$, since the lowest weight of $\Hcl$ is strictly positive.

By contrast, in all previous approaches based on conformal nets or VOAs, the state space contains a vacuum vector. This raises the natural question of whether Liouville CFT can be accommodated within either framework. Since both conformal nets and VOAs have traditionally served as frameworks for general CFTs rather than for particular classes of examples, one is led to ask:

\begin{que}\label{lb152}
Using either the theory of conformal nets or VOAs, and in either the Minkowskian or Euclidean setting, can one develop a general framework for open/closed CFTs in which both discrete-type CFTs (such as rational CFTs) and continuous-type CFTs (whose state spaces may not contain a vacuum vector) can be studied within a unified setting?
\end{que}

\subsection{A general framework for Minkowskian open/closed CFT}\label{lb163}

The goal of the present paper is to provide a positive answer to Question \ref{lb152} for Minkowskian CFTs using the conformal net method, at least for Cardy-type CFTs. See \cite{HT26} for an approach to Euclidean CFTs of Cardy-type, also based on conformal nets. We expect that, by extending the results for conformal nets to nonlocal chiral (i.e. one-dimensional) nets, the framework developed here can eventually be generalized to non-Cardy-type Minkowskian CFTs.

To introduce the framework of this paper and its main results, we first sketch what a Minkowskian open/closed CFT (not necessarily of Cardy-type) should look like, adapting part of the general picture of \cite{FRS02,FRS05,FFRS06} to the AQFT setting.

In a Minkowskian CFT, boundary conditions are labeled by symbols $i,j,k,\dots$. The world sheet for the \text{open string} $[0,\pi]$ is the Minkowski space
\begin{align*}
\Rop=[0,\pi]\times\Rbb
\end{align*}
The Hilbert space for the open string with left boundary condition $i$ and right boundary condition $j$ is denoted by $\pmb{\Hopij}$. 

The boundary operators on $\Rbb$ preserving the boundary condition $i$ form a (possibly) non-local chiral net $\pmb{\Bopi}$, called the \textbf{boundary net}. Since the boundary fields on $\partial_+\Rop:=\{0\}\times\Rbb$ preserving the boundary condition $i$ act on $\Hopij$, the Hilbert space $\Hopij$ carries a left action of $\Bopi$. See Thm. \ref{lb114} for the basic properties expected of nonlocal chiral nets, and Thm. \ref{lb119} for the properties that the action of $\Bopi$ on $\Hopij$ is expected to satisfy.

Similarly, since the boundary fields on $\partial_-\Rop:=\{\pi\}\times\Rbb$ preserving the boundary condition $j$ act on $\Hopij$, the Hilbert space $\Hopij$ admits a right action of $\Bopj$, in other words, a left action of the dual net $\Bopj'$ of $\Bopj$; cf. Thm. \ref{lb114}-(6) for the meaning of dual nets. 

The left action of $\Bopi$ and the right action of $\Bopj$ on $\Hopij$ are, in an appropriate sense, mutual commutants. More precisely, they satisfy the \textbf{boundary-boundary Haag duality}, as formulated in Thm. \ref{lb119}. It is in this sense that $\Bopi$ and $\Bopj$ are viewed as Morita equivalent.

If $\ovl{\Hopij}$ denotes the conjugate of $\Hopij$, one expects a canonical equivalence
\begin{subequations}\label{eq80}
\begin{align}
\Hopji\simeq\ovl{\Hopij}
\end{align}
Moreover, one expects a canonical equivalence
\begin{align}
\Hop{i,k}\simeq\Hopij\boxtimes_{\Bopj}\Hop{j,k}
\end{align}
\end{subequations}
where the RHS is the (yet to be defined) Connes fusion product of the right $\Bopj$-module $\Hopij$ and the left $\Bopj$-module $\Hop{j,k}$.

The world sheet for the \text{closed string} $\Sbb^1=\Rbb/2\pi\Zbb$ is the Minkowski space
\begin{align*}
\Rcl=\Sbb^1\times\Rbb
\end{align*}
commonly known as the Einstein cylinder. Let $\pmb{\Hcl}$ denote the state space of the closed string. The bulk fields on $\Rcl$ then form a net of algebras acting on $\Hcl$, denoted by $\pmb{\Bcl}$ and called the \textbf{bulk net}. See Thm. \ref{lb72} for the properties satisfied by this net. 

Besides the properties mentioned in Thm. \ref{lb72}, the net $\Bcl$ should also satisfy the \textbf{bulk-bulk Haag duality}, namely, Haag duality for multi-double-cones in $\Rcl$. See Thm. \ref{lb115} for the precise statement. For unions of two or more double-cones, this duality is equivalent, in the terminology of \cite{BKL15}, to the statement that the $\mu$-index of $\Bcl$ is $1$. At least in the rational case, this condition is equivalent to (genus-one) modular invariance in the Euclidean picture. Indeed, by \cite[Prop. 6.6]{BKL15} and \cite[Thm. 6.7]{Kong08b}, both conditions admit the same formulation in terms of Q-systems (or Frobenius algebras). See also \cite{BCKLM24} for a review of this topic from the perspective of entropies.

Given left and right boundary conditions $i$ and $j$, the bulk fields on the interior $\Int\Rop:=(0,\pi)\times\Rbb$ act on the open-string state space $\Hopij$. Thus, the restriction of the bulk net to $\Int\Rop$ acts on $\Hopij$. See Thm. \ref{lb149} for a list of properties expected of this action. In particular, it should satisfy the \textbf{bulk-boundary Haag duality}; see Thm. \ref{lb149}-(5) and Thm. \ref{lb153} for the precise statement. This form of Haag duality should be regarded as the Minkowskian analogue of the Cardy consistency condition; see \cite[Sec. 4.2]{KR10} for a related discussion.

The interpretations of the three forms of Haag duality discussed above are summarized in Table \ref{tb1}.

\begin{table}[h]
\centering
\begin{tabular}{|c|c|c|c|}
\hline Haag duality& Spacetime & $\begin{array}{c}
\rule{0pt}{0.42cm}\text{Pictorial}\\
\text{illustration}
\end{array}$ & Interpretation   \\
\hline Bulk-bulk   & $\Rcl$ & $\begin{array}{c}
\rule{0pt}{0.42cm}\text{Fig. \ref{fig3}}\\
\text{in Sec. \ref{lb145}}
\end{array}$  & $\begin{array}{c}
\rule{0pt}{0.42cm}\text{Modular invariance}\\
\text{in Euclidean spacetimes}
\end{array}$   \\
\hline Boundary-boundary  & $\Rop$ & $\begin{array}{c}
\rule{0pt}{0.42cm}\text{Fig. \ref{fig4}}\\
\text{near Sec. \ref{lb130}}
\end{array}$ & $\begin{array}{c}
\rule{0pt}{0.42cm}\text{Morita equivalence}\\
\text{of boundary nets}
\end{array}$   \\
\hline Bulk-boundary  & $\Rop$ & $\begin{array}{c}
\rule{0pt}{0.42cm}\text{Fig. \ref{fig5}}\\
\text{in Sec. \ref{lb154}}
\end{array}$ & $\begin{array}{c}
\rule{0pt}{0.42cm}\text{Cardy consistency condition}\\
\text{in Euclidean spacetimes}
\end{array}$  \\
\hline
\end{tabular}
\caption{Three forms of Haag duality}\label{tb1}
\end{table}

\subsection{Main results}

We now consider the case where the CFT is of \textbf{Cardy-type}, meaning that there exists a boundary condition, denoted by $0$, such that, upon setting
\begin{align*}
\MA:=\Bop{0}\qquad\MH_0:=\Hop{0,0}
\end{align*}
$\MA$ is a local chiral net (i.e. a conformal net) with vacuum module $\MH_0$ containing a vacuum vector $\Omega$. Each boundary condition $i$ then determines a right $\MA$-module (also called solitonic $\MA$-module)
\begin{align*}
\MH_i:=\Hop{i,0}
\end{align*}
As in \cite{FFFS02}, we restrict for simplicity to those boundary conditions $i$ for which $\MH_i$ is a local $\MA$-module, simply called an $\MA$-module. Then \eqref{eq80} implies $\Hop{0,j}\simeq\ovl{\MH_j}$, and hence
\begin{align}\label{eq81}
\Hopij\simeq\MH_i\boxtimes_\MA\ovl{\MH_j}
\end{align}

The main results of this paper are summarized as follows:

\begin{mrs}\label{lb157}
Let $\MA$ be any conformal net with vacuum module $\MH_0$. 
\begin{enumerate}[label=(\arabic*)]
\item We construct the closed-string state space $\Hcl$ as an $\Maa$-module. See Def. \ref{lb155}. Moreover, we construct the net $\Bcl$ on the spacetime $\Rcl$, acting on the Hilbert space $\Hcl$. See Def. \ref{lb64} and \ref{lb74}. 
\item We prove that the pair $(\Bcl,\Hcl)$ satisfies the usual properties required of a Haag--Kastler net, except for the existence of a vacuum. See Thm. \ref{lb72}.
\item We prove that $\Bcl$ satisfies the bulk-bulk Haag duality. See Thm. \ref{lb115}.
\end{enumerate}
\end{mrs}

\begin{mrs}\label{lb158}
For each pair of $\MA$-modules $\MH_i,\MH_j$, define the $\MA$-module $\Hopij$ by \eqref{eq81}, i.e., $\Hopij:=\MH_i\boxtimes_\MA\ovl{\MH_j}$.
\begin{enumerate}[label=(\arabic*)]
\item We construct the non-local chiral net $\Bopi$ on the spacetime $\Rbb$, acting on the state space $\Hop{i,i}$. See Def. \ref{lb112}. We prove that the pair $(\Bopi,\Hopii)$ satisfies the usual properties required of a Haag--Kastler net, except for the existence of a vacuum. See Thm. \ref{lb114}.

\item We construct the left and right actions of $\Bopi$ and $\Bopj$ on $\Hopij$. See Def. \ref{lb118}. We prove that these actions satisfy the usual requirements for representations of Haag--Kastler nets, and we prove the boundary-boundary Haag duality. See Thm. \ref{lb119}.

\item We construct the action of $\Bcl$ (restricted to $(0,\pi)\times\Rbb$) on $\Hopij$. See Def. \ref{lb120} and Thm. \ref{lb149}. We show that the joint actions of $\Bopi$, $\Bopj$, and $\Bcl$ on $\Hopij$ satisfy the usual requirements for representations of Haag--Kastler nets, and we prove the bulk-boundary Haag duality. See Thm. \ref{lb149} and \ref{lb153}.
\end{enumerate}
\end{mrs}

Moreover, a brief discussion of the nets of algebroids of boundary operators changing boundary conditions is given in Sec. \ref{lb156}.

\subsection{Proof strategies}\label{lb159}

Since Main Result \ref{lb158} is relatively easier to obtain than Main Result \ref{lb157}, we focus here on the main idea behind the proof of Main Result \ref{lb157}, namely, the construction of the closed CFT. We also explain why the existing techniques from operator algebras and AQFT cannot be applied directly, and how we overcome the resulting difficulties.

Adapting the construction of \cite{LR04}, it is not difficult to construct the action of the bulk net $\Bcl$ (restricted to $(0,\pi)\times\Rbb$) on $\Hop{0,0}=\MH_0$ in terms of relative commutants. Indeed, this construction is simply a reformulation of the bulk-boundary Haag duality for the left and right actions of $\Bop{0}=\MA$ together with the action of $\Bcl$ on $\Hop{0,0}$. 

Already at this stage, however, an important difference emerges between our approach and those taken in most of the AQFT literature. In almost all previous AQFT works, the spacetime for the open-string theory is taken to be $[0,+\infty)\times\Rbb$. In other words, the open string is modeled by the interval $[0,+\infty)$, which has a single boundary component $\{0\}$, rather than by the interval $[0,\pi]$, which has two boundary components. There are several reasons for adopting the latter viewpoint.

First, it is more compatible with the techniques developed in this paper, in particular the theory of crossed categorical extensions of conformal nets. Second, it is consistent with the Euclidean CFT literature \cite{FFFS02,FRS02,FRS05,FFRS06,HK07,Kong07,Kong08a,Kong08b}, where open strings are likewise modeled by intervals with two boundary components. Third, only by working with strings having two boundary components can one formulate the Morita equivalence of non-local nets in terms of boundary-boundary Haag duality, thereby avoiding the machinery of Q-systems/Frobenius algebras.

This brings us to a fundamental difficulty encountered by previous AQFT approaches when moving beyond discrete-type CFTs. Those approaches rely heavily on Q-systems, which work perfectly for rational CFTs, and in some cases also for irrational discrete-type CFTs or QFTs (see, for example, \cite{Mas00,DVG18}). In particular, the bulk net $\Bcl$ on the entire spacetime $\Rcl$, together with the state space $\Hcl$, was constructed in \cite{LR04} using the Longo--Rehren Q-system introduced in \cite{LR95} and later generalized in \cite{Reh00}. Even the construction of $\Bcl$ and $\Hcl$ in \cite{LR09}, which does not explicitly use Q-systems, still relies on the existence of a conditional expectation between a pair of subfactors. However, since the closed-string state space of a general CFT needs not contain a vacuum vector, there is no reason to expect such a conditional expectation to exist.

Our approach in this paper is based on the following simple observation. Let $\fk H$ be a faithful right $\MN$-module where $\MN$ is a von Neumann algebra. Let $\mc N'$ denote the commutant of $\MN$, acting on the left of $\fk H$. By \cite[Sec. IX.3]{Tak03}, the Hilbert space $\fk H\boxtimes_\MN\ovl{\fk H}$ carries a canonical structure of $\MN'$-$\MN'$ bimodule that is unitarily equivalent to $L^2(\MN')$. Moreover, although not stated explicitly there, this unitary equivalence pulls back the modular conjugation on $L^2(\MN')$ to an involutive antiunitary operator $\Theta$ on $\fk H\boxtimes_\MN\ovl{\fk H}$, which, roughly speaking, sends $\xi\boxtimes\ovl\eta$ to $\eta\boxtimes\ovl\xi$. It follows from the Tomita--Takesaki theory that the left and right actions of $\MN'$ on $\fk H\boxtimes_\MN\ovl{\fk H}$ are mutual commutants, and that $\Ad_\Theta$ interchanges the left and right actions.

The construction of the bulk net $\Bcl$ on $\Hcl$ from the action of the bulk net (restricted to $(0,\pi)\times\Rbb$) on $\Hop{0,0}$ resembles the construction of the $\MN'$-$\MN'$ bimodule $\fk H\boxtimes_\MN\ovl{\fk H}$ from the $\MN'$-$\MN$ bimodule $\fk H$. This analogy can be made precise by taking $\MN=\Maa(\wtd I)$ and $\fk H=\MH_\sqz$. Here, $\wtd I$ is an arg-valued interval (cf. Def. \ref{lb160}), and $\MH_\sqz$ is the canonical permutation-twisted $\Maa$-module associated with the $\MA$-module $\MH_0$ (cf. Def. \ref{lb52}), whose sector-theoretic counterpart was first introduced in \cite{LX04}. A crucial difficulty, however, is that $\Hcl$ should be independent of the choice of $\wtd I$, and the action of $\Bcl$ on $\Hcl$ must be compatible with the underlying geometry of both $\Sbb^1$ and $\Rcl$.

This is precisely why, unlike previous approaches in the AQFT literature, we work within the framework of \textbf{categorical extensions of conformal nets}, first introduced in \cite{Gui21a} and later generalized to the twisted setting in \cite{MS26a,MS26b}. This framework is particularly well suited to handling the geometry of $\Sbb^1$, whereas the sector-theoretic approach often requires removing a point from $\Sbb^1$, which is inconvenient in the present context.

However, even with the geometric flexibility provided by the (crossed) categorical extensions of conformal nets, the resulting net $\Bcl$ is initially defined only on $(-\pi,\pi)\times\Rbb$, and the corresponding action of $\Bcl$ on $\Hcl$ satisfies only some local versions of conformal covariance on $(-\pi,\pi)\times\Rbb$, see Sec. \ref{lb144}. The most nontrivial part of the proof of Main Result \ref{lb157}, carried out in Sec. \ref{lb161} and \ref{lb100}, is to extend this construction to a net $\Bcl$ on  the entire spacetime $\Rcl$, acting on $\Hcl$ and satisfying global conformal covariance. In particular, if $O$ is a double cone contained in $(-\pi,\pi)\times\Rbb$, and if, for each $s\in\Rbb$, we let $\Ucl(\varrho_c(s),\varrho_c(-s))$ denote the unitary operator on $\Hcl$ implementing translation by $s$ units along the $x$-direction (as in the main body of this paper), then we should have
\begin{align}\label{eq82}
\Ad_{\Ucl(\varrho_c(2\pi),\varrho_c(-2\pi))}\big(\Bcl(O)\big)=\Bcl(O)
\end{align}

In the AQFT literature, when dealing with rational conformal nets, global conformal covariance such as \eqref{eq82} is typically established by proving $\Ucl(\varrho_c(2\pi),\varrho_c(-2\pi))=1$, either by first obtaining an explicit irreducible decomposition of the $\Maa$-module $\Hcl$ and then verifying the equation componentwise, or by invoking the Bisognano--Wichmann property (cf. \cite[Prop. 2.1]{KL04} or \cite[Thm. A.5]{MT19}). In our approach, neither method is available, due to the absence of a discrete decomposition of the state space and the absence of a vacuum vector. We therefore leave the following problem open.

\begin{prob}\label{lb162}
Let $\MA$ be any conformal net. Prove that  $\Ucl(\varrho_c(2\pi),\varrho_c(-2\pi))$, the unitary operator on $\Hcl$ implementing translation by $2\pi$ along the $x$-axis, is the identity.
\end{prob}

Although we are unable to solve Problem \ref{lb162}, we can nevertheless prove \eqref{eq82} by exploiting the Tomita--Takesaki-type result for $\Theta$ mentioned above, together with the relationship between $\Theta$ and the translation group. In fact, \eqref{eq82} arises as a byproduct of our proof of the PCT symmetry and Haag duality for the double cone $O_\rightarrow$ and its spacelike complement $O_\leftarrow$, illustrated in Fig. \ref{fig2}. The proof, given in Sec. \ref{lb161}, of the PCT symmetry and Haag duality for $O_\rightarrow$ and $O_\leftarrow$, together with the resulting implication that these regions satisfy \eqref{eq82} (cf. Lem.~\ref{lb92}), is, in our view, the most novel part of the construction of open/closed CFT presented in this paper, although it is by no means the most technical.

Having established the results in Sec. \ref{lb161} and proved the global conformal covariance on $\Rcl$, one can readily deduce the bulk-bulk Haag duality for a single double-cone (and its causal complement). Once this case is established, the extension to multi-double-cones follows straightforwardly from the split property of $\MA$, see Sec. \ref{lb145}. In other words, the main difficulty in proving bulk-bulk Haag duality for closed CFT lies in the case of a single double-cone. For rational closed CFTs, this single double-cone Haag duality can still be proved even without modular invariance, because the state space contains a vacuum vector and hence the Bisognano--Wichmann theorem can be applied. In our setting, however, the state space $\Hcl$ does not in general admit a vacuum vector, so the Bisognano--Wichmann theorem is not available. Therefore, the crucial argument in Sec. \ref{lb161} is indispensable.

\subsection{Future directions}

It is, of course, desirable to generalize the constructions in this paper to Minkowskian open/closed CFTs that are not of Cardy type. To achieve this, one needs to develop a theory of Connes fusion for modules over non-local chiral nets, together with the corresponding theory of categorical extensions. However, unlike the case of conformal nets, the vacuum module of a non-local net (for example, $\Hopii$ for $\Bopi$) generally does not contain a vacuum vector, just as the closed state space does not. It is therefore far from clear how to develop a theory of Connes fusion that is compatible with the geometry of the circle, as was accomplished for conformal nets in \cite{Gui21a,MS26a,MS26b}.

In addition, one should show that Connes fusion defined using Morita equivalent non-local nets yields equivalent theories. This expectation is suggested by the discussion in Sec. \ref{lb163}, where $\Bopi$ and $\Bopj$ are Morita equivalent via the bimodule $\Hopij$, and hence
\begin{align*}
\Hop{k,i}\boxtimes_{\Bopi}\Hop{i,l}\simeq \Hop{k,l}\simeq \Hop{k,j}\boxtimes_{\Bopj}\Hop{j,l}
\end{align*}
This suggests that the corresponding Connes fusion theories should be equivalent, with the equivalence implemented by
\begin{gather*}
\Hop{k,i}\mapsto\Hop{k,j}\qquad \Hop{i,l}\mapsto\Hop{j,l}
\end{gather*}
that is, implemented by
\begin{gather*}
\Hop{k,i}\mapsto\Hop{k,i}\boxtimes_{\Bopi}\Hopij\qquad \Hop{i,l}\mapsto \Hopji\boxtimes_{\Bopi}\Hop{i,l}
\end{gather*}

Moreover, one should establish that two non-local nets are Morita equivalent (in the sense of satisfying boundary-boundary Haag duality) if and only if the closed CFTs generated by them are equivalent. This would generalize the result that two (special symmetric) Frobenius algebras in a modular fusion category are Morita equivalent if and only if their centers are equivalent; cf. \cite{KR08}. In fact, the theory of Minkowskian open/closed CFTs should be viewed as a highly infinite analogue of the theory of Q-systems or Frobenius algebras, much as measure theory is a highly infinite generalization of finite-dimensional commutative algebras. Just as, in the latter setting, one must distinguish between the roles of $L^\infty$ and $L^2$, in the former one should distinguish between nets of algebras and Hilbert spaces.

Returning to Cardy-type CFTs, another future direction is to relate the construction in this paper to those arising from the probabilistic approach. For example, one should show that when $\MA$ is the Virasoro net with central charge $c=1+6Q^2$, where $Q=b+b^{-1}$, the $\Maa$-module $\Hcl$ decomposes as in \eqref{eq83}. This would identify the closed-string state space constructed in this paper with the one appearing in the literature on Liouville CFTs.

Moreover, as suggested by \eqref{eq81}, determining the open-string state space $\Hopij$ requires computing the Connes fusion $\MH_i\boxtimes_\MA\ovl{\MH_j}$ for Virasoro modules $\MH_i,\MH_j$, a problem proposed and studied in \cite{Tes01,Tes09}. Note also that, since $\Hcl$ is defined in this paper as the Connes fusion $\MH_\sqz\boxtimes_{\Maa}\ovl{\MH_\sqz}$, proving \eqref{eq83} may itself be viewed as a problem of computing Connes fusion.

After identifying the state spaces, the next step would be to relate the bulk and boundary nets, together with their actions on the state spaces, to the correlation functions in the Liouville CFT literature. For example, one should show that these nets and their actions coincide with those obtained by smearing the Wightman bulk and boundary fields in Liouville CFTs.

More generally, Toda CFTs are expected to be Cardy-type CFTs associated with the principal $W(\gk)$-algebra at Toda central charge, where $\gk$ is a simple Lie algebra. Liouville CFT is the special case corresponding to $\gk=\fk{sl}_2$. Toda CFTs, especially for $\gk=\fk{sl}_3$, have recently been studied rigorously from the probabilistic perspective \cite{CRV23,CH22,Cer24,Cer25,CH24a,CH24b,CH25}. On the AQFT side, the conformal net associated with $W_3=W(\fk{sl}_3)$ has been constructed in \cite{CTW23,CTW22}. Thus, at least for such conformal nets, one may ask whether the state spaces of the open/closed CFT constructed in this paper agree with those appearing in the probabilistic literature, and whether the bulk and boundary nets constructed here can likewise be realized by smearing the corresponding Wightman fields.

Note that, in Toda open CFTs, one should also allow boundary conditions corresponding to twisted $\MA$-modules, especially those arising from automorphisms of the Dynkin diagram (cf. \cite{CH24a,CH24b,CH25}). For simplicity, we have considered only untwisted $\MA$-modules as boundary conditions in this paper, although most of our results extend straightforwardly to twisted boundary conditions, or more generally to solitonic $\MA$-modules (cf. Def. \ref{lb3}) whose restrictions to the Virasoro subnet are genuine modules.

Another interesting example is the rank-$n$ Heisenberg conformal net, whose closed-string state space is expected to admit the decomposition
\begin{align*}
\Hcl\simeq\int_{\Rbb^n}^\oplus\MH_p\otimes\ovl{\MH_p}~dp
\end{align*}
where $\MH_p$ denotes the irreducible positive-energy representation of the Heisenberg conformal net with momentum $p$, cf. \cite[Sec. B.1]{Luk07} for instance. As mentioned earlier in this section, obtaining such a decomposition requires computing the Connes fusion of certain twisted modules over the Heisenberg conformal net. Significant progress on this problem has recently been made in \cite{MS25,MS26c}. It would then be desirable to compare the bulk and boundary nets constructed in this paper with those obtained by smearing the corresponding Wightman fields.

\subsection{Outline}

This paper is organized as follows. Chapter 2 collects the preliminaries on conformal nets used throughout the paper. Section 2.1 recalls the definition of conformal nets and lists the standard structural properties that will be used later. Section 2.2 recalls arg-valued intervals and twisted, untwisted, and solitonic modules of conformal nets, together with their conformal covariance. Section 2.3 reviews the category $\RepGA$ of $G$-twisted $\MA$-modules and gives a detailed description of the action of $G$ on $\RepGA$. Section 2.4 reviews the notion of $G$-crossed categorical extensions, while Section 2.5 develops several techniques concerning categorical extensions that play a crucial role throughout the paper. Finally, Sections 2.6 and 2.7 adapt the Tomita--Takesaki-type results mentioned in Sec. \ref{lb159} to the setting of $G$-crossed categorical extensions.

Chapter 3 is devoted to the construction of the closed CFT, thereby establishing Main Result \ref{lb157}. Section 3.1 reviews the permutation-twisted $\Maa$-module $\MH_\sqz$ and uses it to define the closed-string state space $\Hcl$. Section 3.2 establishes the geometric setup. Section 3.3 defines the bulk net $\Bcl$, first on double cones compactly contained in $(-\pi,\pi)\times\Rbb$ and then on all double cones of the Einstein cylinder $\Rcl$, and states its main Haag--Kastler properties. The proofs of these properties occupy Sections 3.4--3.10. Finally, the bulk-bulk Haag duality is established in Section 3.11.

Chapter 4 constructs the open CFT, proving Main Result \ref{lb158}. Section 4.1 introduces the geometric setup. Section 4.2 defines the boundary chiral nets and establishes their Haag--Kastler properties. Section 4.3 constructs the actions of the boundary nets on the boundary state spaces, proves the corresponding representation-theoretic properties, and establishes the boundary-boundary Haag duality. Sections 4.4 and 4.5 construct the action of the bulk net (restricted to $(0,\pi)\times\Rbb$) on the open-string state spaces, establish the basic properties of this action together with the actions of the boundary nets, and prove the bulk-boundary Haag duality. Section 4.6 briefly discusses nets of boundary algebroids whose operators change boundary conditions, extending several of the preceding constructions from boundary-preserving operators to boundary-changing ones.

\subsubsection*{Acknowledgment}

I would like to thank Andr\'e Henriques for encouraging me to think about understanding Liouville CFT from an operator-algebraic perspective, and for many inspiring discussions related to the topics of this paper. I am also grateful to Yi-Zhi Huang, Liang Kong, Adri\`a Mar\'in-Salvador, Yuto Moriwaki, Eveliina Peltola, Xin Sun, J\"org Teschner, Baojun Wu, and Ziyun Xu for many helpful discussions. This work was supported by NSFC Grant 12401159.

\section{Preliminaries on conformal nets}

For any Hilbert spaces $\MH_1,\MH_2$, we let $\fk L(\MH_1,\MH_2)$ denote the set of bounded linear operators $\MH_1\rightarrow\MH_2$, and abbreviate $\fk L(\MH,\MH)$ as $\fk L(\MH)$. Similarly, we let $\MU(\MH_1,\MH_2)$ be the set of unitary maps $\MH_1\rightarrow\MH_2$ and write $\MU(\MH,\MH)$ as $\MU(\MH)$. For each Hilbert space $\MH$, we understand its inner product $\bk{\cdot|\cdot}$ to be linear on the right variable $|\cdot\rangle$ and antilinear on the left variable $\langle\cdot|$. For a collection of von Neumann algebras $(\mc M_i)_{i\in\scr I}$ acting on a common Hilbert space, we let $\bigvee_{i\in\scr I}\mc M_i$ be the von Neumann algebra generated by this collection, i.e., $\bigvee_{i\in\scr I}\mc M_i=\big(\bigcup_{i\in\scr I}\mc M_i\big)''$.

We write $X\Subset Y$ if $X\subset Y$, and if $X$ has compact closure in $Y$.

Throughout this paper, we adopt the conventions \ref{lb4}, \ref{lb9}, and \ref{lb10}, which will be introduced later in this chapter.

\subsection{Conformal nets}

Let $\DiffS$ be the orientation-preserving diffeomorphism group of the unit circle $\Sbb^1$. Let $\MJ$ be the set of (non-empty non-dense open) intervals in $\Sbb^1$. If $I\in\MJ$, we let $I'$ be the interior of $\Sbb^1\setminus I$. We let $\Diff_I(\Sbb^1)$ be the subgroup consisting of all $g\in\DiffS$ fixing points in the closure of $I'$. 

Let $\PSU$ be the subgroup of M\"obius transforms preserving $\Sbb^1$, i.e., $\PSU$ consists of $g\in\DiffS$ of the form
\begin{align*}
z\in\Sbb^1\mapsto\frac{az+b}{\ovl bz+\ovl a}
\end{align*}
where $a,b\in\Cbb,|a|^2-|b|^2=1$. Let
\begin{align}\label{eq3}
\varrho:\Rbb\rightarrow\DiffS\qquad \varrho(t)z=e^{\im t}z
\end{align}
(where $z\in\Sbb^1$) be the one-parameter rotation group.

In this paper, all conformal nets are assumed to be irreducible, defined as follows.

\begin{df}\label{lb2}
An (irreducible non-trivial) \textbf{conformal net} denotes a tuple $(\MA,\MH_0,U_0,\Omega)$ (abbreviated to $(\MA,\MH_0)$ or simply $\MA$) where $\MH_0$ is a separable Hilbert space with $\dim\MH_0>1$, and $\MA$ is a map sending each $I\in\MJ$ to a von Neumann algebra $\MA(I)$ on $\MH_0$ satisfying the following conditions.
\begin{enumerate}[label=(\alph*)]
\item (\textbf{Isotony}) If $I_1\subset I_2\in\MJ$, then $\MA(I_1)$ is a von Neumann subalgebra of $\MA(I_2)$.
\item (\textbf{Locality}) If $I_1,I_2\in\MJ$ are disjoint, then $[\MA(I_1),\MA(I_2)]=0$.
\item (\textbf{Conformal covariance}) $U_0$ is a strongly-continuous projectively unitary representation of $\DiffS$ on $\MH_0$ such that the relation
\begin{align*}
V\MA(I)V^*=\MA(gI)
\end{align*}
holds for any $g\in\DiffS$, any $I\in\MJ$, any unitary $V\in\MU(\MH_0)$ representing $U_0(g)$. Moreover, if $g\in\Diff_I(\Sbb^1)$ and $V$ represents $U_0(g)$, then $V\in\MA(I)$.
\item $\Omega\in\MH_0$, called the \textbf{vacuum vector}, is the unique (up to scalar multiplication) unit vector such that $V\Omega\in\Cbb\Omega$ for each $V\in\MU(\MH_0)$ representing some $g\in\PSU$. (In particular, $U_0$ restricts to a strongly-continuous unitary representation $U_0$ of $\PSU$ on $\MH$ satisfying $U_0(g)\Omega=\Omega$ for each $g\in\PSU$.) Moreover, $\Omega$ is cyclic under the action of $\bigvee_{I\in\MJ}\MA(I)$. 
\item (Positive energy) The self-adjoint generator of the one-parameter unitary group $U_0\circ\varrho$ has spectrum in $\Rbb_{\geq0}$.
\end{enumerate}
\end{df}

\begin{rem}\label{lb1}
A conformal net $\MA$ satisfies the following properties, cf. \cite[Prop. 1.1, 1.2]{GL96} and the references therein.
\begin{enumerate}[label=(\arabic*)]
\item (\textbf{Additivity}) $\MA(I)=\bigvee_{\alpha}\MA(I_\alpha)$ if $(I_\alpha)$ is a collection in $\MJ$ whose union is $I\in\MJ$.
\item (\textbf{Haag duality}) $\MA(I)'=\MA(I')$ for each $I\in\MJ$.
\item (\textbf{Reeh--Schlieder property}) $\MA(I)\Omega$ is dense in $\MH_0$ for each $I\in\MJ$.
\item (\textbf{Irreducibility}) $\bigvee_{I\in\MJ}\MA(I)=\fk L(\MH_0)$.
\item Each $\MA(I)$ is a type III factor. 
\end{enumerate}
Moreover, by \cite{MTW18}, $\MA$ satisfies the \textbf{split property}, which means that for each $I,J\in\MJ$ with $I\Subset J$, there is a type I factor between $\MA(I)$ and $\MA(J)$. By Haag duality (applied to $I_1=I$ and $I_2=J'$), this is equivalent to the statement that for any $I_1,I_2\in\MJ$ with disjoint closures, there is a unitary map $\Phi:\MH_0\rightarrow\MH_1\otimes\MH_2$ where the Hilbert spaces $\MH_1$, $\MH_2$ carry normal representations $\pi_1$, $\pi_2$ of $\MA(I_1)$ and $\MA(I_2)$, respectively, such that
\begin{align*}
\Ad_\Phi|_{\MA(I_1)}=\pi_1\otimes\id_{\MH_2}\qquad \Ad_\Phi|_{\MA(I_2)}=\id_{\MH_1}\otimes\pi_2
\end{align*}
By property (5), one may in fact take $\MH_1=\MH_2=\MH_0$, with $\pi_1,\pi_2$ given by the inclusion maps $\MA(I_1)\hookrightarrow\fk L(\MH_0),\MA(I_2)\hookrightarrow\fk L(\MH_0)$, respectively.
\end{rem}

Next, we review some useful extensions of $\DiffS$ and $\Diff_I(\Sbb^1)$. 
\begin{df}
Let
\begin{align}\label{eq1}
\pmb{\SG}:=\wtd\DiffS\rightarrow\DiffS
\end{align}
be the universal cover of $\DiffS$. Viewing $\Sbb^1$ as $\Rbb/2\pi\Zbb$, the group $\SG$ consists of smooth functions $F:\Rbb\rightarrow\Rbb$ satisfying for each $x\in\Rbb$ that
\begin{align}\label{eq2}
F(x+2\pi)=F(x)+2\pi\qquad F'(x)>0
\end{align}
\end{df}

\begin{df}\label{lb123}
For each $I\in\MJ$, let $\pmb{\SG(I)}$ be the identity component of the preimage of $\Diff_I(\Sbb^1)$ under the map \eqref{eq1}. Equivalently, $\SG(I)$ consists of smooth functions $F:\Rbb\rightarrow\Rbb$ satisfying \eqref{eq2} and fixing pointwise preimage of $I'$ under the covering map $\Rbb\rightarrow\Rbb/2\pi\Zbb$. (In fact, by \eqref{eq2}, it suffices to assume fixing pointwise any prescribed connected component of the preimage of $I'$.) See Rem. \ref{lb129} for an alternative description.
\end{df}

Lift the rotation group \eqref{eq3} to a one-parameter group
\begin{align*}
\pmb{\varrho}:\Rbb\rightarrow\SG
\end{align*}
Thus, for each $t\in\Rbb$, the function $\varrho(t):\Rbb\rightarrow\Rbb$ sends $x$ to $x+t$.

The \textbf{central extension of $\SG$ associated to $\MA$} is defined to be the topological group
\begin{align*}
\pmb{\GA}=\{(g,V)\in\SG\times\MU(\MH_0):V\text{ represents }U_0(g)\}
\end{align*}
or more precisely, the surjective group homomorphism $\GA\rightarrow\SG$ defined by the projection $\SG\times\MU(\MH_0)\rightarrow\SG$. The kernel of this homomorphism is $1\times\Sbb^1\simeq\Sbb^1$.

\begin{rem}\label{lb39}
The topological group $\GA$ depends only on the central charge of $\MA$. Indeed, the net $\Vir_\MA:I\in\MJ\mapsto U_0(\Diff_I(\Sbb^1))''$ acting on $\MK_0:=\ovl{\Vir_\MA\Omega}$ is a conformal subnet of $\MA$; in particular, the representation of $\UPSU$ on $\MH_0$ restricts to that on $\MK_0$. Therefore, by the irreducibility in Rem. \ref{lb1}, $\MK_0$ is irreducible as a strongly-continuous projectively-unitary positive-energy representation of $\DiffS$, and hence is integrated from the unitary module $L(c,0)$ (for some $c\geq0$, called the \textbf{central charge} of $\MA$) of the Virasoro algebra, cf. \cite[Thm. A.1]{Car04}. Thus $\Vir_\MA$ can be identified with the Virasoro net $\Vir_c$ with central charge $c$, and $\MK_0$ can be viewed as the Hilbert space completion of $L(c,0)$. Set
\begin{align*}
\pmb{\Gc}:=\SG_{\Vir_c}
\end{align*}
Then the following map defines an isomorphism of topological groups $\GA\simeq\Gc$.
\begin{align*}
\GA\xlongrightarrow{\simeq}\Gc \qquad (g,V)\mapsto (g,V|_{\MK_0})
\end{align*}
We will freely identify $\GA$ and $\Gc$ throughout the rest of this paper.
\end{rem}

\begin{df}
If $\wtd g=(g,V)\in\GA$, we write $V\in\MU(\MH_0)$ as $\pmb{U_0(\wtd g)}$, that is,
\begin{align*}
\wtd g=(g,U_0(\wtd g))
\end{align*}
Then $U_0$ is a strongly-continuous unitary representation of $\GA\simeq\Gc$ on $\MH_0$. 
\end{df}

For each $I\in\MJ$, we let $\pmb{\GA(I)}$ be the preimage of $\SG(I)$ under the map $\GA\rightarrow\SG$. In particular, we let $\pmb{\Gc(I)}=\SG_{\Vir_c}(I)$. By the conformal covariance in Def. \ref{lb2}, we have
\begin{align*}
U_0(\GA(I))\subset\MA(I)
\end{align*}


\begin{rem}\label{lb40}
From the definition of conformal nets, we have a unitary representation $U_0$ of $\PSU$ rather than merely a projectively unitary one. Thus, we have a one parameter unitary group
\begin{align*}
\pmb{\varrho_\MA}:\Rbb\rightarrow\GA\qquad \varrho_\MA(t)=(\varrho(t),U_0(\varrho(t)))
\end{align*}
called the \textbf{rotation group in $\GA$}. We let $\pmb{\varrho_c}$ denote the rotation group $\varrho_{\Vir_c}$ of $\Vir_c$. The isomorphism $\GA\simeq\Gc$ clearly identifies $\varrho_\MA$ with $\varrho_c$.
\end{rem}

\subsection{Twisted modules of conformal nets}

Fix a conformal net $\MA$. 

\begin{df}
An \textbf{automorphism} of $\MA$ denotes a unitary operator $\Phi\in\MU(\MH_0)$ fixing $\Omega$ and satisfying $\Phi\MA(I)\Phi^{-1}=\MA(I)$ for each $I\in\MJ$. 
\end{df}

We fix a group homomorphism $G\rightarrow\Aut(\MA)$ where $G$ is a discrete group, and $\Aut(\MA)$ is the group of automorphisms of $\MA$. We view each $\phi\in G$ as a unitary operator on $\MH_0$. Then
\begin{align*}
\phi\MA(I)\phi^{-1}=\MA(I)
\end{align*}
Hence $G$ gives rise to a group of automorphisms of each $\MA(I)$.

\begin{df}\label{lb160}
For each $I\in\MJ$, an \textbf{arg function} denotes a continuous function $\pmb{\arg_I}:I\rightarrow\Rbb$ such that $z=e^{\im\arg_I(z)}$ for all $z\in I$. An \textbf{arg-valued interval} denotes a pair
\begin{align*}
\wtd I=(I,\arg_I)
\end{align*} 
where $I\in\MJ$, and $\arg_I$ is an arg function of $I$. The set of arg-valued intervals is denoted by $\pmb{\Jtd}$.
\end{df}

Equivalently, $\wtd I$ is a connected component of the preimage of $I$ under the covering map $\Rbb\rightarrow \Rbb/2\pi\Zbb\simeq\Sbb^1$. Therefore, since the universal cover $\SG$ acts on $\Rbb$ (cf. \eqref{eq2}), it also acts on $\Jtd$. Through the quotient homomorphism $\GA\rightarrow\SG$, the group $\GA\simeq\Gc$ also acts on $\Sbb^1$, $\Rbb$, and $\Jtd$.

\begin{rem}\label{lb129}
Def. \ref{lb123} can be rephrased as follows: For each $I\in\MJ$, 
\begin{align*}
&\SG(I)=\{g\in\SG:g \text{ fixes pointwise } (I',\arg_{I'})\text{ for each }\arg_{I'}\}\\
=&\{g\in\SG:g \text{ fixes pointwise } (I',\arg_{I'})\text{ for some }\arg_{I'}\}
\end{align*}
Therefore,
\begin{align*}
&\GA(I)=\{g\in\GA:g \text{ fixes pointwise } (I',\arg_{I'})\text{ for each }\arg_{I'}\}\\
=&\{g\in\GA:g \text{ fixes pointwise } (I',\arg_{I'})\text{ for some }\arg_{I'}\}
\end{align*}
\end{rem}

If $\wtd I,\wtd J\in\Jtd$, we write $\pmb{\wtd I\subset\wtd J}$ if $I\subset J$ and $\arg_J|_I=\arg_I$. We say that $\wtd I$ and $\wtd J$ are \textbf{disjoint} if $I\cap J=\emptyset$. (Note that this is not the same as saying that $\wtd I$ and $\wtd J$ are disjoint as subsets of $\Rbb$.) We say that $\wtd J$ is \textbf{clockwise} to $\wtd I$ (equivalently, that $\wtd I$ is \textbf{anticlockwise} to $\wtd J$) if
\begin{align*}
\arg \zeta<\arg z<\arg\zeta+2\pi\qquad\text{for each }z\in I,\zeta\in J
\end{align*}

\begin{df}
We say that $\wtd I_1,\dots,\wtd I_n\in\Jtd$ are in \textbf{clockwise order} if $\wtd I_j$ is clockwise to $\wtd I_i$ whenever $j>i$.
\end{df}

\begin{df}\label{lb8}
The \textbf{clockwise complement $\pmb{\wtd I'}$} (resp. the \textbf{anticlockwise complement $\pmb{\bpr\wtd I}$}) of $\wtd I\in\Jtd$ is defined to be the interval $I'$ together with the arg function so that $\wtd I'$ is clockwise to $\wtd I$ (resp. $\bpr\wtd I$ is anticlockwise to $\wtd I$).
\end{df}

\begin{df}\label{lb3}
Let $\phi\in G$. A \textbf{$\phi$-twisted $\MA$-module} denotes a pair $(\MH_i,\pi_i)$ (or simply $\MH_i$) where $\MH_i$ is a separable Hilbert space, and $\pi_i$ associates to each $\wtd I\in\Jtd$ a normal representation $\pi_{i,\wtd I}$ of $\MA(I)$ on $\MH_i$ satisfying the following properties.
\begin{enumerate}[label=(\alph*)]
\item If $\wtd I,\wtd J\in\Jtd$ and $\wtd I\subset\wtd J$, then $\pi_{i,\wtd I}(x)=\pi_{i,\wtd J}(x)$ for each $x\in\MA(I)$.
\item For each $\wtd I\in\Jtd$ and $x\in\MA(I)$, we have
\begin{align}\label{eq8}
\pi_{i,\varrho(2\pi)\wtd I}(\phi x\phi^{-1})=\pi_{i,\wtd I}(x)
\end{align}
\end{enumerate}
If condition (b) is dropped, we say that $(\MH_i,\pi_i)$ is a \textbf{solitonic $\MA$-module}.
\end{df}

\begin{rem}\label{lb138}
The above definition of a $\phi$-twisted module follows \cite[Sec. 3.1]{MS26a}, except that the unit circle used there is obtained from ours by reflection across the $x$-axis. Consequently, the convention in \cite{MS26a,MS26b} uses $\varrho(-2\pi)$ in place of $\varrho(2\pi)$ in \eqref{eq8}. 
\end{rem}

\begin{df}\label{lb132}
We let $\pmb{\Rep^\phi(\MA)}$ be the $W^*$-category of $\phi$-twisted $\MA$-modules. For the identity element $e$ of $G$ we write $\Rep^e(\MA)$ as $\pmb{\Rep(\MA)}$. Objects in $\Rep(\MA)$ are called \textbf{untwisted $\MA$-modules}, or simply \textbf{$\MA$-modules}. 
\end{df}

\begin{rem}\label{lb133}
Let $(\MH_i,\pi_i)$ be a solitonic $\MA$-module. Assume that $\wtd J$ is clockwise to $\wtd I$. Then
\begin{align*}
[\pi_{i,\wtd I}(\MA(I)),\pi_{i,\wtd J}(\MA(J))]=0
\end{align*}
\end{rem}

\begin{proof}
By the additivity of $\MA$ and the normality of $\pi_{i,\wtd I},\pi_{i,\wtd J}$, it suffices to treat the case where $\wtd I,\wtd J\subset\wtd K$ for some $\wtd K\in\Jtd$. Then the commutativity follows by the locality of $\MA$ and part (a) of Def. \ref{lb3}.
\end{proof}

\begin{eg}
The action of $\MA$ on $\MH_0$ is clearly an untwisted $\MA$-module, called the \textbf{vacuum module} and is denoted by $\pmb{(\MH_0,\pi_0)}$ or simply by $\MH_0$.
\end{eg}

\begin{rem}\label{lb11}
Suppose that $x\in\MA(I)$ commutes with $\phi\in G$. Then by Def. \ref{lb3}-(b), $\pi_{i,\wtd I}(x)$ is independent of $\arg_I$. We thus write $\pi_{i,\wtd I}(x)$ as $\pmb{\pi_{i,I}(x)}$ in this case.

For example, if $g\in\GA$, then $U_0(g)$ commutes with any $\phi\in G$; cf. \cite[Rem. 3.14]{MS26a}. Hence, if $g\in\GA(I)$, we can 
\begin{align*}
\text{write $\pi_{i,\wtd I}(U_0(g))$ as $\pi_{i,I}(U_0(g))$}
\end{align*}
As another example, if $(\MH_i,\pi_i)\in\Rep(\MA)$, then $\pi_{i,\wtd I}(x)$ can be written as $\pi_{i,I}(x)$ for each $x\in\MA(I)$. \hqed
\end{rem}

\begin{thm}\label{lb12}
Any $\phi$-twisted $\MA$-module $(\MH_i,\pi_i)$ is \textbf{conformally covariant}, which means that there is a unique strongly-continuous unitary representation $\pmb{U_i}$ of $\GA\simeq\Gc$ on $\MH_i$ satisfying for each $\wtd I\in\Jtd,g\in\GA(I)$ the relation
\begin{align}
U_i(g)=\pi_{i,I}(U_0(g))
\end{align}
Moreover, for each $g\in\GA$, we have $U_i(g)\in\bigvee_{\wtd I\in\Jtd}\pi_{i,\wtd I}(\MA(I))$.
\end{thm}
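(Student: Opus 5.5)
Since every $g\in\GA$ is a finite product of elements of the local subgroups $\GA(I)$, uniqueness is immediate, and the task is to show that the local prescription $g\mapsto \pi_{i,I}(U_0(g))$ for $g\in\GA(I)$ is consistent and extends to a continuous representation. I will reduce to the untwisted (or solitonic) case and then use the known construction for solitonic/locally normal representations of conformal nets. The proof goes through the Virasoro subnet $\Vir_\MA\simeq\Vir_c$ of Rem. \ref{lb39}. By Rem. \ref{lb11}, $U_0(g)$ commutes with each $\phi\in G$. Hence $\pi_{i,\wtd I}$ restricted to $\Vir_\MA(I)=U_0(\GA(I))''$ does not depend on $\arg_I$. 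Therefore $I\mapsto \pi_{i,I}|_{\Vir_\MA(I)}$ is a genuine (untwisted) locally normal representation of the Virasoro net $\Vir_c$ on $\MH_i$. Compatibility with inclusions comes from Def. \ref{lb3}-(a) together with the independence of the arg function; for intervals in different positions one uses additivity.

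Next I invoke the standard result (D'Antoni--Longo--Radulescu/Carpi, cf. \cite{Car04} and the analysis in \cite[Rem. 3.14]{MS26a}) that every locally normal representation $\pi$ of $\Vir_c$ on a separable Hilbert space is conformally covariant with positive energy: there is a unique strongly continuous unitary representation $U$ of $\Gc$ with $U(g)=\pi_I(U_0(g))$ for $g\in\Gc(I)$. Concretely, I would construct $U$ as follows. Write $g=g_1\cdots g_n$ with $g_k\in\Gc(I_k)$, which is possible since $\Gc$ is generated by local subgroups near the identity and is connected. Set $U(g)=\prod_k\pi_{I_k}(U_0(g_k))$.

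The main obstacle is well-definedness, i.e. showing that a product of local elements equal to the identity maps to the identity. I would handle this by the usual local-to-global argument. First, the map is a local homomorphism on a neighborhood of the identity: any relation among elements close to the identity can be realized inside a covering of $\Sbb^1$ by two or three intervals, where the representations agree on overlaps by isotony. Then a local homomorphism on a simply connected group extends uniquely to a global one. Simple connectedness holds for $\Gc$ as a central extension of the universal cover; more precisely, one uses the fact that $\Gc$ is the universal object for such local data, which is exactly why the universal cover $\SG$ and its central extension were introduced. Strong continuity follows from strong continuity of $U_0$ and normality of $\pi_{i,I}$ on local subgroups.

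Finally, the last claim is immediate from the construction. Each factor $\pi_{i,I_k}(U_0(g_k))$ lies in $\pi_{i,\wtd I_k}(\MA(I_k))$ for any choice of arg function. Hence $U_i(g)\in\bigvee_{\wtd I}\pi_{i,\wtd I}(\MA(I))$.
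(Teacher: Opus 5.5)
Your reduction matches the paper's proof. Because $U_0(g)$ commutes with every $\phi\in G$, the restriction of $\pi_i$ to the Virasoro subnet does not depend on arg functions, so it is an untwisted locally normal representation of $\Vir_c$. Uniqueness then comes from Lem.~\ref{lb14}. The paper applies the covariance theorem for arbitrary locally normal representations, which it attributes to \cite{Hen19} (via \cite[Thm. 2.2]{Gui21a} or \cite[Thm. 3.15]{MS26a}). With that citation your argument is complete. The final membership claim follows from Lem.~\ref{lb14} and the local formula, whatever construction of $U_i$ you use. The sources you attach are the wrong ones for this step: the paper uses \cite[Thm. A.1]{Car04} only to identify the vacuum Virasoro representation, and \cite[Rem. 3.14]{MS26a} only for $[U_0(g),\phi]=0$.

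The self-contained construction you sketch does not work as written.

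First, $\Gc$ is not simply connected. It is a central extension of $\SG$ by $\Sbb^1$, and $\SG$ is contractible (it is a convex set of functions). Hence $\pi_1(\Gc)\cong\Zbb$, generated by the central circle. The monodromy principle must therefore be run on the universal cover of $\Gc$ (or projectively on $\SG$). You then have to check that the central loop is sent to $1$. This holds, since the central circle lies in every $\Gc(I)$ and $\pi_I$ sends it to the same scalars, but it needs to be argued.

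Second, the local homomorphism property does not follow from isotony on overlaps. Well-definedness for one fixed two-interval cover is fine, because the intersection of the two local groups splits over the two components of $I_1\cap I_2$. Comparing $U(g)U(h)$ with $U(gh)$ is harder: you must rewrite products such as $g_2h_1$, with $g_2\in\Gc(I_2)$ and $h_1\in\Gc(I_1)$, whose supports lie in no common interval. This requires a controlled refragmentation, for example through a three-interval cover whose pairwise unions are intervals.

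That compatible homomorphisms on the local groups $\Gc(I)$ glue to a homomorphism on $\Gc$ is exactly what \cite{Hen19} proves ($\Gc$ is the colimit of the $\Gc(I)$). Your remark that $\Gc$ is ``the universal object for such local data'' restates that theorem; it does not follow from simple connectedness. So either cite it, as the paper does, or carry out the universal-cover argument with the refragmentation step actually proved.
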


\begin{proof}
This follows from \cite{Hen19}, as explained in \cite[Thm. 2.2]{Gui21a} (applied to the untwisted $\Vir_c$-module $(\MH_i,\pi_i|_{\Vir_c})$) or in \cite[Thm. 3.15]{MS26a}. In particular, the uniqueness follows from the following fact.
\end{proof}

\begin{lm}\label{lb14}
Let $\mc I$ be a collection of intervals covering $\Sbb^1$. Then $\bigcup_{I\in\mc I}\scr G(I)$ generates algebraically the group $\scr G$. Consequently, $\bigcup_{I\in\mc I}\GA(I)$ generates algebraically the group $\GA$.
\end{lm}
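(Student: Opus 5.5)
Here $\SG$ is the group of lifts $F:\Rbb\to\Rbb$ satisfying \eqref{eq2}, and $\SG(I)$ consists of those $F$ fixing pointwise the preimage of $I'$. The plan is to write an arbitrary $F\in\SG$ as a finite product of elements supported in intervals of $\mc I$. This uses two standard ingredients: a fragmentation argument for diffeomorphisms isotopic to the identity, and the fact that the lifted group is connected, so the lift issue is handled by working with isotopies in $\SG$ itself.

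\textbf{Reductions.} Since $\Sbb^1$ is compact, we may pass to a finite subcover $I_1,\dots,I_n$ of $\mc I$; enlarging the family only helps. Let $H\subset\SG$ be the subgroup generated by $\bigcup_k\SG(I_k)$.

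\textbf{$H$ is open.} Choose a smooth partition of unity $\chi_1,\dots,\chi_n$ on $\Sbb^1$ with $\operatorname{supp}\chi_k\Subset I_k$, and lift each $\chi_k$ to a $2\pi$-periodic function on $\Rbb$. Given $F\in\SG$ close to the identity in the $C^1$ topology, write $F(x)=x+u(x)$ with $u$ periodic, $\|u\|_{C^1}$ small, and set
\begin{align*}
F_k(x)=x+(\chi_1+\cdots+\chi_k)(x)\,u(x).
\end{align*}
Each $F_k$ lies in $\SG$ when $u$ is small, with $F_0=\id$ and $F_n=F$. The map $F_k\circ F_{k-1}^{-1}$ equals the identity wherever $\chi_k$ vanishes, that is, on the preimage of $\Sbb^1\setminus\operatorname{supp}\chi_k$, which contains the preimage of $I_k'$. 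It is also isotopic to the identity through such maps (scale $\chi_k$ by $s\in[0,1]$), so it lies in the identity component, namely $\SG(I_k)$. Hence $F=\prod_k F_k\circ F_{k-1}^{-1}\in H$. Thus $H$ contains a neighborhood of the identity, and a subgroup containing such a neighborhood is open.

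\textbf{Conclusion for $\SG$.} $\SG$ is connected: it is the universal cover of $\DiffS$, or directly, $F_t=(1-t)\id+tF$ is a path in $\SG$ since convex combinations preserve \eqref{eq2}. An open subgroup is also closed, so $H=\SG$.

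\textbf{Passage to $\GA$.} Let $K\subset\GA$ be the subgroup generated by $\bigcup_k\GA(I_k)$. Its image in $\SG$ is $H=\SG$. Since $\GA(I)$ is the full preimage of $\SG(I)$, it contains the kernel $1\times\Sbb^1$. So $K$ contains the kernel and surjects onto $\SG$, which gives $K=\GA$.

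\textbf{Main obstacle.} The only delicate step is checking that $F_k$ is a diffeomorphism (i.e.\ $F_k'>0$) and that $F_k\circ F_{k-1}^{-1}$ lies in the identity component $\SG(I_k)$ rather than merely fixing the preimage of $I_k'$. Both follow from choosing the $C^1$-neighborhood small, since $F_k'=1+(\sum_{j\le k}\chi_j)'u+(\sum_{j\le k}\chi_j)u'$, together with the explicit isotopy above.
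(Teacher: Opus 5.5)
Your proof is correct. It takes a different route from the paper only in that the paper gives no argument at all: its proof is the one-line citation \cite[Lem.~17]{Hen19}.

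You supply the standard fragmentation argument in full:
\begin{enumerate}
\item Split a $C^1$-small $F=\id+u$ into a telescoping product using a partition of unity subordinate to a finite subcover, so the generated subgroup $H$ is open.
\item Use the straight-line path $(1-t)\id+tF$ to see that $\SG$ is connected, hence $H=\SG$.
\item Observe that each $\GA(I)$ contains the central kernel, which upgrades generation of $\SG$ to generation of $\GA$. The paper leaves this step implicit in the word ``consequently''.
\end{enumerate}
The citation is shorter. Your version is self-contained and shows that nothing beyond local fragmentation and connectedness of $\SG$ is needed.

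There is one small inaccuracy. Since $F_k\circ F_{k-1}^{-1}(y)=y+(\chi_k u)(F_{k-1}^{-1}(y))$, this map is the identity on $F_{k-1}(\{\chi_k=0\})$, not on $\{\chi_k=0\}$ itself. The conclusion survives in either of two ways:
\begin{enumerate}
\item Use a margin. Since $|F_{k-1}(x)-x|\le|u(x)|$, it suffices to take $\|u\|_{C^0}$ smaller than the distance between $\mathrm{supp}\,\chi_k$ and $\Sbb^1\setminus I_k$.
\item Telescope the other way, as $F=(F_0^{-1}F_1)(F_1^{-1}F_2)\cdots(F_{n-1}^{-1}F_n)$. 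Here $F_{k-1}^{-1}\circ F_k$ fixes exactly the points where $\chi_k u=0$, so your original statement becomes literally true.
\end{enumerate}

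Finally, the isotopy step is unnecessary. By the description of $\SG(I)$ in Def.~\ref{lb123}, any element of $\SG$ fixing the preimage of $I_k'$ pointwise already lies in $\SG(I_k)$.
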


\begin{proof}
This is due to \cite[Lem. 17]{Hen19}.
\end{proof}

\begin{rem}
We abbreviate $U_i(g)$ to $\pmb{U(g)}$ or even $g$ when no confusion arises.
\end{rem}

\begin{co}\label{lb30}
For each $\phi$-twisted $\MA$-module $\MH_i$, $\wtd I\in\Jtd$, $x\in\MA(I)$, and $g\in\GA$, we have
\begin{align}\label{eq9}
U_i(g)\pi_{i,\wtd I}(x)U_i(g)^{-1}=\pi_{i,g\wtd I}(U_0(g)xU_0(g)^{-1})
\end{align}
\end{co}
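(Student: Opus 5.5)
The strategy is to reduce \eqref{eq9} to generators of $\GA$ and then check the generating case by hand. By Lem. \ref{lb14}, $\GA$ is generated algebraically by $\bigcup_{J\in\mc I}\GA(J)$ for any cover $\mc I$ of $\Sbb^1$. Both sides of \eqref{eq9} behave multiplicatively in $g$. Indeed, if \eqref{eq9} holds for $g_1,g_2$ and all $(\wtd I,x)$, then
\begin{align*}
U_i(g_1g_2)\pi_{i,\wtd I}(x)U_i(g_1g_2)^{-1}=U_i(g_1)\pi_{i,g_2\wtd I}(U_0(g_2)xU_0(g_2)^{-1})U_i(g_1)^{-1}=\pi_{i,g_1g_2\wtd I}(U_0(g_1g_2)xU_0(g_1g_2)^{-1}),
\end{align*}
using that $U_i$ and $U_0$ are representations of $\GA$ and that the action of $\GA$ on $\Jtd$ is a group action. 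Inverses are handled in the same way. It therefore suffices to prove \eqref{eq9} for $g\in\GA(J)$, where $J$ is small, say of length less than $\pi$. We are free to choose this cover, and the choice will be made to fit the next step.

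Fix such a $g\in\GA(J)$ and $\wtd I$, $x\in\MA(I)$. By additivity of $\MA$ and normality of $\pi_{i,\wtd I}$, we may shrink $I$: write $\MA(I)=\bigvee_\alpha\MA(I_\alpha)$ with $I_\alpha$ small, and note that $\wtd I_\alpha\subset\wtd I$ is compatible with Def. \ref{lb3}-(a). Since $g\wtd I_\alpha\subset g\wtd I$ and both sides of \eqref{eq9} are normal in $x$, it suffices to treat each $I_\alpha$, so we assume $I$ is short. Then $I\cup J$ lies in an interval $K$ that is not dense, and we choose $\wtd K\supset\wtd I$. The element $g$ fixes $\Sbb^1\setminus J$ pointwise, and by Rem. \ref{lb129} its lift fixes the relevant preimage. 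This gives $g\wtd K=\wtd K$ as arg-valued intervals, and hence $g\wtd I\subset\wtd K$.

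In this case we compute directly. By Thm. \ref{lb12}, $U_i(g)=\pi_{i,J}(U_0(g))=\pi_{i,\wtd K}(U_0(g))$, using Def. \ref{lb3}-(a) together with $U_0(g)\in\MA(J)\subset\MA(K)$. Since $\pi_{i,\wtd K}$ is a representation of $\MA(K)$, we get
\begin{align*}
U_i(g)\pi_{i,\wtd I}(x)U_i(g)^{-1}=\pi_{i,\wtd K}(U_0(g)xU_0(g)^{-1})=\pi_{i,g\wtd I}(U_0(g)xU_0(g)^{-1}).
\end{align*}
The last equality uses $U_0(g)xU_0(g)^{-1}\in\MA(gI)$, which follows from conformal covariance of $\MA$, together with $g\wtd I\subset\wtd K$ and Def. \ref{lb3}-(a) again.

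The main obstacle is the arg-bookkeeping in the reduction step: checking that $g\wtd I\subset\wtd K$ holds with the correct arg function, and that the shrinking reduction is compatible with the twisting. Notably, Def. \ref{lb3}-(b) is never needed, because we work inside a single $\wtd K$. The remaining steps are routine.
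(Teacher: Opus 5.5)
Your proof is correct and follows the paper's route. The paper defers details to \cite[Cor. 3.16]{MS26a}: it verifies \eqref{eq9} via Thm. \ref{lb12} and isotony when $g$ is localized in an interval whose union with $I$ is not dense, then passes to all of $\GA$ via Lem. \ref{lb14}. Your additivity step shrinks $I$ so that the localized case holds for every $\wtd I$. This is bookkeeping the paper's sketch leaves implicit, and it is what makes the multiplicativity reduction (where $\wtd I$ becomes $g_2\wtd I$) legitimate.
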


\begin{proof}
See \cite[Cor. 3.16]{MS26a}. The idea is to check \eqref{eq9} using Thm. \ref{lb12} when $g$ is supported in any interval whose union with $I$ is not dense, and then conclude the general case by using Lem. \ref{lb14}.
\end{proof}

\subsection{The category $\RepGA$ and the $G$-action}

Fix a conformal net $\MA$ with central charge $c$, a discrete group $G$, and a group homomorphism $G\rightarrow\Aut(\MA)$. Let
\begin{align*}
\pmb{\Rep^G(\MA)}
\end{align*}
be the $W^*$-category whose objects are finite (orthogonal) direct sums of the form $\bigoplus_i \MH_i$ where each $\MH_i\in\Rep^{\omega_i}(\MA)$ for some $\omega_i\in G$. Objects in $\RepGA$ are called \pmb{$G$-twisted $\MA$-modules}. The morphisms in $\RepGA$ are defined in Def. \ref{lb137}.

\begin{df}\label{lb137}
If $\MH_i,\MH_j\in\RepGA$, we let $\pmb{\Hom_\MA(\MH_i,\MH_j)}$ denote $G$-twisted $\MA$-module morphisms from $\MH_i$ to $\MH_j$, that is, elements $T\in\fk L(\MH_i,\MH_j)$ satisfying
\begin{align*}
T\pi_{\wtd I,i}(x)=\pi_{\wtd I,j}(x)T
\end{align*}
for each $\wtd I\in\Jtd,x\in\MA(I)$. Elements of $\Hom_\MA(\MH_i,\MH_j)$ are called \textbf{(homo)morphisms of $G$-twisted $\MA$-modules} from $\MH_i$ to $\MH_j$.
\end{df}

\begin{cv}\label{lb4}
In this paper, objects of $\RepGA$ are denoted by symbols such as $\MH_i,\MH_j,\MH_k$, etc. We abbreviate $\pi_{i,\wtd I}$ to $\pmb{\pi_{\wtd I}}$ when no confusion arises. 
\end{cv}

\subsubsection{The crossed balanced $W^*$-tensor category $\RepGA$}

\begin{df}\label{lb58}
For each $\phi\in G$ and each $G$-twisted $\MA$-module $(\MH_i,\pi_i)$, we define a $G$-twisted $\MA$-module
\begin{align*}
\pmb{(\MH_{\phi i},\pi_{\phi i})}
\end{align*}
abbreviated to $\pmb{\MH_{\phi i}}$, where $\MH_{\phi i}$ is an abstract Hilbert space unitarily equivalent to $\MH_i$ via a prescribed unitary map
\begin{align*}
\pmb{\Gamma_\phi|_{\MH_i}}:\MH_i\xlongrightarrow{\simeq} \MH_{\phi i}
\end{align*}
and $\pi_{\phi i}$ is defined such that for each $\wtd I\in\Jtd,x\in\MA(I)$, we have
\begin{align*}
\Gamma_\phi|_{\MH_i}\circ \pi_{i,\wtd I}(x)=\pi_{\phi i,\wtd I}(\phi x\phi^{-1})\circ\Gamma_\phi|_{\MH_i}
\end{align*}
We abbreviate $\Gamma_\phi|_{\MH_i}$ to $\pmb{\Gamma_\phi}$ when no confusion arises. Then the above relation can be abbreviated to
\begin{align}\label{eq4}
\Gamma_\phi\circ\pi_{\wtd I}(x)=\pi_{\wtd I}(\phi x\phi^{-1})\circ\Gamma_\phi
\end{align}
For each $\omega\in G$, it is clear that $\MH_i\in\Rep^\omega(\MA)$ implies $\MH_{\phi i}\in\Rep^{\phi\omega\phi^{-1}}(\MA)$.
\end{df}

\begin{rem}
It is obvious that the pair $(\MH_{\phi i},\Gamma_\phi|_{\MH_i})$ is uniquely determined by $\MH_i$ and $\phi$ up to unique unitary maps. That is, if $(\wtd\MH_{\phi i},\wtd\Gamma_\phi|_{\MH_i})$ satisfies the same property, then there is a unique unitary map (indeed, a unique $G$-twisted $\MA$-module morphism) $\Psi:\wtd\MH_{\phi i}\rightarrow\MH_{\phi i}$ such that $\Gamma_\phi|_{\MH_i}=\Psi\circ\wtd\Gamma_\phi|_{\MH_i}$. We call $\Psi$ the (unique) \textbf{unitary morphism $(\wtd\MH_{\phi i},\wtd\Gamma_\phi|_{\MH_i})\rightarrow(\MH_{\phi i},\Gamma_\phi|_{\MH_i})$}. 
\end{rem}

\begin{eg}\label{lb6}
For the vacuum module $\MH_0$, one can choose $(\wtd\MH_{\phi 0},\wtd\Gamma_\phi|_{\MH_0})$ to be $(\MH_0,\phi)$ where $\phi$ is understood as in $\MU(\MH_0)$. The unitary morphism $(\MH_0,\phi)\rightarrow(\MH_{\phi0},\Gamma_\phi|_{\MH_0})$ is given by $\Gamma_\phi|_{\MH_0}\circ \phi^{-1}$.
\end{eg}

\begin{rem}\label{lb59}
Recall from Rem. \ref{lb11} that $\Ad_\phi$ fixes $U_0(g)$ for each $I\in\MJ$ and $g\in\GA(I)$. Therefore, by \eqref{eq4}, $\Ad_{\Gamma_\phi}$ sends $\pi_{i,I}(U_0(g))$ to $\pi_{\phi i,I}(U_0(g))$. Thus, by the uniqueness in Thm. \ref{lb12}, we have
\begin{align}
\Gamma_\phi\circ U_i(g)=U_{\phi i}(g)\circ\Gamma_\phi
\end{align}
for each $\MH_i\in\RepGA$ and $g\in\GA$.
\end{rem}

\begin{df}
For each $\phi\in G$, define a $*$-functor $\pmb{\fk T_\phi}:\RepGA\rightarrow\RepGA$ such that
\begin{align*}
\fk T_\phi(\MH_i)=\MH_{\phi i}\qquad \fk T_\phi(S)=(\Gamma_\phi|_{\MH_j})\circ S\circ (\Gamma_\phi|_{\MH_i})^{-1}
\end{align*}
for each $S\in\Hom_\MA(\MH_i,\MH_j)$. This functor is denoted by $T_\phi$ in \cite{MS26a}.
\end{df}

The following theorem is proved in \cite[Sec. 3.6]{MS26a}. Note that, by Rem. \ref{lb138}, our balancing differs from that in \cite{MS26a,MS26b}: the latter is defined using $\varrho_\MA(-2\pi)$, whereas ours is defined using $\varrho_\MA(2\pi)$.  

\begin{thm}\label{lb5}
Suppose that $G\rightarrow\Aut(\MA)$ is faithful. Then the $W^*$-category $\RepGA$ with grading $\bigoplus_{\phi\in G}\Rep^\phi(\MA)$ is canonically a $G$-crossed balanced $W^*$-tensor category defined by Connes fusion, where the action of $G$ on $\RepGA$ is defined by $\phi\in G\mapsto \fk T_\phi$, and the balancing $\vartheta$ is defined by
\begin{align*}
\pmb{\vartheta_i}=\Gamma_\phi\circ U_i(\varrho_\MA(2\pi)):\MH_i\rightarrow\MH_{\phi i}
\end{align*}
for each $\MH_i\in\Rep^\phi(\MA)$.
\end{thm}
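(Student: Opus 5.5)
The proof will not rebuild Connes fusion. Instead, I will take the $G$-crossed braided structure constructed in \cite[Sec. 3.6]{MS26a} as the underlying structure and only check the two ingredients that differ here: the action $\phi\mapsto\fk T_\phi$ and the balancing $\vartheta$. The steps are as follows.

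First, I will verify that $\fk T_\phi$ is a well-defined $*$-functor preserving the grading. By \eqref{eq4}, conjugation by $\Gamma_\phi$ intertwines $\pi_{\wtd I}(x)$ with $\pi_{\wtd I}(\phi x\phi^{-1})$. Hence $\fk T_\phi(S)$ is a morphism whenever $S$ is, and $\fk T_\phi$ is clearly compatible with adjoints and composition. It sends $\Rep^\omega(\MA)$ to $\Rep^{\phi\omega\phi^{-1}}(\MA)$, which is the observation at the end of Def. \ref{lb58}. The compositions $\fk T_\phi\fk T_\psi\simeq\fk T_{\phi\psi}$ are realized by the unique unitary morphism $(\MH_{\phi(\psi i)},\Gamma_\phi\Gamma_\psi)\to(\MH_{\phi\psi i},\Gamma_{\phi\psi})$ from the uniqueness remark after Def. \ref{lb58}. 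Coherence of these isomorphisms follows from that same uniqueness.

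Second, for the tensor structure, I will match our conventions with \cite{MS26a} via the reflection $z\mapsto\ovl z$ (Rem. \ref{lb138}). This reflection reverses clockwise and anticlockwise, so it exchanges $\varrho(2\pi)$ and $\varrho(-2\pi)$. The monoidality isomorphisms $\fk T_\phi(\MH_i\boxtimes\MH_j)\simeq\fk T_\phi\MH_i\boxtimes\fk T_\phi\MH_j$ and the crossed braiding are then transported verbatim. The faithfulness of $G\to\Aut(\MA)$ is needed only so that the grading by $G$ is well defined (the twist $\phi$ of a nonzero module is uniquely determined); I will note this.

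Third, I will check that $\vartheta$ has the required properties.
\begin{itemize}
\item It is a morphism $\MH_i\to\MH_{\phi i}$. By Cor. \ref{lb30}, $U_i(\varrho_\MA(2\pi))\pi_{\wtd I}(x)U_i(\varrho_\MA(2\pi))^*=\pi_{\varrho(2\pi)\wtd I}(x)$, using that $U_0(\varrho_\MA(2\pi))=1$. By \eqref{eq8}, this equals $\pi_{\wtd I}(\phi^{-1}x\phi)$. Applying $\Gamma_\phi$ and \eqref{eq4} then gives $\vartheta_i\pi_{\wtd I}(x)=\pi_{\wtd I}(x)\vartheta_i$.
\item It is natural. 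Morphisms commute with $U_i$ by the uniqueness in Thm. \ref{lb12}, and $\fk T_\phi$ is built from the $\Gamma$'s, which gives naturality.
\item It is compatible with the $G$-action, $\fk T_\psi(\vartheta_i)=\vartheta_{\psi i}$. This follows from Rem. \ref{lb59}.
\end{itemize}

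The main obstacle is the balancing axiom relating $\vartheta_{i\boxtimes j}$ to $\vartheta_i,\vartheta_j$ and the crossed braiding. Here I would argue by transport. The conjugate reflection turns the balancing $\Gamma_\phi U_i(\varrho_\MA(-2\pi))$ of \cite{MS26a} into our $\Gamma_\phi U_i(\varrho_\MA(2\pi))$, and it simultaneously replaces the braiding by its reverse-orientation counterpart. Since the balancing axiom is invariant under this simultaneous change, the result of \cite[Sec. 3.6]{MS26a} yields the claim.
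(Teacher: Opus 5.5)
Your strategy matches the paper's. There, Thm.~\ref{lb5} is proved only by citing Sec.~3.6 of \cite{MS26a} together with the coordinate dictionary of Rem.~\ref{lb138}. Your direct checks in the third step are correct and agree with the appendix: $\vartheta_i$ intertwines $\pi_{i,\wtd I}(x)$ with $\pi_{\phi i,\wtd I}(x)$ (via Cor.~\ref{lb30}, $U_0(\varrho_\MA(2\pi))=1$, \eqref{eq8} and \eqref{eq4}), it is natural by Prop.~\ref{lb42}, and $\fk T_\psi(\vartheta_i)=\vartheta_{\psi i}$ by Rem.~\ref{lb59}.

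The problem is your last paragraph. It is the only argument you give for the identity $\vartheta_{i\boxtimes j}=\Bbb_{\phi\omega j,\phi i}\Bbb_{\phi i,\omega j}(\vartheta_i\boxtimes\vartheta_j)$, and its mechanism is wrong. The reflection in Rem.~\ref{lb138} is a relabeling of the points of $\Sbb^1$, not a functor on modules. The modules, the operators $L,R$, the fusion products, the braiding and the balancing are literally the same operators in both papers. The element written $\varrho_\MA(-2\pi)$ in the coordinates of \cite{MS26a} is our $\varrho_\MA(2\pi)$. So nothing replaces the braiding, and there is no ``simultaneous change'' whose invariance must be checked. Your own second step says the braiding is transported verbatim, which contradicts the last paragraph.

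Read literally, your invariance claim is false. Already for trivial $G$, the identity $\vartheta_{i\boxtimes j}=\Bbb_{j,i}\Bbb_{i,j}(\vartheta_i\boxtimes\vartheta_j)$ is preserved by $(\Bbb,\vartheta)\mapsto(\Bbb^{\mathrm{rev}},\vartheta^{-1})$, where $\Bbb^{\mathrm{rev}}_{i,j}=\Bbb_{j,i}^{-1}$. It is not preserved by reversing the braiding while keeping the twist. Moreover, in our coordinates $\Gamma_\phi U_i(\varrho_\MA(-2\pi))$ is not even a morphism $\MH_i\to\MH_{\phi i}$: it intertwines $\pi_{i,\wtd I}(x)$ with $\pi_{\phi i,\wtd I}(\phi^2x\phi^{-2})$. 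So the two formulas cannot be two different balancings related by a transport. If by ``conjugate reflection'' you meant the antilinear functor $\MH_i\mapsto\MH_{\ovl i}$ of Def.~\ref{lb46}, that is an anti-equivalence reversing the order of $\boxtimes$ (via the maps $\Theta$ of Thm.~\ref{lb49}). Transporting along it would need a real computation that you have not done.

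There are two ways to close the step. One is to say plainly that the identity proved in \cite{MS26a} is, after relabeling, the same identity between the same operators. The other is to prove it directly, as the paper does in Subsec.~\ref{lb139}; this avoids comparing braiding conventions altogether. Under Conv.~\ref{lb10}, the direct proof runs as follows.
\begin{enumerate}
\item Prop.~\ref{lb60} gives the monodromy formula $\Gamma_{\phi\omega}L(\xi,\varrho(2\pi)\wtd I)\eta=\Bbb_{\phi\omega j,\phi i}\Bbb_{\phi i,\omega j}L(\Gamma_\phi\xi,\wtd I)\Gamma_\omega\eta$. It follows from Prop.~\ref{lb20}, the braiding formula (5) and the $G$-covariance (7) of Thm.~\ref{lb15}.
\item Prop.~\ref{lb26} gives $(\vartheta_i\boxtimes\vartheta_j)L(\xi,\wtd I)\eta=L(\Gamma_\phi\varrho_\MA(2\pi)\xi,\wtd I)\Gamma_\omega\varrho_\MA(2\pi)\eta$.
\item Apply the double braiding and Prop.~\ref{lb60}, with $\varrho_\MA(2\pi)\xi\in\MH_i(I)$ in place of $\xi$. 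This yields $\Gamma_{\phi\omega}L(\varrho_\MA(2\pi)\xi,\varrho(2\pi)\wtd I)\varrho_\MA(2\pi)\eta$.
\item By the rotation covariance of Rem.~\ref{lb32}, this equals $\Gamma_{\phi\omega}\varrho_\MA(2\pi)L(\xi,\wtd I)\eta$. Since $\MH_i\boxtimes\MH_j\in\Rep^{\phi\omega}(\MA)$, that is $\vartheta_{i\boxtimes j}L(\xi,\wtd I)\eta$.
\item Density of fusion products concludes.
\end{enumerate}
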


When $G\rightarrow\Aut(\MA)$ is not faithful, then $\RepGA$ does not admit the grading $\bigoplus_{\phi\in G}\Rep^\phi(\MA)$, since $\Hom_\MA(\MH_i,\MH_j)$ might not be zero for $\MH_i\in\Rep^\omega(\MA)$ and $\MH_j\in\Rep^\phi(\MA)$ even when $\omega\neq\phi$ in $G$. In that case, to apply Thm. \ref{lb5}, it suffices to replace $G$ with its image in $\Aut(\MA)$.

\begin{cv}\label{lb9}
Let $\boxtimes$ denote the tensor product bifunctor of $\RepGA$, and call it the \textbf{fusion product} bifunctor. For each $\MH_i,\MH_j\in\RepGA$, we write $\pmb{\MH_{i\boxtimes j}}:=\MH_i\boxtimes\MH_j$. According to the definition of $G$-crossed tensor categories, we have
\begin{align}\label{eq10}
\MH_i\in\Rep^\phi(\MA),\MH_j\in\Rep^\omega(\MA)\qquad\Longrightarrow\qquad\MH_i\boxtimes\MH_j\in\Rep^{\phi\omega}(\MA)
\end{align}
The braiding operation is denoted by  $\pmb{\Bbb}$. More precisely, for each $\MH_i\in\Rep^\phi(\MA)$ and $\MH_j\in\RepGA$, the braiding operator (which is a unitary equivalence of $G$-twisted $\MA$-modules) is denoted by
\begin{align*}
\pmb{\Bbb_{i,j}}:\MH_i\boxtimes\MH_j\rightarrow\MH_{\phi j}\boxtimes\MH_i
\end{align*}

By the coherence conditions for the $W^*$-tensor category $\RepGA$, we make the identification
\begin{align*}
(\MH_i\boxtimes\MH_j)\boxtimes\MH_k=\MH_i\boxtimes(\MH_j\boxtimes\MH_k)
\end{align*}
via the associator, and denote them by $\MH_i\boxtimes\MH_j\boxtimes\MH_k\equiv\MH_{i\boxtimes j\boxtimes k}$; we identify
\begin{align*}
\MH_0\boxtimes\MH_i=\MH_i=\MH_i\boxtimes\MH_0
\end{align*}
using the unitors. In other words, we treat $\RepGA$ as if it is strict.  \hqed
\end{cv}

\subsubsection{The compatibility conditions for the $G$-action}

In the following remark, we describe the action of $G$ on $\RepGA$ more explicitly.  This is a concrete version of the more abstract and concise description given in \cite[Def. 2.8]{MS26a}. Although we will not use this explicit description in our construction of open/closed CFTs, we feel it is worthwhile to record it here.

\begin{rem}\label{lb7}
Recall that $e$ is the unit of $G$. In addition to the assignment $\phi\mapsto\fk T_\phi$ the action involves the data and the compatibility conditions listed below.
\begin{enumerate}[label=(\alph*)]
\item A unitary isomorphism $\mbf A_i:\MH_i\simeq\MH_{ei}$ where $e\in G$ is the group identity. This is given by the unique unitary morphism $(\MH_i,\id)\rightarrow(\MH_{ei},\Gamma_e|_{\MH_i})$. So
\begin{align*}
\mbf A_i=\Gamma_e|_{\MH_i}
\end{align*}
The equivalence $\mbf A:\id_{\RepGA}\rightarrow\fk T_e$ is natural, i.e., we have a commutative diagram
\begin{equation*}
\begin{tikzcd}
\MH_i \arrow[r,"T"] \arrow[d,"\mbf A_i"'] & \MH_j \arrow[d,"\mbf A_j"] \\
\MH_{ei} \arrow[r,"\fk T_e(T)"]           & \MH_{ej}       
\end{tikzcd}
\end{equation*}
for each $T\in\Hom_\MA(\MH_i,\MH_j)$, which is straightforward to check.
\item A unitary isomorphism $\mbf B_\phi:\MH_0\rightarrow\MH_{\phi0}$ for each $\phi\in G$, which is given by the unique unitary morphism $(\MH_0,\phi)\rightarrow(\MH_{\phi0},\Gamma_\phi|_{\MH_0})$ discussed in Exp. \ref{lb6}. So
\begin{align*}
\mbf B_\phi=\Gamma_\phi\circ\phi^{-1}|_{\MH_0}
\end{align*}
\item A unitary natural isomorphism $\fk T_\phi\circ\fk T_\psi\rightarrow\fk T_{\phi\circ\psi}$ for $\phi,\psi\in G$, where, for each $\MH_i\in\RepGA$, the unitary equivalence 
\begin{align*}
\mbf C_{\phi,\psi,i}:\MH_{\phi(\psi i)}\rightarrow \MH_{(\phi\psi)i}\qquad \mbf C_{\phi,\psi,i}=\Gamma_{\phi\psi}\Gamma_\psi^{-1}\Gamma_\phi^{-1}|_{\MH_{\phi(\psi i)}}
\end{align*}
is given by the unique unitary morphism $(\MH_{\phi(\psi i)},\Gamma_\phi\circ\Gamma_\psi|_{\MH_i})\rightarrow (\MH_{(\phi\psi)i},\Gamma_{\phi\psi}|_{\MH_i})$.

The naturality means that for each $T\in\Hom_\MA(\MH_i,\MH_j)$,
\begin{subequations}\label{eq71}
\begin{gather}\label{eq71a}
\begin{tikzcd}[column sep=huge,row sep=large,ampersand replacement=\&]
\MH_{\phi(\psi i)} \arrow[r,"\fk T_\phi\circ\fk T_\psi(T)"] \arrow[d,"\mbf C_{\phi,\psi,i}"'] \& \MH_{\phi(\psi j)} \arrow[d,"\mbf C_{\phi,\psi,j}"] \\
\MH_{(\phi\psi)i} \arrow[r,"\fk T_{\phi\psi}(T)"]           \& \MH_{(\phi\psi)j}       
\end{tikzcd}
\end{gather}
Moreover, the tuple $(\fk T,\mbf C,\mbf A)$ is a tensor functor, i.e., $\mbf C$ is a tensorator and $\mbf A$ is a unitor. This means the commutativity of
\begin{gather}
\begin{tikzcd}[column sep=huge,row sep=large,ampersand replacement=\&]
\MH_{\phi(\psi(\omega i))} \arrow[r,"\fk T_\phi(\mbf C_{\psi,\omega, i})"] \arrow[d,"\mbf C_{\phi,\psi,\omega i}"'] \& \MH_{\phi((\psi\omega)i)} \arrow[d,"\mbf C_{\phi,\psi\omega,i}"] \\
\MH_{(\phi\psi)(\omega i)} \arrow[r,"\mbf C_{\phi\psi,\omega,i}"]           \& \MH_{(\phi\psi\omega)i}       
\end{tikzcd}\\
\begin{tikzcd}[column sep=large,ampersand replacement=\&]
\MH_{e(\phi i)} \arrow[r, bend left,"\mbf C_{e,\phi,i}"] \& \MH_{\phi i} \arrow[l, bend left,"\mbf A_{\phi i}"]
\end{tikzcd}
\qquad
\begin{tikzcd}[column sep=large,ampersand replacement=\&]
\MH_{\phi(ei)} \arrow[r, bend left,"\mbf C_{\phi,e,i}"] \& \MH_{\phi i} \arrow[l, bend left,"\fk T_\phi(\mbf A_i)"]
\end{tikzcd}
\end{gather}
\end{subequations}
The commutativity of these diagrams is easy to check.\footnote{In particular, in the first two diagrams, going from the upper-left corner to the lower-right corner along either path yields, respectively, $\Gamma_{\phi\psi}T\Gamma_\psi^{-1}\Gamma_\phi^{-1}$ and $\Gamma_{\phi\psi\omega}\Gamma_\omega^{-1}\Gamma_\psi^{-1}\Gamma_\phi^{-1}$.} 
\item A unitary (natural) tensorator $\fk T_\phi(-)\boxtimes \fk T_\phi(-)\rightarrow \fk T_\phi(-\boxtimes-)$ for each $\phi\in G$, where, for each $\MH_i,\MH_j\in\RepGA$, the unitary equivalence
\begin{align*}
\mbf D_{\phi,i,j}:\MH_{\phi i}\boxtimes\MH_{\phi j}\rightarrow \MH_{\phi(i\boxtimes j)} 
\end{align*}
will be described in Def. \ref{lb16}. Moreover, $\mbf B_\phi$ is a unitor. (So $(\fk T_\phi,\mbf D_\phi,\mbf B_\phi)$ gives a $W^*$-tensor automorphism of $\RepGA$.)

Unraveling the phrases ``natural'', ``tensorator'', and ``unitor'' gives the following commutative diagrams, where $S\in\Hom_\MA(\MH_i,\MH_{\wtd i})$ and $T\in\Hom_\MA(\MH_j,\MH_{\wtd j})$, and $\fk u_{0,\phi i}:\MH_0\boxtimes\MH_{\phi i}\rightarrow\MH_{\phi i}$, $\fk u_{0,i}:\MH_0\boxtimes\MH_i\rightarrow\MH_i$, $\fk u_{\phi i,0}:\MH_{\phi i}\boxtimes \MH_0\rightarrow\MH_{\phi i}$, and $\fk u_{i,0}:\MH_i\boxtimes\MH_0\rightarrow\MH_i$ are the unitors.
\begin{subequations}\label{eq12}
\begin{gather}
\begin{tikzcd}[column sep=huge,row sep=large,ampersand replacement=\&]
\MH_{\phi i}\boxtimes\MH_{\phi j} \arrow[r,"\fk T_\phi(S)\boxtimes\fk T_\phi(T)"] \arrow[d,"\mbf D_{\phi,i,j}"'] \& \MH_{\phi \wtd i}\boxtimes\MH_{\phi \wtd j} \arrow[d,"\mbf D_{\phi,\wtd i,\wtd j}"] \\
\MH_{\phi(i\boxtimes j)}  \arrow[r,"\fk T_\phi(S\boxtimes T)"]           \& \MH_{\phi(\wtd i\boxtimes \wtd j)}        
\end{tikzcd}\label{eq12a}\\
\begin{tikzcd}[column sep=huge,row sep=large,ampersand replacement=\&]
\MH_{\phi i}\boxtimes\MH_{\phi j}\boxtimes\MH_{\phi k} \arrow[r,"\idt\boxtimes \mbf D_{\phi,j,k}"] \arrow[d,"\mbf D_{\phi,i,j}\boxtimes\idt"'] \& \MH_{\phi i}\boxtimes\MH_{\phi (j\boxtimes k)} \arrow[d,"\mbf D_{\phi,i,j\boxtimes k}"] \\
\MH_{\phi(i\boxtimes j)}\boxtimes\MH_{\phi k}  \arrow[r,"\mbf D_{\phi,i\boxtimes j,k}"]           \& \MH_{\phi(i\boxtimes j\boxtimes k)}        
\end{tikzcd}\label{eq12b}\\
\begin{tikzcd}[column sep=huge,row sep=large,ampersand replacement=\&]
\MH_0\boxtimes\MH_{\phi i} \arrow[r,"\fk u_{0,\phi i}"] \arrow[d,"\mbf B_\phi\boxtimes\idt"'] \& \MH_{\phi i}  \\
\MH_{\phi 0}\boxtimes\MH_{\phi i}  \arrow[r,"\mbf D_{\phi,0,i}"]           \& \MH_{\phi(0\boxtimes i)}        \arrow[u,"\fk T_\phi(\fk u_{0,i})"']
\end{tikzcd}\label{eq12c}\\
\begin{tikzcd}[column sep=huge,row sep=large,ampersand replacement=\&]
\MH_{\phi i} \boxtimes\MH_0\arrow[r,"\fk u_{\phi i,0}"] \arrow[d,"\idt\boxtimes\mbf B_\phi"'] \& \MH_{\phi i}  \\
\MH_{\phi i}\boxtimes\MH_{\phi 0}  \arrow[r,"\mbf D_{\phi,i,0}"]           \& \MH_{\phi(i\boxtimes 0)}        \arrow[u,"\fk T_\phi(\fk u_{i,0})"']
\end{tikzcd}\label{eq12d}
\end{gather}
\end{subequations}
\item The natural isomorphism $\mbf A:\id_{\RepGA}\rightarrow\fk T_e$ is tensor natural if the tensorator and unitor of $\id_{\RepGA}$ are chosen to be the identity, and those of $\fk T_e$ are chosen as in (d). This means that the following diagrams commute:
\begin{subequations}\label{eq73}
\begin{gather}
\begin{tikzcd}[column sep=huge,row sep=large,ampersand replacement=\&]
\MH_i\boxtimes\MH_j \arrow[r,"\mbf A_i\boxtimes\mbf A_j"] \arrow[d,"="'] \& \MH_{ei}\boxtimes\MH_{ej}\arrow[d,"\mbf D_{e,i,j}"']  \\
\MH_{i\boxtimes j}\  \arrow[r,"\mbf A_{i\boxtimes j}"]           \& \MH_{e(i\boxtimes j)}        
\end{tikzcd}\label{eq73a}\\
\begin{tikzcd}[column sep=large,ampersand replacement=\&]
\MH_0 \arrow[r, bend left,"\mbf B_e"]\arrow[r, bend right,"\mbf A_0"'] \& \MH_{e0} 
\end{tikzcd}\label{eq73b}
\end{gather}
\end{subequations}
\item The natural isomorphism $\mbf C_{\phi,\psi}:\fk T_\phi\circ\fk T_\psi\rightarrow\fk T_{\phi\circ\psi}$ is tensor natural if the tensorators and the unitors of $\fk T_\phi,\fk T_\psi,\fk T_{\phi\circ\psi}$ are chosen as in (d). (In particular, the tensorator and unitor of $\fk T_\phi\circ\fk T_\psi$ are $\fk T_\phi(\mbf D_\psi)\circ\mbf D_\phi$ and $\fk T_\phi(\mbf B_\psi)\circ\mbf B_\phi$.) This means that the following diagrams commute:
\begin{subequations}\label{eq72}
\begin{gather}
\begin{tikzcd}[column sep=3cm,row sep=large,ampersand replacement=\&]
\MH_{\phi(\psi i)} \boxtimes\MH_{\phi(\psi j)}\arrow[r,"\mbf C_{\phi,\psi,i}\boxtimes\mbf C_{\phi,\psi,j}"] \arrow[d,"\mbf D_{\phi,\psi i,\psi j}"'] \& \MH_{(\phi\psi) i} \boxtimes\MH_{(\phi\psi)j}\arrow[dd,"\mbf D_{\phi\psi,i,j}"] \\
\MH_{\phi(\psi i\boxtimes\psi j)}\arrow[d,"\fk T_\phi(\mbf D_{\psi,i,j})"']\&\\
\MH_{\phi(\psi(i\boxtimes j))} \arrow[r,"\mbf C_{\phi,\psi,i\boxtimes j}"]           \& \MH_{(\phi\psi)(i\boxtimes j)}        
\end{tikzcd}\label{eq72a}\\
\begin{tikzcd}[column sep=huge,row sep=large,ampersand replacement=\&]
\MH_0 \arrow[r,"\mbf B_{\phi\psi}"] \arrow[d,"\mbf B_\phi"'] \& \MH_{(\phi\psi)0}  \\
\MH_{\phi 0}\  \arrow[r,"\fk T_\phi(\mbf B_\psi)"]           \& \MH_{\phi(\psi 0)}        \arrow[u,"\mbf C_{\phi,\psi,0}"']
\end{tikzcd}\label{eq72b}
\end{gather}
\end{subequations}
\end{enumerate}
\end{rem}

The above compatibility conditions in Rem. \ref{lb7} allow the $G$-action to be treated as strict, in the following sense.

\begin{cv}\label{lb10}
Unless otherwise stated, for $\MH_i,\MH_j\in\RepGA$ and $\phi,\psi\in G$, we make the identifications
\begin{subequations}\label{eq5}
\begin{gather}
\MH_i=\MH_{ei}\qquad \MH_0=\MH_{\phi 0}\qquad \MH_{\phi(\psi i)}=\MH_{(\phi\psi) i}\label{eq5a}\\
\MH_{\phi i\boxtimes\phi j}=\MH_{\phi(i\boxtimes j)} \label{eq5b}
\end{gather}
\end{subequations}
using the unitary isomorphisms $\mbf A,\mbf B,\mbf C,\mbf D$ mentioned in Rem. \ref{lb7}. Thus, for $\phi_1,\dots,\phi_n\in G$, we interpret $\MH_{\phi_1\cdots\phi_n i}$ with parentheses inserted arbitrarily. Under these identifications, by (a,b,c) of Rem. \ref{lb7}, we have
\begin{align*}
\Gamma_e|_{\MH_i}=\id_{\MH_i}\qquad \Gamma_\phi|_{\MH_0}=\phi\qquad \Gamma_{\phi\psi}|_{\MH_i}=\Gamma_\phi\circ\Gamma_\psi|_{\MH_i}
\end{align*}
which we abbreviate as
\begin{align}\label{eq6}
\Gamma_e=\id\qquad\Gamma_\phi|_{\MH_0}=\phi\qquad  \Gamma_{\phi\psi}=\Gamma_\phi\circ\Gamma_\psi
\end{align}
In other words, the assignment $\phi\in G\mapsto \Gamma_\phi$ defines a ``categorical group representation'' $G\curvearrowright\RepGA$ extending $G\curvearrowright\MH_0$.
\end{cv}

As a consequence of \eqref{eq6}, we have
\begin{align}\label{eq7}
\Gamma_\phi^{-1}=\Gamma_{\phi^{-1}}
\end{align}
as unitary maps from each $\MH_i$ to $\MH_{\phi^{-1}i}$.

\subsection{$G$-crossed categorical extensions}

Let $\MA$ be a conformal net and $G\rightarrow\Aut(\MA)$ a group homomorphism. In this section, we recall the $G$-crossed categorical extension for $\MA$ established in \cite{MS26a,MS26b}. Recall Def. \ref{lb8} for the clockwise and anticlockwise complements $\wtd I'$ and $\bpr\wtd I$.

\begin{df}
For each $\MH_i,\MH_j\in\RepGA$ and $\wtd I\in\Jtd$, let
\begin{gather*}
\pmb{\Hom_{\MA(\wtd I)}(\MH_i,\MH_j)}=\big\{T\in\fk L(\MH_i,\MH_j):T\pi_{i,\wtd I}(x)=\pi_{j,\wtd I}(x)T\text{ for each }x\in\MA(I)\big\}\\
\pmb{\MH_j(I)}=\Hom_{\MA(\wtd I')}(\MH_0,\MH_j)\cdot\Omega
\end{gather*}
where $\MH_j(I)$ is independent of $\arg_I$. Note that the Reeh-Schlieder property implies that $\MH_j(I)$ is dense in $\MH_j$. We write
\begin{align*}
\pmb{\End_{\MA(\wtd I)}(\MH_i)}=\Hom_{\MA(\wtd I)}(\MH_i,\MH_i)
\end{align*}
We write $\Hom_{\MA(\wtd I)}(\MH_i,\MH_j)$ as $\pmb{\Hom_{\MA(I)}(\MH_i,\MH_j)}$ when $\MH_i,\MH_j\in\RepA$.
\end{df}

\begin{proof}[Explanation]
We need to explain why $\MH_j(I)$ is independent of $\arg_I$. It suffices to show that for each $\wtd I\in\Jtd$, the space $\MH_j(I)$ defined by $\wtd I$ coincides with the one defined by $\varrho(2\pi)\wtd I$, that is,
\begin{align*}
\Hom_{\MA(\wtd I')}(\MH_0,\MH_j)\Omega=\Hom_{\MA(\bpr\wtd I)}(\MH_0,\MH_j)\Omega
\end{align*}
Assume without loss of generality that $\MH_j\in\Rep^\phi(\MA)$. Then the above relation follows from the following Lem. \ref{lb17} (due to \cite[Eq. (6)]{MS26a}) and the fact that $\phi\Omega=\Omega$.
\end{proof}

\begin{lm}\label{lb17}
Assume that $\phi\in G$ and $\MH_j\in\Rep^\phi(\MA)$. Let $T\in\Hom_{\MA(\bpr\wtd I)}(\MH_0,\MH_j)$. Then $T\phi\in\Hom_{\MA(\wtd I')}(\MH_0,\MH_j)$.
\end{lm}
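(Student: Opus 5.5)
We need to show: for $T\in\Hom_{\MA(\bpr\wtd I)}(\MH_0,\MH_j)$, i.e. $T x=\pi_{j,\bpr\wtd I}(x)T$ for $x\in\MA(I')$, the operator $T\phi$ satisfies $T\phi\, x=\pi_{j,\wtd I'}(x)T\phi$ for all $x\in\MA(I')$. Here $\wtd I'$ and $\bpr\wtd I$ are the same interval $I'$ with arg functions differing by $2\pi$. Since $\bpr\wtd I$ is anticlockwise to $\wtd I$ and $\wtd I'$ is clockwise to $\wtd I$, we have $\wtd I'=\varrho(2\pi)\,\bpr\wtd I$.

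The plan is a direct one-line computation from the twisting condition \eqref{eq8} in Def. \ref{lb3}-(b). Fix $x\in\MA(I')$. Since $\phi$ is an automorphism of $\MA$ acting on $\MH_0$ by the vacuum representation, we have $\phi x=(\phi x\phi^{-1})\phi$ with $\phi x\phi^{-1}\in\MA(I')$. Hence
\begin{align*}
T\phi\,x=T(\phi x\phi^{-1})\phi=\pi_{j,\bpr\wtd I}(\phi x\phi^{-1})\,T\phi,
\end{align*}
using the intertwining property of $T$ with respect to $\bpr\wtd I$. Applying \eqref{eq8} with $\bpr\wtd I$ in place of $\wtd I$ gives
\begin{align*}
\pi_{j,\bpr\wtd I}(\phi x\phi^{-1})=\pi_{j,\varrho(2\pi)\bpr\wtd I}(\phi^2x\phi^{-2})
\end{align*}
This is not yet of the desired form, so the plan needs care in the direction of the twist. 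We should instead apply \eqref{eq8} read in the form $\pi_{j,\varrho(2\pi)\wtd K}(y)=\pi_{j,\wtd K}(\phi^{-1}y\phi)$, with $\wtd K=\bpr\wtd I$ and $y=x$:
\begin{align*}
\pi_{j,\wtd I'}(x)=\pi_{j,\varrho(2\pi)\bpr\wtd I}(x)=\pi_{j,\bpr\wtd I}(\phi^{-1}x\phi).
\end{align*}
So what is needed is $T\phi\, x=\pi_{j,\bpr\wtd I}(\phi^{-1}x\phi)T\phi$. The rewriting should therefore use $\phi x=\phi x\phi^{-1}\phi$ and then recognize that the correct conjugate is $\phi^{-1}x\phi$.

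The only genuine obstacle is this sign/orientation bookkeeping: whether $\wtd I'=\varrho(2\pi)\bpr\wtd I$ or $\varrho(-2\pi)\bpr\wtd I$, and whether the resulting conjugation matches $T\phi$ or forces $T\phi^{-1}$. I would check it carefully against Def. \ref{lb8} and \eqref{eq8}, noting Rem. \ref{lb138} about the reversed convention relative to \cite{MS26a}. The statement asserts $T\phi$, so presumably with the paper's convention the correct relation is $\bpr\wtd I=\varrho(2\pi)\wtd I'$. In that case \eqref{eq8} with $\wtd K=\wtd I'$ gives
\begin{align*}
\pi_{j,\bpr\wtd I}(\phi x\phi^{-1})=\pi_{j,\wtd I'}(x).
\end{align*}
The first computation then closes: $T\phi\,x=\pi_{j,\wtd I'}(x)T\phi$. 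I would confirm this orientation by checking the defining inequality of ``clockwise'' on explicit arg values. No analytic input, such as normality or additivity, is required.
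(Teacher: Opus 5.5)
Your closing computation,
\begin{align*}
T\phi\,x=T(\phi x\phi^{-1})\phi=\pi_{j,\bpr\wtd I}(\phi x\phi^{-1})\,T\phi=\pi_{j,\wtd I'}(x)\,T\phi ,
\end{align*}
is exactly the paper's proof. The orientation fact it relies on, $\bpr\wtd I=\varrho(2\pi)\wtd I'$, is true. Applying \eqref{eq8} to $\wtd I'$ then gives $\pi_{j,\bpr\wtd I}(\phi x\phi^{-1})=\pi_{j,\wtd I'}(x)$, and this is the step the paper uses.

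Two things should be cleaned up.

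\textbf{The opening claim is backwards.} Your first assertion, $\wtd I'=\varrho(2\pi)\bpr\wtd I$, is false. The detour built on it, applying \eqref{eq8} with $\wtd K=\bpr\wtd I$, should be discarded.

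\textbf{The correct relation must be derived, not assumed.} You obtained $\bpr\wtd I=\varrho(2\pi)\wtd I'$ by assuming the statement is true, which is circular. It follows directly from the definitions. By the definition of ``clockwise'', a clockwise neighbour has smaller arg values, lying within $2\pi$ below those of $\wtd I$. Write $\wtd I=(a,b)\subset\Rbb$ with $b-a<2\pi$. Then $\wtd I'=(b-2\pi,a)$ and $\bpr\wtd I=(b,a+2\pi)$. Since $\varrho(2\pi)$ acts on $\Rbb$ by $u\mapsto u+2\pi$, this gives $\bpr\wtd I=\varrho(2\pi)\wtd I'$.

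With that check written in place of the hedging, the proof is complete. The only other input is $\phi x\phi^{-1}\in\MA(I')$, which holds because $\phi$ is an automorphism of $\MA$.
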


Consequently, if $\xi\in\MH_j(I)$, and if $S\in\Hom_{\MA(\wtd I')}(\MH_0,\MH_j)$ and $T\in\Hom_{\MA(\bpr\wtd I)}(\MH_0,\MH_j)$ are the unique maps such that $S\Omega=\xi=T\Omega$, then $S=T\phi$. (The uniqueness is due to the Reeh-Schlieder property for $\MA$.)

\begin{proof}
By \eqref{eq8}, for each $x\in\MA(I')$ we have
\begin{align*}
\pi_{j,\wtd I'}(x)T\phi=\pi_{j,\bpr\wtd I}(\phi x\phi^{-1})T\phi=T\phi x\phi^{-1}\phi=T\phi x
\end{align*}
Thus $T\phi\in\Hom_{\MA(\wtd I')}(\MH_0,\MH_j)$. 
\end{proof}

\begin{rem}\label{lb43}
Note that Haag duality implies $\MH_0(I)=\MA(I)\Omega$. Note also that if $T\in\Hom_{\MA(\wtd I')}(\MH_i,\MH_j)$, then
\begin{align*}
T\MH_i(I)\subset\MH_j(I)
\end{align*}
\end{rem}

\begin{rem}\label{lb13}
By \eqref{eq4}, we have
\begin{align}
\Gamma_\phi\Hom_{\MA(\wtd I)}(\MH_i,\MH_j)\Gamma_\phi^{-1}=\Hom_{\MA(\wtd I)}(\MH_{\phi i},\MH_{\phi j})
\end{align}
In particular, noting that $\Gamma_\phi^{-1}|_{\MH_0}$ equals $\phi^{-1}$ (due to \eqref{eq6}) and hence fixes $\Omega$, we obtain\footnote{An alternative proof of \eqref{eq74}, which does not assume Conv. \ref{lb74}, is to apply $\Gamma_\phi\Hom_{\MA(\wtd I)}(\MH_0,\MH_j)\phi^{-1}=\Hom_{\MA(\wtd I)}(\MH_0,\MH_{\phi j})$.}
\begin{align}\label{eq74}
\Gamma_\phi\big(\MH_j(I)\big)=\MH_{\phi j}(I)
\end{align}
By Cor. \ref{lb30}, for each $g\in\GA$ we have
\begin{align}\label{eq76}
g\Hom_{\MA(\wtd I)}(\MH_i,\MH_j) g^{-1}=\Hom_{\MA(g\wtd I)}(\MH_i,\MH_j)
\end{align}
\end{rem}

\begin{df}
Let $A:\mc P\rightarrow\mc R$, $B:\mc Q\rightarrow\mc S$, $C:\mc P\rightarrow \mc Q$, $D:\mc R\rightarrow\mc S$ be bounded linear operators between Hilbert spaces. We say that the diagram
\begin{equation*}
\begin{tikzcd}
\mc P \arrow[r,"C"] \arrow[d,"A"'] & \mc Q \arrow[d,"B"] \\
\mc R \arrow[r,"D"]           & \mc S        
\end{tikzcd}
\end{equation*}
\textbf{commutes adjointly} if $DA=BC$ and $D^*B=AC^*$.
\end{df}

The following theorem continues to assume Conv. \ref{lb9} but not Conv. \ref{lb10}.

\begin{thm}\label{lb15}
For each $\wtd I\in\Jtd$, $\MH_i\in\RepGA$,  $\xi\in\MH_i (I)$, we have bounded linear operators
\begin{gather}\label{eq68}
\begin{gathered}
L(\xi,\wtd I)\big|_{\MH_k}\in\Hom_{\MA(\wtd I')}(\MH_k,\MH_i\boxtimes\MH_k)\\
R(\xi,\wtd I)\big|_{\MH_k}\in\Hom_{\MA(\bpr\wtd I)}(\MH_k,\MH_k\boxtimes\MH_i)
\end{gathered}
\end{gather}
associated to each $\MH_k\in\RepGA$, called the \textbf{L operators} and the \textbf{R operators} and abbreviated to $L(\xi,\wtd I),R(\xi,\wtd I)$ when no confusion arises, satisfying the following conditions:
\begin{enumerate}[label=(\arabic*)]
\item (Naturality) If $\MH_k,\MH_{k'}\in\RepGA$,  $T\in\Hom_{\MA}(\MH_k,\MH_{k'})$,  $\xi\in\MH_i(I)$, and $\chi\in\MH_k$, then
\begin{align}
	(\idt_i\boxtimes T)L(\xi,\wtd I)\chi=L(\xi,\wtd I)T\chi\qquad (T\boxtimes \idt_i)R(\xi,\wtd I)\chi=R(\xi,\wtd I)T\chi
\end{align}
\item (State-field correspondence) If $\xi\in\MH_i(I)$, then (under the identifications $\MH_i=\MH_0\boxtimes\MH_i=\MH_i\boxtimes\MH_0$) we have
\begin{align}
	L(\xi,\wtd I)\Omega=R(\xi,\wtd I)\Omega=\xi
\end{align}
\item (Density of fusion products) The sets 
\begin{align*}
L(\MH_i(I),\wtd I)\MH_k\qquad  R(\MH_i(I),\wtd I)\MH_k
\end{align*}
span dense subspaces of $\MH_i\boxtimes\MH_k$ and $\MH_k\boxtimes\MH_i$, respectively.\footnote{Indeed, they are equal to the full space $\MH_i\boxtimes\MH_k$ and $\MH_k\boxtimes\MH_i$ respectively by the fact that $\MA(I)$ is a type III factor.}
\item (Locality) For any $\MH_k\in\RepGA$, any $\wtd I,\wtd J\in\Jtd$ with $\wtd J$ clockwise to $\wtd I$, and any $\xi\in\MH_i(I),\eta\in\MH_j(J)$, the following diagram commutes adjointly.
\begin{equation}
\begin{tikzcd}
\quad \MH_k\quad \arrow[rr,"{R(\eta,\wtd J)}"] \arrow[d, "{L(\xi,\wtd I)}"'] &&\quad \MH_k\boxtimes\MH_j\quad \arrow[d, "{L(\xi,\wtd I)}"]\\
\MH_i\boxtimes\MH_k\arrow[rr,"{R(\eta,\wtd J)}"] &&\MH_i\boxtimes\MH_k\boxtimes\MH_j
\end{tikzcd}
\end{equation}
\item (Braiding) If $\phi\in G$ and $\MH_i\in\Rep^\phi(\MA)$, the unitary isomorphism $\Bbb_{i,j}:\MH_i\boxtimes\MH_j\rightarrow\MH_{\phi j}\boxtimes \MH_i$ satisfies
\begin{align}
\Bbb_{i,j} L(\xi,\wtd I)\eta=R(\xi,\wtd I)\Gamma_\phi\eta
\end{align}
whenever $\xi\in\MH_i(I)$ and $\eta\in\MH_j$.
\item (Conformal covariance) If $g\in\GA$ and $\xi\in\MH_i(I)$, there exists a (necessarily unique) element of $\MH_i(gI)$, denoted by $g\xi g^{-1}$, satisfying
\begin{align}\label{eq14}
L(g\xi g^{-1},g\wtd I)=gL(\xi,\wtd I)g^{-1}\qquad R(g\xi g^{-1},g\wtd I)=gR(\xi,\wtd I)g^{-1}
\end{align}
when acting on any $\MH_j\in\RepGA$.
\end{enumerate}
Moreover, assuming the identification \eqref{eq5b}, we also have:
\begin{enumerate}
\item[(7)] ($G$-covariance) For each $\xi\in\MH_i(I)$ and $\phi\in G$, noting that $\Gamma_\phi\xi\in\MH_{\phi i}(I)$ (cf. Rem. \ref{lb13}), we have
\begin{align}\label{eq75}
\Gamma_\phi L(\xi,\wtd I)=L(\Gamma_\phi\xi,\wtd I)\Gamma_\phi\qquad \Gamma_\phi R(\xi,\wtd I)=R(\Gamma_\phi\xi,\wtd I)\Gamma_\phi
\end{align}
when acting on any $\MH_j\in\RepGA$ (with targets in $\MH_{\phi(i\boxtimes j)}=\MH_{\phi i\boxtimes\phi j}$ and $\MH_{\phi(j\boxtimes i)}=\MH_{\phi j\boxtimes\phi i}$, respectively).
\end{enumerate}
\end{thm}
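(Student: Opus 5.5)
The statement is essentially the $G$-crossed categorical extension of $\MA$ constructed in \cite{MS26a,MS26b}. I would therefore prove it by recalling that construction and checking each listed property, rather than rebuilding everything from scratch. The one genuinely new ingredient is property (7), which depends on the convention \eqref{eq5b}.

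\textbf{The model for the fusion product.} Fix $\wtd I$. By Haag duality and the Reeh--Schlieder property, each $\xi\in\MH_i(I)$ equals $T_\xi\Omega$ for a unique $T_\xi\in\Hom_{\MA(\wtd I')}(\MH_0,\MH_i)$. Moreover, for two such operators $S^*T\in\MA(I)$. I would model $\MH_i\boxtimes\MH_k$ as the completion of $\Hom_{\MA(\wtd I')}(\MH_0,\MH_i)\otimes\MH_k$ with respect to
\[
\langle S\otimes\chi|T\otimes\psi\rangle=\langle\chi|\pi_{k,\wtd I}(S^*T)\psi\rangle ,
\]
which is the standard Connes fusion over $\MA(I)$, and set $L(\xi,\wtd I)\chi:=T_\xi\otimes\chi$.

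\textbf{Checking (1)--(6) in this model.}
\begin{enumerate}[label=(\roman*)]
\item Boundedness of $L(\xi,\wtd I)$ holds because $\|T_\xi\otimes\chi\|^2=\langle\chi|\pi_{k,\wtd I}(T_\xi^*T_\xi)\chi\rangle\le\|T_\xi\|^2\|\chi\|^2$.
\item The intertwining property with $\MA(\wtd I')$ holds because that algebra acts on the $\MH_k$ factor, or, via the left action, through $T_\xi$.
\item Naturality (1) holds since $T$ acts only on the $\MH_k$ factor.
\item State-field correspondence (2) follows from the unitor $\MH_i\boxtimes\MH_0\simeq\MH_i$, given by $T\otimes\Omega\mapsto T\Omega$.
\item Density (3) is immediate from the construction.
\end{enumerate}
$R(\xi,\wtd I)$ is defined by the mirror construction with $\bpr\wtd I$ in place of $\wtd I'$. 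Locality (4) then amounts to the commutation of left and right actions on the bimodule; Rem. \ref{lb133} guarantees this when $\wtd J$ is clockwise to $\wtd I$.

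Braiding (5) is in effect the definition of $\Bbb_{i,j}$. It is well defined because $L(\MH_i(I),\wtd I)\MH_j$ is dense, and it is isometric because both sides have inner products $\langle\eta|\pi(T_\xi^*T_{\xi'})\eta'\rangle$. Here one uses Lem. \ref{lb17} to pass between $\wtd I'$ and $\bpr\wtd I$; this is exactly where the twist $\Gamma_\phi$ enters.

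Conformal covariance (6) follows from Cor. \ref{lb30} and \eqref{eq76}. These show that $gT_\xi g^{-1}\in\Hom_{\MA(g\wtd I')}(\MH_0,\MH_i)$, so I would set $g\xi g^{-1}:=gT_\xi g^{-1}\Omega$, using $g\Omega\in\Cbb\Omega$ only through the central extension.

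\textbf{Independence of the chosen interval (main obstacle).} The main difficulty is to show that these fusion products, built from different $\wtd I$, are canonically identified compatibly with all $L$ and $R$. Concretely, the transport isomorphisms must be consistent along chains of arg-valued intervals, including the $2\pi$-monodromy governed by \eqref{eq8}. For this I would invoke \cite{MS26a,MS26b}, and \cite{Gui21a} in the untwisted case. The strategy there is to first compare $\wtd I\subset\wtd J$ by isotony, then connect arbitrary intervals by chains, and use Lem. \ref{lb14} for the conformal part.

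\textbf{Property (7).} I would define the tensorator $\mbf D_{\phi,i,j}$ (Def. \ref{lb16}) by
\[
\Gamma_\phi(T\otimes\chi)\mapsto(\Gamma_\phi T\phi^{-1})\otimes\Gamma_\phi\chi .
\]
By Rem. \ref{lb13}, $\Gamma_\phi T\phi^{-1}\in\Hom_{\MA(\wtd I')}(\MH_0,\MH_{\phi i})$. This map is unitary because $\Ad_\phi$ preserves $\MA(I)$, and by \eqref{eq4} we have $\pi_{\wtd I}(\phi x\phi^{-1})\Gamma_\phi=\Gamma_\phi\pi_{\wtd I}(x)$. Since $\phi\Omega=\Omega$, we get $T_{\Gamma_\phi\xi}=\Gamma_\phi T_\xi\phi^{-1}$, so \eqref{eq75} holds essentially by definition. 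The remaining work is to verify the diagrams \eqref{eq12}, \eqref{eq73} and \eqref{eq72}; these are routine once everything is written in terms of $\Gamma$'s.
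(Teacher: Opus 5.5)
Your handling of (5) and (7) is in substance the paper's appendix argument (Thm.~\ref{lb29}, Lem.~\ref{lb23}, Thm.~\ref{lb28}): the same inner-product identity, \eqref{eq4}, Lem.~\ref{lb17}, and $\phi\Omega=\Omega$. You do omit two steps there: the chain argument showing that $\Bbb_{i,j}$ and $\mbf D_{\phi,i,j}$ are independent of $\wtd I$, and the check that they are module morphisms. Like the paper, you ultimately take the fusion product and (1)--(4) from \cite{MS26a,MS26b,Gui21a}.

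The genuine gap is (6). Cor.~\ref{lb30} and \eqref{eq76} only show two things: $gL(\xi,\wtd I)g^{-1}|_{\MH_k}$ lies in $\Hom_{\MA(g\wtd I')}(\MH_k,\MH_i\boxtimes\MH_k)$, and on $\MH_0$ it sends $\Omega$ to $g\xi g^{-1}$. By Reeh--Schlieder, this identifies it with $L(g\xi g^{-1},g\wtd I)$ on $\MH_0$ only. On a general $\MH_k$ that intertwiner space is far from one-dimensional; for instance, it is stable under right multiplication by $\pi_{k,g\wtd I}(\MA(gI))$. So the intertwining property does not single out the $L$-operator. In your model you would need a formula such as $U_{i\boxtimes k}(g)(T\otimes\chi)=U_i(g)TU_0(g)^{-1}\otimes U_k(g)\chi$. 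But $U_{i\boxtimes k}$ is defined through Thm.~\ref{lb12} from the full module structure of $\MH_i\boxtimes\MH_k$, including $\MA(K)$ for $K$ straddling $\partial I$. A model built over $\MA(I)$ alone does not see that structure. Characterizing $L$ on $\MH_k$ through Prop.~\ref{lb20} instead would require covariance of the $R$-operators, which is circular.

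The idea you are missing is to turn $U(g)$ itself into an $L$-operator. By Lem.~\ref{lb14} it suffices to take $g\in\GA(J)$ with $\wtd I,\wtd J\subset\wtd K$ for some $\wtd K$. Then Thm.~\ref{lb12} and Prop.~\ref{lb22} give $U(g)=\pi_{\wtd K}(U_0(g))=L(U_0(g)\Omega,\wtd K)$ on every object of $\RepGA$. Hence
\[
U(g)L(\xi,\wtd I)U(g)^{-1}=L(U_0(g)\Omega,\wtd K)\,L(\xi,\wtd K)\,L(U_0(g^{-1})\Omega,\wtd K),
\]
and Prop.~\ref{lb21}(a) together with isotony (Cor.~\ref{lb31}) collapses the right-hand side to $L(g\xi g^{-1},\wtd K)=L(g\xi g^{-1},g\wtd I)$ on every $\MH_j$. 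The $R$-case is identical. Separately, your aside that $g\Omega\in\Cbb\Omega$ is false for general $g\in\GA$: only elements lying over $\PSU$ fix $\Omega$. That is precisely why $g\xi g^{-1}$ differs from $g\xi$ in general (Rem.~\ref{lb32}).
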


When we wish to emphasize the dependence on $\MA$, we write the operations $L,R$ as $L^\MA,R^\MA$.

\begin{proof}
See Def. 3.1 and Thm. 3.4 of \cite{MS26b}.\footnote{Due to the difference in conventions mentioned in Rem. \ref{lb138}, in \cite{MS26a,MS26b}, $L(\xi,\wtd I)$ intertwines $\MA(\bpr\wtd I)$ rather than $\MA(\wtd I')$, while $R(\xi,\wtd I)$ intertwines $\MA(\wtd I')$ rather than $\MA(\bpr\wtd I)$.} See also Ch. \ref{lb140} for the proofs of (5)--(7). The proofs of (1)-(4) are similar to those in the case of trivial $G$, as treated in Ch. 2 and Sec. 3.2 of \cite{Gui21a}.
\end{proof}

\begin{rem}
Note that the fact that the braiding operation $\Bbb$ in (5) satisfies the axioms of a $G$-crossed braided $W^*$-tensor category is itself a consequence of properties (1)--(5) (together with the results of Subsec. \ref{lb141}). See the proof of Thm. 3.25 in \cite{MS26a}; see also \cite[Sec. 3.3]{Gui21a} for the special case in which $G$ is trivial.
\end{rem}

\begin{rem}\label{lb32}
For each $g\in\GA$, the uniqueness of $g\xi g^{-1}$ in the conformal covariance is due to the state-field correspondence, which implies
\begin{align}
g\xi g^{-1}=gL(\xi,\wtd I)g^{-1}\Omega=gR(\xi,\wtd I)g^{-1}\Omega
\end{align}
Applying the state-field correspondence again, we obtain
\begin{align*}
g\xi g^{-1}=g\xi\qquad\text{if }g^{-1}\Omega=\Omega
\end{align*}
For example, by the definition of conformal nets, for each $t\in\Rbb$ we have $U_0(\varrho_\MA(t))\Omega=\Omega$, and hence the \textbf{rotation covariance} property: We have
\begin{align}
\varrho_\MA(t)\MH_i(I)=\MH_i(\varrho(t)I)
\end{align}
and for each $\xi\in\MH_i(I)$ we have
\begin{subequations}
\begin{gather}
\varrho_\MA(t)L(\xi,\wtd I)=L(\varrho_\MA(t)\xi,\varrho(t)\wtd I)\varrho_\MA(t)\\
\varrho_\MA(t)R(\xi,\wtd I)=R(\varrho_\MA(t)\xi,\varrho(t)\wtd I)\varrho_\MA(t)
\end{gather}
\end{subequations}
\end{rem}

We are now ready to describe the unitary isomorphism  mentioned in Rem. \ref{lb7}-(d).

\begin{df}\label{lb16}
The linear map $\mbf D_{\phi,i,j}:\MH_{\phi i}\boxtimes\MH_{\phi j}\rightarrow \MH_{\phi(i\boxtimes j)}$ is the unique unitary map (indeed, the unique unitary isomorphism of $G$-twisted $\MA$-modules) such that
\begin{align}\label{eq11}
\Gamma_\phi L(\xi,\wtd I)\eta=\mbf D_{\phi,i,j}\circ L(\Gamma_\phi\xi,\wtd I)\Gamma_\phi\eta\qquad \Gamma_\phi R(\eta,\wtd J)\xi=\mbf D_{\phi,i,j}\circ R(\Gamma_\phi\eta,\wtd J)\Gamma_\phi\xi
\end{align} 
holds for each $\wtd I,\wtd J\in\Jtd$ with $\wtd J$ clockwise to $\wtd I$, and each $\xi\in\MH_i(I),\eta\in\MH_j(J)$. The uniqueness follows from the density of fusion products in Thm. \ref{lb15}.
\end{df}

The existence of such a unitary isomorphism, as well as the compatibility conditions in (d)--(f) Rem. \ref{lb7}, will be explained in Ch. \ref{lb140}. The $G$-covariance property in Thm. \ref{lb15} would then follow automatically.

\subsection{Calculus in $G$-crossed categorical extensions}

We continue to fix a group homomorphism $G\rightarrow\Aut(\MA)$ for a conformal net $\MA$. In this section, we do not assume Convention \ref{lb10}.

\subsubsection{Eq. \eqref{eq68} and conditions (2)--(4) of Thm. \ref{lb15} imply $\Rep^\phi(\MA)\boxtimes\Rep^\omega(\MA)\subset\Rep^{\phi\omega}(\MA)$}

In \cite{MS26a}, the fusion product of two objects in $\RepGA$ is first constructed as a solitonic $\MA$-module using operators $Z^+,Z^-$ and path continuations. For each $\MH_i,\MH_j\in\RepGA$, $\wtd I,\wtd J\in\Jtd$, and $\xi\in\MH_i(I),\eta\in\MH_j(J)$, the operators
\begin{gather*}
L(\xi,\wtd I)\big|_{\MH_0}\in\Hom_{\MA(\wtd I')}(\MH_0,\MH_i)\qquad L(\xi,\wtd I)|_{\MH_j}\in\Hom_{\MA(\wtd I')}(\MH_j,\MH_i\boxtimes\MH_j)\\
R(\eta,\wtd J)\big|_{\MH_0}\in\Hom_{\MA(\bpr\wtd J)}(\MH_0,\MH_j)\qquad R(\eta,\wtd J)|_{\MH_i}\in\Hom_{\MA(\bpr\wtd J)}(\MH_i,\MH_i\boxtimes\MH_j)
\end{gather*}
are likewise constructed using $Z^+(\xi,\wtd I),Z^-(\eta,\wtd J)$ and path continuations. (See \cite{MS26b} for a review of this construction.) The following properties are readily verified:
\begin{enumerate}
\item[(2)] $L(\xi,\wtd I)\Omega=\xi$ and $R(\eta,\wtd J)\Omega=\eta$. 
\item[(3)] Both $L(\MH_i(I),\wtd I)\MH_j$ and $R(\MH_j(J),\wtd J)\MH_i$ span dense subspaces of $\MH_i\boxtimes\MH_j$.
\item[(4)] If $\wtd J$ is clockwise to $\wtd I$, the following diagram commutes adjointly.
\begin{equation*}
\begin{tikzcd}
\quad \MH_0\quad \arrow[rr,"{R(\eta,\wtd J)|_{\MH_0}}"] \arrow[d, "{L(\xi,\wtd I)|_{\MH_0}}"'] &&\quad \MH_j\quad \arrow[d, "{L(\xi,\wtd I)|_{\MH_j}}"]\\
\MH_i\arrow[rr,"{R(\eta,\wtd J)|_{\MH_i}}"] &&\MH_i\boxtimes\MH_j
\end{tikzcd}
\end{equation*}
\end{enumerate}

These constructions and properties are, of course, part of those stated in Thm. \ref{lb15}. In what follows, we show that they already suffice to imply that the fusion of an $\phi$-twisted module and an $\omega$-twisted module is $\phi\omega$-twisted, a fact that was originally proved in \cite{MS26a} using path continuation.

\begin{lm}\label{lb19}
If $\MH_i\in\Rep^\phi(\MA)$ and $\MH_j\in\Rep^\omega(\MA)$, and if $\wtd J$ is clockwise to $\wtd I$, then for each $\xi\in\MH_i(I),\eta\in\MH_j(J)$ and each $x\in\MA(I),y\in\MA(J)$, we have
\begin{gather*}
R(\xi,\wtd I)y\big|_{\MH_0}=\pi_{\wtd J}(\phi^{-1}y\phi)R(\xi,\wtd I)\big|_{\MH_0}\\
L(\eta,\wtd J)x\big|_{\MH_0}=\pi_{\wtd I}(\omega x\omega^{-1})L(\eta,\wtd J)\big|_{\MH_0}
\end{gather*}
\end{lm}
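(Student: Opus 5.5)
The plan is to use only the intertwining properties \eqref{eq68} of $L$ and $R$ on the vacuum module, together with the twisting condition \eqref{eq8}. First I pin down the geometry of arg-valued intervals. If $\wtd J$ is clockwise to $\wtd I$, then $J\cap I=\emptyset$, so $J\subset I'$. Moreover, $\arg\zeta<\arg z<\arg\zeta+2\pi$ for $z\in I$ and $\zeta\in J$. Since $\wtd I'$ is by definition the lift of $I'$ that is clockwise to $\wtd I$, comparing arg functions gives $\wtd J\subset\wtd I'$. The anticlockwise complement is the lift of the same interval shifted by $2\pi$, so $\bpr\wtd I=\varrho(2\pi)\wtd I'$. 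Hence $\varrho(2\pi)\wtd J\subset\bpr\wtd I$. By the same reasoning with the roles of $I$ and $J$ exchanged, $\wtd I\subset\bpr\wtd J=\varrho(2\pi)\wtd J'$, that is, $\varrho(-2\pi)\wtd I\subset\wtd J'$.

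For the first identity, $R(\xi,\wtd I)|_{\MH_0}\in\Hom_{\MA(\bpr\wtd I)}(\MH_0,\MH_i)$. Since $\MH_0$ is untwisted, $y\in\MA(J)$ acts on $\MH_0$ as itself. By isotony (Def. \ref{lb3}-(a)) applied to $\varrho(2\pi)\wtd J\subset\bpr\wtd I$, we get
\begin{align*}
R(\xi,\wtd I)y|_{\MH_0}=\pi_{i,\varrho(2\pi)\wtd J}(y)R(\xi,\wtd I)|_{\MH_0}.
\end{align*}
Next I apply \eqref{eq8} for the $\phi$-twisted module $\MH_i$ with $\wtd I$ replaced by $\wtd J$ and $x=\phi^{-1}y\phi\in\MA(J)$. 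This gives $\pi_{i,\varrho(2\pi)\wtd J}(y)=\pi_{i,\wtd J}(\phi^{-1}y\phi)$, which is the first claim.

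For the second identity, $L(\eta,\wtd J)|_{\MH_0}\in\Hom_{\MA(\wtd J')}(\MH_0,\MH_j)$. Using $\varrho(-2\pi)\wtd I\subset\wtd J'$, we get
\begin{align*}
L(\eta,\wtd J)x|_{\MH_0}=\pi_{j,\varrho(-2\pi)\wtd I}(x)L(\eta,\wtd J)|_{\MH_0}.
\end{align*}
Now I apply \eqref{eq8} for the $\omega$-twisted module $\MH_j$ with $\wtd I$ replaced by $\varrho(-2\pi)\wtd I$. This gives $\pi_{j,\wtd I}(\omega x\omega^{-1})=\pi_{j,\varrho(-2\pi)\wtd I}(x)$, which yields the second claim.

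The only delicate point is the bookkeeping of arg functions. One must get the direction of the $2\pi$-shift relating $\wtd I'$ and $\bpr\wtd I$ correct, since this decides whether $\phi^{-1}(\cdot)\phi$ or $\phi(\cdot)\phi^{-1}$ appears. I will check it directly from the inequality defining ``clockwise'': $\wtd I'$ has smaller args than $\wtd I$, and $\bpr\wtd I$ has larger args. The rest is a direct substitution.
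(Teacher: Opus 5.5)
Your proof is correct and follows the same route as the paper's. The paper also uses that $\wtd I$ is clockwise to $\varrho(2\pi)\wtd J$, which is equivalent to your $\varrho(2\pi)\wtd J\subset\bpr\wtd I$, so that $R(\xi,\wtd I)|_{\MH_0}$ intertwines $\pi_{i,\varrho(2\pi)\wtd J}(y)$; it then applies the twisting condition \eqref{eq8}, and handles the second identity symmetrically via $\pi_{j,\varrho(-2\pi)\wtd I}(x)=\pi_{j,\wtd I}(\omega x\omega^{-1})$, with your explicit check of $\bpr\wtd I=\varrho(2\pi)\wtd I'$ filling in bookkeeping the paper leaves implicit.
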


\begin{proof}
Since $\wtd I$ is clockwise to $\varrho(2\pi)\wtd J$, we have
\begin{align*}
R(\xi,\wtd I)y\big|_{\MH_0}=\pi_{i,\varrho(2\pi)\wtd J}(y)R(\xi,\wtd I)\big|_{\MH_0}=\pi_{i,\wtd J}(\phi^{-1}y\phi)R(\xi,\wtd I)\big|_{\MH_0}
\end{align*}
where the last identity is due to the fact that $\MH_i$ is $\phi$-twisted. This proves the first identity. The second one follows from a similar argument, using the relation $\pi_{j,\varrho(-2\pi)\wtd I}(x)=\pi_{j,\wtd I}(\omega x\omega^{-1})$.
\end{proof}

\begin{thm}
If $\MH_i\in\Rep^\phi(\MA)$ and $\MH_j\in\Rep^\omega(\MA)$, then the solitonic $\MA$-module $\MH_i\boxtimes\MH_j$ is $\phi\omega$-twisted.
\end{thm}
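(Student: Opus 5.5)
The plan is to verify condition (b) of Def. \ref{lb3} for the solitonic module $\MH_i\boxtimes\MH_j$ directly. Because of additivity of $\MA$, normality of the representations and part (a), it suffices to check
\begin{align*}
\pi_{i\boxtimes j,\varrho(-2\pi)\wtd K}(x)=\pi_{i\boxtimes j,\wtd K}(\psi x\psi^{-1}),\qquad \psi:=\phi\omega,
\end{align*}
only for sufficiently small $\wtd K$ and $x\in\MA(K)$. This is equivalent to \eqref{eq8} after replacing $\wtd K$ by $\varrho(2\pi)\wtd K$. It also suffices to test the identity on a dense set of vectors.

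\textbf{Setup.} For such $\wtd K$, choose $\wtd I,\wtd J\in\Jtd$ with $\wtd J$ clockwise to $\wtd I$, and $\wtd K$ clockwise to $\wtd J$ but anticlockwise to $\varrho(2\pi)\wtd I$. In other words, $I,J,K$ are disjoint and lie around the circle in this order. The test vectors are
\begin{align*}
L(\xi,\wtd I)R(\eta,\wtd J)\Omega=R(\eta,\wtd J)L(\xi,\wtd I)\Omega,\qquad \xi\in\MH_i(I),\ \eta\in\MH_j(J).
\end{align*}
The equality of the two expressions is property (4) applied to $\Omega$. Their span is dense by property (3), since $R(\eta,\wtd J)\Omega=\eta$ ranges over the dense subspace $\MH_j(J)$ and $L(\xi,\wtd I)$ is bounded.

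\textbf{Key steps.} First, apply $\pi_{i\boxtimes j,\varrho(-2\pi)\wtd K}(x)$ to $R(\eta,\wtd J)L(\xi,\wtd I)\Omega$. Since $\varrho(-2\pi)\wtd K\subset\bpr\wtd J$ and $R(\eta,\wtd J)$ intertwines $\MA(\bpr\wtd J)$, the operator moves onto $\MH_i$ as $\pi_{i,\varrho(-2\pi)\wtd K}(x)$. Using that $\MH_i$ is $\phi$-twisted, this equals $\pi_{i,\wtd K}(\phi x\phi^{-1})$.

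Second, since $\wtd K\subset\wtd I'$, this operator passes through $L(\xi,\wtd I)|_{\MH_0}$ and becomes $\phi x\phi^{-1}$ acting on $\Omega$.

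Third, use property (4) to swap back to $L(\xi,\wtd I)R(\eta,\wtd J)$. Then move $z:=\phi x\phi^{-1}\in\MA(K)$ through $R(\eta,\wtd J)|_{\MH_0}$; this is the computation of Lemma \ref{lb19}, with the roles adapted. Because $\MH_j$ is $\omega$-twisted, $z$ turns into $\pi_{j,\wtd K}$ of an $\omega$-conjugate of $z$.

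Finally, since $\wtd K\subset\wtd I'$, pass this operator through $L(\xi,\wtd I)|_{\MH_j}$. This yields $\pi_{i\boxtimes j,\wtd K}$ of a doubly conjugated $x$, which is the claimed relation.

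\textbf{Main obstacle.} The difficulty is purely bookkeeping: choosing the arg-valued intervals so that each intertwining relation applies with the correct lift ($\wtd K$ versus $\varrho(\pm2\pi)\wtd K$, and $\wtd I'$ versus $\bpr\wtd I$). The conjugations must then compose to exactly $\phi\omega$, in that order, rather than $\omega\phi$. The naive chain above can easily produce the wrong order. If it does, I would instead move $x$ first through the $L(\xi,\wtd I)$ factor, using the ordering $L(\xi,\wtd I)R(\eta,\wtd J)\Omega$ and the other half of Lemma \ref{lb19}, and rechoose $\wtd K$ on the other side of $\wtd I\cup\wtd J$. The goal is that $\omega$ is applied first and $\phi$ outermost. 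This is the step I expect to require the most care.
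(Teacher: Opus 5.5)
Your strategy is the paper's: test on the dense vectors $L(\xi,\wtd I)\eta=R(\eta,\wtd J)L(\xi,\wtd I)\Omega$, move $x$ through the $L$- and $R$-operators, twist, swap by (4), and conclude by density. However, the chain you actually write down does not work, and this is exactly where the ``wrong order'' comes from.

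Since $\wtd J$ must be clockwise to $\wtd I$ for (4), the arguments of $\bpr\wtd J$ lie in $(\sup\wtd J,\inf\wtd J+2\pi)$, and those of $\wtd I'$ lie in $(\sup\wtd I-2\pi,\inf\wtd I)$. Your two claims $\varrho(-2\pi)\wtd K\subset\bpr\wtd J$ and $\wtd K\subset\wtd I'$ would force every argument of $\wtd K$ into $(\sup\wtd J+2\pi,\inf\wtd I)$. That interval is empty when $\wtd J$ is clockwise to $\wtd I$. Likewise, your setup is empty: $\wtd K$ cannot be clockwise to $\wtd J$ and also anticlockwise to $\varrho(2\pi)\wtd I$.

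With $I,J,K$ in clockwise order around the circle, as you intend, the two relevant lifts are placed as follows. The lift of $K$ that commutes with $R(\eta,\wtd J)$ sits \emph{above} $\wtd I$. The lift that commutes with $L(\xi,\wtd I)$ sits \emph{below} $\wtd J$. Passing from the first to the second is therefore a shift by $-2\pi$, i.e. a conjugation by $\phi^{-1}$ on $\MH_i$ (resp. $\omega^{-1}$ on $\MH_j$), not by $\phi$ (resp. $\omega$). Your formal chain, with $\phi$ then $\omega$, yields $\omega\phi$ precisely because the lifts are swapped.

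The fix is to choose $\wtd K,\wtd I,\wtd J,\varrho(-2\pi)\wtd K$ in clockwise order, so that $\wtd K\subset\bpr\wtd J$ and $\varrho(-2\pi)\wtd K\subset\wtd I'$. On $\MH_i$ use $\pi_{\wtd K}(y)=\pi_{\varrho(-2\pi)\wtd K}(\phi^{-1}y\phi)$, and on $\MH_j$ use the analogous identity with $\omega$. Then
\begin{align*}
\pi_{\wtd K}(x)R(\eta,\wtd J)L(\xi,\wtd I)\Omega&=R(\eta,\wtd J)L(\xi,\wtd I)\phi^{-1}x\phi\,\Omega=L(\xi,\wtd I)R(\eta,\wtd J)\phi^{-1}x\phi\,\Omega\\
&=L(\xi,\wtd I)\pi_{\varrho(-2\pi)\wtd K}(\omega^{-1}\phi^{-1}x\phi\omega)\eta=\pi_{\varrho(-2\pi)\wtd K}\big((\phi\omega)^{-1}x\,\phi\omega\big)L(\xi,\wtd I)\eta .
\end{align*}
This is exactly the $\phi\omega$-twisting condition at $\varrho(-2\pi)\wtd K$. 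The order comes out right automatically, and no case distinction is needed.

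Your fallback, starting from the lift in $\wtd I'$, is this same chain read backwards and would also work. But what has to change is the choice of lifts, not the order of the factors.

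With this correction your argument coincides with the paper's proof. The paper organizes the same computation symmetrically: it pushes $x\Omega$ out of the vacuum through both $L(\xi,\wtd I)R(\eta,\wtd J)$ and $R(\eta,\wtd J)L(\xi,\wtd I)$, using the two halves of Lemma \ref{lb19}. This gives $\pi_{\wtd K^-}(\omega^{-1}x\omega)=\pi_{\wtd K^+}(\phi x\phi^{-1})$, which is the same twisting condition.
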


\begin{proof}
Choose any $\wtd K^+\in\Jtd$, and let $\wtd K^-=\varrho(-2\pi)\wtd K^+$. Choose $\wtd I,\wtd J\in\Jtd$ such that $\wtd K^+,\wtd I,\wtd J,\wtd K^-$ are in clockwise order.\footnote{For example, if $\wtd K^+$ is a small neighborhood of $e^{\im\pi}$, one can choose $\wtd I$ to be a small neighborhood of $e^{\im\pi/2}$, and $\wtd J$ a small neighborhood of $e^{-\im\pi/2}$. Then $\wtd K^-$ is a small neighborhood of $e^{-\im\pi}$.}

Choose any $x\in\MA(K),\xi\in\MH_i(I),\eta\in\MH_j(J)$. By Lem. \ref{lb19}, we have
\begin{align*}
L(\xi,\wtd I)R(\eta,\wtd J)x\Omega=L(\xi,\wtd I)\pi_{\wtd K^-}(\omega^{-1}x\omega)R(\eta,\wtd J)\Omega=\pi_{\wtd K^-}(\omega^{-1}x\omega)L(\xi,\wtd I)R(\eta,\wtd J)\Omega
\end{align*}
and
\begin{align*}
R(\eta,\wtd J)L(\xi,\wtd I)x\Omega=R(\eta,\wtd J)\pi_{\wtd K^+}(\phi x\phi^{-1})L(\xi,\wtd I)\Omega=\pi_{\wtd K^+}(\phi x\phi^{-1})R(\eta,\wtd J)L(\xi,\wtd I)\Omega
\end{align*}
Therefore, by (4), the operators $\pi_{\wtd K^-}(\omega^{-1}x\omega)$ and $\pi_{\wtd K^+}(\phi x\phi^{-1})$ are equal when acting on vectors of the form $L(\xi,\wtd I)R(\eta,\wtd J)\Omega$ (which equals $L(\xi,\wtd I)\eta$ by (2)). Thus (3) implies $\pi_{\wtd K^-}(\omega^{-1}x\omega)=\pi_{\wtd K^+}(\phi x\phi^{-1})$ and hence
\begin{align*}
\pi_{\wtd K^-}(x)=\pi_{\wtd K^+}(\phi\omega x\omega^{-1}\phi^{-1})
\end{align*}
when acting on $\MH_i\boxtimes\MH_j$. Thus $\MH_i\boxtimes\MH_j$ is $\phi\omega$-twisted.
\end{proof}

\subsubsection{Consequences of Eq. \eqref{eq68} and (1)--(4) of Thm. \ref{lb15}}\label{lb141}

We now derive several consequences of properties (1)--(4) of Thm. \ref{lb15} that will be used throughout this paper.

\begin{pp}\label{lb20}
Let $\wtd J$ be clockwise to $\wtd I$, and let $\MH_i,\MH_j\in\RepGA$, $\xi\in\MH_i(I)$, $\eta\in\MH_j(J)$. Then
\begin{align*}
L(\xi,\wtd I)\eta=R(\eta,\wtd J)\xi
\end{align*}
\end{pp}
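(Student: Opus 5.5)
The identity should follow directly from the state-field correspondence combined with the adjoint-commuting locality diagram of Thm. \ref{lb15}-(4), applied with $\MH_k=\MH_0$. In the subsection in which this proposition sits, only the properties recorded in \eqref{eq68} and (1)--(4) are available, and (2) and (4) are exactly the ones needed.

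First, apply property (2) to $\eta\in\MH_j(J)$ to rewrite $\eta$ as $R(\eta,\wtd J)\Omega$, using the identification $\MH_j=\MH_0\boxtimes\MH_j$. Similarly, rewrite $\xi=L(\xi,\wtd I)\Omega$ under $\MH_i=\MH_i\boxtimes\MH_0$. This gives
\begin{align*}
L(\xi,\wtd I)\eta=L(\xi,\wtd I)R(\eta,\wtd J)\Omega,\qquad R(\eta,\wtd J)\xi=R(\eta,\wtd J)L(\xi,\wtd I)\Omega .
\end{align*}

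Second, invoke property (4) with $\MH_k=\MH_0$. Since $\wtd J$ is clockwise to $\wtd I$, the square formed by $L(\xi,\wtd I)$ and $R(\eta,\wtd J)$ commutes. In particular its non-adjoint part gives $L(\xi,\wtd I)R(\eta,\wtd J)|_{\MH_0}=R(\eta,\wtd J)L(\xi,\wtd I)|_{\MH_0}$ as maps $\MH_0\rightarrow\MH_i\boxtimes\MH_j$. Evaluating this identity at $\Omega$ and combining with the two displayed equalities yields the claim.

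The only point requiring care is bookkeeping of the unitor identifications. The target of $L(\xi,\wtd I)|_{\MH_j}$ is $\MH_i\boxtimes\MH_j$, and the target of $R(\eta,\wtd J)|_{\MH_i}$ is $\MH_i\boxtimes\MH_j$. Along the two paths of the diagram starting from $\MH_0$, one passes through $\MH_0\boxtimes\MH_j=\MH_j$ and $\MH_i\boxtimes\MH_0=\MH_i$ respectively, which is exactly the form in which the diagram in (4) of the preceding subsection is stated. Under the strictness convention (Conv. \ref{lb9}) this matching is automatic, so I expect no real obstacle.
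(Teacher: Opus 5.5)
Your proof is correct and is essentially the paper's own argument: the paper writes $L(\xi,\wtd I)\eta=L(\xi,\wtd I)R(\eta,\wtd J)\Omega=R(\eta,\wtd J)L(\xi,\wtd I)\Omega=R(\eta,\wtd J)\xi$, using the state-field correspondence and the locality of Thm.~\ref{lb15} with $\MH_k=\MH_0$. Your remark on matching the unitor identifications is consistent with the convention under which the paper states the diagram.
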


\begin{proof}
By the state-field correspondence and the locality in Thm. \ref{lb15}, we have
\begin{align*}
L(\xi,\wtd I)\eta=L(\xi,\wtd I)R(\eta,\wtd J)\Omega=R(\eta,\wtd J)L(\xi,\wtd I)\Omega=R(\eta,\wtd J)\xi
\end{align*}
\end{proof}

\begin{co}[\textbf{Isotony}]\label{lb31}
If $\wtd I_1\subset\wtd I_2\in\Jtd$, and $\xi\in\MH_i(I_1)$, then 
\begin{align*}
L(\xi,\wtd I_1)=L(\xi,\wtd I_2)\qquad R(\xi,\wtd I_1)=R(\xi,\wtd I_2)
\end{align*}
when acting on any  $\MH_j\in\RepGA$.
\end{co}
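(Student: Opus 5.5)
We reduce isotony to Prop. \ref{lb20}: we show that $L(\xi,\wtd I_1)$ and $L(\xi,\wtd I_2)$ agree on a dense subspace of each $\MH_j$, and then conclude by boundedness. The $R$-operators are handled symmetrically.

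Fix $\MH_j\in\RepGA$ and choose $\wtd J\in\Jtd$ clockwise to $\wtd I_2$. For instance, take $J$ to be a small interval inside $I_2'$ and give it the arg function making it lie in the clockwise complement $\wtd I_2'$. Since $\wtd I_1\subset\wtd I_2$, we have $I_1\subset I_2$ and $\arg_{I_2}|_{I_1}=\arg_{I_1}$. The defining inequalities $\arg\zeta<\arg z<\arg\zeta+2\pi$ for $z\in I_2$, $\zeta\in J$ therefore restrict to $z\in I_1$, so $\wtd J$ is also clockwise to $\wtd I_1$.

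Note that $\xi\in\MH_i(I_1)\subset\MH_i(I_2)$. Indeed, $\Hom_{\MA(\wtd I_1')}(\MH_0,\MH_i)\subset\Hom_{\MA(\wtd I_2')}(\MH_0,\MH_i)$, because $\wtd I_2'\subset\wtd I_1'$ and by Def. \ref{lb3}-(a). So $L(\xi,\wtd I_2)$ makes sense. Now take any $\eta\in\MH_j(J)$. Applying Prop. \ref{lb20} twice, once to the pair $(\wtd I_1,\wtd J)$ and once to $(\wtd I_2,\wtd J)$, gives
\begin{align*}
L(\xi,\wtd I_1)\eta=R(\eta,\wtd J)\xi=L(\xi,\wtd I_2)\eta.
\end{align*}
Since $\MH_j(J)$ is dense in $\MH_j$ (Reeh--Schlieder) and both operators are bounded, they coincide on $\MH_j$.

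For the $R$-operators, we instead choose $\wtd K$ anticlockwise to $\wtd I_2$, for example inside $\bpr\wtd I_2$; then $\wtd K$ is also anticlockwise to $\wtd I_1$. For $\eta\in\MH_j(K)$, Prop. \ref{lb20} gives
\begin{align*}
R(\xi,\wtd I_1)\eta=L(\eta,\wtd K)\xi=R(\xi,\wtd I_2)\eta,
\end{align*}
and density of $\MH_j(K)$ again finishes the argument.

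The only point requiring care is the bookkeeping with arg functions: checking that the clockwise relation is inherited by $\wtd I_1$ from $\wtd I_2$, and that $\MH_i(I_1)\subset\MH_i(I_2)$. Both follow directly from the definitions, so I do not expect a genuine obstacle.
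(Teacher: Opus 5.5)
Your proof is correct and is the paper's argument: choose $\wtd J$ clockwise to $\wtd I_2$, apply Prop.~\ref{lb20} to both pairs to get $L(\xi,\wtd I_1)\eta=R(\eta,\wtd J)\xi=L(\xi,\wtd I_2)\eta$ for $\eta\in\MH_j(J)$, conclude by density, and treat the $R$ case symmetrically. The checks you add, that $\MH_i(I_1)\subset\MH_i(I_2)$ and that $\wtd J$ is also clockwise to $\wtd I_1$, are exactly what the paper leaves implicit.
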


\begin{proof}
Choose $\wtd J\in\Jtd$ clockwise to $\wtd I_2$. By Prop. \ref{lb20}, for each $\eta\in\MH_j(J)$ we have
\begin{align*}
L(\xi,\wtd I_1)\eta=R(\eta,\wtd J)\xi=L(\xi,\wtd I_2)\eta
\end{align*}
The second identity follows from a similar argument.
\end{proof}

\begin{pp}\label{lb22}
Let $\wtd I\in\Jtd$ and $x\in\MA(I)$. Then, when acting on any $\MH_j\in\RepGA$, we have (recalling $\MH_0(I)=\MA(I)\Omega$)
\begin{align*}
L(x\Omega,\wtd I)=R(x\Omega,\wtd I)=\pi_{\wtd I}(x)
\end{align*}
\end{pp}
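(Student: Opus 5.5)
We want $L(x\Omega,\wtd I)|_{\MH_j}=\pi_{j,\wtd I}(x)$ for every $\MH_j\in\RepGA$, and likewise for $R$. The plan is to test both operators on the dense set of vectors $\eta\in\MH_j(J)$, where $\wtd J$ is clockwise to $\wtd I$, and then use Prop. \ref{lb20} to trade the L operator for an R operator, whose intertwining property we already know.

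First, fix $\wtd J\in\Jtd$ clockwise to $\wtd I$ and $\eta\in\MH_j(J)$. Under the identification $\MH_0\boxtimes\MH_j=\MH_j$, Prop. \ref{lb20} (with $\MH_i=\MH_0$ and $\xi=x\Omega\in\MH_0(I)$) gives
\begin{align*}
L(x\Omega,\wtd I)\eta=R(\eta,\wtd J)x\Omega .
\end{align*}
Since $\wtd J$ is clockwise to $\wtd I$, we have $\wtd I\subset\bpr\wtd J$. By \eqref{eq68}, $R(\eta,\wtd J)|_{\MH_0}\in\Hom_{\MA(\bpr\wtd J)}(\MH_0,\MH_j)$. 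Together with Def. \ref{lb3}-(a), this gives
\begin{align*}
R(\eta,\wtd J)x\Omega=\pi_{j,\wtd I}(x)R(\eta,\wtd J)\Omega=\pi_{j,\wtd I}(x)\eta ,
\end{align*}
where the last step is the state-field correspondence. Hence $L(x\Omega,\wtd I)$ and $\pi_{j,\wtd I}(x)$ agree on $\MH_j(J)$. Both are bounded, and $\MH_j(J)$ is dense (Reeh--Schlieder), so they are equal on $\MH_j$.

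For $R(x\Omega,\wtd I)$ the argument is symmetric. Choose $\wtd K$ anticlockwise to $\wtd I$, so that $\wtd I$ is clockwise to $\wtd K$, and take $\zeta\in\MH_j(K)$. Prop. \ref{lb20} gives $R(x\Omega,\wtd I)\zeta=L(\zeta,\wtd K)x\Omega$. Now $\wtd I\subset\wtd K'$ and $L(\zeta,\wtd K)|_{\MH_0}$ intertwines $\MA(\wtd K')$, so $L(\zeta,\wtd K)x\Omega=\pi_{j,\wtd I}(x)\zeta$. Density of $\MH_j(K)$ again gives $R(x\Omega,\wtd I)=\pi_{j,\wtd I}(x)$.

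The only step that needs care is the geometric bookkeeping with arg-valued intervals: we must check that clockwise to $\wtd I$ means precisely $\wtd I\subset\bpr\wtd J$ with matching arg functions, and not merely $I\subset J'$ as intervals. This matters for twisted modules, since $\pi_{j,\wtd I}$ depends on the arg function. It follows directly from Def. \ref{lb8}, because both conditions express the inequality $\arg\zeta<\arg z<\arg\zeta+2\pi$ on the corresponding lifts. We must also respect the unitor identification $\MH_0\boxtimes\MH_j=\MH_j$ fixed in Conv. \ref{lb9}, so that Prop. \ref{lb20} applies literally.
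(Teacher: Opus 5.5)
Your proof is correct and follows essentially the same route as the paper. The paper computes $L(x\Omega,\wtd I)\eta=L(x\Omega,\wtd I)R(\eta,\wtd J)\Omega=R(\eta,\wtd J)x\Omega=\pi_{\wtd I}(x)\eta$ for $\wtd J$ clockwise to $\wtd I$, using locality and the state-field correspondence. You package those two steps as Prop.~\ref{lb20} and make explicit both the intertwining property \eqref{eq68} and the arg-function check $\wtd I\subset\bpr\wtd J$; the density argument and the symmetric treatment of $R$ are likewise the paper's.
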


\begin{proof}
Choose $\wtd J$ clockwise to $\wtd I$. For each $\eta\in\MH_j(J)$, we compute that
\begin{align*}
&L(x\Omega,\wtd I)\eta=L(x\Omega,\wtd I)R(\eta,\wtd J)\Omega=R(\eta,\wtd J)L(x\Omega,\wtd I)\Omega=R(\eta,\wtd J)x\Omega\\
=&\pi_{\wtd I}(x)R(\eta,\wtd J)\Omega=\pi_{\wtd I}(x)\eta
\end{align*}
where the locality and the state-field correspondence in Thm. \ref{lb15} have been used. This proves $L(x\Omega,\wtd I)=\pi_{\wtd I}(x)$. The second relation follows from a similar argument.
\end{proof}

\begin{pp}\label{lb26}
If $S\in\Hom_\MA(\MH_i,\MH_{i'})$ and $T\in\Hom_\MA(\MH_j,\MH_{j'})$ are morphisms in $\RepGA$, then for each $\xi\in\MH_i(I)$ and $\eta\in\MH_j$, we have
\begin{align*}
(S\boxtimes T)L(\xi,\wtd I)\eta=L(S\xi,\wtd I)T\eta\qquad (T\boxtimes S)R(\xi,\wtd I)\eta=R(S\xi,\wtd I)T\eta
\end{align*}
\end{pp}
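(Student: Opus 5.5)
The plan is to factor $S\boxtimes T=(S\boxtimes\idt_{j'})(\idt_i\boxtimes T)$ and treat the two factors separately. The factor $\idt_i\boxtimes T$ is handled directly by the naturality in Thm. \ref{lb15}-(1):
\begin{align*}
(\idt_i\boxtimes T)L(\xi,\wtd I)\eta=L(\xi,\wtd I)T\eta .
\end{align*}
It therefore suffices to prove
\begin{align*}
(S\boxtimes\idt_j)L(\xi,\wtd I)\eta=L(S\xi,\wtd I)\eta
\end{align*}
for all $\eta\in\MH_j$. The $R$-identity will follow from the mirror argument, with the roles of $L$ and $R$ interchanged and clockwise replaced by anticlockwise.

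First, I check that $S\xi\in\MH_{i'}(I)$. Since $S$ is a morphism of $G$-twisted modules, it intertwines $\pi_{\wtd I'}$. Hence if $\xi=A\Omega$ with $A\in\Hom_{\MA(\wtd I')}(\MH_0,\MH_i)$, then $SA\in\Hom_{\MA(\wtd I')}(\MH_0,\MH_{i'})$ (cf. Rem. \ref{lb43}). So the right-hand side of the identity makes sense.

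Both sides of $(S\boxtimes\idt_j)L(\xi,\wtd I)\eta=L(S\xi,\wtd I)\eta$ are bounded in $\eta$, so it suffices to verify it on a dense set. I choose $\wtd J$ clockwise to $\wtd I$ and take $\eta\in\MH_j(J)$, which is dense by Reeh--Schlieder. By Prop. \ref{lb20}, $L(\xi,\wtd I)\eta=R(\eta,\wtd J)\xi$ and $L(S\xi,\wtd I)\eta=R(\eta,\wtd J)S\xi$. Now naturality of $R$ (Thm. \ref{lb15}-(1), applied to the morphism $S:\MH_i\to\MH_{i'}$) gives
\begin{align*}
(S\boxtimes\idt_j)R(\eta,\wtd J)\xi=R(\eta,\wtd J)S\xi .
\end{align*}
Combining these three equalities yields the claim on the dense set, hence on all of $\MH_j$.

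Composing with the first reduction then gives
\begin{align*}
(S\boxtimes T)L(\xi,\wtd I)\eta=(S\boxtimes\idt_{j'})L(\xi,\wtd I)T\eta=L(S\xi,\wtd I)T\eta .
\end{align*}

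The only step needing care is the density and boundedness argument. The operator $L(S\xi,\wtd I)$ is bounded by Thm. \ref{lb15}, so the extension from $\MH_j(J)$ to $\MH_j$ is legitimate. For the $R$ version, I use $R(\xi,\wtd I)\eta=L(\eta,\wtd J)\xi$ with $\wtd J$ anticlockwise to $\wtd I$ (again Prop. \ref{lb20}) together with naturality of $L$. I expect no real obstacle beyond this bookkeeping.
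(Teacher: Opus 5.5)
Your proof is correct and takes essentially the same route as the paper. You split the tensor product of morphisms into one-sided factors and dispose of one factor by naturality; for the other, you use Prop.~\ref{lb20} to trade $L$ for $R$ on a dense set of localized vectors, apply naturality there, and trade back. The only cosmetic difference is that you extend the one-factor identity to all of $\MH_j$ by density before composing, whereas the paper restricts $\eta$ to a localized subspace from the start and uses Rem.~\ref{lb43} to keep $T\eta$ localized.
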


\begin{proof}
We prove the second relation, as the first one can be proved in a similar way. It suffices to assume that $\eta\in\MH_j(K)$ where $\wtd K$ is anticlockwise to $\wtd I$. By the naturality in Thm. \ref{lb15}, we have
\begin{align*}
(T\boxtimes S)R(\xi,\wtd I)\eta=(\idt\boxtimes S)(T\boxtimes\idt)R(\xi,\wtd I)\eta=(\idt\boxtimes S)R(\xi,\wtd I)T\eta
\end{align*}
By Prop. \ref{lb20} (and noting $T\eta\in\MH_j(K)$ by Rem. \ref{lb43}), the above expression equals
\begin{align*}
(\idt\boxtimes S)L(T\eta,\wtd K)\xi=L(T\eta,\wtd K)S\xi=R(S\xi,\wtd I)T\eta
\end{align*}
\end{proof}

We also note the useful fact that, as $\boxtimes$ is a $*$-bifunctor, we have
\begin{align}\label{eq67}
(S\boxtimes T)^*=S^*\boxtimes T^*
\end{align}

\begin{pp}\label{lb21}
Let $\mc H_i,\mc H_j,\mc H_k\in\RepGA$, $\wtd I\in\Jtd$, and $\xi\in\mc H_i(I)$. The following are true.
\begin{enumerate}[label=(\alph*)]
\item If $\eta\in\mc H_j(I)$, then $L(\xi,\wtd I)\eta\in(\mc H_i\boxtimes\mc H_j)(I)$, $R(\xi,\wtd I)\eta\in(\mc H_j\boxtimes\mc H_i)(I)$, and
\begin{gather*}
L(\xi,\wtd I)L(\eta,\wtd I)|_{\mc H_k}=L(L(\xi,\wtd I)\eta,\wtd I)|_{\mc H_k}\\
R(\xi,\wtd I)R(\eta,\wtd I)|_{\mc H_k}=R(R(\xi,\wtd I)\eta,\wtd I)|_{\mc H_k}
\end{gather*}
\item If $\psi\in(\mc H_i\boxtimes H_j)(I)$ and $\eta\in (\mc H_j\boxtimes H_i)(I)$, then $L(\xi,\wtd I)^*\psi\in\mc H_j(I)$, $R(\xi,\wtd I)^*\eta\in\mc H_j(I)$, and
\begin{gather*}
L(\xi,\wtd I)^*L(\psi,\wtd I)|_{\mc H_k}=L(L(\xi,\wtd I)^*\psi,\wtd I)|_{\mc H_k}\\
 R(\xi,\wtd I)^*R(\eta,\wtd I)|_{\mc H_k}=R(R(\xi,\wtd I)^*\eta,\wtd I)|_{\mc H_k}
\end{gather*}
\end{enumerate} 
\end{pp}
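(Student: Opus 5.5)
We will prove everything by testing against vectors localized in an interval disjoint from $I$, using the locality, state-field correspondence and density properties of Thm. \ref{lb15}, in the same way as in Prop. \ref{lb20} and Prop. \ref{lb22}.

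\textbf{Part (a).} Fix $\wtd J$ clockwise to $\wtd I$. To show $L(\xi,\wtd I)\eta\in(\MH_i\boxtimes\MH_j)(I)$, recall that $(\MH_i\boxtimes\MH_j)(I)$ consists of the vectors $T\Omega$ with $T\in\Hom_{\MA(\wtd I')}(\MH_0,\MH_i\boxtimes\MH_j)$. The natural candidate is
\begin{align*}
T=L(\xi,\wtd I)|_{\MH_j}\,L(\eta,\wtd I)|_{\MH_0}.
\end{align*}
By \eqref{eq68}, each factor intertwines $\MA(\wtd I')$, so $T$ does as well, and $T\Omega=L(\xi,\wtd I)\eta$ by the state-field correspondence. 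The $R$-case is the same argument with $\bpr\wtd I$, using the remark that $\MH_j(I)$ is independent of the arg function (Lem. \ref{lb17}).

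For the operator identity, evaluate both sides on $\chi\in\MH_k(J)$, which suffices by density (Reeh--Schlieder) and boundedness. Using Prop. \ref{lb20} twice together with locality, the left side gives
\begin{align*}
L(\xi,\wtd I)L(\eta,\wtd I)R(\chi,\wtd J)\Omega=R(\chi,\wtd J)L(\xi,\wtd I)\eta,
\end{align*}
and the right side gives $R(\chi,\wtd J)L(\xi,\wtd I)\eta$ by Prop. \ref{lb20} applied to the vector $L(\xi,\wtd I)\eta\in(\MH_i\boxtimes\MH_j)(I)$. The two sides agree. The $R$-identity is symmetric, taking $\chi\in\MH_k(K)$ with $\wtd K$ anticlockwise to $\wtd I$.

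\textbf{Part (b).} Set $T=L(\xi,\wtd I)|_{\MH_j}^*\,L(\psi,\wtd I)|_{\MH_0}$, a map $\MH_0\to\MH_j$. Adjoints of $\MA(\wtd I')$-intertwiners are again intertwiners, so $T$ intertwines $\MA(\wtd I')$ and $T\Omega=L(\xi,\wtd I)^*\psi\in\MH_j(I)$.

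For the operator identity, again test on $\chi=R(\chi,\wtd J)\Omega$ with $\chi\in\MH_k(J)$. We use the adjoint half of the adjoint commutativity in locality:
\begin{align*}
L(\xi,\wtd I)^*R(\chi,\wtd J)=R(\chi,\wtd J)L(\xi,\wtd I)^*,
\end{align*}
as maps $\MH_i\boxtimes\MH_j\to\MH_j\boxtimes\MH_k$, with $\MH_j$ in place of the module $\MH_k$ in Thm. \ref{lb15}(4). Then
\begin{align*}
L(\xi,\wtd I)^*L(\psi,\wtd I)\chi=L(\xi,\wtd I)^*R(\chi,\wtd J)\psi=R(\chi,\wtd J)L(\xi,\wtd I)^*\psi=L(L(\xi,\wtd I)^*\psi,\wtd I)\chi,
\end{align*}
where the first and last steps use Prop. \ref{lb20}, the last being legitimate because $L(\xi,\wtd I)^*\psi\in\MH_j(I)$. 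The $R$-version is analogous.

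\textbf{Main obstacle.} The delicate point is bookkeeping rather than analysis. One must make sure that locality is applied with the correct intermediate module (here $\MH_j$ rather than $\MH_k$), and that the strict identifications of Conv. \ref{lb9} (associativity and unitors) make the targets match. Once the correct ordering of $\wtd I$ and $\wtd J$ is fixed, each step is a direct application of Thm. \ref{lb15}(2),(4) and Prop. \ref{lb20}.
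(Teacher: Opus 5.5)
Your proposal is correct and follows essentially the same route as the paper: membership in $(\cdot)(I)$ by applying the composite (or adjoint-composite) $\MA(\wtd I')$-intertwiner to $\Omega$, and the operator identities by testing on $\chi\in\MH_k(J)$ with $\wtd J$ clockwise to $\wtd I$. That test uses locality (including the adjoint half $L(\xi,\wtd I)^*R(\chi,\wtd J)=R(\chi,\wtd J)L(\xi,\wtd I)^*$) and the state-field correspondence, which you package through Prop.~\ref{lb20}; your appeal to Lem.~\ref{lb17} to identify $\Hom_{\MA(\bpr\wtd I)}(\MH_0,\cdot)\Omega$ with $(\cdot)(I)$ in the $R$-versions is exactly the step the paper leaves implicit.
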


\begin{proof}
We prove the first of (b); part (a) and the second assertiong of (b) follow from a similar argument. Part (a) follows as in \cite[Prop. 3.6]{Gui21a}.

By the state-field correspondence, we have $L(\xi,\wtd I)^*\psi=L(\xi,\wtd I)^*L(\psi,\wtd I)\Omega$. This implies, noticing $L(\xi,\wtd I)^*L(\psi,\wtd I)\big|_{\MH_0}\in\Hom_{\MA(\wtd I')}(\MH_0,\MH_j)$, that $L(\xi,\wtd I)^*\psi\in\mc H_j(I)$. Choose $\wtd J$ clockwise to $\wtd I$ and $\chi\in\mc H_k(J)$. Then
\begin{align*}
&L(\xi,\wtd I)^*L(\psi,\wtd I)\chi=L(\xi,\wtd I)^*L(\psi,\wtd I)R(\chi,\wtd J)\Omega=R(\chi,\wtd J)L(\xi,\wtd I)^*L(\psi,\wtd I)\Omega\\
=&R(\chi,\wtd J)L(\xi,\wtd I)^*\psi=R(\chi,\wtd J)L(L(\xi,\wtd I)^*\psi,\wtd I)\Omega=L(L(\xi,\wtd I)^*\psi,\wtd I)R(\chi,\wtd J)\Omega\\
=&L(L(\xi,\wtd I)^*\psi,\wtd I)\chi
\end{align*}
where the locality and the state-field correspondence in Thm. \ref{lb15} are used.
\end{proof}

\subsection{Representations of commutant algebras on fusion products}\label{lb48}

Fix a conformal net $\MA$ and a group homomorphism $G\rightarrow\Aut(\MA)$. In this section, we only consider nonzero objects of $\RepGA$. 

The purpose of this section is to study how localized commutant algebras of $G$-twisted $\MA$-modules act on fusion products, with the goal of proving Thm.~\ref{lb47}, which computes certain commutants. This result is crucial for the proof of Haag duality for our open/closed CFT.

\begin{df}
For each $\MH_i\in\Rep^\phi(\MA)$ (where $\phi\in G$) and $I\in\MJ$, a vector $\xi\in\MH_i(I)$ is called $\pmb{I}$\textbf{-unitary} if there exists $\arg_I$ such that by setting $\wtd I=(I,\arg_I)$, the operator $L(\xi,\wtd I)|_{\MH_0}:\MH_0\rightarrow\MH_i$ is unitary. By Lem. \ref{lb18}, this is equivalent to that $R(\xi,\wtd I)|_{\MH_0}:\MH_0\rightarrow\MH_i$ is unitary.
\end{df}

By Isotony (Cor. \ref{lb31}), if $\xi$ is $I$-unitary, then $\xi$ is $J$-unitary for any $J\in\MJ$ containing $I$.

\begin{rem}
$I$-unitary vectors must exist. Indeed, since $\MA(I')$ is a type III factor, there exists a unitary $T\in\Hom_{\MA(\wtd I')}(\MH_0,\MH_i)$. Hence $\xi:=T\Omega$ is $I$-unitary. 
\end{rem}

\begin{pp}\label{lb33}
Suppose that $\MH_i\in\Rep^\phi(\MA)$ and $\xi\in\MH_i(I)$ is $I$-unitary. Then for each $\arg_I$ and each $\MH_j\in\RepGA$, by setting $\wtd I=(I,\arg_I)$, the following operators are unitary
\begin{align*}
L(\xi,\wtd I)|_{\MH_j}:\MH_j\rightarrow\MH_i\boxtimes\MH_j\qquad R(\xi,\wtd I)|_{\MH_j}:\MH_j\rightarrow\MH_j\boxtimes\MH_i
\end{align*}
\end{pp}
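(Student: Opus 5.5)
We must show that if $L(\xi,\wtd I)|_{\MH_0}$ is unitary for one arg function, then for every arg function and every $\MH_j$, both $L(\xi,\wtd I)|_{\MH_j}$ and $R(\xi,\wtd I)|_{\MH_j}$ are unitary. Throughout, write $\wtd I_0$ for the distinguished arg-valued interval.

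\emph{Step 1: isometry on $\MH_j$ for $\wtd I_0$.} Set $A:=L(\xi,\wtd I_0)^*L(\xi,\wtd I_0)$. By Prop. \ref{lb21}(b) with $\psi=\xi$, we have $A|_{\MH_k}=L(L(\xi,\wtd I_0)^*\xi,\wtd I_0)|_{\MH_k}$ for every $\MH_k$. Since $L(\xi,\wtd I_0)|_{\MH_0}$ is unitary, $L(\xi,\wtd I_0)^*\xi=L(\xi,\wtd I_0)^*L(\xi,\wtd I_0)\Omega=\Omega$. Then Prop. \ref{lb22} (with $x=1$) gives $L(\Omega,\wtd I_0)=\idt$. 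Hence $L(\xi,\wtd I_0)|_{\MH_j}$ is an isometry for every $\MH_j$.

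\emph{Step 2: surjectivity for $\wtd I_0$.} Here I use Lem. \ref{lb18}, invoked in the definition of $I$-unitary, which says that $L$ and $R$ are unitary on $\MH_0$ simultaneously. Consider the projection $P:=L(\xi,\wtd I_0)L(\xi,\wtd I_0)^*$ on $\MH_i\boxtimes\MH_j$. It suffices to show $P$ fixes the dense set of vectors $L(\xi',\wtd I_0)\eta$ with $\xi'\in\MH_i(I)$ and $\eta\in\MH_j(J)$, where $\wtd J$ is clockwise to $\wtd I_0$ (Thm. \ref{lb15}(3)). Using Prop. \ref{lb20} and locality,
\begin{align*}
L(\xi',\wtd I_0)\eta=R(\eta,\wtd J)\xi',
\end{align*}
and $P$ commutes with $R(\eta,\wtd J)$ by the adjoint-commutation in locality (Thm. \ref{lb15}(4)). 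So $PR(\eta,\wtd J)\xi'=R(\eta,\wtd J)P|_{\MH_i}\xi'$. Now $P|_{\MH_i}=L(\xi,\wtd I_0)|_{\MH_0}L(\xi,\wtd I_0)|_{\MH_0}^*=\idt_{\MH_i}$ because the restriction to $\MH_0$ is unitary. Hence $P=\idt$, and $L(\xi,\wtd I_0)|_{\MH_j}$ is unitary.

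\emph{Step 3: the $R$ operators and other arg functions.} The same two steps, with the roles of $L$ and $R$ swapped (Prop. \ref{lb21}(b) for $R$, and locality with $\wtd J$ anticlockwise), give unitarity of $R(\xi,\wtd I_0)|_{\MH_j}$; the starting point is the unitarity of $R(\xi,\wtd I_0)|_{\MH_0}$ from Lem. \ref{lb18}. For a different arg function, $\wtd I=\varrho(2\pi n)\wtd I_0$. I would relate $L(\xi,\varrho(2\pi)\wtd I_0)$ to $L(\xi,\wtd I_0)$ through rotation covariance (Rem. \ref{lb32}) together with the balancing: $\varrho_\MA(2\pi)$ acts on each module as a unitary, and conjugation by these unitaries transports $L(\varrho_\MA(-2\pi)\xi,\wtd I_0)$ to $L(\xi,\varrho(2\pi)\wtd I_0)$. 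Thus it suffices that $\varrho_\MA(-2\pi)\xi$ is again $I$-unitary for $\wtd I_0$. On $\MH_0$ we have $\varrho_\MA(2\pi)=\idt$, so $L(\xi,\varrho(2\pi)\wtd I_0)|_{\MH_0}=U_i(\varrho_\MA(2\pi))L(\varrho_\MA(-2\pi)\xi,\wtd I_0)|_{\MH_0}$. Alternatively, one can use Lem. \ref{lb17}: $L(\xi,\varrho(2\pi)\wtd I_0)|_{\MH_0}=L(\xi,\wtd I_0)|_{\MH_0}\phi^{\pm1}$, which is unitary. Either way, Steps 1–2 then apply to every arg function.

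The main obstacle is the bookkeeping in Step 3: matching Lem. \ref{lb17}'s intertwiner relation with the precise arg-shifted $L$ operator on $\MH_0$, and ensuring that Lem. \ref{lb18} supplies the $R$ version. The core argument is Steps 1–2.
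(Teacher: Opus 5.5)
Your proof is correct and takes essentially the paper's route: Lem.~\ref{lb17} gives $L(\xi,\varrho(2\pi)\wtd I)|_{\MH_0}=L(\xi,\wtd I)\phi^{-1}|_{\MH_0}$, so unitarity on $\MH_0$ does not depend on $\arg_I$. Locality together with the state-field correspondence (which is exactly what Prop.~\ref{lb21}(b) packages) then gives the isometry property, density of $R(\MH_j(J),\wtd J)\MH_i$ gives surjectivity, and the $R$ case follows symmetrically from Lem.~\ref{lb18}.

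Discard your first option in Step~3, the rotation-covariance route, because it is circular. By rotation covariance, unitarity of $L(\varrho_\MA(-2\pi)\xi,\wtd I_0)|_{\MH_0}$ is equivalent to unitarity of $L(\xi,\varrho(2\pi)\wtd I_0)|_{\MH_0}$, which is precisely what you needed to show. The Lem.~\ref{lb17} route is the one that actually handles the other arg functions.
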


In particular, since we assume that $\MH_i,\MH_j$ are nonzero, the object $\MH_i\boxtimes\MH_j$ is also nonzero.

\begin{proof}
By Lem. \ref{lb17}, we have $L(\xi,\wtd I)\phi^{-1}|_{\MH_0}\in\Hom_{\MA((\varrho(2\pi)\wtd I)')}(\MH_0,\MH_i)$, and hence
\begin{align}
L(\xi,\varrho(2\pi)\wtd I)|_{\MH_0}=L(\xi,\wtd I)\phi^{-1}|_{\MH_0}
\end{align}
Thus, the unitarity of $L(\xi,\wtd I)|_{\MH_0}$ is independent of $\arg_I$. We now fix an arbitrary $\arg_I$. Choose $\wtd J$ clockwise to $\wtd I$. For each $\eta\in\MH_j(J)$, we have
\begin{align*}
&L(\xi,\wtd I)^*L(\xi,\wtd I)\eta=L(\xi,\wtd I)^*L(\xi,\wtd I)R(\eta,\wtd J)\Omega\\
=&R(\eta,\wtd J)L(\xi,\wtd I)^*L(\xi,\wtd I)\Omega=R(\eta,\wtd J)\Omega=\eta
\end{align*}
Hence $L(\xi,\wtd I)|_{\MH_j}$ is an isometry. Moreover, for $\chi\in\MH_0$ we have
\begin{align*}
L(\xi,\wtd I)R(\eta,\wtd J)\chi=R(\eta,\wtd J)L(\xi,\wtd I)\chi
\end{align*}
Thus, by the unitarity of $L(\xi,\wtd I)|_{\MH_0}$, the above vectors span a dense subspace of $\ovl{R(\MH_j,\wtd J)\MH_i}=\MH_i\boxtimes\MH_j$. This proves that $L(\xi,\wtd I)|_{\MH_j}$ is unitary. A similar argument shows that $R(\xi,\wtd I)|_{\MH_j}$ is unitary.
\end{proof}

\begin{eg}\label{lb147}
Let $\xi\in\MH(I)$ be $I$-unitary. Then for each $g\in\GA$, the vector $g\xi g^{-1}\in\MH(gI)$ defined by the conformal covariance in Thm. \ref{lb15} is $gI$-unitary, since $L(g\xi g^{-1},g\wtd I)|_{\MH_0}=gL(\xi,\wtd I)g^{-1}|_{\MH_0}$ is unitary.
\end{eg}

\begin{pp}\label{lb34}
Let $\MH_i,\MH_k\in\Rep^\phi(\MA)$ and $\wtd I\in\Jtd$. Fix $I$-unitary vectors $\xi_0\in\MH_i(I),\psi_0\in\MH_k(I)$. Then we have
\begin{align*}
&\Hom_{\MA(\wtd I')}(\MH_i,\MH_k)=\{L(\psi,\wtd I)L(\xi,\wtd I)^*|_{\MH_i}:\xi\in\MH_i(I),\psi\in\MH_k(I)\}\\
=&\{L(\psi_0,\wtd I)xL(\xi_0,\wtd I)^*|_{\MH_i}:x\in\MA(I)\}
\end{align*}
and
\begin{align*}
&\Hom_{\MA(\bpr\wtd I)}(\MH_i,\MH_k)=\{R(\psi,\wtd I)R(\xi,\wtd I)^*|_{\MH_i}:\xi\in\MH_i(I),\psi\in\MH_k(I)\}\\
=&\{R(\psi_0,\wtd I)xR(\xi_0,\wtd I)^*|_{\MH_i}:x\in\MA(I)\}
\end{align*}
\end{pp}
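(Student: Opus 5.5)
We prove the statement for the $L$ operators; the $R$ case is symmetric, with $\wtd I'$ replaced by $\bpr\wtd I$ and the roles of clockwise and anticlockwise exchanged. Denote the three sets by $\mc X_1\supset\mc X_2\supset\mc X_3$; the claimed inclusions are to be checked below. We will show $\mc X_3\subset\mc X_2\subset\mc X_1$ and $\Hom_{\MA(\wtd I')}(\MH_i,\MH_k)\subset\mc X_3$, where $\mc X_1=\Hom_{\MA(\wtd I')}(\MH_i,\MH_k)$.

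\textbf{Step 1: $\mc X_2\subset\mc X_1$.} By \eqref{eq68}, $L(\psi,\wtd I)|_{\MH_0}\in\Hom_{\MA(\wtd I')}(\MH_0,\MH_k)$. Its adjoint satisfies $L(\xi,\wtd I)^*|_{\MH_i}\in\Hom_{\MA(\wtd I')}(\MH_i,\MH_0)$, because the intertwining relation for $L(\xi,\wtd I)$ with $\pi_{\wtd I'}(x)$ passes to adjoints using $x^*\in\MA(I')$. Composing gives an element of $\mc X_1$.

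\textbf{Step 2: $\mc X_3\subset\mc X_2$.} By Prop. \ref{lb22}, $x|_{\MH_0}=L(x\Omega,\wtd I)|_{\MH_0}$. Proposition \ref{lb21}(a) then gives $L(\psi_0,\wtd I)L(x\Omega,\wtd I)=L(L(\psi_0,\wtd I)x\Omega,\wtd I)$ on $\MH_0$, with $\psi:=L(\psi_0,\wtd I)x\Omega\in\MH_k(I)$. Hence $L(\psi_0,\wtd I)xL(\xi_0,\wtd I)^*=L(\psi,\wtd I)L(\xi_0,\wtd I)^*$, which lies in $\mc X_2$.

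\textbf{Step 3: $\mc X_1\subset\mc X_3$.} This is the main step. Here $L(\xi_0,\wtd I)|_{\MH_0}$ and $L(\psi_0,\wtd I)|_{\MH_0}$ are unitary, and both intertwine $\MA(I')$ (on $\MH_0$) with $\pi_{\wtd I'}$ on $\MH_i$ and on $\MH_k$ respectively. Given $T\in\Hom_{\MA(\wtd I')}(\MH_i,\MH_k)$, set
\begin{align*}
x:=L(\psi_0,\wtd I)^*\,T\,L(\xi_0,\wtd I)\big|_{\MH_0}\in\fk L(\MH_0).
\end{align*}
This operator commutes with $\MA(I')$, so Haag duality (Rem. \ref{lb1}) gives $x\in\MA(I')'=\MA(I)$. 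Unitarity then yields $T=L(\psi_0,\wtd I)xL(\xi_0,\wtd I)^*|_{\MH_i}$.

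The only subtle point is that the intertwining for $\MH_i,\MH_k$ uses $\pi_{\wtd I'}$ with the specific arg function of $\wtd I'$, which must match the one in the definition of $\Hom_{\MA(\wtd I')}$. It does, because \eqref{eq68} is stated with exactly $\wtd I'$, so no twisting correction via Lem. \ref{lb17} is needed. The hypothesis that $\MH_i,\MH_k$ lie in the same $\Rep^\phi(\MA)$ is not needed for this argument. It only ensures that nonzero intertwiners naturally arise and matches the setting of Prop. \ref{lb33}.
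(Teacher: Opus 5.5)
Your proof is correct and follows the paper's argument essentially verbatim: the inclusion of the last set in the middle one via Prop. \ref{lb22} and Prop. \ref{lb21}, and the reverse inclusion via $x:=L(\psi_0,\wtd I)^*TL(\xi_0,\wtd I)|_{\MH_0}\in\End_{\MA(I')}(\MH_0)=\MA(I)$ followed by unitarity. One small nuance, which the paper also leaves tacit, concerns the given arg function: since $I$-unitarity is defined using some arg function, the unitarity of $L(\xi_0,\wtd I)|_{\MH_0}$ and $L(\psi_0,\wtd I)|_{\MH_0}$ for the given $\wtd I$ relies on the arg-independence established at the start of the proof of Prop. \ref{lb33} via Lem. \ref{lb17}.
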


\begin{proof}
We prove the first set of equalities; the second follows by the same argument. The inclusions $\supset$ are clear. In particular, the middle set contains the last one, since
\begin{align*}
L(\psi_0,\wtd I)x|_{\MH_0}=L(\psi_0,\wtd I)L(x\Omega,\wtd I)|_{\MH_0}=L(L(\psi_0,\wtd I)x\Omega,\wtd I)|_{\MH_0}
\end{align*}
by Prop. \ref{lb22} and \ref{lb21}. It remains to show that the last set contains the first.

Choose any $T\in \Hom_{\MA(\wtd I')}(\MH_i,\MH_k)$. By the locality in Thm. \ref{lb15}, the element
\begin{align*}
x:=L(\psi_0,\wtd I)^*TL(\xi_0,\wtd I)|_{\MH_0}
\end{align*}
belongs to $\End_{\MA(I')}(\MH_0)=\MA(I)$. By the $I$-unitarity, $T$ equals $L(\psi_0,\wtd I)xL(\xi_0,\wtd I)^*|_{\MH_i}$.
\end{proof}

\begin{thm}\label{lb35}
Let $\MH_i\in\Rep^\phi(\MA)$ and $\wtd I\in\Jtd$. Then there exists an operation $\pmb{\pi^L_{\wtd I}}$ associating to each $\omega\in G$ and $\MH_j\in\Rep^\omega(\MA)$ a unique normal unitary representation
\begin{align}\label{eq41}
\pi^L_{\wtd I}|_{\MH_i\boxtimes\MH_j}:\End_{\MA(\wtd I')}(\MH_i)\rightarrow \fk L(\MH_i\boxtimes\MH_j)
\end{align}
commuting adjointly with $R(\eta,\wtd I')$ for each $\eta\in\MH_j(I')$, that is, for each $A\in \End_{\MA(\wtd I')}(\MH_i)$ the following diagram commutes adjointly.
\begin{equation}\label{eq15}
\begin{tikzcd}[column sep=huge]
\MH_i \arrow[r,"{R(\eta,\wtd I')}"] \arrow[d,"A"'] & \mc \MH_i\boxtimes\MH_j \arrow[d,"\pi^L_{\wtd I}(A)"] \\
\MH_i \arrow[r,"{R(\eta,\wtd I')}"]           & \MH_i\boxtimes\MH_j       
\end{tikzcd}
\end{equation}
Moreover, if $x\in\MA(I)$, then $\pi_{i,\wtd I}(x)\in\End_{\MA(\wtd I')}(\MH_i)$ and 
\begin{align}\label{eq28}
\pi^L_{\wtd I}(\pi_{i,\wtd I}(x))\big|_{\MH_i\boxtimes\MH_j}=\pi_{i\boxtimes j,\wtd I}(x)
\end{align}
\end{thm}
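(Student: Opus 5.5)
Fix an $I$-unitary vector $\xi_0\in\MH_i(I)$. The plan is to build $\pi^L_{\wtd I}$ by conjugating with the unitary $L(\xi_0,\wtd I)$, using Prop. \ref{lb34} to describe $\End_{\MA(\wtd I')}(\MH_i)$. By Prop. \ref{lb34} (with $\MH_k=\MH_i$, $\psi_0=\xi_0$), every $A\in\End_{\MA(\wtd I')}(\MH_i)$ has the form $A=L(\xi_0,\wtd I)xL(\xi_0,\wtd I)^*|_{\MH_i}$ for a unique $x\in\MA(I)$. The map $A\mapsto x$ is the restriction of $\Ad_{L(\xi_0,\wtd I)^*|_{\MH_0}}$, so it is a normal $*$-isomorphism $\End_{\MA(\wtd I')}(\MH_i)\to\MA(I)$. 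I will then define
\begin{align*}
\pi^L_{\wtd I}(A)\big|_{\MH_i\boxtimes\MH_j}:=L(\xi_0,\wtd I)\,\pi_{j,\wtd I}(x)\,L(\xi_0,\wtd I)^*\big|_{\MH_i\boxtimes\MH_j}.
\end{align*}
By Prop. \ref{lb33}, $L(\xi_0,\wtd I)|_{\MH_j}$ is unitary. Hence this is a composition of the normal isomorphism $A\mapsto x$, the normal representation $\pi_{j,\wtd I}$, and a unitary conjugation. It is therefore a normal unital representation.

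Next comes the adjoint commutation \eqref{eq15}. Take $\eta\in\MH_j(I')$. Since $\wtd I'$ is clockwise to $\wtd I$, the locality in Thm. \ref{lb15} shows that $R(\eta,\wtd I')$ commutes adjointly with $L(\xi_0,\wtd I)$, from $\MH_0$ to $\MH_i$ and from $\MH_j$ to $\MH_i\boxtimes\MH_j$. Because both $L$-maps are unitary, it also commutes adjointly with $L(\xi_0,\wtd I)^*$. It remains to handle the middle factor: $R(\eta,\wtd I')|_{\MH_0}$ intertwines $\MA(\bpr\wtd I')$. Here $\bpr(\wtd I')$ is $I$ with the arg function of $\wtd I$, so $\bpr(\wtd I')=\wtd I$. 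Hence $R(\eta,\wtd I')x=\pi_{j,\wtd I}(x)R(\eta,\wtd I')$, and likewise for $x^*$, which gives adjoint commutation with $x\mapsto\pi_{j,\wtd I}(x)$. Composing the three adjointly commuting squares yields \eqref{eq15}.

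For uniqueness, any representation satisfying \eqref{eq15} is determined on vectors $R(\eta,\wtd I')\chi$ with $\chi\in\MH_i$ and $\eta\in\MH_j(I')$. These span a dense subspace by density of fusion products (Thm. \ref{lb15}(3)), so uniqueness follows. In particular, $\pi^L_{\wtd I}$ is independent of the choice of $\xi_0$. For \eqref{eq28}, take $A=\pi_{i,\wtd I}(x)$. It lies in $\End_{\MA(\wtd I')}(\MH_i)$ by Rem. \ref{lb133}. Then compare $\pi^L_{\wtd I}(A)$ with $\pi_{i\boxtimes j,\wtd I}(x)$ on the dense vectors $R(\eta,\wtd I')\chi$. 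Since $R(\eta,\wtd I')\in\Hom_{\MA(\bpr\wtd I')}=\Hom_{\MA(\wtd I)}$, we get $\pi_{i\boxtimes j,\wtd I}(x)R(\eta,\wtd I')\chi=R(\eta,\wtd I')\pi_{i,\wtd I}(x)\chi$, which is the same value that \eqref{eq15} gives for $\pi^L_{\wtd I}(A)$.

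The main obstacle is the bookkeeping of arg functions and twists. I need to confirm the identification $\bpr(\wtd I')=\wtd I$ and that $L(\xi_0,\wtd I)$ intertwines the correct localized actions when $\MH_i,\MH_j$ are twisted. It also has to be checked that the construction does not depend on $\arg_I$ beyond the stated $\wtd I$. If a twist factor such as $\phi$ or $\omega$ appears in one of these identifications, I would absorb it using Lem. \ref{lb17} and Lem. \ref{lb19}.
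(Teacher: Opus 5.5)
Your proof is correct. It uses the same ingredients as the paper: Prop.~\ref{lb34}, the locality squares of Thm.~\ref{lb15}, and density of fusion products for uniqueness. The difference is that you build $\pi^L_{\wtd I}$ from the opposite side.

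The paper fixes an $I'$-unitary $\eta_0\in\MH_j(I')$ and defines $\pi^L_{\wtd I}(A)=R(\eta_0,\wtd I')AR(\eta_0,\wtd I')^*|_{\MH_i\boxtimes\MH_j}$. With that definition, normality, the representation property and faithfulness (Rem.~\ref{lb91}) are immediate. But to get adjoint commutation with an arbitrary $\eta\in\MH_j(I')$, not just $\eta_0$, the paper must first rewrite $A=L(\xi_2,\wtd I)L(\xi_1,\wtd I)^*|_{\MH_i}$. It then uses locality to identify $\pi^L_{\wtd I}(A)$ with $L(\xi_2,\wtd I)L(\xi_1,\wtd I)^*|_{\MH_i\boxtimes\MH_j}$.

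You instead take the $L$-side formula as the definition, namely Rem.~\ref{lb36} with $\xi_1=\xi_2=\xi_0$ $I$-unitary. This has two benefits:
\begin{enumerate}
\item Rem.~\ref{lb36} comes for free.
\item \eqref{eq15} for every $\eta$ becomes a clean vertical stacking of three adjointly commuting squares.
\end{enumerate}
The cost is that you need Prop.~\ref{lb33} for the unitarity of $L(\xi_0,\wtd I)|_{\MH_j}$. Faithfulness, which is not part of the statement, then follows only from factoriality of $\MA(I)$ or from the uniqueness identification.

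The caveats in your last paragraph can be dropped. With $\wtd I=(a,b)\subset\Rbb$, one has $\wtd I'=(b-2\pi,a)$, hence $\bpr(\wtd I')=(a,b)=\wtd I$ exactly. No interval is ever rotated by $2\pi$, so no twist $\phi$ or $\omega$ enters and Lem.~\ref{lb17} and \ref{lb19} are not needed. Since $\wtd I$ is fixed throughout, nothing needs to be independent of $\arg_I$. Finally, adjoint commutation with $L(\xi_0,\wtd I)^*$ follows simply by taking adjoints of the two locality identities; unitarity is not needed for that step.
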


\begin{proof}
The uniqueness follows from \eqref{eq15} and the density of fusion products (or simply by choosing an $I'$-unitary vector $\eta$). Let us prove the existence.

Fix an $I'$-unitary vector $\eta_0$, and define the representation $\pi^L_{\wtd I}$ on $\MH_i\boxtimes\MH_j$ by
\begin{align*}
\pi^L_{\wtd I}(A)|_{\MH_i\boxtimes\MH_j}=R(\eta_0,\wtd I')AR(\eta_0,\wtd I')^*|_{\MH_i\boxtimes\MH_j}
\end{align*}
By Prop. \ref{lb34}, each $A$ can be written as $L(\xi_2,\wtd I)L(\xi_1,\wtd I)^*|_{\MH_i}$ where $\xi_1,\xi_2\in\MH_i(I)$. The locality in Thm. \ref{lb15} implies that
\begin{align}\label{eq17}
\pi^L_{\wtd I}(L(\xi_2,\wtd I)L(\xi_1,\wtd I)^*|_{\MH_i})\big|_{\MH_i\boxtimes\MH_j}=L(\xi_2,\wtd I)L(\xi_1,\wtd I)^*\big|_{\MH_i\boxtimes\MH_j}
\end{align}
Another application of the locality implies the adjoint commutativity of \eqref{eq15}.

If $x\in\MA(I)$, then $\pi_{i,\wtd I}(x)\in\End_{\MA(\wtd I')}(\MH_i)$ by Rem. \ref{lb133}, and 
\begin{align*}
R(\eta_0,\wtd I')\pi_{i,\wtd I}(x)R(\eta_0,\wtd I')^*|_{\MH_i\boxtimes\MH_j}=\pi_{i\boxtimes j,\wtd I}(x)R(\eta_0,\wtd I')R(\eta_0,\wtd I')^*|_{\MH_i\boxtimes\MH_j}=\pi_{i\boxtimes j,\wtd I}(x)
\end{align*}
This proves \eqref{eq28}.
\end{proof}

\begin{thm}\label{lb63}
Let $\MH_j\in\Rep^\omega(\MA)$ and $\wtd J\in\Jtd$. Then there exists an operation $\pmb{\pi^R_{\wtd J}}$ associating to each $\phi\in G$ and $\MH_i\in\Rep^\phi(\MA)$ a unique normal unitary representation
\begin{align}\label{eq42}
\pi^R_{\wtd J}|_{\MH_i\boxtimes\MH_j}:\End_{\MA(\bpr\wtd J)}(\MH_j)\rightarrow \fk L(\MH_i\boxtimes\MH_j)
\end{align}
commuting adjointly with $L(\xi,\bpr\wtd J)$ for each $\xi\in\MH_i(J')$, that is, for each $B\in \End_{\MA(\bpr\wtd J)}(\MH_j)$ the following diagram commutes adjointly.
\begin{equation}\label{eq16}
\begin{tikzcd}[column sep=large]
\MH_j \arrow[r,"B"] \arrow[d,"{L(\xi,\bpr\wtd J)}"'] & \mc \MH_j \arrow[d,"{L(\xi,\bpr\wtd J)}"] \\
\MH_i\boxtimes\MH_j \arrow[r,"\pi^R_J(B)"]           & \MH_i\boxtimes\MH_j       
\end{tikzcd}
\end{equation}
Moreover, if $y\in\MA(J)$, then $\pi_{j,\wtd J}(y)\in\End_{\MA(\bpr\wtd J)}(\MH_j)$ and
\begin{align}\label{eq30}
\pi^R_{\wtd J}(\pi_{j,\wtd J}(y))\big|_{\MH_i\boxtimes\MH_j}=\pi_{i\boxtimes j,\wtd J}(y)
\end{align}
\end{thm}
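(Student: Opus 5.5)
This is the mirror image of Thm. \ref{lb35}, with the roles of $L$ and $R$ exchanged and the complements changed accordingly. Uniqueness comes first. Pick a $J'$-unitary vector $\xi_0\in\MH_i(J')$. By Prop. \ref{lb33}, $L(\xi_0,\bpr\wtd J)|_{\MH_j}:\MH_j\rightarrow\MH_i\boxtimes\MH_j$ is unitary. Then \eqref{eq16} with $\xi=\xi_0$ forces
\begin{align*}
\pi^R_{\wtd J}(B)|_{\MH_i\boxtimes\MH_j}=L(\xi_0,\bpr\wtd J)BL(\xi_0,\bpr\wtd J)^*|_{\MH_i\boxtimes\MH_j}.
\end{align*}
Alternatively, uniqueness follows from the density of fusion products in Thm. \ref{lb15}.

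For existence, I will take the display above as the definition of $\pi^R_{\wtd J}(B)$. It is clearly a normal unital $*$-representation, since it is conjugation by a unitary. What remains is to check that it does not depend on $\xi_0$ and that \eqref{eq16} commutes adjointly for every $\xi\in\MH_i(J')$. For this I apply the analogue of Prop. \ref{lb34} for $R$-operators with $\MH_k=\MH_i=\MH_j$ and interval $\wtd J$: every $B\in\End_{\MA(\bpr\wtd J)}(\MH_j)$ can be written as
\begin{align*}
B=R(\eta_2,\wtd J)R(\eta_1,\wtd J)^*|_{\MH_j}\qquad\text{with }\eta_1,\eta_2\in\MH_j(J).
\end{align*}
Since $\wtd J$ is clockwise to $\bpr\wtd J$, the locality in Thm. \ref{lb15} says that $L(\xi,\bpr\wtd J)$ commutes adjointly with $R(\eta_1,\wtd J)$ and with $R(\eta_2,\wtd J)$, for any $\MH_k$ in the middle. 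Taking $\MH_k=\MH_0$ and then $\MH_k=\MH_i$ gives
\begin{align*}
L(\xi_0,\bpr\wtd J)R(\eta_2,\wtd J)R(\eta_1,\wtd J)^*L(\xi_0,\bpr\wtd J)^*=R(\eta_2,\wtd J)R(\eta_1,\wtd J)^*L(\xi_0,\bpr\wtd J)L(\xi_0,\bpr\wtd J)^*=R(\eta_2,\wtd J)R(\eta_1,\wtd J)^*|_{\MH_i\boxtimes\MH_j}.
\end{align*}
This is the analogue of \eqref{eq17}: $\pi^R_{\wtd J}(B)=R(\eta_2,\wtd J)R(\eta_1,\wtd J)^*|_{\MH_i\boxtimes\MH_j}$. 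The right-hand side involves no $\xi_0$, so independence of $\xi_0$ follows.

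The adjoint commutativity of \eqref{eq16} for arbitrary $\xi\in\MH_i(J')$ then follows from locality once more. We have
\begin{align*}
L(\xi,\bpr\wtd J)B=L(\xi,\bpr\wtd J)R(\eta_2,\wtd J)R(\eta_1,\wtd J)^*=R(\eta_2,\wtd J)R(\eta_1,\wtd J)^*L(\xi,\bpr\wtd J)=\pi^R_{\wtd J}(B)L(\xi,\bpr\wtd J).
\end{align*}
The adjoint relation is the same computation applied to $B^*=R(\eta_1,\wtd J)R(\eta_2,\wtd J)^*$, using the adjoint half of locality. The one point needing care is that the expression for $B$ is realised on both $\MH_j$ and $\MH_i\boxtimes\MH_j$. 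Locality in Thm. \ref{lb15} holds for any middle object $\MH_k$, so this is fine.

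For \eqref{eq30}, take $y\in\MA(J)$. Remark \ref{lb133} applied to $\wtd J$ and $\bpr\wtd J$ shows that $\pi_{j,\wtd J}(y)$ commutes with $\pi_{j,\bpr\wtd J}(\MA(J'))$, so $\pi_{j,\wtd J}(y)\in\End_{\MA(\bpr\wtd J)}(\MH_j)$. Since $L(\xi_0,\bpr\wtd J)$ intertwines $\MA(\wtd J)$, because $(\bpr\wtd J)'=\wtd J$ by \eqref{eq68}, we get
\begin{align*}
L(\xi_0,\bpr\wtd J)\pi_{j,\wtd J}(y)L(\xi_0,\bpr\wtd J)^*=\pi_{i\boxtimes j,\wtd J}(y)L(\xi_0,\bpr\wtd J)L(\xi_0,\bpr\wtd J)^*=\pi_{i\boxtimes j,\wtd J}(y).
\end{align*}

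The main obstacle is bookkeeping of the arg-valued intervals. One must check three things: that $(\bpr\wtd J)'=\wtd J$, so that $L(\cdot,\bpr\wtd J)$ intertwines $\pi_{\wtd J}$ with the correct arg function; that the clockwise ordering required for locality holds; and that the $R$-version of Prop. \ref{lb34} is applied with $\bpr\wtd J$ as the relevant complement.
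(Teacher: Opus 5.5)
Your proposal is correct and is exactly the mirror image of the paper's proof of Thm.~\ref{lb35}, which is all the paper invokes for Thm.~\ref{lb63}. The steps match one for one: you define $\pi^R_{\wtd J}(B)$ by conjugating with the unitary $L(\xi_0,\bpr\wtd J)|_{\MH_j}$ for a $J'$-unitary $\xi_0$, and use the $R$-half of Prop.~\ref{lb34} to write $B=R(\eta_2,\wtd J)R(\eta_1,\wtd J)^*|_{\MH_j}$. Locality, with $\wtd J$ clockwise to $\bpr\wtd J$, then gives both the $\xi_0$-independent formula and the adjoint commutativity, and the intertwining property coming from $(\bpr\wtd J)'=\wtd J$ gives \eqref{eq30}. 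One cosmetic point: both applications of locality in your computation use the middle object $\MH_k=\MH_0$, so your remark about "then $\MH_k=\MH_i$" is unnecessary but harmless.
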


\begin{proof}
This is similar to the proof of Thm. \ref{lb35}. 
\end{proof}

\begin{rem}\label{lb36}
Similar to the proof of \eqref{eq17}, by the locality in Thm. \ref{lb15}, we can prove the more general fact that if $\xi_1,\xi_2\in\MH_i(I)$, $\eta_1,\eta_2\in\MH_j(J)$, and $x\in\MA(I),y\in\MA(J)$, then
\begin{subequations}\label{eq18}
\begin{gather}
\pi^L_{\wtd I}(L(\xi_2,\wtd I)xL(\xi_1,\wtd I)^*|_{\MH_i})\big|_{\MH_i\boxtimes\MH_j}=L(\xi_2,\wtd I)\pi_{\wtd I}(x)L(\xi_1,\wtd I)^*\big|_{\MH_i\boxtimes\MH_j}\\
\pi^R_{\wtd J}(R(\eta_2,\wtd J)yR(\eta_1,\wtd J)^*|_{\MH_j})\big|_{\MH_i\boxtimes\MH_j}=R(\eta_2,\wtd J)\pi_{\wtd J}(y)R(\eta_1,\wtd J)^*\big|_{\MH_i\boxtimes\MH_j}
\end{gather}
\end{subequations}
\end{rem}

\begin{rem}\label{lb61}
In Thm. \ref{lb35}, if $\wtd I\subset\wtd J$, then $\pi^L_{\wtd J}$ restricts to $\pi^L_{\wtd I}$ on $\End_{\MA(\wtd I')}(\MH_i)$. Similarly, in Thm. \ref{lb63}, if $\wtd I\subset\wtd J$, then $\pi^R_{\wtd J}$ restricts to $\pi^R_{\wtd I}$ on $\End_{\MA(\bpr\wtd I)}(\MH_j)$.
\end{rem}

\begin{proof}
Choose any $A\in\End_{\MA(\wtd I')}(\MH_i)$. Let $\eta_0\in\MH_j(J')$ be $J'$-unitary, and let $V=R(\eta_0,\wtd J')|_{\MH_i}$. Then \eqref{eq15} implies $\pi^L_{\wtd I}(A)=VAV^{-1}=\pi^L_{\wtd J}(A)$. 
\end{proof}

\begin{rem}\label{lb91}
The representation \eqref{eq41} is faithful since, by the adjoint commutativity of \eqref{eq15}, it is unitarily equivalent to the action of $\End_{\MA(\wtd I')}(\MH_i)$ on $\MH_i$ via the unitary operator $R(\eta_0,\wtd I')|_{\MH_i}$, where $\eta_0\in\MH_j(I')$ is $I'$-unitary. Similarly, the representation \eqref{eq42} is faithful. It follows that
\begin{align*}
\pi^L_{\wtd I}(\End_{\MA(\wtd I')}(\MH_i))\big|_{\MH_i\boxtimes\MH_j}\text{ and } \pi^R_{\wtd J}(\End_{\MA(\bpr\wtd J)}(\MH_j))\big|_{\MH_i\boxtimes\MH_j}\text{ are type III factors}
\end{align*}
since $\End_{\MA(\wtd I')}(\MH_i)$ and $\End_{\MA(\bpr\wtd J)}(\MH_j)$ are so.
\end{rem}

\begin{thm}\label{lb108}
Let $\MH_i\in\Rep^\phi(\MA),\MH_j\in\Rep^\omega(\MA)$, $g\in\GA$, and $\wtd I,\wtd J\in\Jtd$ with $\wtd J=g\wtd I$. Then
\begin{gather*}
U_i(g)\End_{\MA(\wtd I')}(\MH_i)U_i(g)^*=\End_{\MA(\wtd J')}(\MH_i)\\ 
U_j(g)\End_{\MA(\bpr\wtd I)}(\MH_j)U_j(g)^*=\End_{\MA(\bpr\wtd J)}(\MH_j)
\end{gather*}
Moreover, for each $A\in \End_{\MA(\wtd I')}(\MH_i)$ and $B\in \End_{\MA(\bpr\wtd I)}(\MH_j)$, we have
\begin{gather*}
U_{i\boxtimes j}(g)\cdot\pi^L_{\wtd I}(A)|_{\MH_i\boxtimes\MH_j}\cdot U_{i\boxtimes j}(g)^*=\pi^L_{\wtd J}(U_i(g)AU_i(g)^*)|_{\MH_i\boxtimes\MH_j}\\
U_{i\boxtimes j}(g)\cdot\pi^R_{\wtd I}(B)|_{\MH_i\boxtimes\MH_j}\cdot U_{i\boxtimes j}(g)^*=\pi^R_{\wtd J}(U_j(g)BU_j(g)^*)|_{\MH_i\boxtimes\MH_j}
\end{gather*}
\end{thm}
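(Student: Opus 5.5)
The first two identities follow from Cor. \ref{lb30}. If $A\in\End_{\MA(\wtd I')}(\MH_i)$ and $x\in\MA(J')$, we write
\begin{align*}
x=U_0(g)x_0U_0(g)^*\qquad\text{with } x_0=U_0(g)^*xU_0(g)\in\MA(I')
\end{align*}
Since $\wtd J'=g\wtd I'$ (the group $\GA$ acts on $\Rbb$ order-preservingly and commutes with $\varrho(2\pi)$, so it preserves clockwise complements), \eqref{eq9} gives
\begin{align*}
\pi_{i,\wtd J'}(x)=U_i(g)\pi_{i,\wtd I'}(x_0)U_i(g)^*
\end{align*}
Hence $U_i(g)AU_i(g)^*$ commutes with $\pi_{i,\wtd J'}(x)$. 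This proves $\subset$, and applying the same argument with $g^{-1}$ gives equality. The anticlockwise case is identical.

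For the covariance of $\pi^L$, we fix an $I'$-unitary $\eta_0\in\MH_j(I')$. By Exp. \ref{lb147}, the vector $\eta_1:=g\eta_0g^{-1}\in\MH_j(J')$ is $J'$-unitary. The conformal covariance \eqref{eq14}, applied with $g\wtd I'=\wtd J'$, gives
\begin{align*}
R(\eta_1,\wtd J')|_{\MH_i}=U_{i\boxtimes j}(g)R(\eta_0,\wtd I')U_i(g)^*
\end{align*}
By the proof of Thm. \ref{lb35}, or more directly by the adjoint commutativity \eqref{eq15} together with the unitarity from Prop. \ref{lb33}, we have
\begin{align*}
\pi^L_{\wtd I}(A)=R(\eta_0,\wtd I')AR(\eta_0,\wtd I')^*
\end{align*}
and similarly $\pi^L_{\wtd J}(B)=R(\eta_1,\wtd J')BR(\eta_1,\wtd J')^*$ for any $B\in\End_{\MA(\wtd J')}(\MH_i)$. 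Substituting the covariance formula, we get
\begin{align*}
U_{i\boxtimes j}(g)\,\pi^L_{\wtd I}(A)\,U_{i\boxtimes j}(g)^*
&=U_{i\boxtimes j}(g)R(\eta_0,\wtd I')AR(\eta_0,\wtd I')^*U_{i\boxtimes j}(g)^*\\
&=R(\eta_1,\wtd J')\,U_i(g)AU_i(g)^*\,R(\eta_1,\wtd J')^*\\
&=\pi^L_{\wtd J}(U_i(g)AU_i(g)^*)
\end{align*}
Here the middle step inserts $U_i(g)^*U_i(g)=1$. The $\pi^R$ statement follows in the same way, using $L(\xi_1,\bpr\wtd J)$ with $\xi_1=g\xi_0g^{-1}$, where $\xi_0\in\MH_i(I')$ is $I'$-unitary and $g\,\bpr\wtd I=\bpr\wtd J$.

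The main point to verify carefully is that the operators $g$ appearing in \eqref{eq14} are exactly $U_i(g)$ on the source $\MH_i$ and $U_{i\boxtimes j}(g)$ on the target, and that the uniqueness in Thm. \ref{lb35} (characterization by \eqref{eq15}) allows us to use any unitary vector, independently of the choice of $\eta_0$.
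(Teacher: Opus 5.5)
Your proof is correct, but for the last two identities it takes a different route from the paper. Your first part is an explicit derivation of \eqref{eq76} from Cor.~\ref{lb30}, which is exactly what the paper cites there.

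For the covariance of $\pi^L$, the paper works inside $\wtd I$:
\begin{itemize}
\item By Prop.~\ref{lb34} it writes $A=L(\xi_2,\wtd I)L(\xi_1,\wtd I)^*|_{\MH_i}$ with $\xi_1,\xi_2\in\MH_i(I)$.
\item It uses Rem.~\ref{lb36} to see that $\pi^L_{\wtd I}(A)$ is the same expression restricted to $\MH_i\boxtimes\MH_j$.
\item It applies the covariance \eqref{eq14} of $L$-operators on $\wtd I$. Then both $U_i(g)AU_i(g)^*$ and $U_{i\boxtimes j}(g)\pi^L_{\wtd I}(A)U_{i\boxtimes j}(g)^*$ become $L(g\xi_2g^{-1},\wtd J)L(g\xi_1g^{-1},\wtd J)^*$ on the respective spaces.
\end{itemize}

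You instead work in the complement. You use that $\pi^L_{\wtd I}$ is the spatial isomorphism $\Ad_{R(\eta_0,\wtd I')}$ for an $I'$-unitary $\eta_0$, and you transport this single implementing unitary by \eqref{eq14} on $\wtd I'$.

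What your version buys: it treats all $A$ at once, and it avoids the description of $\End_{\MA(\wtd I')}(\MH_i)$ by $L$-operators. What it costs: you need Prop.~\ref{lb33} for unitarity, Exp.~\ref{lb147} to know that $g\eta_0g^{-1}$ is again $J'$-unitary, and the uniqueness in Thm.~\ref{lb35} to know that $\pi^L_{\wtd J}$ may be computed with any $J'$-unitary vector.

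What the paper's version buys: it needs no auxiliary unitary vector or independence-of-choice argument. It also shows the membership $U_i(g)AU_i(g)^*\in\End_{\MA(\wtd J')}(\MH_i)$ directly from the $LL^*$ form.
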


\begin{proof}
The first two identities follow from \eqref{eq76}. Let us prove the third identity; the fourth one will follow from a similar argument.

By Prop. \ref{lb34}, elements of $\End_{\MA(\wtd I')}(\MH_i)$ are precisely of the form
\begin{align*}
A=L(\xi_2,\wtd I)L(\xi_1,\wtd I)^*|_{\MH_i}
\end{align*}
where $\xi_1,\xi_2\in\MH_i(I)$. By the conformal covariance in Thm. \ref{lb15}, for the vectors $g\xi_1 g^{-1},g\xi_2 g^{-1}\in\MH_i(J)$ we have
\begin{align*}\label{eq47}
U_i(g)AU_i(g)^*=L(g\xi_2 g^{-1},\wtd J)L(g\xi_1 g^{-1},\wtd J)^*|_{\MH_i}\tag{$\star$}
\end{align*}
By Rem. \ref{lb36} and the conformal covariance in Thm. \ref{lb15},
\begin{align*}
&U_{i\boxtimes j}(g)\cdot \pi^L_{\wtd I}(A)|_{\MH_i\boxtimes\MH_j}\cdot U_{i\boxtimes j}(g)^*=U_{i\boxtimes j}(g)\cdot L(\xi_2,\wtd I)L(\xi_1,\wtd I)^*|_{\MH_i\boxtimes\MH_j}\cdot U_{i\boxtimes j}(g)^*\\
=&L(g\xi_2 g^{-1},\wtd J)L(g\xi_1 g^{-1},\wtd J)^*|_{\MH_i\boxtimes\MH_j}=\pi^L_{\wtd J}(\eqref{eq47})|_{\MH_i\boxtimes\MH_j}
\end{align*}
This proves the third identity.
\end{proof}

\begin{thm}\label{lb47}
Let $\MH_i\in\Rep^\phi(\MA),\MH_j\in\Rep^\omega(\MA)$ and $\wtd I\in\Jtd$. Then the following two von Neumann algebras on $\MH_i\boxtimes\MH_j$ are commutants of each other. 
\begin{align*}
\pi^L_{\wtd I}(\End_{\MA(\wtd I')}(\MH_i))\big|_{\MH_i\boxtimes\MH_j}\qquad\pi^R_{\wtd I'}(\End_{\MA(\wtd I)}(\MH_j))\big|_{\MH_i\boxtimes\MH_j}
\end{align*}
\end{thm}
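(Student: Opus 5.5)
Write $\MM:=\pi^L_{\wtd I}(\End_{\MA(\wtd I')}(\MH_i))|_{\MH_i\boxtimes\MH_j}$ and $\MN:=\pi^R_{\wtd I'}(\End_{\MA(\wtd I)}(\MH_j))|_{\MH_i\boxtimes\MH_j}$, where $\End_{\MA(\wtd I)}(\MH_j)=\End_{\MA(\bpr(\wtd I'))}(\MH_j)$, so that $\pi^R_{\wtd I'}$ is the one from Thm.~\ref{lb63}. The plan is to show first that $\MM$ and $\MN$ commute, and then that $\MM'\subset\MN$, by conjugating everything back to a single module with unitary $L$ or $R$ operators.

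\emph{Commutation.} By Prop.~\ref{lb34} and Rem.~\ref{lb36}, elements of $\MM$ have the form $L(\xi_2,\wtd I)L(\xi_1,\wtd I)^*$ with $\xi_1,\xi_2\in\MH_i(I)$. Similarly, elements of $\MN$ have the form $R(\eta_2,\wtd I')R(\eta_1,\wtd I')^*$ with $\eta_1,\eta_2\in\MH_j(I')$, all acting on the appropriate fusion products. Since $\wtd I'$ is clockwise to $\wtd I$, the adjoint commutativity in the locality axiom of Thm.~\ref{lb15} gives
\[
L(\xi,\wtd I)R(\eta,\wtd I')=R(\eta,\wtd I')L(\xi,\wtd I),\qquad L(\xi,\wtd I)^*R(\eta,\wtd I')=R(\eta,\wtd I')L(\xi,\wtd I)^*.
\]
Taking adjoints of the second identity also lets $L(\xi,\wtd I)$ pass through $R(\eta,\wtd I')^*$. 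Commuting the four factors past each other then yields $[\MM,\MN]=0$.

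\emph{Maximality.} Fix an $I$-unitary $\xi_0\in\MH_i(I)$. By Prop.~\ref{lb33}, $V:=L(\xi_0,\wtd I)|_{\MH_j}:\MH_j\to\MH_i\boxtimes\MH_j$ is unitary, and by Thm.~\ref{lb15} it lies in $\Hom_{\MA(\wtd I')}$. We claim
\[
V^*\MN V=\End_{\MA(\wtd I)}(\MH_j).
\]
Indeed, by the adjoint commutativity of \eqref{eq16} with $\xi_0\in\MH_i(I)=\MH_i((I')')$ and $\bpr(\wtd I')=\wtd I$, we get $\pi^R_{\wtd I'}(B)V=VB$. Hence $V^*\MN V=\End_{\MA(\wtd I)}(\MH_j)$, which is a von Neumann algebra.

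Next we compute $V^*\MM V$. For $x\in\MA(I)$, \eqref{eq18} gives
\[
\pi^L_{\wtd I}\big(L(\xi_0,\wtd I)xL(\xi_0,\wtd I)^*|_{\MH_i}\big)=L(\xi_0,\wtd I)\pi_{\wtd I}(x)L(\xi_0,\wtd I)^*\big|_{\MH_i\boxtimes\MH_j}=V\pi_{j,\wtd I}(x)V^*.
\]
Here $L(\xi_0,\wtd I)$ is understood on $\MH_i$ and $\pi_{\wtd I}(x)$ on $\MH_j$, using Prop.~\ref{lb22} and Prop.~\ref{lb21}(a) to rewrite the composition. By Prop.~\ref{lb34}, with $\psi_0=\xi_0$, these elements exhaust $\End_{\MA(\wtd I')}(\MH_i)$. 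Therefore
\[
V^*\MM V=\pi_{j,\wtd I}(\MA(I))''.
\]

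The statement then reduces to the claim that the commutant of $\pi_{j,\wtd I}(\MA(I))$ on $\MH_j$ is $\End_{\MA(\wtd I)}(\MH_j)$, and conversely. The first half is immediate from the definition of $\End_{\MA(\wtd I)}$. The bicommutant theorem gives $\End_{\MA(\wtd I)}(\MH_j)'=\pi_{j,\wtd I}(\MA(I))''$, and since $\pi_{j,\wtd I}$ is normal, $\pi_{j,\wtd I}(\MA(I))$ is already a von Neumann algebra. Conjugating back by $V$ gives $\MM'=\MN$ and $\MN'=\MM$.

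The main obstacle is the bookkeeping in computing $V^*\MM V$. One must check carefully that the $L$-operators on $\MH_i$ and on $\MH_i\boxtimes\MH_j$ compose as claimed; for this we rely on Prop.~\ref{lb21} and the independence of the conclusion from the choice of $\arg_I$. Everything else is formal once $V$ is known to be unitary.
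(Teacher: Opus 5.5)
Your proposal is correct and takes essentially the paper's route: conjugate by the unitary $V=L(\xi_0,\wtd I)|_{\MH_j}$ (Prop.~\ref{lb33}), which reduces everything to the tautology $\End_{\MA(\wtd I)}(\MH_j)=\pi_{j,\wtd I}(\MA(I))'$ on $\MH_j$. The only difference is cosmetic. You identify $V^*\MN V$ directly from the intertwining property \eqref{eq16}, whereas the paper writes $V^*BV$ as $R(\eta,\wtd I')yR(\eta,\wtd I')^*$ via Prop.~\ref{lb34} and pushes it back through locality; as a result, your separate commutation step is redundant, since the conjugation argument already gives both inclusions.
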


\begin{proof}
Eq. \eqref{eq18}, together with Prop. \ref{lb34}, implies that the above two von Neumann algebras commute with each other. 

Suppose that $B\in\fk L(\MH_i\boxtimes\MH_j)$ commutes with $\pi^L_{\wtd I}(\End_{\MA(\wtd I')}(\MH_i))\big|_{\MH_i\boxtimes\MH_j}$. Choose an $I$-unitary $\xi\in\MH_i(I)$ and an $I'$-unitary $\eta\in\MH_j(I')$. By Rem. \ref{lb36}, $B$ commutes with $L(\xi,\wtd I)\pi_{\wtd I}(\MA(I))L(\xi,\wtd I)^*|_{\MH_i\boxtimes\MH_j}$. Thus
\begin{align*}
L(\xi,\wtd I)^*BL(\xi,\wtd I)|_{\MH_j}
\end{align*}
commutes with $\pi_{\wtd I,j}(\MA(I))$, and hence (by Prop. \ref{lb34}) can be written as $R(\eta,\wtd I')yR(\eta,\wtd I')^*|_{\MH_j}$ for some $y\in\MA(I')$. By the locality in Thm. \ref{lb15}, 
\begin{align*}
&B=L(\xi,\wtd I)R(\eta,\wtd I')yR(\eta,\wtd I')^*L(\xi,\wtd I)^*|_{\MH_i\boxtimes\MH_j}\\
=&R(\eta,\wtd I')\pi_{\wtd I'}(y)R(\eta,\wtd I')^*L(\xi,\wtd I)L(\xi,\wtd I)^*|_{\MH_i\boxtimes\MH_j}=R(\eta,\wtd I')\pi_{\wtd I'}(y)R(\eta,\wtd I')^*|_{\MH_i\boxtimes\MH_j}
\end{align*}
Therefore, by Rem. \ref{lb36}, $B$ belongs to $\pi^R_{\wtd I'}(\End_{\MA(\wtd I)}(\MH_j))\big|_{\MH_i\boxtimes\MH_j}$.
\end{proof}

\subsection{Conjugates and commutants in fusion products}

In this section, we consider only nonzero objects of $\RepGA$.

Define $\pmb{\rk}:\Sbb^1\rightarrow\Sbb^1$ to be the conjugation map $z\mapsto \ovl z$. It lifts to the map  $\rk:\Rbb\rightarrow\Rbb,x\mapsto -x$ on the universal cover. This defines $\pmb{\rk\wtd I}$ as an element of $\Jtd$ if $\wtd I\in\Jtd$. Let $\pmb{\wtd\Sbb^1_+}$ be the upper half circle $\Sbb^1_+=\{e^{\im t}:0<t<\pi\}$ with arg function ranging in $(0,\pi)$, and let $\pmb{\wtd\Sbb^1_-}=\rk\Sbb^1_+$.

The goal of this section is to show that when $\MH_j=\MH_{\ovl i}$, the pair of commutants appearing in Thm.~\ref{lb47} are related by an anti-unitary map (introduced in \cite[Sec. A]{MS26a}), thereby yielding a Tomita--Takesaki-type theorem for $G$-crossed categorical extensions; see Cor. \ref{lb87}. This result will play a crucial role in establishing the PCT theorem and Haag duality for open/closed CFTs. As preparation for the proof of Cor. \ref{lb87}, we recall and adapt in Subsec. \ref{lb142} and \ref{lb143} several constructions and basic properties about conjugate modules introduced in \cite[Sec. A]{MS26a}.

\subsubsection{The conjugation in $\GA$}\label{lb142}

\begin{rem}\label{lb37}
Let $\pmb{\fk J_\MA}$ (or simply $\fk J$) be the (involutive) modular conjugation of $\MA(\Sbb^1_+)$ with respect to the cyclic separating vector $\Omega$. By the geometric modular theory (cf. \cite[Sec. II.2]{GF93}), we have
\begin{gather*}
\fk J\MA(I)\fk J=\MA(\rk I)\qquad\text{for each }I\in\MJ\\
\fk JU_0(\varrho(t))\fk J=U_0(\varrho(-t))\qquad\text{for each }t\in\Rbb
\end{gather*}
\end{rem}

\begin{rem}\label{lb38}
Each $\varphi\in\Aut(\MA)$ clearly commutes with the modular $S$-operator $x\Omega\mapsto x^*\Omega$ where $x\in\MA(\Sbb^1_+)$. Therefore, by the polar decomposition, $\varphi$ commutes with $\fk J$.
\end{rem}

\begin{df}\label{lb51}
For each $\wtd g\in\GA$, recall that $\wtd g$ is of the form $(g,V)$ where $g\in\SG$ and $V\in\MU(\MH_0)$ represents $U_0(g)$. Define $\pmb{\rk\wtd g\rk}\in\GA$ by
\begin{align}
\rk\wtd g\rk=(\rk g\rk,\fk JV\fk J)
\end{align} 
where, in terms of the description \eqref{eq2} of $\SG$, the element $\rk g\rk\in\SG$ sends each $x\in\Rbb$ to $-g(-x)$. It follows that
\begin{align}\label{eq20}
U_0(\rk\wtd g\rk)=\fk JU_0(\wtd g)\fk J
\end{align}
\end{df}

\begin{proof}
We need to explain why $\fk JV\fk J$ represents $U_0(\rk g\rk)$. For that purpose, it suffices to show that
\begin{align*}
U_0':g\in\DiffS\mapsto \fk JU_0(\rk g\rk)\fk J
\end{align*}
agrees with $U_0$ as projective representations of $\DiffS$. By \cite[Thm. 2.19]{GF93}, $U_0$ agrees with $U_0'$ when restricted to $\PSU$. Therefore, $\MA$ is conformally covariant under $U_0'$ in the sense of Def. \ref{lb2}. By \cite[Thm. 6.1.9]{Wei05}, a conformal covariance extending a given M\"obius covariance of a conformal net is unique. Therefore $U_0'=U_0$ on $\DiffS$.
\end{proof}

\begin{rem}
Recall the identification $\GA=\Gc$ in Rem. \ref{lb39}. Then the definition of $\rk \wtd g\rk$ for $g\in\Gc$ is independent of the conformal net $\MA$ extending $\Vir_c$.
\end{rem}

\begin{proof}
Let $\MK_0=\ovl{\Vir_c\Omega}$. Then $(\Vir_c,\MK_0)$ is the Virasoro subnet of $(\MA,\MH_0)$. Since the unitary representation of $\PSU$ on $\MH_0$ restricts to that on $\MK_0$, by the Bisognano-Wichmann property \cite[Thm. 2.19]{GF93}, the modular operators $\Delta_\MA,\Delta_c$ of $\MA(\Sbb^1_+)$ and $\Vir_c(\Sbb^1_+)$ satisfy $\Delta_\MA^{\im t}|_{\MK_0}=\Delta_c^{\im t}$. Therefore, by (for instance) the proof of Thm. 4.2 in \cite[Sec. IX.4]{Tak03}, the modular conjugations $\fk J_{\MA},\fk J_c$ satisfy $\fk J_\MA|_{\MK_0}=\fk J_c$. Hence, for each $\wtd g=(g,V)\in\GA$, the operator $\fk J_\MA V\fk J_\MA$ restricts to $\fk J_c V\fk J_c$ on $\MK_0$.
\end{proof}

\begin{eg}\label{lb41}
Recall the rotation subgroup $\rho_\MA:\Rbb\rightarrow\GA$ defined in Rem. \ref{lb40}. Then by Rem. \ref{lb37},
\begin{align*}
\rk\varrho_\MA(t)\rk=\varrho_\MA(-t)
\end{align*}
\end{eg}

\subsubsection{Conjugate modules}\label{lb143}

\begin{df}\label{lb46}
For each $\MH_i\in\RepGA$, define the $G$-crossed \textbf{conjugate $\MA$-module} $\pmb{\MH_{\ovl i}}$ as follows. As a Hilbert space, $\MH_{\ovl i}$ is the complex conjugate of $\MH_i$. We denote the conjugation map by $\pmb{\Co_i}:\MH_i\rightarrow\MH_{\ovl i}$, and abbreviate it to $\pmb\Co$ when no confusion arises. Thus
\begin{align*}
\Co_i\xi\equiv\Co\xi\equiv\ovl \xi\qquad\text{for each }\xi\in\MH_i
\end{align*}
For each $\wtd I\in\Jtd$ and $x\in\MA(I)$, we define
\begin{align*}
\pi_{\ovl i,\wtd I}(x)=\Co_i\pi_{i,\rk\wtd I}(\fk Jx\fk J)\Co_i^{-1}
\end{align*}
noting that $\fk Jx\fk J\in\MA(\rk I)$ by Rem. \ref{lb37}. In short,
\begin{align}\label{eq19}
\pi_{\wtd I}(x)=\Co\pi_{\rk\wtd I}(\fk Jx\fk J)\Co^{-1}
\end{align}
We identify $\MH_{\ovl{\ovl i}}$ with $\MH_i$ canonically, that is, by identifying $\xi\in\MH_i$ with $\Co_{\ovl i}\Co_i\xi$ if $\xi\in\MH_i$. Then $\Co_i^{-1}=\Co_{\ovl i}$, in short
\begin{align*}
\Co^{-1}=\Co
\end{align*}
\end{df}

\begin{rem}
The fact that $\MH_i\in\RepGA$ whenever $\MH_i\in\RepGA$ follows from the fact that if $\MH_i\in\Rep^\phi(\MA)$ then $\MH_{\ovl i}\in\Rep^{\phi^{-1}}(\MA)$.
\end{rem}

\begin{proof}
Choose any $\wtd I\in\Jtd$ and $x\in\MA(I)$. Then, by Rem. \ref{lb38},
\begin{align*}
\pi_{\ovl i,\varrho(-2\pi)\wtd I}(\phi x\phi^{-1})=\Co\pi_{i,\rk\varrho(-2\pi)\wtd I}(\fk J\phi x\phi^{-1}\fk J)\Co^{-1}=\Co\pi_{i,\varrho(2\pi)\rk\wtd I}(\phi\fk J x\fk J\phi^{-1})\Co^{-1}
\end{align*}
which equals $\Co\pi_{i,\rk\wtd I}(\fk Jx\fk J)\Co^{-1}=\pi_{\ovl i,\wtd I}(x)$ because $\MH_i$ is $\phi$-twisted.
\end{proof}

\begin{df}\label{lb82}
Let $\MH_i,\MH_j\in\RepGA$. A bounded antilinear map $T:\MH_i\rightarrow\MH_j$ is called an \textbf{anti-morphism} of $G$-crossed $\MA$-modules if one of the following equivalent conditions hold:
\begin{enumerate}[label=(\arabic*)]
\item For each $\wtd I\in\Jtd$ and $x\in\MA(I)$, we have
\begin{align*}
T\pi_{i,\wtd I}(x)=\pi_{j,\rk\wtd I}(\fk Jx\fk J)T
\end{align*}
\item The map $T\circ\Co_{\ovl i}^{-1}:\MH_{\ovl i}\rightarrow\MH_j$ is a morphism of $G$-crossed $\MA$-modules.
\item The map $\Co_j\circ T:\MH_i\rightarrow\MH_{\ovl j}$ is a morphism of $G$-crossed $\MA$-modules.
\end{enumerate}
The equivalence is clear by \eqref{eq19}.
\end{df}

\begin{pp}\label{lb42}
Let $\MH_i,\MH_j\in\RepGA$, and let $T:\MH_i\rightarrow\MH_j$ be a morphism (resp. anti-morphism) of $G$-crossed $\MA$-modules. Then for each $g\in\GA$ we have
\begin{align}
TU_i(g)=U_j(g)T\qquad\text{resp.}\qquad TU_i(g)=U_j(\rk g\rk)T
\end{align}  
\end{pp}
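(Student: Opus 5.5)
The morphism case reduces to the uniqueness in Thm. \ref{lb12}; the anti-morphism case reduces to the morphism case by passing through the conjugate module.

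\textbf{Morphism case.} Let $T\in\Hom_\MA(\MH_i,\MH_j)$. Take $\wtd I\in\Jtd$ and $g\in\GA(I)$. By Thm. \ref{lb12},
\begin{align*}
TU_i(g)=T\pi_{i,I}(U_0(g))=\pi_{j,I}(U_0(g))T=U_j(g)T.
\end{align*}
Cover $\Sbb^1$ by intervals. Lem. \ref{lb14} says that $\GA$ is generated algebraically by the $\GA(I)$, so $TU_i(g)=U_j(g)T$ holds for all $g\in\GA$. Here $T$ need not be unitary, but this is harmless: the identity is multiplicative in $g$, since $TU_i(gh)=U_j(g)TU_i(h)=U_j(gh)T$.

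\textbf{Anti-morphism case.} By Def. \ref{lb82}(3), $\Co_j\circ T:\MH_i\to\MH_{\ovl j}$ is a morphism. The morphism case then gives
\begin{align*}
\Co_jTU_i(g)=U_{\ovl j}(g)\Co_jT.
\end{align*}
It therefore suffices to prove the key identity
\begin{align*}
U_{\ovl j}(g)=\Co_jU_j(\rk g\rk)\Co_j^{-1}\qquad\text{for all }g\in\GA.
\end{align*}
Granting this, $\Co_jTU_i(g)=\Co_jU_j(\rk g\rk)T$, and removing the bijection $\Co_j$ yields $TU_i(g)=U_j(\rk g\rk)T$.

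\textbf{Proof of the key identity.} Define $V(g):=\Co_jU_j(\rk g\rk)\Co_j^{-1}$.

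First, $V$ is a strongly continuous unitary representation of $\GA$ on $\MH_{\ovl j}$. Indeed, $g\mapsto\rk g\rk$ is a continuous group automorphism: it is continuous on the $\SG$ component, and $V\mapsto\fk JV\fk J$ is a continuous automorphism of $\MU(\MH_0)$. Conjugating by the antiunitary $\Co_j$ preserves unitarity and continuity.

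Second, by the uniqueness in Thm. \ref{lb12}, it remains to check $V(g)=\pi_{\ovl j,I}(U_0(g))$ for $g\in\GA(I)$. For such $g$, the element $\rk g\rk$ lies in $\GA(\rk I)$, since it fixes pointwise $\rk(I')=(\rk I)'$ (via Rem. \ref{lb129}). By \eqref{eq20}, $U_0(\rk g\rk)=\fk JU_0(g)\fk J$. Using Thm. \ref{lb12} for $\MH_j$ and then \eqref{eq19},
\begin{align*}
V(g)=\Co_j\pi_{j,\rk\wtd I}(\fk JU_0(g)\fk J)\Co_j^{-1}=\pi_{\ovl j,\wtd I}(U_0(g))=\pi_{\ovl j,I}(U_0(g)).
\end{align*}

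\textbf{Main obstacle.} The step requiring care is verifying that $\rk g\rk\in\GA(\rk I)$, and that the arg-valued bookkeeping behaves correctly. The relevant point is that the lift $x\mapsto -x$ maps a component of the preimage of $I'$ to a component of the preimage of $\rk I'$. Everything else is formal.
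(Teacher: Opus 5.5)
Your proof is correct and uses the same core computation as the paper: for $g\in\GA(I)$, write $U(g)=\pi_I(U_0(g))$, note that $\rk g\rk\in\GA(\rk I)$ and that $U_0(\rk g\rk)=\fk JU_0(g)\fk J$ by \eqref{eq20}, and then extend to all of $\GA$ via Lem.~\ref{lb14}. The only difference is packaging: the paper applies the anti-morphism relation to $T$ directly (implicitly using that $g\mapsto\rk g\rk$ is a homomorphism to pass from generators to all of $\GA$), while you route through the morphism $\Co_j\circ T$ and the uniqueness in Thm.~\ref{lb12}, which requires the same homomorphism check but gives the by-product $U_{\ovl j}(g)=\Co_jU_j(\rk g\rk)\Co_j^{-1}$.
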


\begin{proof}
We treat the case where $T$ is an anti-morphism; the other case is similar and easier. By Lem. \ref{lb14}, it suffices to assume that $g\in\GA(I)$ for some $I\in\MJ$. Then
\begin{align*}
TU(g)=T\pi_I(U_0(g))=\pi_{\rk I}(\fk JU_0(g)\fk J)T=\pi_{\rk I}(U_0(\rk g\rk))T=U(\rk g\rk)T
\end{align*}
where Thm. \ref{lb12} and \eqref{eq20} are used.
\end{proof}

\begin{co}\label{lb84}
If $T:\MH_i\rightarrow\MH_j$ is an anti-morphism of $G$-crossed $\MA$-modules, then
\begin{align*}
TU_i(\varrho_\MA(t))=U_j(\varrho_\MA(-t))T
\end{align*}
\end{co}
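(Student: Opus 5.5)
This is an immediate specialization of Prop. \ref{lb42} to the rotation subgroup, combined with Exp. \ref{lb41}. Since $T$ is an anti-morphism of $G$-crossed $\MA$-modules, Prop. \ref{lb42} applies to every $g\in\GA$. We take $g=\varrho_\MA(t)\in\GA$, the element of the rotation subgroup introduced in Rem. \ref{lb40}. This yields
\begin{align*}
TU_i(\varrho_\MA(t))=U_j(\rk\varrho_\MA(t)\rk)T .
\end{align*}

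It remains to identify $\rk\varrho_\MA(t)\rk$, and this is exactly the content of Exp. \ref{lb41}. Let us check it directly from Def. \ref{lb51}. On the $\SG$-component, $\rk\varrho(t)\rk$ sends $x\mapsto -\big((-x)+t\big)=x-t$, so it equals $\varrho(-t)$. On the unitary component, we have $\fk J U_0(\varrho(t))\fk J=U_0(\varrho(-t))$ by the geometric modular theory recalled in Rem. \ref{lb37}. Hence
\begin{align*}
\rk\varrho_\MA(t)\rk=(\varrho(-t),U_0(\varrho(-t)))=\varrho_\MA(-t).
\end{align*}
Substituting this into the previous display gives the claim $TU_i(\varrho_\MA(t))=U_j(\varrho_\MA(-t))T$.

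There is no real obstacle here. The only point requiring care is that $U_0\circ\varrho$ is a genuine unitary (not merely projective) representation. This is what makes $\varrho_\MA$ a well-defined one-parameter subgroup of $\GA$, and it is also what makes the identity of Rem. \ref{lb37} an equality of operators rather than one holding up to phase. Both points are already built into the definitions, so the corollary follows immediately.
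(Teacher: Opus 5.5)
Your proposal is correct and follows the same route as the paper, which also derives the corollary directly from Prop.~\ref{lb42} together with Exp.~\ref{lb41}. Your explicit check that $\rk\varrho_\MA(t)\rk=\varrho_\MA(-t)$ via Def.~\ref{lb51} and Rem.~\ref{lb37} just unpacks that example, and it is accurate.
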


\begin{proof}
This follows immediately from Prop. \ref{lb42} and Exp. \ref{lb41}.
\end{proof}

Parallel to Rem. \ref{lb43}, we have:
\begin{pp}\label{lb44}
Let $T:\MH_i\rightarrow\MH_j$ be an anti-morphism of $G$-crossed $\MA$-modules. Then for each $I\in\MJ$ we have
\begin{align*}
T\MH_i(I)\subset\MH_j(\rk I)
\end{align*}
Moreover, if $T$ is anti-unitary, then $T\MH_i(I)=\MH_j(\rk I)$.
\end{pp}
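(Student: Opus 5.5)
The plan is to reduce to the untwisted, linear statement of Rem.~\ref{lb43} by factoring $T$ through the conjugate module, and then to treat the conjugate vacuum module separately. By Def.~\ref{lb82}, $T=\Co_{j}^{-1}\circ S$, where $S:=\Co_j\circ T:\MH_i\rightarrow\MH_{\ovl j}$ is a (linear) morphism of $G$-crossed $\MA$-modules.

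First, take $\xi\in\MH_i(I)$ and write $\xi=A\Omega$ with $A\in\Hom_{\MA(\wtd I')}(\MH_0,\MH_i)$. We want to show that $T\xi\in\MH_j(\rk I)$. For this we need an operator $B\in\Hom_{\MA(\wtd K)}(\MH_0,\MH_j)$, where $\wtd K$ is a lift of $(\rk I)'=\rk(I')$, such that $B\Omega=T\xi$. The natural candidate is
\begin{align*}
B:=TA\fk J .
\end{align*}
This is linear, since it is a composition of two antilinear maps and one linear map. Since $\fk J$ fixes $\Omega$, we have $B\Omega=TA\Omega=T\xi$.

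Next, we check the intertwining property. For $y\in\MA(\rk I')$, the element $\fk Jy\fk J$ lies in $\MA(I')$ by Rem.~\ref{lb37}. Then
\begin{align*}
B\,y=TA\,(\fk Jy\fk J)\,\fk J=T\,\pi_{i,\wtd I'}(\fk Jy\fk J)\,A\fk J=\pi_{j,\rk\wtd I'}(y)\,TA\fk J ,
\end{align*}
where the middle step uses that $A$ intertwines $\MA(\wtd I')$, and the last step uses condition (1) of Def.~\ref{lb82} together with $\fk J^2=1$. Hence $B\in\Hom_{\MA(\rk\wtd I')}(\MH_0,\MH_j)$.

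It remains to identify $\rk\wtd I'$ with a lift of the clockwise complement of $\rk I$. The map $\rk$ reverses orientation, so $\rk(\wtd I')$ is anticlockwise to $\rk\wtd I$, that is, $\rk(\wtd I')={}^\backprime(\rk\wtd I)$. Since $\MH_j(\rk I)$ is independent of the choice of arg function (by the Explanation after the definition of $\MH_j(I)$, which rests on Lem.~\ref{lb17}), it makes no difference whether we use ${}^\backprime(\rk\wtd I)$ or $(\rk\wtd I)'$. Therefore $T\xi=B\Omega\in\MH_j(\rk I)$. This orientation bookkeeping is the only delicate point, and it is already handled by that independence.

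Finally, suppose $T$ is anti-unitary. Then $T^{-1}:\MH_j\rightarrow\MH_i$ is again an anti-morphism. Indeed, applying $T^{-1}$ on both sides of condition (1) and substituting $x\mapsto\fk Jx\fk J$, $\wtd I\mapsto\rk\wtd I$ shows that $T^{-1}$ satisfies condition (1) as well. Applying the first part to $T^{-1}$ gives $T^{-1}\MH_j(\rk I)\subset\MH_i(\rk\rk I)=\MH_i(I)$. Together with the inclusion already proved, this yields $T\MH_i(I)=\MH_j(\rk I)$.
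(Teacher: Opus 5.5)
Your proof is correct and is essentially the paper's argument. Both compose $T$ with an intertwiner from $\MH_0$ and with $\fk J$, check the intertwining via condition (1) of Def.~\ref{lb82}, and get the reverse inclusion by applying the result to $T^{-1}=T^*$. The one difference is which complement you start from. You take $A\in\Hom_{\MA(\wtd I')}(\MH_0,\MH_i)$ and land in $\Hom_{\MA(\bpr(\rk\wtd I))}(\MH_0,\MH_j)$, so you need the arg-independence of $\MH_j(\rk I)$ at the end. The paper starts from $R(\xi,\wtd I)|_{\MH_0}\in\Hom_{\MA(\bpr\wtd I)}(\MH_0,\MH_i)$ and lands directly in $\Hom_{\MA((\rk\wtd I)')}(\MH_0,\MH_j)$; the same independence is used there implicitly, in the existence of $R(\xi,\wtd I)$. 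Also, the factorization $T=\Co_j^{-1}\circ S$ announced in your first paragraph is never used and can be dropped.
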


\begin{proof}
Choose any $\xi\in\MH_i(I)$. Note that $(\rk\wtd I)'=\rk(\bpr\wtd I)$. Therefore, for each $x\in\MA((\rk I)')$, we have $\fk Jx\fk J\in\MA(\bpr\wtd I)$, and hence
\begin{align*}
TR(\xi,\wtd I)\fk Jx|_{\MH_0}=T\pi_{\bpr\wtd I}(\fk Jx\fk J)R(\xi,\wtd I)|_{\MH_0}=\pi_{\rk(\bpr\wtd I)}(x)TR(\xi,\wtd I)\fk J|_{\MH_0}
\end{align*}
So $TR(\xi,\wtd I)\fk J|_{\MH_0}$ belongs to $\Hom_{\MA((\rk\wtd I)')}(\MH_0,\MH_j)$, and hence $T\xi=TR(\xi,\wtd I)\fk J\Omega$ belongs to $\MH_j(\rk I)$. We have thus finished proving $T\MH_i(I)\subset\MH_j(\rk I)$. If $T$ is anti-unitary, then similarly we have $T^*\MH_j(\rk I)\subset\MH_i(I)$, and hence $T\MH_i(I)=\MH_j(\rk I)$.
\end{proof}

\begin{rem}\label{lb45}
From the proof of Prop. \ref{lb44}, we see that if $\xi\in\MH_i(I)$ and $T:\MH_i\rightarrow\MH_j$ is an anti-morphism of objects of $\RepGA$, then
\begin{align*}
L(T\xi,\rk\wtd I)|_{\MH_0}=TR(\xi,\wtd I)\fk J|_{\MH_0}\qquad R(T\xi,\rk\wtd I)|_{\MH_0}=TL(\xi,\wtd I)\fk J|_{\MH_0}
\end{align*}
\end{rem}

\subsubsection{Conjugates and commutants}

The following theorem and its proof are adapted from \cite[Prop. A.9]{MS26a} and the proof therein.

\begin{thm}\label{lb49}
Let $\MH_i,\MH_j\in\RepGA$. Then there exists a unique antiunitary map
\begin{gather*}
\pmb{\Theta_{i,j}}:\MH_i\boxtimes\MH_j\rightarrow\MH_{\ovl j}\boxtimes\MH_{\ovl i}
\end{gather*}
abbreviated to $\pmb{\Theta}$ when no confusion arises, such that for each $\wtd I,\wtd J\in\Jtd$ and $\xi\in\MH_i(I),\eta\in\MH_j(J)$,
\begin{subequations}\label{eq21}
\begin{gather}
\Theta_{i,j}L(\xi,\wtd I)\big|_{\MH_j}=R(\Co\xi,\rk \wtd I)\Co\big|_{\MH_j}\label{eq21a}\\
\Theta_{i,j}R(\eta,\wtd J)\big|_{\MH_i}=L(\Co\eta,\rk \wtd J)\Co\big|_{\MH_i}\label{eq21b}
\end{gather}
\end{subequations}
Moreover, $\Theta_{i,j}$ is an anti-morphism of $G$-crossed $\MA$-modules.
\end{thm}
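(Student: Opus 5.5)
We will construct $\Theta_{i,j}$ by prescribing it on the dense set of vectors $L(\xi,\wtd I)\eta=R(\eta,\wtd J)\xi$ (Prop.~\ref{lb20}), where $\wtd J$ is clockwise to $\wtd I$, $\xi\in\MH_i(I)$, $\eta\in\MH_j(J)$. The candidate value is
\begin{align*}
\Theta\big(L(\xi,\wtd I)\eta\big)=R(\Co\xi,\rk\wtd I)\Co\eta .
\end{align*}
This value makes sense: by Prop.~\ref{lb44} we have $\Co\xi\in\MH_{\ovl i}(\rk I)$ and $\Co\eta\in\MH_{\ovl j}(\rk J)$. Also $\rk$ reverses orientation, so $\rk\wtd I$ is clockwise to $\rk\wtd J$. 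Applying Prop.~\ref{lb20} to this pair gives
\begin{align*}
R(\Co\xi,\rk\wtd I)\Co\eta=L(\Co\eta,\rk\wtd J)\Co\xi ,
\end{align*}
so the two prescriptions \eqref{eq21a} and \eqref{eq21b} agree on these vectors.

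The key step is to show that the assignment preserves inner products in the antiunitary sense:
\begin{align*}
\bk{R(\Co\xi_1,\rk\wtd I)\Co\eta_1|R(\Co\xi_2,\rk\wtd I)\Co\eta_2}=\ovl{\bk{L(\xi_1,\wtd I)\eta_1|L(\xi_2,\wtd I)\eta_2}}.
\end{align*}
We reduce both sides to the vacuum sector. By Prop.~\ref{lb21}(b), $L(\xi_1,\wtd I)^*L(\xi_2,\wtd I)=L(\zeta,\wtd I)$ on every module, where $\zeta=L(\xi_1,\wtd I)^*\xi_2\in\MH_0(I)=\MA(I)\Omega$. Writing $\zeta=x\Omega$ with $x\in\MA(I)$, Prop.~\ref{lb22} gives $L(\zeta,\wtd I)|_{\MH_j}=\pi_{j,\wtd I}(x)$. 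Hence the left side (before conjugation) equals $\bk{\eta_1|\pi_{j,\wtd I}(x)\eta_2}$.

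On the other side, $R(\Co\xi_1,\rk\wtd I)^*R(\Co\xi_2,\rk\wtd I)=\pi_{\ovl j,\rk\wtd I}(y)$, where $y\Omega=R(\Co\xi_1,\rk\wtd I)^*\Co\xi_2$. Remark~\ref{lb45} (with $T=\Co$) identifies $R(\Co\xi,\rk\wtd I)|_{\MH_0}=\Co L(\xi,\wtd I)\fk J|_{\MH_0}$. From this, $y\Omega=\fk J L(\xi_1,\wtd I)^*L(\xi_2,\wtd I)\fk J\Omega$ (here $\fk J\Omega=\Omega$), hence $y=\fk Jx\fk J$. By \eqref{eq19} the right side is then
\begin{align*}
\bk{\Co\eta_1|\Co\pi_{j,\wtd I}(x)\eta_2}=\ovl{\bk{\eta_1|\pi_{j,\wtd I}(x)\eta_2}},
\end{align*}
which is the desired identity.

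Since the vectors $L(\xi,\wtd I)\eta$ have dense span by the density of fusion products in Thm.~\ref{lb15}, this gives a well-defined antilinear isometry. One caveat: density there is stated with $\eta$ ranging over all of $\MH_j$. We restrict to $\eta\in\MH_j(J)$ and use that $\MH_j(J)$ is dense and $L(\xi,\wtd I)$ is bounded. Surjectivity follows by the symmetric construction $\MH_{\ovl j}\boxtimes\MH_{\ovl i}\to\MH_i\boxtimes\MH_j$ (using $\Co^2=1$ and $\rk^2=1$), which is an inverse on dense sets.

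It remains to extend \eqref{eq21a} from $\eta\in\MH_j(J)$ with $\wtd J$ clockwise to $\wtd I$ to arbitrary $\eta\in\MH_j$. Both sides are bounded, so density of $\MH_j(J)$ suffices; the same holds for \eqref{eq21b}. For a general $\wtd I$ we take any $\wtd J$ clockwise to it. Uniqueness holds because \eqref{eq21a} determines $\Theta$ on a dense set.

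Finally, for the anti-morphism property, take $x\in\MA(K)$ and choose $\wtd I,\wtd J$ both disjoint from $K$. Then $\pi_{\wtd K}(x)$ commutes with $L(\xi,\wtd I)$ or $R(\eta,\wtd J)$ (by \eqref{eq68} and Lem.~\ref{lb19}), which lets us pass it through to $\MH_j$. Then \eqref{eq19} converts it to $\pi_{\rk\wtd K}(\fk Jx\fk J)$ on the other side, and additivity handles general $K$.

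The main obstacle is the inner-product computation, specifically tracking the $\arg$ choices: Remark~\ref{lb45} must be applied with the correct lifts $\rk\wtd I$, and in the twisted case the intertwining relations involve $\phi$-shifts. If needed, Lem.~\ref{lb17} will be used to move between $\wtd I'$ and $\bpr\wtd I$.
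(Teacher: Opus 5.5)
Your route is the paper's: the same candidate formula, the same inner-product computation (reduction to the vacuum sector via Prop.~\ref{lb21}, Prop.~\ref{lb22} and Rem.~\ref{lb45}), Prop.~\ref{lb20} for \eqref{eq21b}, and commuting local operators through for the anti-morphism property. However, one step is missing.

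\textbf{The gap: independence of $\wtd I$.} Your inner-product identity only compares vectors built from one and the same $\wtd I$. What it actually produces is, for each fixed $\wtd I$, an antiunitary $\Theta^{\wtd I}$ with $\Theta^{\wtd I}L(\xi,\wtd I)\eta=R(\Co\xi,\rk\wtd I)\Co\eta$. The theorem asserts a single $\Theta$ satisfying \eqref{eq21a} for every $\wtd I\in\Jtd$, and you never show that $\Theta^{\wtd I_1}=\Theta^{\wtd I_2}$.

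This is not automatic. For example, $\wtd I$ and $\varrho(2\pi)\wtd I$ have the same underlying interval. But for twisted modules $L(\xi,\varrho(2\pi)\wtd I)|_{\MH_0}=L(\xi,\wtd I)\phi^{-1}|_{\MH_0}$ (see the proof of Prop.~\ref{lb33}), and the right-hand side involves $R(\Co\xi,\varrho(-2\pi)\rk\wtd I)$ instead of $R(\Co\xi,\rk\wtd I)$. So the two prescriptions concern genuinely different operators.

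Your sentence ``for a general $\wtd I$ we take any $\wtd J$ clockwise to it'' silently assumes this independence. So does your anti-morphism step, where $\wtd I$ is chosen depending on $K$.

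\textbf{How to close it.} The fix is short, and there are two options.
\begin{enumerate}
\item Argue as the paper does. If $\wtd I_0\subset\wtd I_1$ and $\wtd I_0\subset\wtd I_2$, apply isotony (Cor.~\ref{lb31}) to both $L(\xi,\cdot)$ and $R(\Co\xi,\rk\,\cdot)$. This shows that $\Theta^{\wtd I_1}$ and $\Theta^{\wtd I_2}$ agree on the dense set $L(\MH_i(I_0),\wtd I_0)\MH_j$. Any two elements of $\Jtd$, including $\wtd I$ and $\varrho(2\pi)\wtd I$, are joined by a chain of successively overlapping arg-valued intervals, as in Step 2 of Thm.~\ref{lb28}.
\item Use a zigzag with your own tools. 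Once \eqref{eq21a} holds for $\wtd I$, Prop.~\ref{lb20} and density give \eqref{eq21b} for every $\wtd J$ clockwise to $\wtd I$. That in turn gives \eqref{eq21a} for every $\wtd I_2$ to which such a $\wtd J$ is clockwise. Iterating reaches all of $\Jtd$.
\end{enumerate}

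\textbf{The anti-morphism step.} With independence in hand, this works exactly as in the paper. Take $\wtd I=\bpr\wtd K$, so that $\wtd K=\wtd I'$ and $\rk\wtd K=\bpr(\rk\wtd I)$. Then \eqref{eq68} directly gives both commutations, for $L(\xi,\wtd I)$ and for $R(\Co\xi,\rk\wtd I)$. Neither Lem.~\ref{lb19} nor additivity is needed.
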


Note that by Prop. \ref{lb44}, we have $\Co\xi\in\MH_{\ovl i}(\rk I)$ and $\Co\eta\in\MH_{\ovl j}(\rk J)$. Thus the RHS of \eqref{eq21} can be defined.

\begin{proof}
For each $\xi_1,\xi_2\in\MH_i(I)$ and $\eta_1,\eta_2\in\MH_j$, by \eqref{eq22} and Rem. \ref{lb45},
\begin{align*}
&\bk{R(\Co\xi_2,\rk\wtd I)\Co\eta_2|R(\Co\xi_1,\rk\wtd I)\Co\eta_1}=\bk{\Co\eta_2|\pi_{\rk\wtd I}(R(\Co\xi_2,\rk\wtd I)^*R(\Co\xi_1,\rk\wtd I)|_{\MH_0})\Co\eta_1}\\
=&\bk{\Co\pi_{\rk\wtd I}(\fk JL(\xi_2,\wtd I)^*L(\xi_1,\wtd I)\fk J|_{\MH_0})\Co\eta_1|\eta_2}=\bk{\pi_{\wtd I}(L(\xi_2,\wtd I)^*L(\xi_1,\wtd I)|_{\MH_0})\eta_1|\eta_2}\\
=&\bk{L(\xi_1,\wtd I)\eta_1|L(\xi_2,\wtd I)\eta_2}
\end{align*}
where the second last equality is due to Def. \ref{lb46}. Therefore, by the density of fusion products in Thm. \ref{lb15}, there exists a unique anti-unitary map $\Theta_{i,j}^{\wtd I}:\MH_i\boxtimes\MH_j\rightarrow\MH_{\ovl j}\boxtimes\MH_{\ovl i}$ sending each $L(\xi,\wtd I)\eta$ to $R(\Co\xi,\rk\wtd I)\Co\eta$. Similar to Step 2 of the proof of Thm. \ref{lb28}, this map is independent of $\wtd I$, and hence can be denoted by $\Theta_{i,j}$.

To show that $\Theta_{i,j}$ is an anti-morphism, we choose any $\wtd J\in\Jtd,y\in\MA(J)$, let $\wtd I=\bpr\wtd J$ and $\xi\in\MH_i(I)$, and compute that
\begin{align*}
\Theta_{i,j}\pi_{\wtd J}(y)L(\xi,\wtd I)\eta=\Theta_{i,j}L(\xi,\wtd I)\pi_{\wtd J}(y)\eta=R(\Co\xi,\rk\wtd I)\Co\pi_{\wtd J}(y)\eta=R(\Co\xi,\rk\wtd I)\pi_{\rk\wtd J}(\fk Jy\fk J)\Co\eta
\end{align*}
Since $\rk\wtd J$ is anticlockwise to $\rk\wtd I$, we conclude
\begin{align*}
\Theta_{i,j}\pi_{\wtd J}(y)L(\xi,\wtd I)\eta=\pi_{\rk\wtd J}(\fk Jy\fk J)R(\Co\xi,\rk\wtd I)\Co\eta=\pi_{\rk\wtd J}(\fk Jy\fk J)\Theta_{i,j}L(\xi,\wtd I)\eta
\end{align*}

Finally, for each $\xi\in\MH_i(I)$ where $\wtd I$ is anticlockwise to $\wtd J$, we have
\begin{align*}
\Theta_{i,j}R(\eta,\wtd J)\xi=\Theta_{i,j}L(\xi,\wtd I)\eta=R(\Co\xi,\rk\wtd I)\Co\eta=L(\Co\eta,\rk\wtd J)\Co\xi
\end{align*}
due to \eqref{eq21a} and Prop. \ref{lb20}. This proves \eqref{eq21b}.
\end{proof}

\begin{rem}\label{lb81}
By \eqref{eq21}, we clearly have
\begin{align*}
\Theta_{\ovl j,\ovl i}\Theta_{i,j}=\id_{\MH_i\boxtimes\MH_j}
\end{align*}
In particular, $\Theta_{i,\ovl i}$ is an involutive antiunitary operator on $\MH_i\boxtimes\MH_{\ovl i}$.
\end{rem}


\begin{thm}\label{lb107}
Let $\MH_i\in\Rep^\phi(\MA)$ and $\wtd I,\wtd J\in\Jtd$ with $\wtd J=\rk\wtd I$. Then
\begin{align*}
\Co_i \End_{\MA(\wtd I')}(\MH_i)\Co_i^{-1}=\End_{\MA(\bpr\wtd J)}(\MH_{\ovl i})
\end{align*}
Moreover, for each $\omega\in G,\MH_j\in\Rep^\omega(\MA)$ and $A\in\End_{\MA(\wtd I')}(\MH_i)$, 
\begin{align*}
\Theta_{i,j}\cdot\pi^L_{\wtd I}(A)\big|_{\MH_i\boxtimes\MH_j}\cdot\Theta_{i,j}^{-1}=\pi^R_{\wtd J}(\Co_i A\Co_i^{-1})\big|_{\MH_{\ovl j}\boxtimes\MH_{\ovl i}}
\end{align*}
\end{thm}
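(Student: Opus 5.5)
Both claims will be reduced to the explicit descriptions of the relevant algebras by $L$ and $R$ operators, and then the conjugation rules of Remark \ref{lb45} and Theorem \ref{lb49} will be applied to these descriptions.

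For the first identity, note that $\Co_i:\MH_i\rightarrow\MH_{\ovl i}$ is an antiunitary anti-morphism, so by Definition \ref{lb82}-(1) and \eqref{eq19} we get $\Co\pi_{i,\wtd K}(x)\Co^{-1}=\pi_{\ovl i,\rk\wtd K}(\fk Jx\fk J)$ for all $\wtd K$ and $x\in\MA(K)$. The complements are related by $\rk(\wtd I')={}\bpr(\rk\wtd I)={}\bpr\wtd J$, since reflection reverses orientation. Now take $T\in\End_{\MA(\wtd I')}(\MH_i)$ and $y\in\MA(J')=\fk J\MA(I')\fk J$, say $y=\fk Jx\fk J$ with $x\in\MA(I')$. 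Then
\begin{align*}
\Co T\Co^{-1}\pi_{\ovl i,\bpr\wtd J}(y)=\Co T\pi_{i,\wtd I'}(x)\Co^{-1}=\Co\pi_{i,\wtd I'}(x)T\Co^{-1}=\pi_{\ovl i,\bpr\wtd J}(y)\Co T\Co^{-1},
\end{align*}
which gives the inclusion $\subset$. Running the same argument with $\Co^{-1}=\Co_{\ovl i}$ and using $\rk\rk=\id$ gives the reverse inclusion.

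For the second identity, use Proposition \ref{lb34} to write $A=L(\xi_2,\wtd I)L(\xi_1,\wtd I)^*|_{\MH_i}$ with $\xi_1,\xi_2\in\MH_i(I)$. By \eqref{eq17}, $\pi^L_{\wtd I}(A)=L(\xi_2,\wtd I)L(\xi_1,\wtd I)^*|_{\MH_i\boxtimes\MH_j}$. Rewriting \eqref{eq21a} as $\Theta L(\xi,\wtd I)=R(\Co\xi,\wtd J)\Co$ on $\MH_j$, we get
\begin{align*}
\Theta\,\pi^L_{\wtd I}(A)\,\Theta^{-1}=R(\Co\xi_2,\wtd J)\Co\Co^{-1}R(\Co\xi_1,\wtd J)^*=R(\Co\xi_2,\wtd J)R(\Co\xi_1,\wtd J)^*\big|_{\MH_{\ovl j}\boxtimes\MH_{\ovl i}}.
\end{align*}
Here the adjoint is handled as follows: for antiunitary $\Theta$ and $\Co$, $(\Theta L\Co^{-1})^*=\Co L^*\Theta^{-1}$, so $\Theta L(\xi_1)^*=\Co^{-1}R(\Co\xi_1)^*$. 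This adjoint bookkeeping is the main thing to check carefully. By Remark \ref{lb36} (the $R$-version of \eqref{eq17}, with the right factor $\MH_{\ovl i}$ now playing the role of $\MH_j$), the last expression equals $\pi^R_{\wtd J}\big(R(\Co\xi_2,\wtd J)R(\Co\xi_1,\wtd J)^*|_{\MH_{\ovl i}}\big)$.

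It remains to identify $R(\Co\xi_2,\wtd J)R(\Co\xi_1,\wtd J)^*|_{\MH_{\ovl i}}$ with $\Co A\Co^{-1}$. I will do this on $\MH_0\subset$ via vectors, using Remark \ref{lb45} for the anti-morphism $\Co$: it gives $R(\Co\xi,\rk\wtd I)|_{\MH_0}=\Co L(\xi,\wtd I)\fk J|_{\MH_0}$. Hence
\begin{align*}
R(\Co\xi_2,\wtd J)R(\Co\xi_1,\wtd J)^*|_{\MH_{\ovl i}}=\Co L(\xi_2,\wtd I)\fk J\fk JL(\xi_1,\wtd I)^*\Co^{-1}=\Co A\Co^{-1},
\end{align*}
where I use that the operators $L(\xi,\wtd I)|_{\MH_0}:\MH_0\rightarrow\MH_i$ and $R(\Co\xi,\wtd J)|_{\MH_0}:\MH_0\rightarrow\MH_{\ovl i}$ are precisely the restrictions appearing in Proposition \ref{lb34} (fusion with $\MH_0$). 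This gives the claimed formula. The main obstacle I expect is justifying that Remark \ref{lb36} and Proposition \ref{lb34} apply with these restrictions to $\MH_0$, i.e. that these are the right operators to represent the elements of the relevant endomorphism algebras.
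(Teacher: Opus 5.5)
Your proof is correct, and for the second identity it is the paper's argument. Both decompose $A=L(\xi_2,\wtd I)L(\xi_1,\wtd I)^*|_{\MH_i}$ by Prop.~\ref{lb34}, conjugate by $\Theta_{i,j}$ via \eqref{eq21a}, and then identify $R(\Co\xi_2,\wtd J)R(\Co\xi_1,\wtd J)^*$ with $\Co A\Co^{-1}$ through Rem.~\ref{lb45} and with $\pi^R_{\wtd J}(\cdot)$ through Rem.~\ref{lb36}.

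Two small corrections to that part:
\begin{itemize}
\item The adjoint relation should read $L(\xi_1,\wtd I)^*\Theta_{i,j}^{-1}=\Co^{-1}R(\Co\xi_1,\wtd J)^*$, not $\Theta L(\xi_1)^*=\cdots$. Your displayed computation already uses it in this correct form.
\item You tacitly need $\Co\xi_k\in\MH_{\ovl i}(J)$, which is Prop.~\ref{lb44}.
\end{itemize}
The obstacle you anticipate is not one. In Prop.~\ref{lb34} and Rem.~\ref{lb36}, the expression $L(\xi_2,\wtd I)L(\xi_1,\wtd I)^*|_{\MH_i}$ already denotes the composite of the restrictions to $\MH_0$, so those results apply verbatim.

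Where you genuinely differ is the first identity. The paper derives it from the same explicit formula. By Prop.~\ref{lb34}, $\End_{\MA(\wtd I')}(\MH_i)$ consists exactly of the operators $L(\xi_2,\wtd I)L(\xi_1,\wtd I)^*$, and $\End_{\MA(\bpr\wtd J)}(\MH_{\ovl i})$ exactly of the operators $R(\eta_2,\wtd J)R(\eta_1,\wtd J)^*$. Rem.~\ref{lb45}, together with $\Co\MH_i(I)=\MH_{\ovl i}(J)$, then matches the two families.

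You instead argue directly from the defining relation \eqref{eq19} of the conjugate module, using $\rk(\wtd I')=\bpr\wtd J$ and $\fk J\MA(I')\fk J=\MA(J')$. Your route is more elementary: it uses nothing but the definition of $\MH_{\ovl i}$, with no $I$-unitary vectors, Haag duality or type III property. The paper's route is more economical in context, since the formula $\Co A\Co^{-1}=R(\Co\xi_2,\wtd J)R(\Co\xi_1,\wtd J)^*|_{\MH_{\ovl i}}$ is needed for the second identity anyway. For your reverse inclusion you need not rerun anything. Your commutation computation is an equivalence, because $x\mapsto\fk Jx\fk J$ maps $\MA(I')$ onto $\MA(J')$.
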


\begin{proof}
By Prop. \ref{lb34}, elements of $\End_{\MA(\wtd I')}(\MH_i)$ are precisely of the form
\begin{align*}
A=L(\xi_2,\wtd I)L(\xi_1,\wtd I)^*|_{\MH_i}
\end{align*}
where $\xi_1,\xi_2\in\MH_i(I)$. By Prop. \ref{lb44}, we have $\Co\xi_1,\Co\xi_2\in\MH_{\ovl i}(J)$. By Rem. \ref{lb45}, we have
\begin{align*}
\Co_iA\Co_i^{-1}=R(\Co\xi_2,\wtd J)R(\Co\xi_1,\wtd J)^*|_{\MH_{\ovl i}}=:B
\end{align*}
By Prop. \ref{lb34}, elements of $\End_{\MA(\bpr\wtd J)}(\MH_{\ovl i})$ are precisely of the form $B$. This proves the first identity. By Rem. \ref{lb36} and Thm. \ref{lb49},
\begin{align*}
&\Theta_{i,j}\cdot\pi^L_{\wtd I}(A)|_{\MH_i\boxtimes\MH_j}\cdot\Theta_{i,j}^{-1}=\Theta_{i,j}\cdot L(\xi_2,\wtd I)L(\xi_1,\wtd I)^*|_{\MH_i\boxtimes\MH_j}\cdot\Theta_{i,j}^*\\
=&R(\Co\xi_2,\wtd J)R(\Co\xi_1,\wtd J)^*|_{\MH_{\ovl j}\boxtimes\MH_{\ovl i}}=\pi^R_{\wtd J}(B)|_{\MH_{\ovl j}\boxtimes\MH_{\ovl i}}
\end{align*}
This proves the second identity.
\end{proof}

\begin{co}\label{lb87}
Let $\MH_i\in\Rep^\phi(\MA),\MH_j\in\Rep^\omega(\MA)$ and $\wtd I,\wtd J\in\Jtd$ with $\wtd J=\rk\wtd I$. Let
\begin{align*}
\MM=\pi^L_{\wtd I}(\End_{\MA(\wtd I')}(\MH_i))\big|_{\MH_i\boxtimes\MH_j}\qquad \MN=\pi^R_{\wtd J}(\End_{\MA(\bpr\wtd J)}(\MH_{\ovl i}))\big|_{\MH_{\ovl j}\boxtimes\MH_{\ovl i}}
\end{align*}
Then $\Theta_{i,j}\MM\Theta_{i,j}^{-1}=\MN$. Moreover, if $j=\ovl i$, and if $\wtd I=\Std_+$ (and hence $\wtd J=\Std_-$), then
\begin{align*}
\Theta_{i,\ovl i}\MM\Theta_{i,\ovl i}=\MN=\MM'
\end{align*}
\end{co}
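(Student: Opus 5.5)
The first assertion, $\Theta_{i,j}\MM\Theta_{i,j}^{-1}=\MN$, follows directly from Thm. \ref{lb107}. For each $A\in\End_{\MA(\wtd I')}(\MH_i)$ it gives $\Theta_{i,j}\pi^L_{\wtd I}(A)\Theta_{i,j}^{-1}=\pi^R_{\wtd J}(\Co_iA\Co_i^{-1})$. The same theorem also gives
\begin{align*}
\Co_i\End_{\MA(\wtd I')}(\MH_i)\Co_i^{-1}=\End_{\MA(\bpr\wtd J)}(\MH_{\ovl i}).
\end{align*}
As $A$ ranges over $\End_{\MA(\wtd I')}(\MH_i)$, the operator $\Co_iA\Co_i^{-1}$ therefore ranges over all of $\End_{\MA(\bpr\wtd J)}(\MH_{\ovl i})$. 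Taking images of the two representations yields $\Theta_{i,j}\MM\Theta_{i,j}^{-1}=\MN$.

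Now specialize to $j=\ovl i$. By Def. \ref{lb46} we identify $\MH_{\ovl{\ovl i}}=\MH_i$, so $\MH_{\ovl j}\boxtimes\MH_{\ovl i}=\MH_i\boxtimes\MH_{\ovl i}$. Both $\MM$ and $\MN$ then act on the same space, and $\Theta_{i,\ovl i}$ is an involutive antiunitary by Rem. \ref{lb81}, so $\Theta_{i,\ovl i}^{-1}=\Theta_{i,\ovl i}$. This gives the equality $\Theta_{i,\ovl i}\MM\Theta_{i,\ovl i}=\MN$.

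It remains to prove $\MN=\MM'$, which I would obtain from Thm. \ref{lb47} applied to $\MH_i$ and $\MH_{\ovl i}$. That theorem states that $\pi^L_{\wtd I}(\End_{\MA(\wtd I')}(\MH_i))$ and $\pi^R_{\wtd I'}(\End_{\MA(\wtd I)}(\MH_{\ovl i}))$ are mutual commutants on $\MH_i\boxtimes\MH_{\ovl i}$. The key step is therefore to match $\pi^R_{\wtd J}(\End_{\MA(\bpr\wtd J)}(\MH_{\ovl i}))$ with $\pi^R_{\wtd I'}(\End_{\MA(\wtd I)}(\MH_{\ovl i}))$ when $\wtd I=\Std_+$ and $\wtd J=\Std_-$.

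This matching is a check of arg conventions. The interval $\Std_-$ has arg range $(-\pi,0)$, which is exactly the clockwise complement $\wtd I'$ of $\Std_+$, whose arg range is $(0,\pi)$. The anticlockwise complement $\bpr\Std_-$ has arg range $(0,\pi)$, which is $\Std_+$ itself. Hence $\wtd J=\wtd I'$ and $\bpr\wtd J=\wtd I$, and the two algebras coincide, giving $\MN=\MM'$.

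The only potential obstacle is confirming that these clockwise and anticlockwise complements come with the correct arg functions. Up to closure of the intervals, this follows immediately from Def. \ref{lb8} and the definition of clockwise order.
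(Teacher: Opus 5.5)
Your proposal is correct and follows essentially the same route as the paper: $\Theta_{i,j}\MM\Theta_{i,j}^{-1}=\MN$ comes from Thm.~\ref{lb107}, and $\MN=\MM'$ comes from Thm.~\ref{lb47} applied with $\MH_j=\MH_{\ovl i}$. Your explicit checks fill in details the paper leaves implicit: that $\Std_-=(\Std_+)'$ and $\bpr\Std_-=\Std_+$ with the correct arg functions, and that $\Theta_{i,\ovl i}^{-1}=\Theta_{i,\ovl i}$ by Rem.~\ref{lb81}.
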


\begin{proof}
Thm. \ref{lb107} implies $\Theta_{i,j}\MM\Theta_{i,j}^{-1}=\MN$. When $j=\ovl i$ and $\wtd I=\Std_+$, then $\MN=\MM'$ follows from Thm. \ref{lb47}.
\end{proof}

\section{The closed CFT}

Fix a conformal net $\MA$ with central charge $c$. The convention introduced in this chapter and assumed throughout the remainder of the paper is Convention \ref{lb122}. We assume the identification $\GA=\Gc$ described in Rem. \ref{lb39}.

\subsection{The closed-string state space $\MHcl=\MH_{\sqz}\boxtimes\sbsc{\Maa}\MH_{\ovl\sqz}$}

In this section, we define the state space for our closed-string CFT as the fusion product of a twisted $\Maa$-module $\MH_\sqz$ with its conjugate. This twisted $\Maa$-module was introduced in \cite{LX04} in the language of sectors, where its basic properties were also studied. Here, we adapt that construction and some of the basic properties to the setting of the present paper.

Recall that the tensor product conformal net
\begin{align*}
\Maa=\MA\otimes\MA
\end{align*}
acting on the Hilbert space $\MH_0\otimes_\Cbb\MH_0=\MH_0\otimes\MH_0$ and vacuum vector $\Omega\otimes\Omega$. By definition, for each $I\in\MJ$, we have
\begin{align*}
\Maa(I)=\MA(I)\otimes\MA(I):=\{x^+\otimes x^-\in\fk L(\MH_0\otimes\MH_0):x^\pm\in\MA(I)\}''
\end{align*}
The strongly-continuous projectively-unitary representation of $\DiffS$ on $\MH_0\otimes\MH_0$ is defined by 
\begin{align*}
g\in\DiffS\mapsto U_0(g)\otimes U_0(g)
\end{align*}

\subsubsection{The separate conformal structures on untwisted $\Maa$-modules}

\begin{df}\label{lb83}
Fix $\pm\in\{+,-\}$. Then each $\MH_\Jbb\in\Rep(\Maa)$ gives rise to an $\MA$-module
\begin{align*}
(\MH_\Jbb,\pmb{\pi^\pm_\Jbb})
\end{align*}
abbreviated to $\pmb{\pi^\pm}$ when no confusion arises, such that for each $I\in\MJ$ and $x\in\MA(I)$,
\begin{align*}
\pi_{\Jbb,I}^+(x)=\pi_{\Jbb,I}(x\otimes 1)\qquad \pi_{\Jbb,I}^-(x)=\pi_{\Jbb,I}(1\otimes x)
\end{align*}
\end{df}

\begin{df}\label{lb53}
By Thm. \ref{lb12}, $(\MH_\Jbb,\pi^\pm_\Jbb)$ admits a canonical strongly-continuous unitary representation of $\GA$, which we denote by $\pmb{U_\Jbb^\pm}$, or simply by $\pmb{U^\pm}$ when the context is clear. In other words, $U^\pm_\Jbb:\GA\rightarrow \MU(\MH_\Jbb)$ is the unique strongly-continuous unitary representation such that
\begin{gather}\label{eq23}
U^+_\Jbb(g)=\pi_{\Jbb,I}(U_0(g)\otimes 1)\qquad U^-_\Jbb(g)=\pi_{\Jbb,I}(1\otimes U_0(g))
\end{gather}
for each $I\in\MJ,g\in\GA(I)$.
\end{df}

\begin{rem}\label{lb104}
Let $\MH_\Jbb\in\Rep(\Maa)$. Then for each $I,J\in\MJ$, we have
\begin{align*}
[\pi^+_{\Jbb,I}(\MA(I)),\pi^-_{\Jbb,J}(\MA(J))]=0
\end{align*}
Indeed, in the special case where $I\cup J\subset K$ for some $K\in\MJ$, the commutativity holds because for each $x\in\MA(I),y\in\MA(J)$, we have
\begin{align*}
\pi^+_{\Jbb,I}(x)\pi^-_{\Jbb,J}(y)=\pi_{\Jbb,K}(x\otimes y)=\pi^-_{\Jbb,J}(y)\pi^+_{\Jbb,I}(x)
\end{align*}
The general case reduces to this special case by the additivity of $\MA$.

It thus follows from Thm. \ref{lb12} that
\begin{align*}
[U_\Jbb^+(g^+),U_\Jbb^-(g^-)]=0\qquad [U_\Jbb^\pm(g^\pm),\pi^\mp_{\Jbb,I}(x)]=0
\end{align*}
for each $\pm\in\{+,-\}$, $g^+,g^-\in\GA$, $I\in\MJ$, and $x\in\MA(I)$.   \hqed
\end{rem}

\subsubsection{The $\sigma$-twisted $\Maa$-module $\MH_{\sqrt i}$}\label{lb57}

\begin{df}
For each $\wtd I=(I,\arg_I)\in\Jtd$, define elements $\pmb{\sqrt{\wtd I}}=(\sqrt I,\arg_{\sqrt I})$ and $\pmb{-\sqrt{\wtd I}}=(-\sqrt I,\arg_{-\sqrt I})$ of $\Jtd$ by
\begin{gather*}
\sqrt I=\big\{e^{\frac{\im\theta}2}:\theta\in\arg_I(I)\big\}\qquad \arg_{\sqrt I}(e^{\frac{\im\theta}2})=\frac{\theta}2 \\
-\sqrt I=\big\{-e^{\frac{\im\theta}2}:\theta\in\arg_I(I)\big\}\qquad\arg_{-\sqrt I}(-e^{\frac{\im\theta}2})=\frac\theta 2-\pi
\end{gather*}
for each $\theta\in\arg_I(I)$. In other words, by viewing $\Sbb^1=\Rbb/2\pi\Zbb$ and writing $\wtd I=(a,b)$, we have $\sqrt{\wtd I}=(\frac a2,\frac b2)$ and $-\sqrt{\wtd I}=(\frac a2-\pi,\frac b2-\pi)$.
\end{df}

We warn the readers that the notation $\sqrt I,-\sqrt I$ depends not only on $I$, but also on $\arg_I$.

\begin{df}\label{lb50}
For each $\wtd I\in\Jtd$, choose $g_+,g_-\in\DiffS$ such that
\begin{align*}
g_+(e^{\im\theta})=e^{\frac{\im\theta}2}\qquad g_-(e^{\im\theta})=-e^{\frac{\im\theta}2}\qquad\text{for each }\theta\in\arg_I(I)
\end{align*}
Choose representing elements of $U_0(g_\pm)$ (which we still denote by $U_0(g_\pm)$), and define (spatial) isomorphisms of von Neumann algebras
\begin{gather*}
\mbf Q_{\sqrt{\wtd I}}:\MA(I)\rightarrow\MA(\sqrt I)\qquad x\mapsto U_0(g_+)xU_0(g_+)^{-1}\\
\mbf Q_{-\sqrt{\wtd I}}:\MA(I)\rightarrow\MA(-\sqrt I)\qquad x\mapsto U_0(g_-)xU_0(g_-)^{-1}
\end{gather*}
This definition is independent of the choice of $g_\pm$.
\end{df}

\begin{proof}[Explanation]
To prove the independence of $\mbf Q_{\sqrt{\wtd I}}$ on $g_+$ (and similarly the independence of $\mbf Q_{-\sqrt{\wtd I}}$ on $g_-$), assume that $h_+$ satisfies the same property as $g_+$. Then $h_+^{-1}g_+\in\Diff_{I'}(\Sbb^1)$, and hence $U_0(h_+)^{-1}U_0(g_+)$ is a unitary element in $\MA(I')$ and hence commutes with $\MA(I)$. 
\end{proof}

\begin{df}
Throughout this paper, we let $\pmb\sigma\in\Aut(\Maa)$ be defined by
\begin{align*}
\sigma:\MH_0\otimes\MH_0\rightarrow\MH_0\otimes\MH_0\qquad\xi\otimes\eta\mapsto\eta\otimes\xi
\end{align*} 
\end{df}

\begin{df}\label{lb52}
For each $\MH_i\in\RepA$, define $\pmb{\MH_{\sqrt i}}\in\Rep^\sigma(\Maa)$ as follows. As a Hilbert space, $\MH_{\sqrt i}$ is identical to $\MH_i$. For each $\wtd I\in\Jtd$, $\pi_{\sqrt i,\wtd I}:\Maa(I)\rightarrow\fk L(\MH_i)$ is the unique normal representation such that
\begin{align}\label{eq24}
\pi_{\sqrt i,\wtd I}(x^+\otimes 1)=\pi_{i,\sqrt I}\circ\mbf Q_{\sqrt{\wtd I}}(x^+)\qquad \pi_{\sqrt i,\wtd I}(1\otimes x^-)=\pi_{i,-\sqrt I}\circ\mbf Q_{-\sqrt{\wtd I}}(x^-)
\end{align}
for each $x^\pm\in\MA(I)$.
\end{df}

\begin{proof}
We need to prove the existence of such $\MH_{\sqrt i}$; the uniqueness is obvious. By the split property (cf. Rem. \ref{lb1}), there exists a unitary $\Phi:\MH_0\rightarrow\MH_0\otimes\MH_0$ such that
\begin{align*}
\Phi x^+\Phi^{-1}=x^+\otimes 1\qquad\Phi x^-\Phi^{-1}=1\otimes x^-\qquad \text{for each }x^\pm\in\MA(\pm\sqrt I)
\end{align*}
Let $g_\pm$ be as in Def. \ref{lb50}. Define $\pi_{\sqrt 0,\wtd I}:\Maa(I)\rightarrow\fk L(\MH_0)$ by
\begin{gather*}
\pi_{\sqrt 0,\wtd I}=\Ad_{\Phi^{-1}}\circ\Ad_{U_0(g_+)\otimes U_0(g_-)}
\end{gather*}
Then, for each $x^\pm\in\MA(I)$, we have
\begin{align*}
\Ad_{U_0(g_+)\otimes U_0(g_-)}(x^+\otimes 1)=\mbf Q_{\sqrt{\wtd I}}(x^+)\otimes 1\quad \Ad_{U_0(g_+)\otimes U_0(g_-)}(1\otimes x^-)=1\otimes\mbf Q_{-\sqrt{\wtd I}}(x^-)
\end{align*}
and hence
\begin{align*}
\pi_{\sqrt 0,\wtd I}(x^+\otimes 1)=\mbf Q_{\sqrt{\wtd I}}(x^+)\qquad \pi_{\sqrt 0,\wtd I}(1\otimes x^-)=\mbf Q_{-\sqrt{\wtd I}}(x^-)
\end{align*}
In particular, $\pi_{\sqrt 0,\wtd I}$ ranges in $\MA(\sqrt I)\vee\MA(-\sqrt I)$. Choose any $K\in\MJ$ containing $\sqrt I$ and $-\sqrt I$. Then $\pi_{\sqrt i,\wtd I}$ can be defined by
\begin{align*}
\pi_{\sqrt i,\wtd I}=\pi_{i,K}\circ\pi_{\sqrt 0,\wtd I}
\end{align*}
We have thus constructed $(\MH_{\sqrt i},\pi_{\sqrt i})$ as a solitonic $\Maa$-module. By using \eqref{eq24}, one easily checks that it is $\sigma$-twisted.
\end{proof}

\subsubsection{The separate conformal structures on $\MH_{\sqrt i}$}

\begin{df}\label{lb126}
For each $\wtd I\in\Jtd$ and $g\in\Diff_I(\Sbb^1)$, define $\pmb{\sqrt g}\in\Diff_{\sqrt I}(\Sbb^1)$ and $\pmb{-\sqrt g}\in\Diff_{-\sqrt I}(\Sbb^1)$ by
\begin{align*}
\sqrt g=g_+\cdot g\cdot g_+^{-1}\qquad -\sqrt g=g_-\cdot g\cdot g_-^{-1}
\end{align*}
where $g_\pm$ are as in Def. \ref{lb50}. This definition is independent of the choice $g_\pm$. Indeed, $\sqrt g$ is the unique element in $\Diff_{\sqrt I}(\Sbb^1)$ whose restriction to $\sqrt I$ is the pullback of $g|_I:I\rightarrow I$ along the squaring map $\sqrt I\rightarrow I$, and $-\sqrt g$ is the unique element in $\Diff_{-\sqrt I}(\Sbb^1)$ whose restriction to $-\sqrt I$ is the pullback of $g|_I:I\rightarrow I$ along the squaring map $-\sqrt I\rightarrow I$. 
\end{df}

One easily checks that $-\sqrt g$ is the pullback of $\sqrt g$ along the map $z\mapsto -z$, i.e.,
\begin{align}
(-\sqrt g)(z)=-(\sqrt g(-z))\qquad\text{for each }z\in\Sbb^1
\end{align}

\begin{df}
For each $\wtd I\in\Jtd$ and $g\in\SG(I)$, the elements $\pmb{\sqrt g}\in\SG(\sqrt I)$ and $\pmb{-\sqrt g}\in\SG(-\sqrt I)$ are defined by lifting the elements $\pm\sqrt{\kappa(g)}\in \Diff_{\pm\sqrt I}(\Sbb^1)$ to $\SG(\pm\sqrt I)$ via the covering map $\kappa:\SG\rightarrow\DiffS$.
\end{df}

\begin{df}\label{lb54}
For each $\wtd I\in\Jtd$ and $\wtd g=(g,V)$ in $\GA(I)$, we define $\pmb{\sqrt{\wtd g}}\in\GA(\sqrt I)$ and $\pmb{-\sqrt{\wtd g}}\in\GA(-\sqrt I)$ by
\begin{align*}
\sqrt{\wtd g}=\big(\sqrt g,\Qbf_{\sqrt{\wtd I}}(V)\big)\qquad -\sqrt{\wtd g}=\big(-\sqrt g,\Qbf_{-\sqrt{\wtd I}}(V)\big)
\end{align*}
In particular, we have
\begin{align*}
U_0(\sqrt{\wtd g})=\Qbf_{\sqrt{\wtd I}}(U_0(\wtd g))\qquad U_0(-\sqrt{\wtd g})=\Qbf_{-\sqrt{\wtd I}}(U_0(\wtd g))
\end{align*}
This definition is well-defined, since one checks easily that $\Qbf_{\pm\sqrt{\wtd I}}(V)$ represents $U_0(\pm\sqrt g)$.
\end{df}

Note that both $\pm\sqrt g$ and $\pm\sqrt{\wtd g}$ depend on $\arg_I$. On the other hand, it is clear that under the identification $\GA=\Gc$, the definition of $\pm\sqrt{\wtd g}$ is independent of the conformal net $\MA$ extending $\Vir_c$.

\begin{pp}\label{lb66}
Let $\wtd I\in\Jtd$ and $\wtd g\in\GA(I)$. Then for each $\MH_i\in\RepA$ we have
\begin{align*}
U_i(\sqrt {\wtd g})=\pi_{\sqrt i,\wtd I}(U_0(\wtd g)\otimes 1)\qquad U_i(-\sqrt{\wtd g})=\pi_{\sqrt i,\wtd I}(1\otimes U_0(\wtd g))
\end{align*}
\end{pp}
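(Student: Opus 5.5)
The plan is to reduce the claim to the defining relation \eqref{eq24} of $\MH_{\sqrt i}$ together with Thm. \ref{lb12}, which characterizes $U_i$ on localized subgroups. Fix $\wtd I\in\Jtd$ and $\wtd g=(g,V)\in\GA(I)$, so that $V=U_0(\wtd g)\in\MA(I)$ by conformal covariance. By Def. \ref{lb54}, $\sqrt{\wtd g}\in\GA(\sqrt I)$ and $U_0(\sqrt{\wtd g})=\Qbf_{\sqrt{\wtd I}}(V)$. Similarly, $-\sqrt{\wtd g}\in\GA(-\sqrt I)$ and $U_0(-\sqrt{\wtd g})=\Qbf_{-\sqrt{\wtd I}}(V)$.

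Since $\MH_i\in\RepA$ is untwisted, Thm. \ref{lb12} (with the notation of Rem. \ref{lb11}) applied to the interval $\sqrt I$ gives
\begin{align*}
U_i(\sqrt{\wtd g})=\pi_{i,\sqrt I}\big(U_0(\sqrt{\wtd g})\big)=\pi_{i,\sqrt I}\big(\Qbf_{\sqrt{\wtd I}}(U_0(\wtd g))\big).
\end{align*}
By the first identity of \eqref{eq24} with $x^+=U_0(\wtd g)\in\MA(I)$, this equals $\pi_{\sqrt i,\wtd I}(U_0(\wtd g)\otimes 1)$. The second identity follows in exactly the same way, using the interval $-\sqrt I$ and the second relation in \eqref{eq24}.

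The only points needing care lie in Def. \ref{lb54}, which the paper asserts without proof. First, one must check that $\sqrt g$ really lies in $\SG(\sqrt I)$. Since $\kappa(g)\in\Diff_I(\Sbb^1)$ is connected to the identity within that group, its conjugate $\sqrt{\kappa(g)}$ lifts into the identity component, so this holds. Second, one must check that $\Qbf_{\sqrt{\wtd I}}(V)=U_0(g_+)VU_0(g_+)^{-1}$ represents $U_0(\sqrt g)$. This follows from $\sqrt g=g_+ g g_+^{-1}$ and the projective representation property. I expect no genuine obstacle: once these well-definedness facts are granted, the proposition is a one-line combination of Thm. \ref{lb12} and \eqref{eq24}.
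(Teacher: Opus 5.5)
Your proposal is correct and is essentially identical to the paper's proof: apply Thm.~\ref{lb12} to $\sqrt{\wtd g}\in\GA(\sqrt I)$, rewrite $U_0(\sqrt{\wtd g})$ as $\Qbf_{\sqrt{\wtd I}}(U_0(\wtd g))$ via Def.~\ref{lb54}, conclude with \eqref{eq24}, and treat the minus case the same way. Your extra remarks on the well-definedness of Def.~\ref{lb54} are sound, though the paper simply takes that definition as given; in fact the lift to $\SG(\sqrt I)$ is unique, since $\Diff_{\sqrt I}(\Sbb^1)$ is contractible.
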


\begin{proof}
By Thm. \ref{lb12} and Def. \ref{lb54},
\begin{align*}
U_i(\sqrt {\wtd g})=\pi_{i,\sqrt I}(U_0(\sqrt{\wtd g}))=\pi_{i,\sqrt I}\big(\Qbf_{\sqrt{\wtd I}}(U_0(\wtd g))\big)\xlongequal{\text{Def.\ref{lb52}}}\pi_{\sqrt i,\wtd I}(U_0(\wtd g)\otimes 1)
\end{align*}
A similar argument proves the second identity.
\end{proof}

\subsubsection{The state space of the closed CFT}

\begin{df}\label{lb155}
Let $\Zbb_2\leq\Aut(\Maa)$ be generated by $\sigma$. Using the fusion product in $\Rep^{\Zbb_2}(\Maa)$, we define the untwisted $\Maa$-module
\begin{align*}
\pmb{\MHcl}:=\MH_{\sqz}\boxtimes\MH_{\ovl\sqz}
\end{align*}
where $\MH_{\ovl\sqz}$ is the $\Zbb_2$-crossed conjugate module of $\MH_\sqz$.
\end{df}

\begin{rem}
The $\sigma$-twisted module $\MH_{\ovl\sqz}$ is indeed unitarily isomorphic to $\MH_\sqz$, as one can check (by applying Def. \ref{lb51} to $g_\pm$) that $\fk J_\MA$ gives an anti-unitary anti-automorphism of $\MH_\sqz$. However, we will not need this fact in the paper.
\end{rem}

\subsection{Double cones in the Einstein cylinder $\Rcl$}

\subsubsection{The Minkowski spaces}\label{lb65}

Recall that $\pmb{\Rbb^{1,1}}$ is the two dimensional Minkowski space, i.e., $\Rbb^{1,1}=\Rbb\times\Rbb$ with Minkowski metric $(dt)^2-(dx)^2$, and $x,t$ denote the first and second components of $\Rbb^{1,1}$. Let
\begin{align*}
\pmb{\Rcl}=(\Rbb/2\pi\Zbb)\times\Rbb\qquad \pmb{\Rpp}=(-\pi,\pi)\times\Rbb
\end{align*}
where $\Rcl$ is viewed as a quotient space of $\Roo$, and $\Rpp$ is viewed as an open subset of both $\Roo$ and $\Rcl$. For each $x\in\Rbb$, let
\begin{align}\label{eq29}
\pmb{\SL_x}=(x+2\pi\Zbb)\times\Rbb
\end{align}
which is a vertical line in $\Rcl$. Then
\begin{align*}
\Rpp\simeq \Rcl\setminus\SL_\pi=\Rcl\setminus\SL_{-\pi}
\end{align*}

\subsubsection{The $u^+u^-$-coordinates}

\begin{df}
We introduce a new set of coordinates $\pmb{(u^+,u^-)}$ on $\Roo$ (and hence on $\Rpp$) by
\begin{align*}
\left\{
\begin{array}{l}
u^+=t+x\\[0.5ex]
u^-=t-x
\end{array}
\right.\qquad\text{equivalently}\qquad
\left\{
\begin{array}{l}
x=\frac{u^+-u^-}2\\[0.5ex]
t=\frac{u^++u^-}2
\end{array}
\right.
\end{align*}
In this coordinate, $(u^+,u^-)$ and $(u^++2n\pi,u^--2n\pi)$ correspond to the same point of $\Rcl$ when $n\in\Zbb$, that is,
\begin{align}\label{eq25}
\Rcl=\Roo/(2\pi,-2\pi)\Zbb\qquad\text{in the }u^+u^-\text{ coordinates}
\end{align}
\end{df}

\begin{df}
Two points $p_1,p_2$ of $\Rcl$ are called \textbf{spacelike separated} if, in the $u^+u^-$-coordinates, they can be represented by $(u^+_1,u^-_1)$ and $(u^+_2,u^-_2)$ such that
\begin{subequations}\label{eq32}
\begin{align}\label{eq32a}
0<u^+_2-u^+_1<2\pi\qquad 0<u^-_1-u^-_2<2\pi
\end{align}
equivalently (by \eqref{eq25} and a different choice of representation),
\begin{align}\label{eq32b}
0<u^+_1-u^+_2<2\pi\qquad 0<u^-_2-u^-_1<2\pi
\end{align}
\end{subequations}
For each $O\subset\Rcl$, the interior
\begin{align*}
\pmb{O'}:=\Int\{q\in\Rcl:q\text{ is spacelike separated with any }p\in O\}
\end{align*}
is called the \textbf{spacelike complement} of $O$.
\end{df}

\begin{rem}\label{lb76}
Suppose that $p_1,p_2\in\Roo_{(-\pi,\pi]}=(-\pi,\pi]\times\Rbb$ are spacelike separated. Then, by viewing $\Roo_{(-\pi,\pi]}$ as a subset of $\Roo$, precisely one of \eqref{eq32} holds true. In fact, \eqref{eq32a} holds iff $x_1<x_2$, and \eqref{eq32b} holds iff $x_2<x_1$.
\end{rem}

\begin{proof}
Clearly $x_1\neq x_2$, otherwise $u_2^+-u_1^+$ and $u_2^--u_1^-$ would have the same sign, contradicting \eqref{eq32}. 

Suppose $x_1<x_2$. Choose $(w_2^+,w_2^-)\in (u_2^+,u_2^-)+(2\pi,-2\pi)\Zbb$ such that 
\begin{align}\label{eq33}
0<w^+_2-u^+_1<2\pi\qquad 0<u^-_1-w^-_2<2\pi
\end{align}
Let $\wtd x_2$ be the $x$-coordinate of $(w_2^+,w_2^-)$. (Note that $(w_2^+,w_2^-)$ and $(u_2^+,u_2^-)$ have the same $t$-coordinate $t_2$.) Then $x_1<\wtd x_2<x_1+2\pi$, because
\begin{align*}
(w_2^+-w_2^-)-(u_1^+-u_1^-)=(w^+_2-u^+_1)+(u^-_1-w^-_2)
\end{align*}
is $>0$ and $<4\pi$.

Since $x_1>-\pi$, we have $\wtd x_2>-\pi$. We claim that $\wtd x_2\leq\pi$. If not, then $\pi<\wtd x_2<x_1+2\pi$, and hence
\begin{align*}
-\pi<\wtd x_2-2\pi<x_1
\end{align*}
In particular, $-\pi<\wtd x_2-2\pi\leq\pi$. Thus, the unique representation of $p_2$ in $\Roo_{(-\pi,\pi]}$ is given by $(\wtd x_2-2\pi,t_2)$. Therefore $x_2=\wtd x_2-2\pi$, and hence $x_2<x_1$, contradicting the assumption $x_1<x_2$.

We have thus proved that $-\pi<\wtd x_2\leq\pi$. Then $(\wtd x_2,t_2)$ is the unique representation of $p_2$ in $\Roo_{(-\pi,\pi]}$. Thus $w_2^+=u_2^+$ and $w_2^-=u_2^-$, and hence \eqref{eq32a} follows from \eqref{eq33}. A similar argument shows that if $x_2<x_1$ then \eqref{eq32b} follows.
\end{proof}

A pictorial proof of Rem. \ref{lb76} is given by Fig. \ref{fig1}, where the parts of the spacelike complement of $p=p_1\in \Roo_{(-\pi,\pi]}$ satisfying \eqref{eq32a} and \eqref{eq32b} are illustrated by the yellow and green regions, respectively.

\begin{figure}[h]
	\centering
\begin{gather*}
\includegraphics[height=4cm]{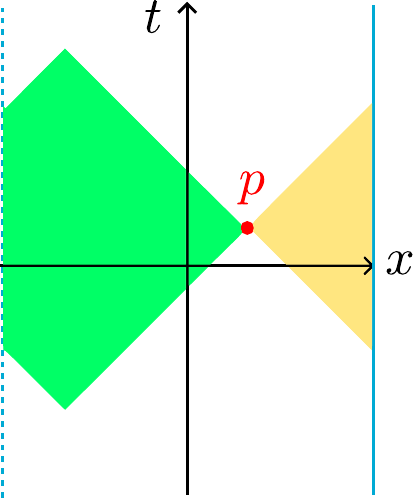}
   \end{gather*}
\caption{.~~The spacelike complement of $p$ in $\Rcl$, projected onto $\Roo_{(-\pi,\pi]}$}\label{fig1}
\end{figure}

\subsubsection{Double cones}

\begin{df}
For each element $p=(u^+,u^-)$ of $\Roo$ (in the $u^+u^-$ coordinate), the associated ordered pair of arg-valued points $(e^{\im u^+},e^{\im u^-})$ with arguments $u^+,u^-$ is called the corresponding pair of $p$ in the \textbf{double circle picture}.
\end{df}

\begin{df}
A subset $O\subset\Rcl$ is called a \textbf{double cone} if it is the image of an open rectangle $K^+\times K^-$ (in the $u^+u^-$ coordinate) under the quotient map $\Roo\rightarrow \Rcl$, where the lengths of the open intervals $K^\pm\subset\Rbb$ satisfy
\begin{align*}
0<|K^+|,|K^-|<2\pi
\end{align*}
Equivalently, a double cone is an open subset of $\Rcl$ corresponding to a pair $(\wtd I,\wtd J)$ in the double circle picture, where $\wtd I,\wtd J\in\Jtd$. 
\end{df}

\begin{df}
The \textbf{top}, \textbf{bottom}, \textbf{left}, and \textbf{right corners} of $O$ are understood in the usual sense. That is, if in the $u^+u^-$-coordinates we have $O=K^+\times K^-$, by setting $a^\pm=\inf K^\pm$ and $b^\pm=\sup K^\pm$, these four corners are respectively the images of the following points under the covering map $\Roo\rightarrow\Rcl$.
\begin{align*}
(b^+,b^-)\qquad (a^+,a^-)\qquad (a^+,b^-)\qquad (b^+,a^-)
\end{align*} 
\end{df}

\begin{rem}\label{lb55}
Note that infinitely many pairs $(\wtd I,\wtd J)$ correspond to a given double cone. However, if $O$ is a double cone in $\Rpp$, since the $u^+u^-$-coordinates on $\Rpp$ are genuine coordinates, $O$ corresponds to a unique pair $(\wtd I,\wtd J)$ in the double circle picture, i.e., the one satisfying
\begin{align}\label{eq26}
|\arg_I z-\arg_J\zeta|<2\pi\qquad\text{for each }z\in I,\zeta\in J
\end{align}
Conversely, any $(\wtd I,\wtd J)$ satisfying \eqref{eq26} corresponds to a double cone $O\subset\Rpp$. 
\end{rem}

\begin{cv}\label{lb122}
Unless otherwise specified, the pair $(\wtd I,\wtd J)$ corresponding to a double cone $O\subset\Rpp$ is chosen to satisfy \eqref{eq26}. 
\end{cv}

\begin{eg}
Let $O\subset\Rcl$ be a double cone. Then $O'$ is also a double cone. Indeed, if $(\wtd I,\wtd J)$ corresponds to $O$, then both $(\wtd I',\bpr\wtd J)$ and $(\bpr\wtd I,\wtd J')$ correspond to $O'$.
\end{eg}

\begin{df}
A double cone $O\subset\Rcl$ is called \textbf{equilateral} if it corresponds to some $(\wtd I,\wtd J)$ in the double circle picture with $|I|=|J|$. A double cone is called \textbf{standard} if it corresponds to $(\wtd I,\wtd I)$ for a (necessarily unique) $\wtd I\in\Jtd$. 
\end{df}

Standard double cones must be compactly contained in $\Rpp$. Indeed, one checks easily that, in the $xt$-coordinates, a standard double cone is precisely a double cone $O\subset\Rpp$ whose top and bottom corners lie in the $t$-axis.

\begin{rem}\label{lb56}
Let $O\subset\Rcl$ be a double cone corresponding to $(\wtd I,\wtd J)$. Then 
\begin{align*}
O\Subset\Rpp\qquad\Longleftrightarrow\qquad\text{there exists $\wtd K\in\Jtd$ containing $\wtd I,\wtd J$}
\end{align*}
In other words, $O\Subset\Rpp$ iff $O$ is contained in some standard double cone.
\end{rem}

\begin{proof}
This is due to \eqref{eq26}, which implies that when $\wtd I,\wtd J$ are viewed as intervals in $\Rbb$, the property $O\Subset\Rpp$ holds iff the diameter of $\wtd I\cup\wtd J$ is $<2\pi$.
\end{proof}

\begin{df}\label{lb109}
We let
\begin{gather*}
\pmb{\MO}=\{\text{double cones }O\subset\Rcl\}\qquad \pmb{\Opp}=\{\text{double cones }O\Subset\Rpp\}
\end{gather*}
\end{df}

\subsubsection{Conformal symmetry}

\begin{df}
We define a group action of $\SG\times\SG$ on $\Rcl$, which lifts to a group action of $\Gc\times\Gc$ on $\Rcl$. For each $g^+,g^-\in\SG$, by viewing them as diffeomorphisms of $\Rbb$ (cf. \eqref{eq2}), we let
\begin{align*}
(g^+,g^-)(u^+,u^-)=(g^+(u^+),g^-(u^-))
\end{align*}
Thanks to \eqref{eq25} and the relation $g^\pm(u+2\pi)=g^\pm(u)+2\pi$, this defines a well-defined action of $\SG\times\SG$ on $\Rcl$.
\end{df}

\begin{rem}
Although we will not use this fact, we note that $\SG\times\SG$ is the universal cover of the group $\mathrm{Cf}^+(\Rcl)$ of conformal diffeomorphisms of $\Rcl$ preserving both the orientation of the manifold $\Rcl$ and the direction of $t$. Moreover, the kernel of the covering map $\SG\times\SG\rightarrow \mathrm{Cf}^+(\Rcl)$ is isomorphic to $\Zbb$, generated by $(\varrho(2\pi),\varrho(-2\pi))$.
\end{rem}

\subsection{The bulk net $\Bcl$ acting on $\MHcl$}

\begin{df}\label{lb121}
For each $I,J\in\MJ$ whose closures are disjoint, let
\begin{align*}
\pmb{\fk A(I,J)}=\MA(I^\vee)'\cap\MA(J^\vee)'
\end{align*}
where $I^\vee,J^\vee$ are the two components of the interior of $\Sbb^1\setminus(I\cup J)$.
\end{df}

For example, if $\wtd K\in\Jtd$ and $I=\sqrt K,J=-\sqrt K$, then $I^\vee,J^\vee$ can be chosen to be $\sqrt{K'},-\sqrt{K'}$.

\subsubsection{The net $\Bcl$  defined on $\Opp$}

Recall Thm. \ref{lb35} for the normal representation $\pi^L_{\wtd I}$, which will be applied in this subsection to crossed $\Maa$-module.

For each $\wtd I\in\Jtd$, by the construction of $\MH_\sqz$ in Subsec. \ref{lb57}, we have
\begin{align}\label{eq27}
\End_{\Maa(\wtd I')}(\MH_\sqz)=\MA(\sqrt{I'})'\cap\MA(-\sqrt{I'})'=\fk A(\sqrt I,-\sqrt I)
\end{align}
Consequently, if $\wtd I,\wtd J\subset\wtd K$, then
\begin{align}\label{eq70}
\fk A(\sqrt I,-\sqrt J)\subset \End_{\Maa(\wtd K')}(\MH_\sqz)
\end{align}
where $\sqrt{I}$ and $-\sqrt{J}$ are defined using $\wtd I$ and $\wtd J$, respectively.

\begin{df}\label{lb64}
Let $O\in\Opp$. Let $(\wtd I,\wtd J)$ be the unique pair satisfying \eqref{eq26} that corresponds to $O$. By Rem. \ref{lb56}, there exists a standard double cone $D$ containing $O$. Choose $\wtd K\in\Jtd$ such that $(\wtd K,\wtd K)$ represents $D$, and so $\wtd I,\wtd J\subset\wtd K$. Define the von Neumann algebra $\Bcl(O)$ on $\Hcl=\MH_\sqz\boxtimes\MH_{\ovl\sqz}$ by
\begin{align*}
\Bcl(O)=\pi^L_{\wtd K}\big(\fk A(\sqrt I,-\sqrt J)\big)\big|_{\MH_\sqz\boxtimes\MH_{\ovl\sqz}}
\end{align*}
In particular, if $O$ is standard (and hence $\wtd I=\wtd J$), then
\begin{align}\label{eq50}
\Bcl(O)=\pi^L_{\wtd I}\big(\End_{\Maa(\wtd I')}(\MH_\sqz)\big)\big|_{\MH_\sqz\boxtimes\MH_{\ovl\sqz}}
\end{align}
\end{df}

\begin{proof}[Explanation]
This definition is independent of the choice of $D$ containing $O$. In fact, for two such double cones $D_1,D_2$ corresponding to $(\wtd K_1,\wtd K_1)$ and $(\wtd K_2,\wtd K_2)$ respectively, choose $\wtd K_0\in\Jtd$ containing $\wtd I,\wtd J$ and contained in $\wtd K_1,\wtd K_2$. By Rem. \ref{lb61},
\begin{align*}
\pi^L_{\wtd K_i}\big(\fk A(\sqrt I,-\sqrt J)\big)\big|_{\MH_\sqz\boxtimes\MH_{\ovl\sqz}}=\pi^L_{\wtd K_0}\big(\fk A(\sqrt I,-\sqrt J)\big)\big|_{\MH_\sqz\boxtimes\MH_{\ovl\sqz}}
\end{align*}
for $i=1,2$.
\end{proof}

\subsubsection{The net $\Bcl$ defined on $\MO$}

\begin{df}\label{lb74}
The net of von Neumann algebras
\begin{align*}
\pmb{\Bcl}:O\in\MO\mapsto \Bcl(O)\subset\fk L(\Hcl)
\end{align*}
is defined as follows. If $O\in\Opp$, then $\Bcl(O)$ is defined as in Def. \ref{lb64}. If $O\notin\Opp$, that is, if the vertical line $\SL_\pi$ defined in \eqref{eq29} intersects the closure of $O$, by viewing $\Rpp=\Rcl\setminus\SL_\pi$, we let
\begin{align}\label{eq31}
\Bcl(O)=\bigvee_{D\in\Opp,D\subset O}\Bcl(D)
\end{align}
\end{df}

\begin{df}\label{lb134}
For each $g^+,g^-\in\Gc$, define $\pmb{\Ucl(g^+,g^-)}\in\MU(\Hcl)$ by
\begin{align*}
\Ucl(g^+,g^-)=\Ucl^+(g^+)\Ucl^-(g^-)
\end{align*}
where $\Ucl^+,\Ucl^-:\Gc\rightarrow\MU(\MHcl)$ are defined in Def. \ref{lb53} (for $\Jbb=\cl$). Note that by Rem. \ref{lb104}, the ranges of $\Ucl^+$ and $\Ucl^-$ commute.  Clearly
\begin{align*}
\Ucl:\Gc\times\Gc\rightarrow\MU(\MHcl)
\end{align*}
is a strongly-continuous unitary representation.
\end{df}

The following observation will be crucial to the proof of the conformal covariance of $\Bcl$.

\begin{rem}\label{lb67}
By Def. \ref{lb53}, if $I\in\MJ$ and $g^+,g^-\in\Gc(I)$, then
\begin{align}\label{eq55}
\Ucl(g^+,g^-)=\pi_{\sqz\boxtimes\ovl\sqz,I}\big(U_0(g^+)\otimes U_0(g^-)\big)
\end{align}
By Prop. \ref{lb66}, if $\wtd I\in\Jtd$ and $g^+,g^-\in\Gc(I)$, noting that $\sqrt{g^+}\in\Gc(\sqrt I)$ and $-\sqrt{g^-}\in\Gc(-\sqrt I)$ are defined in Def. \ref{lb54} and rely on $\arg_I$, we have
\begin{align}\label{eq54}
U_0\big(\sqrt{g^+}\cdot(-\sqrt{g^-})\big)=\pi_{\sqz,\wtd I}\big(U_0(g^+)\otimes U_0(g^-)\big)
\end{align}
\end{rem}

Recall that $\rk$ acts on the universal cover $\Rbb$ of $\Sbb^1$ by $u\mapsto -u$. Thus $(\rk,\rk)$ acts on $\Roo$ and hence acts on $\Rcl$:
\begin{align}\label{eq44}
\pmb{(\rk,\rk)}:\Rcl\rightarrow\Rcl\qquad (u^+,u^-)\mapsto(-u^+,-u^-)
\end{align}
Recall Def. \ref{lb83} for the notation $\pi_\Jbb^\pm$, which we will apply below in the special case $\Jbb=\cl\equiv\sqz\boxtimes\ovl\sqz$.

\begin{thm}\label{lb72}
The net $\Bcl$ satisfies the following properties for each $O,O_1,O_2\in\MO$.
\begin{enumerate}[label=(\arabic*)]
\item (Isotony) If $O_1\subset O_2$, then $\Bcl(O_1)\subset\Bcl(O_2)$.
\item (Locality) If $O_1$ and $O_2$ are spacelike separated, then $\Bcl(O_1)$ commutes with $\Bcl(O_2)$.
\item (Extension property) If $O$ corresponds to $(\wtd I,\wtd J)$, then
\begin{align*}
\pi^+_{\sqz\boxtimes\ovl\sqz,I}(\MA(I))\vee \pi^-_{\sqz\boxtimes\ovl\sqz,J}(\MA(J))\subset \Bcl(O)
\end{align*}
\item (Conformal covariance) For each $g^+,g^-\in\Gc$, we have
\begin{align*}
\Ucl(g^+,g^-)\Bcl(O)\Ucl(g^+,g^-)^*=\Bcl((g^+,g^-)O)
\end{align*}
Moreover, if $O$ corresponds (non-uniquely) to some $(\wtd I,\wtd J)$ in the double circle picture, and if $g^+\in\Gc(I),g^-\in\Gc(J)$, then
\begin{align*}
\Ucl(g^+,g^-)\in\pi^+_{\sqz\boxtimes\ovl\sqz,I}(\MA(I))\vee \pi^-_{\sqz\boxtimes\ovl\sqz,J}(\MA(J))
\end{align*}
and hence $\Ucl(g^+,g^-)\in\Bcl(O)$ by the extension property.
\item (PCT symmetry) There exists an anti-unitary operator $\Jcl$ on $\Hcl$ satisfying
\begin{gather*}
\Jcl^2=1\qquad \Jcl\Ucl(g^+,g^-)\Jcl=U_\cl(\rk g^+\rk,\rk g^-\rk)\\
\Jcl\Bcl(O)\Jcl=\Bcl((\rk,\rk)O)
\end{gather*}
for each $g^+,g^-\in\Gc$ and $O\in\MO$.
\item (Positive energy) The self-adjoint generators of the one-parameter groups $u\mapsto \Ucl(\varrho_c(u),1)$ and $u\mapsto \Ucl(1,\varrho_c(u))$ are positive. Here, $1$ denotes the identity element of $\Gc$. 
\item (Irreducibility) $\Bcl(O)$ is a type III factor. Moreover, $\bigcup_{O\in\MO}\Bcl(O)$ generates the von Neumann algebra $\fk L(\Hcl)$.
\end{enumerate}
\end{thm}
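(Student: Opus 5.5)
The proof will follow the route announced in Sec. \ref{lb159}. First I establish everything on $\Opp$, where Def. \ref{lb64} lets me work inside the $\Zbb_2$-crossed categorical extension of $\Maa$ from Thm. \ref{lb15}. Then I extend to all of $\MO$ via \eqref{eq31}.

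\textbf{Step 1: isotony, extension, positivity and local covariance on $\Opp$.}
\begin{itemize}[leftmargin=*]
\item \emph{Isotony} on $\Opp$ is immediate from Rem. \ref{lb61} and the monotonicity of $\fk A(\sqrt I,-\sqrt J)$ in $I,J$.
\item \emph{Extension property.} Take $x\in\MA(I)$. Then $\mbf Q_{\sqrt{\wtd I}}(x)\in\MA(\sqrt I)\subset\fk A(\sqrt I,-\sqrt J)$. By \eqref{eq24} and \eqref{eq28}, $\pi^L_{\wtd K}$ maps it to $\pi_{\cl,\wtd K}(x\otimes1)=\pi^+_{\cl,I}(x)$. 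The minus component is handled the same way.
\item \emph{Local covariance.} The statement $\Ucl(g^+,g^-)\in\pi^+_{\cl,I}(\MA(I))\vee\pi^-_{\cl,J}(\MA(J))$ for $g^+\in\Gc(I)$, $g^-\in\Gc(J)$ follows from \eqref{eq23}.
\item \emph{Positive energy} is inherited from Thm. \ref{lb12}: $U^\pm_\cl$ are the canonical representations of genuine (untwisted) $\MA$-modules, whose rotation generators are positive.
\item \emph{Locality} for $O_1,O_2\in\Opp$ lying in a common standard double cone reduces to relative commutants inside $\MA$. This uses Haag duality and the fact that $\pi^L_{\wtd K}$ is a representation.
\item \emph{Covariance under $g^\pm$ close to the identity.} For $O\in\Opp$ this follows from Thm. \ref{lb108} combined with \eqref{eq54}: $\Ad$ of $U_0(\sqrt{g^+}(-\sqrt{g^-}))$ maps $\fk A(\sqrt I,-\sqrt J)$ to $\fk A(\sqrt{g^+I},-\sqrt{g^-J})$. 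Lem. \ref{lb14} then allows generating by localized elements.
\end{itemize}

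\textbf{Step 2: the core PCT/duality argument.} Let $\Theta=\Theta_{\sqz,\ovl\sqz}$ be the involutive antiunitary of Rem. \ref{lb81}, and set $\Jcl:=\Theta$. By Cor. \ref{lb87} with $\wtd I=\Std_+$, the algebra $\MM=\Bcl(O_\rightarrow)$ (the standard double cone over $\Std_+$, via \eqref{eq50}) satisfies
\begin{align*}
\Theta\MM\Theta=\MM'=\pi^R_{\Std_-}\big(\End(\MH_{\ovl\sqz})\big).
\end{align*}
Since $\Theta$ is an anti-morphism, Prop. \ref{lb42} gives $\Theta\Ucl(g^+,g^-)\Theta=\Ucl(\rk g^+\rk,\rk g^-\rk)$.

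The key point is to identify $\MM'$ with $\Bcl(O_\leftarrow)$, where $O_\leftarrow=(\rk,\rk)O_\rightarrow$ is the spacelike complement straddling $\SL_\pi$. By \eqref{eq31}, $\Bcl(O_\leftarrow)$ is generated by the algebras $\Bcl(D)$ for $D\subset O_\leftarrow$ with $D\in\Opp$.

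For each such $D$, I will show $\Bcl(D)\subset\pi^R_{\Std_-}(\cdots)$ by comparing $\pi^L$ and $\pi^R$. This uses the locality diagram in Thm. \ref{lb15}, together with Rem. \ref{lb36} applied to $D$ near the left and right halves, and uses rotation covariance to move $D$ across $\SL_\pi$. The reverse inclusion follows from additivity of $\MA$ inside $\End_{\Maa(\Std_-)}(\MH_{\ovl\sqz})\cong\fk A(\sqrt{\cdot},-\sqrt{\cdot})$.

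This gives Haag duality $\Bcl(O_\rightarrow)'=\Bcl(O_\leftarrow)$ and PCT for $O_\rightarrow$. As a byproduct, $\Ad_{\Ucl(\varrho_c(2\pi),\varrho_c(-2\pi))}$ fixes $\Bcl(O_\rightarrow)$: conjugating by $\Theta$ reverses rotations, so the translation by $2\pi$ can be computed from either side of $O_\leftarrow$, and both sides agree.

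\textbf{Step 3: global statements and irreducibility.} Combining Step 1 with this $2\pi$-periodicity, I define the action of $\Gc\times\Gc$ on all of $\MO$. Covariance follows by factoring any $(g^+,g^-)$ into localized pieces and rotations, and the factorization is consistent thanks to periodicity. Then:
\begin{itemize}[leftmargin=*]
\item \emph{Locality} in general reduces, by covariance, to the pair $O_\rightarrow,O_\leftarrow$ and its subregions.
\item \emph{PCT} for general $O$ follows from the $O_\rightarrow$ case, covariance, and Thm. \ref{lb107}.
\item \emph{Type III factor}: $\Bcl(O_\rightarrow)\cong\fk A(\sqrt I,-\sqrt I)$ is a type III factor by Rem. \ref{lb91}, and covariance transports this to every $\Bcl(O)$.
\item \emph{Irreducibility}: the algebra generated by all $\Bcl(O)$ contains $\Bcl(O_\rightarrow)\vee\Bcl(O_\rightarrow)'=\fk L(\Hcl)$, because $\MM$ is a factor.
\end{itemize}

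I expect Step 2 to be the main obstacle. The difficulty is identifying the commutant $\pi^R_{\Std_-}$-algebra with the net algebra generated across the cut $\SL_\pi$, with careful bookkeeping of arg-functions and the twist $\sigma$.
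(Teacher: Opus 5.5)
There is a genuine gap, and it sits exactly in your Step 2. The inclusion of the complementary region's algebras into $\Bcl(O_\uparrow)'$ is only locality. Even that needs the covering Lem.~\ref{lb77} plus Cor.~\ref{lb75}, because such $D$ need not share a standard double cone with $O_\uparrow$; your Step 1 covers only the common-standard-cone case.

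The reverse inclusion is the whole difficulty, and additivity cannot give it. Prop.~\ref{lb62} decomposes $\pi^R_{\Std_-}(\cdots)$ into smaller $\pi^R$-algebras, but every $\Bcl(D)$ with $D\in\Opp$ is defined through $\pi^L_{\wtd K}$. Outside the chiral part $\pi_{\cl}(\Maa(\cdot))$, where \eqref{eq28} and \eqref{eq30} apply, nothing identifies a $\pi^R$-image with a $\pi^L$-image; that identification is Haag duality itself. Likewise, ``rotation covariance to move $D$ across $\SL_\pi$'' is not available yet, since covariance across the cut is derived \emph{from} this step (Lem.~\ref{lb92}, \ref{lb94}).

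The missing idea is the round-trip argument of Thm.~\ref{lb89}. Write $V^\pm(s)=\Ucl^\pm(\varrho_c(s))$.
\begin{itemize}
\item The cones $O_\rightarrow=(1,\varrho(-\pi))O_\uparrow$ and $O_\leftarrow=(\varrho(-\pi),1)O_\uparrow$ lie in $\Rpp$.
\item Hence Lem.~\ref{lb85}, \ref{lb86} and Cor.~\ref{lb87} give $\Ad_{V^-(-\pi)\Theta V^-(\pi)}\Bcl(O_\rightarrow)=\Bcl(O_\rightarrow)'$, and the mirror identity for $O_\leftarrow$.
\item Locality turns these into two inclusions.
\item Lem.~\ref{lb88} shows that the composite of the two antiunitaries is $V^+(\pm2\pi)V^-(\mp2\pi)$.
\item Chaining the inclusions returns $\Bcl(O_\leftarrow)\subset\Bcl(O_\leftarrow)$. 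This forces equality throughout and yields Haag duality and Lem.~\ref{lb92} at once.
\end{itemize}

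Your geometry, and hence your $\Jcl$, is also off. $O_\rightarrow$ corresponds to $(\Std_+,\Std_-)$ and is not standard; the standard cone over $\Std_+$ is $O_\uparrow$. Moreover, $(\rk,\rk)O_\uparrow$ is the standard cone over $\Std_-$, which lies in the causal past of $O_\uparrow$, not in its spacelike complement. Cor.~\ref{lb87} says $\Theta$ sends $\Bcl(O_\uparrow)$ onto $\Bcl(O_\uparrow)'$, the algebra of the spacelike complement centred at $x=\pi$. So $\Theta$ implements $(u^+,u^-)\mapsto(-u^+,2\pi-u^-)$, which differs from $(\rk,\rk)$ by the nontrivial translation $(1,\varrho(2\pi))$ of $\Rcl$. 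Consequently $\Jcl:=\Theta$ fails $\Jcl\Bcl(O)\Jcl=\Bcl((\rk,\rk)O)$; the correct operator is $V^-(-\pi)\Theta V^-(\pi)=V^-(-2\pi)\Theta$.

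In Step 3, ``the factorization is consistent thanks to periodicity'' does not by itself carry small double cones across $\SL_\pi$. The paper needs the relative-commutant identity of Prop.~\ref{lb93}, which comes from the split property via Lem.~\ref{lb90}, combined with Lem.~\ref{lb92}, as in Lem.~\ref{lb94}.

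Two smaller points. Thm.~\ref{lb108} gives only diagonal covariance $(g,g)$, so the separate $(g^+,g^-)$-covariance must come from \eqref{eq54} and \eqref{eq28}, as in Lem.~\ref{lb68}. Positive energy is not part of Thm.~\ref{lb12}; it is Weiner's theorem \cite{Wei06}.
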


\begin{proof}
Isotony: If $O_1,O_2\in\Opp$, the inclusion is obvious. If $O_1\in\Opp,O_2\notin\Opp$, this follows by choosing $O=O_2,D=O_1$ in \eqref{eq31}. The case $O_1,O_2\notin\Opp$ follows from \eqref{eq31} and the case $O_1,O_2\in\Opp$.

Locality: See Thm. \ref{lb78} in Sec. \ref{lb80}.

Extension property: See Cor. \ref{lb135} in Sec. \ref{lb100}.

Conformal covariance: See Thm. \ref{lb98} and Cor. \ref{lb99} in Sec. \ref{lb100}. 

PCT symmetry: See Thm. \ref{lb102} in Sec. \ref{lb103}.

Positive energy: By \cite{Wei06}, for each (normal) representation of $\MA$ the generator of rotation is positive. Apply this result to $\pi^\pm_\cl$.

Irreducibility: In the special case where $O$ is standard and hence corresponds to $(\wtd I,\wtd I)$ for some $\wtd I\in\Jtd$, $\Bcl(O)$ is described by \eqref{eq50}, and hence is a type III factor by Rem. \ref{lb91}. The general case reduces to this special case by the conformal covariance mentioned above. In particular, since $\Bcl(O)$ is a factor, $\Bcl(O)\cup\Bcl(O)'$ generates $\fk L(\Hcl)$. By the Haag duality (Thm. \ref{lb115}), $\Bcl(O)\cup\Bcl(O')$ generates $\fk L(\MH_\cl)$. Thus $\bigcup_{O\in\MO}\Bcl(O)$ generates $\fk L(\Hcl)$. 
\end{proof}

The following Sections \ref{lb144}--\ref{lb103} are devoted to the proof of Thm. \ref{lb72}, together with several important byproducts of that proof (such as additivity, cf. Thm. \ref{lb146}). Finally, in Sec. \ref{lb145}, we formulate and prove Haag duality for multi-double-cones, a result that is not included in Thm. \ref{lb72}.

\subsection{Local conformal covariance for $\Bcl|_{\Opp}$}\label{lb144}

\begin{lm}\label{lb68}
Let $O\in\Opp$ be contained in some standard double cone $\wht O$. Assume that $\wht O$ corresponds to $(\wtd K,\wtd K)$ in the double circle picture. Then for each $g^+,g^-\in\GA(K)$ we have
\begin{align*}
\Ucl(g^+,g^-)\Bcl(O)\Ucl(g^+,g^-)^*=\Bcl((g^+,g^-)O)
\end{align*}
\end{lm}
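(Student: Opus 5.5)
The plan is to express $\Ucl(g^+,g^-)$ as the image under $\pi^L_{\wtd K}$ of a single unitary in $\End_{\Maa(\wtd K')}(\MH_\sqz)$. Then $\Ad_{\Ucl(g^+,g^-)}$ on $\Bcl(O)$ reduces to a spatial automorphism of $\fk A(\sqrt I,-\sqrt J)$, which can be computed by ordinary conformal covariance of $\MA$.

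\textbf{Step 1 (geometry).} Let $(\wtd I,\wtd J)$ be the pair corresponding to $O$ as in Conv. \ref{lb122}, with $\wtd I,\wtd J\subset\wtd K$. Since $g^\pm\in\GA(K)$ fixes the preimage of $K'$ pointwise, it maps the component $\wtd K\subset\Rbb$ onto itself. Hence $g^+\wtd I,g^-\wtd J\subset\wtd K$. So $(g^+,g^-)O$ is the double cone corresponding to $(g^+\wtd I,g^-\wtd J)$, it is still contained in $\wht O$, it satisfies \eqref{eq26}, and it lies in $\Opp$. By Def. \ref{lb64} we may therefore compute both $\Bcl(O)$ and $\Bcl((g^+,g^-)O)$ using the same $\pi^L_{\wtd K}$:
\begin{align*}
\Bcl(O)=\pi^L_{\wtd K}\big(\fk A(\sqrt I,-\sqrt J)\big),\qquad \Bcl((g^+,g^-)O)=\pi^L_{\wtd K}\big(\fk A(\sqrt{g^+I},-\sqrt{g^-J})\big),
\end{align*}
where the square roots are taken with respect to the arg functions restricted from $\wtd K$.

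\textbf{Step 2 (rewriting $\Ucl$).} By \eqref{eq55}, we have $\Ucl(g^+,g^-)=\pi_{\sqz\boxtimes\ovl\sqz,\wtd K}(U_0(g^+)\otimes U_0(g^-))$. By \eqref{eq28} this equals $\pi^L_{\wtd K}\big(\pi_{\sqz,\wtd K}(U_0(g^+)\otimes U_0(g^-))\big)$. By \eqref{eq54}, the inner operator is
\begin{align*}
W:=U_0\big(\sqrt{g^+}\cdot(-\sqrt{g^-})\big)=U_0(\sqrt{g^+})\,U_0(-\sqrt{g^-}),
\end{align*}
which lies in $\MA(\sqrt K)\vee\MA(-\sqrt K)\subset\End_{\Maa(\wtd K')}(\MH_\sqz)$ by \eqref{eq27}. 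Since $\pi^L_{\wtd K}$ is a unital normal $*$-representation of this algebra, we get
\begin{align*}
\Ucl(g^+,g^-)\,\pi^L_{\wtd K}(x)\,\Ucl(g^+,g^-)^*=\pi^L_{\wtd K}(WxW^*)
\end{align*}
for every $x\in\fk A(\sqrt I,-\sqrt J)$, which is contained in this algebra by \eqref{eq70}.

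\textbf{Step 3 (the main point).} It remains to show that $W\fk A(\sqrt I,-\sqrt J)W^*=\fk A(\sqrt{g^+I},-\sqrt{g^-J})$. Set $h=\sqrt{g^+}\cdot(-\sqrt{g^-})\in\DiffS$. Since $\sqrt{g^+}$ is supported in $\sqrt K$ and $-\sqrt{g^-}$ in the disjoint interval $-\sqrt K$, $h$ acts as the pullback of $g^+$ along squaring on $\sqrt K$ and as the pullback of $g^-$ on $-\sqrt K$ (Def. \ref{lb126}). Hence $h(\sqrt I)=\sqrt{g^+I}$ and $h(-\sqrt J)=-\sqrt{g^-J}$. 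It follows that $h$ maps the two components $(\sqrt I)^\vee,(-\sqrt J)^\vee$ of the complement of $\sqrt I\cup-\sqrt J$ onto the corresponding components of the complement of $\sqrt{g^+I}\cup-\sqrt{g^-J}$. Conformal covariance of $\MA$ (Def. \ref{lb2}(c)) gives $W\MA(L)'W^*=\MA(hL)'$ for each interval $L$, so taking the intersection of commutants yields the claim.

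The step needing the most care is this last bookkeeping. One must check that the square roots for $g^\pm\wtd I,g^\pm\wtd J$ are taken with the arg functions inherited from $\wtd K$, so that the squaring lifts used for $\sqrt{g^\pm}$ and for $\pm\sqrt{(\cdot)}$ match. The operator-algebraic part is immediate from Steps 1 and 2.
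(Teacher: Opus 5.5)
Your proposal is correct and follows essentially the same route as the paper. You rewrite $\Ucl(g^+,g^-)$ as $\pi^L_{\wtd K}\big(U_0(\sqrt{g^+}\cdot(-\sqrt{g^-}))\big)$ via \eqref{eq55}, \eqref{eq28} and \eqref{eq54}, then apply conformal covariance of $\MA$ to $\fk A(\sqrt I,-\sqrt J)$. This includes the same bookkeeping that all square roots are taken with respect to $\wtd K$, so that $\sqrt{g^+}\sqrt I=\sqrt{g^+I}$ and $(-\sqrt{g^-})(-\sqrt J)=-\sqrt{g^-J}$.
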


\begin{proof}
Assume that $O$ corresponds to $(\wtd I,\wtd J)$. Then $\wtd I,\wtd J\subset\wtd K$. Let $\sqrt{g^+},-\sqrt{g^-}$ be defined by $\wtd K$. By Rem. \ref{lb67},
\begin{align}
&\Ad_{\Ucl(g^+,g^-)}\big(\Bcl(O)\big)=\Ad_{\Ucl(g^+,g^-)}\Big(\pi^L_{\wtd K}\big(\fk A(\sqrt I,-\sqrt J)\big)\Big)\nonumber\\
=&\Ad_{\pi_{\sqz\boxtimes\ovl\sqz,K}(U_0(g^+)\otimes U_0(g^-))}\Big(\pi^L_{\wtd K}\big(\fk A(\sqrt I,-\sqrt J)\big)\Big)\nonumber
\end{align}
By \eqref{eq28}, the above equals
\begin{align*}
&\Ad_{\pi^L_{\wtd K}(\pi_{\sqz,\wtd K}(U_0(g^+)\otimes U_0(g^-)))}\Big(\pi^L_{\wtd K}\big(\fk A(\sqrt I,-\sqrt J)\big)\Big)\\
=&\pi^L_{\wtd K}\Big(\Ad_{\pi_{\sqz,\wtd K}(U_0(g^+)\otimes U_0(g^-))}\big(\fk A(\sqrt I,-\sqrt J)\big)\Big)\\
=&\pi^L_{\wtd K}\Big(\Ad_{U_0\big(\sqrt{g^+}\cdot\big(-\sqrt{g^-}\big)\big)}\big(\fk A(\sqrt I,-\sqrt J)\big)\Big)
\end{align*}
where Rem. \ref{lb67} is used in the last equality. Note that $\sqrt I\subset\sqrt K$, that $-\sqrt J\subset-\sqrt K$, and that $\sqrt{g^+}\cdot(-\sqrt{g^-})$ fixes points outside $\sqrt K\cup(-\sqrt K)$. 

Moreover, if $O_1:=(g^+,g^-)O$ corresponds to $(\wtd I_1,\wtd J_1)$, then we have $\wtd I_1,\wtd J_1\subset\wtd K$ and $g^+\wtd I=\wtd I_1$ and $g^-\wtd J=\wtd J_1$. Thus
\begin{align*}
\sqrt{g^+}\sqrt I=\sqrt{I_1}\qquad (-\sqrt{g^-})(-\sqrt J)=-\sqrt{J_1}
\end{align*}
Therefore, by the conformal covariance of $\MA$, we have
\begin{align*}
\Ad_{U_0\big(\sqrt{g^+}\cdot\big(-\sqrt{g^-}\big)\big)}\big(\fk A(\sqrt I,-\sqrt J)\big)=\fk A(\sqrt{I_1},-\sqrt{J_1})
\end{align*}
Inserting the right hand side above into $\pi^L_{\wtd K}$ yields $\Bcl(O_1)$. This finishes the proof.
\end{proof}

\begin{lm}\label{lb69}
Let $O\in\Opp$. Let $a,b>0$ such that the interval $(-a,b)$ is the connected component containing $0$ of the set
\begin{align*}
\big\{s\in\Rbb:(\varrho(s),\varrho(-s))O\Subset\Rpp\big\}
\end{align*}
Then for each $-a<s<b$ we have
\begin{align*}
\Ucl(\varrho_c(s),\varrho_c(-s))\cdot\Bcl(O)\cdot\Ucl(\varrho_c(s),\varrho_c(-s))^*=\Bcl((\varrho(s),\varrho(-s))O)
\end{align*}
\end{lm}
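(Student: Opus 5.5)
The plan is to reduce Lem.~\ref{lb69} to the local statement Lem.~\ref{lb68} by a continuity/connectedness argument along the path $s\mapsto(\varrho(s),\varrho(-s))O$. We treat $s\in[0,b)$; the case $s\in(-a,0]$ is symmetric. Set $O_s:=(\varrho(s),\varrho(-s))O$ and $V_s:=\Ucl(\varrho_c(s),\varrho_c(-s))$, so that $s\mapsto V_s$ is a one-parameter unitary group on $\Hcl$. Let $S$ be the set of $s\in[0,b)$ with $V_s\Bcl(O)V_s^*=\Bcl(O_s)$. Since $s\mapsto V_s$ is a group homomorphism, it suffices to prove a local step: for each $s_0\in[0,b)$ there is $\eps>0$ such that
\begin{align*}
V_r\Bcl(O_{s_0})V_r^*=\Bcl(O_{s_0+r})\qquad\text{for all }|r|<\eps,\ s_0+r\in[0,b).
\end{align*}
From this local step, $0\in S$ and composition give $[0,s_0+\eps)\subset S$ whenever $[0,s_0]\subset S$. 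A compactness argument on $[0,s]$ for any $s<b$ then yields $S=[0,b)$.

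For the local step, we fix $s_0$. Then $O_{s_0}\Subset\Rpp$, and by Rem.~\ref{lb56} it lies in a standard double cone. The aim is to choose this standard double cone with some room. Concretely, if $O_{s_0}$ corresponds to $(\wtd I,\wtd J)$ satisfying \eqref{eq26}, then $\wtd I\cup\wtd J$ (viewed in $\Rbb$) has diameter $<2\pi$. We pick $\wtd K$ containing the closures of $\wtd I$ and $\wtd J$, with diameter still $<2\pi$. For small $|r|$, the shifted intervals $\varrho(r)\wtd I$ and $\varrho(-r)\wtd J$ still lie in $\wtd K$.

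We would then like to replace the rotations $\varrho(\pm r)$ by elements $g^\pm\in\Gc(K)$ that agree with $\varrho_c(\pm r)$ on a neighborhood of the relevant intervals. The natural approach is to take $g^+=\varrho_c(r)h^+$, where $h^+$ is a correction supported away from $\ovl{\wtd I}$, and similarly for $g^-$. This is where the main difficulty lies: $\Bcl(O_{s_0})$ is not a priori localized in $\pi^\pm$ of $\MA(I),\MA(J)$, so we cannot simply say that unitaries supported outside $I,J$ commute with $\Bcl(O_{s_0})$.

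To get around this, we would use the commutant description instead. By construction, $\Bcl(O_{s_0})=\pi^L_{\wtd K}(\fk A(\sqrt I,-\sqrt J))$. A unitary $\Ucl(k^+,k^-)$ with $k^+\in\Gc(I_0)$ and $k^-\in\Gc(J_0)$, where $I_0\subset K\setminus\ovl I$ and $J_0\subset K\setminus\ovl J$, equals $\pi^L_{\wtd K}\big(U_0(\sqrt{k^+}\cdot(-\sqrt{k^-}))\big)$ by the computation in Lem.~\ref{lb68}. The diffeomorphism $\sqrt{k^+}\cdot(-\sqrt{k^-})$ is supported in $\sqrt{I_0}\cup(-\sqrt{J_0})$, which is disjoint from $\sqrt I\cup(-\sqrt J)$. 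Its image under $U_0$ should therefore lie in $\MA(\sqrt I^\vee)\vee\MA((-\sqrt J)^\vee)$, where the complementary components are chosen suitably, and so it commutes with $\fk A(\sqrt I,-\sqrt J)$. Checking that these complementary arcs match the $I^\vee,J^\vee$ of Def.~\ref{lb121} is the delicate point; we expect it to work because $\wtd K$ has diameter $<2\pi$.

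Granting this, we write $(\varrho_c(r),\varrho_c(-r))=(g^+,g^-)(k^+,k^-)^{-1}$ in a neighborhood, where $g^\pm\in\Gc(K)$. Up to a central phase, which does not affect $\Ad$, we have $V_r=\Ucl(g^+,g^-)\Ucl(k^+,k^-)^{*}$. Since $\Ucl(k^+,k^-)$ commutes with $\Bcl(O_{s_0})$, Lem.~\ref{lb68} gives
\begin{align*}
V_r\Bcl(O_{s_0})V_r^*=\Bcl((g^+,g^-)O_{s_0})=\Bcl(O_{s_0+r}).
\end{align*}
The existence of this factorization follows from standard local decompositions of small rotations into pieces supported in prescribed intervals, in the spirit of Lem.~\ref{lb14}, and from continuity in $r$.
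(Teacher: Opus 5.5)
Your outer structure is the same as the paper's: reduce to a local step at each $s_0$, chain the steps by compactness, and handle the motion inside a standard double cone by Lem.~\ref{lb68}. The local step as you set it up has a genuine gap, though: the factorization it relies on cannot exist.

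By Rem.~\ref{lb129}, every element of $\Gc(K)$ fixes every lift of $K'$ pointwise. The same holds for $k^\pm\in\Gc(I_0),\Gc(J_0)$, since $I_0,J_0\subset K$. Hence any product $(g^+,g^-)(k^+,k^-)^{-1}$ of such elements fixes the lifts of $K'$, whereas $\varrho(\pm r)$ translates every point by $\pm r$. Lem.~\ref{lb14} does not rescue this, because it needs a family of intervals covering $\Sbb^1$.

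Put differently, if $g^+\in\Gc(K)$ agrees with $\varrho_c(r)$ near $\ovl I$, then the remainder $(g^+)^{-1}\varrho_c(r)$ equals the rotation on $K'$. It is therefore localized in $\Sbb^1\setminus\ovl I$, not in $K\setminus\ovl I$. If you only match on a neighborhood of $I,J$, then $V_r$ and $\Ucl(g^+,g^-)\Ucl(k^+,k^-)^*$ differ by precisely such a remainder. Your commutation argument runs entirely through $\pi^L_{\wtd K}$ on $\End_{\Maa(\wtd K')}(\MH_\sqz)=\fk A(\sqrt K,-\sqrt K)$. It can never reach a unitary localized outside $K$, and that is exactly what is left over.

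The missing ingredient is the \emph{right} action on $\Hcl=\MH_\sqz\boxtimes\MH_{\ovl\sqz}$ together with Thm.~\ref{lb47}. The paper's proof goes as follows.
\begin{itemize}
\item Pick $\wtd K_1\Subset\wtd K_2\Subset\wtd K_3$ with $\wtd I,\wtd J$ and their small rotations inside $\wtd K_1$. Take a $2\pi$-periodic cutoff $f$ equal to $1$ on $\wtd K_2$ and vanishing off $\wtd K_3$, and let $\gamma$ be the flow of $f\partial_u$.
\item Then $\gamma_c(\pm s)\in\Gc(K_3)$ and $(\gamma(s),\gamma(-s))O=(\varrho(s),\varrho(-s))O$, so Lem.~\ref{lb68} handles $\Ucl(\gamma_c(s),\gamma_c(-s))$.
\item The remainder $g^\pm=\gamma_c(\pm s)^{-1}\varrho_c(\pm s)$ fixes $\wtd K_1$ pointwise, i.e.\ lies in $\Gc(K_1')$. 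This matches the covering requirement, since $K_3\cup K_1'=\Sbb^1$.
\item By \eqref{eq55} and \eqref{eq30}, $\Ucl(g^+,g^-)=\pi^R_{\wtd K_1'}\big(\pi_{\ovl\sqz,\wtd K_1'}(U_0(g^+)\otimes U_0(g^-))\big)$.
\item By Thm.~\ref{lb47}, this commutes with $\pi^L_{\wtd K_1}\big(\End_{\Maa(\wtd K_1')}(\MH_\sqz)\big)\supset\Bcl(O)$.
\end{itemize}

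Because the cutoff agrees with the rotation on all of $\wtd K_2$, no correction inside $K\setminus\ovl I$ is needed. Your ``delicate point'' is in fact fine: $\sqrt{I_0}$ lies in one component of the complement of $\ovl{\sqrt I}\cup\ovl{-\sqrt J}$, because $\sqrt K\cap(-\sqrt K)=\emptyset$. It simply becomes superfluous.
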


Note that $(\varrho(s),\varrho(-s))$ is the translation in the $x$-axis by $s$ units.

\begin{proof}
It suffices to prove the lemma in the special case where $s$ satisfies the extra condition $|s|\leq\eps$ for some $\eps>0$ (possibly depending on $O$). Indeed, once the lemma is established for such $s$, the general case follows by repeated application of this special case. The reduction of an arbitrary $s$ to a \textit{finite} composition of such sufficiently small translations follows from the Lebesgue number lemma.

Choose standard double cones $\wht O_1,\wht O_2,\wht O_3$ such that $O\Subset\wht O_1\Subset\wht O_2\Subset\wht O_3$. Let $\wht O_i,O$ correspond to $(\wtd K_i,\wtd K_i)$ and $(\wtd I,\wtd J)$ respectively. Then, by viewing $\Sbb^1=\Rbb/2\pi\Zbb$, there exists $\eps>0$ such that
\begin{align*}
\wtd K_1\Subset\wtd K_2\Subset\wtd K_3\qquad \ovl{\bigcup_{|s|\leq\eps}\varrho(s)\wtd I}\subset\wtd K_1\qquad \ovl{\bigcup_{|s|\leq\eps}\varrho(s)\wtd J}\subset\wtd K_1
\end{align*} 

We now consider two vector fields on $\Rbb$ as the universal cover of $\Sbb^1$. The first one is $\partial_u$, where $u$ denotes the standard coordinate of $\Rbb$. The second one is $f\partial_u$ where $f:\Rbb\rightarrow\Rbb$ is a smooth $2\pi$-periodic function which equals $1$ on $\wtd K_2$, and which vanishes on $E\setminus \wtd K_3$ where $E\subset\Rbb$ is any interval of length $2\pi$ containing $\wtd K_3$. Then, in view of the description \eqref{eq2} of $\SG$, the flow on $\Rbb$ generated by $\partial_u$ defines the rotation subgroup $\varrho$ of $\SG$, and the flow on $\Rbb$ generated by $f\partial_u$ defines a one-parameter subgroup $\gamma:\Rbb\rightarrow\SG$. We choose an arbitrary lift of $\gamma$ to $\gamma_c:\Rbb\rightarrow\Gc$.

By shrinking $\eps$, we have 
\begin{align*}
\rho(s)\wtd K_1\subset\wtd K_2\qquad\text{whenever } |s|\leq\eps
\end{align*}
We now fix $s$ satisfying $|s|\leq\eps$. Then
\begin{align*}
g^+:=\gamma_c(s)^{-1}\varrho_c(s)\in\Gc(K_1')\qquad g^-:=\gamma_c(-s)^{-1}\varrho_c(-s)\in\Gc(K_1')
\end{align*}
since $\gamma(\pm s)^{-1}\varrho(\pm s):\Rbb\rightarrow\Rbb$ fixes points of $\wtd K_1$, cf. Rem. \ref{lb129}. Thus, by Rem. \ref{lb67},
\begin{align*}
\Ucl(g^+,g^-)=\pi_{\sqz\boxtimes\ovl\sqz,K_1'}(U_0(g^+)\otimes U_0(g^-))\xlongequal{\eqref{eq30}}\pi^R_{\wtd K_1'}(\pi_{\ovl\sqz,\wtd K_1'}(U_0(g^+)\otimes U_0(g^-)))
\end{align*}
Since $\pi^L_{\wtd K_1}$ commutes with $\pi^R_{\wtd K_1'}$ in the sense of Thm. \ref{lb47}, the expression
\begin{align*}
\Ad_{\Ucl(g^+,g^-)}\big(\Bcl(O)\big)=\Ad_{\Ucl(g^+,g^-)}\Big(\pi^L_{\wtd K_1}\big(\fk A(\sqrt I,-\sqrt J)\big)\Big)
\end{align*}
equals $\pi^L_{\wtd K_1}\big(\fk A(\sqrt I,-\sqrt J)\big)$, and hence
\begin{align*}
\Ad_{\Ucl(g^+,g^-)}\big(\Bcl(O)\big)=\Bcl(O)
\end{align*}

On the other hand, since $\gamma(\pm s)\in\SG(K_3)$, we have $\gamma_c(\pm s)\in\Gc(K_3)$, and hence
\begin{align*}
\Ad_{\Ucl(\gamma_c(s),\gamma_c(-s))}\big(\Bcl(O)\big)=\Bcl((\gamma(s),\gamma(-s))O)=\Bcl((\varrho(s),\varrho(-s))O)
\end{align*}
by Lem. \ref{lb68}. Combining the above two relations together yields
\begin{align*}
\Ad_{\Ucl(\varrho_c(s),\varrho_c(-s))}\big(\Bcl(O)\big)=\Bcl((\varrho(s),\varrho(-s))O)
\end{align*}
for each $0\leq |s|\leq\eps$. 
\end{proof}

\begin{lm}\label{lb85}
Let $O\in\MO$. Then for each $s\in\Rbb$ we have
\begin{align*}
\Ucl(\varrho_c(s),\varrho_c(s))\cdot\Bcl(O)\cdot\Ucl(\varrho_c(s),\varrho_c(s))^*=\Bcl((\varrho(s),\varrho(s))O)
\end{align*}
\end{lm}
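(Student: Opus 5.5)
The plan is to repeat the localization argument of Lem. \ref{lb69}, now with the diagonal pair $(\varrho(s),\varrho(s))$, which is time translation by $s$. The key geometric point is that time translation does not change $x$-coordinates. Hence it maps $\SL_\pi$ onto itself, and it maps double cones $O\Subset\Rpp$ to double cones $\Subset\Rpp$. So, unlike Lem. \ref{lb69}, no obstruction appears from the line $\SL_\pi$, and we can work in $\Opp$ for all $s$.

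\emph{Step 1 (the case $O\in\Opp$, with $|s|$ small).} Fix $O\in\Opp$ corresponding to $(\wtd I,\wtd J)$. Choose standard double cones $\wht O_1\Subset\wht O_2\Subset\wht O_3$ containing $O$, corresponding to $(\wtd K_i,\wtd K_i)$. As in the proof of Lem. \ref{lb69}, take a $2\pi$-periodic cutoff $f$ equal to $1$ on $\wtd K_2$ and supported in $\wtd K_3$ within one period. Let $\gamma$ be the flow of $f\partial_u$, and choose a lift $\gamma_c$ to $\Gc$. For $|s|\leq\eps$ we have $\varrho(s)\wtd K_1\subset\wtd K_2$, and $\varrho(s)\wtd I,\varrho(s)\wtd J\subset\wtd K_1$.

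Set $h:=\gamma_c(s)^{-1}\varrho_c(s)\in\Gc(K_1')$. Then
\begin{align*}
\Ucl(\varrho_c(s),\varrho_c(s))=\Ucl(\gamma_c(s),\gamma_c(s))\,\Ucl(h,h).
\end{align*}
By Rem. \ref{lb67} and \eqref{eq30}, $\Ucl(h,h)=\pi^R_{\wtd K_1'}\big(\pi_{\ovl\sqz,\wtd K_1'}(U_0(h)\otimes U_0(h))\big)$. By Thm. \ref{lb47}, this operator commutes with $\Bcl(O)=\pi^L_{\wtd K_1}(\fk A(\sqrt I,-\sqrt J))$. On the other hand, $\gamma_c(s)\in\Gc(K_3)$, so Lem. \ref{lb68} (with $g^+=g^-=\gamma_c(s)$ and $\wht O_3$) gives
\begin{align*}
\Ad_{\Ucl(\gamma_c(s),\gamma_c(s))}\Bcl(O)=\Bcl((\gamma(s),\gamma(s))O)=\Bcl((\varrho(s),\varrho(s))O),
\end{align*}
since $\gamma(s)=\varrho(s)$ on $\wtd I\cup\wtd J\subset\wtd K_2$. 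Combining the two facts gives the claim for $|s|\leq\eps$, where $\eps$ depends on $O$.

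\emph{Step 2 (all $s$, with $O\in\Opp$).} Let $S$ be the set of $s$ for which the identity holds for every $O\in\Opp$. Since $(\varrho(s),\varrho(s))$ preserves $\Opp$ and $\Ucl$ is a representation, $S$ is closed under composition. This yields a group-like iteration, but $\eps$ depends on $O$, so the iteration needs a uniform step. For this I would use conformal covariance under the time translation itself. The translated cone $(\varrho(s_0),\varrho(s_0))O$ has the same shape as $O$ shifted in $t$, so the constructions in Step 1 can be translated along with it. Concretely, $\eps$ can be chosen to depend only on the $u^\pm$-widths and the $x$-position of $O$, and both are invariant under time translation. Iterating therefore reaches any $s$ in finitely many steps.

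\emph{Step 3 (the case $O\notin\Opp$).} Here we use \eqref{eq31}. Time translation maps $\{D\in\Opp: D\subset O\}$ bijectively onto $\{D\in\Opp: D\subset (\varrho(s),\varrho(s))O\}$, because it preserves $\SL_\pi$ and inclusion. Apply Step 2 to each $D$, and note that $\Ad$ of a unitary commutes with $\bigvee$; this gives the result.

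The main obstacle I expect is the uniformity in Step 2. I would resolve it by the observation above that $\eps$ can be chosen invariantly under time translation, since the whole configuration $\wht O_i$ and $f$ can be replaced by its time-translate without changing the relevant distances.
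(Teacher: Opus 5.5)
Your proposal is correct and follows the paper's route. The paper proves the case $O\in\Opp$ by the same cutoff-flow argument as Lem.~\ref{lb69}, adapted to the diagonal pair: it splits $\varrho_c(s)=\gamma_c(s)\,h$ with $h\in\Gc(K_1')$, handles the $h$-part by Thm.~\ref{lb47} and the $\gamma_c(s)$-part by Lem.~\ref{lb68}, and treats $O\notin\Opp$ via \eqref{eq31} exactly as in your Step 3. The one cosmetic difference is that you globalize in $s$ by observing that $\eps$ is invariant under time translation, where Lem.~\ref{lb69} instead appeals to the Lebesgue number lemma; both work.
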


Note that $(\varrho(s),\varrho(s))$ is the translation in the $t$-axis by $s$ units.

\begin{proof}
In the case $O\in\Opp$, this is similar to the proof of Lem. \ref{lb69}, and hence is omitted. The case $O\notin\Opp$ reduces easily to the case $O\in\Opp$ by Def. \ref{lb74}. 
\end{proof}

\subsection{Some additivity properties}

\begin{pp}\label{lb62}
Let $G\rightarrow\Aut(\MA)$ be a group homomorphism. Let $\MH_i\in\RepGA$ and $\wtd I\in\Jtd$. Let $(\wtd I_\alpha)_{\alpha\in\scr A}$ be a collection in $\Jtd$ such that $\wtd I_\alpha\subset\wtd I$ and $\bigcup_\alpha I_\alpha=I$. Then
\begin{align*}
\End_{\MA(\wtd I')}(\MH_i)=\bigvee_\alpha \End_{\MA(\wtd I_\alpha')}(\MH_i) 
\end{align*}
\end{pp}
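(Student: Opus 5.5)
The inclusion $\supset$ is immediate. Since $\wtd I_\alpha\subset\wtd I$, we have $\wtd I'\subset\wtd I_\alpha'$, so each $\End_{\MA(\wtd I_\alpha')}(\MH_i)$ is contained in $\End_{\MA(\wtd I')}(\MH_i)$. The content of the statement is therefore the reverse inclusion, and the plan is to reduce it to the additivity of $\MA$ by means of Prop. \ref{lb34}.

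Fix an $I_{\alpha_0}$-unitary vector $\xi_0\in\MH_i(I_{\alpha_0})$ for some index $\alpha_0$. By the isotony remark following the definition of unitary vectors, $\xi_0$ is also $I$-unitary, and by Cor. \ref{lb31} we have $L(\xi_0,\wtd I_{\alpha_0})=L(\xi_0,\wtd I)$. Let $V:=L(\xi_0,\wtd I)|_{\MH_0}$, which is a unitary $\MH_0\to\MH_i$. Applying Prop. \ref{lb34} with $\MH_k=\MH_i$ and $\psi_0=\xi_0$ gives
\begin{align*}
\End_{\MA(\wtd I')}(\MH_i)=V\MA(I)V^*.
\end{align*}
In other words, $\Ad_V$ carries $\MA(I)$ onto the algebra on the left-hand side.

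The next step is to show that $\End_{\MA(\wtd I_\alpha')}(\MH_i)\supset V\MA(I_\alpha)V^*$ for every $\alpha$, using the same $V$. For $x\in\MA(I_\alpha)$, Prop. \ref{lb22} and Prop. \ref{lb21} give $VxV^*=L(\xi_0,\wtd I)\pi_{\wtd I}(x)L(\xi_0,\wtd I)^*|_{\MH_i}$, and it remains to check that this operator intertwines $\MA(I_\alpha')$ along $\wtd I_\alpha'$. Two routes are available. The first is to verify the intertwining directly by locality: one moves $\pi_{\wtd I_\alpha'}(y)$ past the $L$-operators, which is allowed because $\xi_0$ is localized in $I$ and $\wtd I_\alpha'$ overlaps $\wtd I$ only partially. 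The second, which is cleaner, is to choose $\xi_0$ adapted to $I_\alpha$. For another index $\alpha$, let $\xi_\alpha$ be $I_\alpha$-unitary and set $V_\alpha=L(\xi_\alpha,\wtd I)|_{\MH_0}$. Then $V_\alpha^*V=L(\xi_\alpha,\wtd I)^*L(\xi_0,\wtd I)|_{\MH_0}$ is a unitary in $\End_{\MA(I')}(\MH_0)=\MA(I)$, hence lies in $\MA(I)$. Consequently
\begin{align*}
V\MA(I_\alpha)V^*=V_\alpha\,u\,\MA(I_\alpha)\,u^*V_\alpha^*\qquad\text{with } u=V_\alpha^*V\in\MA(I).
\end{align*}
This conjugation by $u$ is where the difficulty sits. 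I would instead apply Prop. \ref{lb34} for $\wtd I_\alpha$ to obtain $\End_{\MA(\wtd I_\alpha')}(\MH_i)=V_\alpha\MA(I_\alpha)V_\alpha^*$. Then I would observe that the algebra generated by the sets $V_\alpha\MA(I_\alpha)V_\alpha^*$ contains $V_\alpha V_\beta^*$ for overlapping indices: writing this as a product of elements localized in $I_\alpha\cup I_\beta$, it equals $L(\xi_\alpha,\wtd I)L(\xi_\beta,\wtd I)^*|_{\MH_i}$, which belongs to $\End_{\MA((\wtd I_\alpha\cup\wtd I_\beta)')}(\MH_i)$.

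The argument can be simplified by a reduction. By normality and a compactness argument, in the style of the Lebesgue-number argument used in Lem. \ref{lb69}, it suffices to treat the case of two intervals $\wtd I_1,\wtd I_2$ with $\wtd I_1\cup\wtd I_2=\wtd I$ and $I_1\cap I_2\neq\emptyset$. Indeed, $\End_{\MA(\wtd I')}(\MH_i)$ is generated by the increasing net $\End_{\MA(\wtd K')}(\MH_i)$ with $\wtd K\Subset\wtd I$, since this is $V\MA(K)V^*$ and we may use additivity of $\MA$. Each such $\wtd K$ is covered by finitely many of the $\wtd I_\alpha$, and an induction then reduces to two intervals. For two overlapping intervals, take $\xi_0$ to be $I_1\cap I_2$-unitary. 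It is then $I_1$- and $I_2$-unitary, and with the common $V=L(\xi_0,\wtd I)|_{\MH_0}=L(\xi_0,\wtd I_k)|_{\MH_0}$ (by isotony, Cor. \ref{lb31}), Prop. \ref{lb34} gives $\End_{\MA(\wtd I_k')}(\MH_i)=V\MA(I_k)V^*$ for $k=1,2$. Additivity of $\MA$ yields $\MA(I_1)\vee\MA(I_2)=\MA(I)$, and applying $\Ad_V$ finishes the proof.

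The main obstacle I expect is the bookkeeping in this reduction. One must make sure that the finite-cover induction keeps the unions connected, with nonempty pairwise overlaps, and that the arg functions stay compatible, i.e.\ that all intervals involved remain inside $\wtd I$ as arg-valued intervals.
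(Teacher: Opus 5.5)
Your final argument is correct and is essentially the paper's Steps 1--4. It takes a common $V=L(\xi_0,\wtd I)|_{\MH_0}$ with $\xi_0$ unitary in an interval contained in all the relevant $\wtd I_\alpha$, so that Prop.~\ref{lb34} reduces the claim to additivity of $\MA$; it then runs a finite induction on overlapping members and exhausts $\wtd I$ by intervals $\wtd K\Subset\wtd I$. You should delete the exploratory claim that $V\MA(I_\alpha)V^*\subset\End_{\MA(\wtd I_\alpha')}(\MH_i)$ for one fixed $V$ and all $\alpha$, together with its locality justification. That claim fails in general when $I_{\alpha_0}$ only partially overlaps $I_\alpha$: already for $\MH_i=\MH_0$, $V$ is a unitary of $\MA(I_{\alpha_0})$ that need not normalize $\MA(I_\alpha)$.
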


Clearly, the same relation holds with $\wtd I',\wtd I_\alpha'$ replaced by $\bpr\wtd I,\bpr\wtd I_\alpha$.

\begin{proof}
The relation $\supset$ is obvious. Let us prove $\subset$. Moreover, since the case $\MH_i=0$ is obvious, we assume $\MH_i\neq0$.

Step 1. We first consider the case in which there exists $\wtd I_0\in\Jtd$ such that $\wtd I_0\subset\wtd I_\alpha$ for all $\alpha$. Fix an $I_0$-unitary $\xi_0\in\MH_i(I_0)$, and let $V=L(\xi_0,\wtd I)|_{\MH_0}$. By Prop. \ref{lb34}, we have
\begin{align*}
\End_{\MA(\wtd I')}(\MH_i)=V\MA(I)V^*\qquad \End_{\MA(\wtd I'_\alpha)}(\MH_i)=V\MA(I_\alpha)V^*
\end{align*}
Therefore, the proposition follows from the additivity of $\MA$.

Step 2. We consider the case where $\scr A$ is a finite set, and prove the proposition by induction on $|\scr A|$. The case $|\scr A|=1$ is obvious, and the case $|\scr A|=n$ can be reduced to the case $|\scr A|=n-1$ by using Step 1 and the fact that at least two distinct members of $(I_\alpha)$ overlap (otherwise $I$ would be disconnected).

Step 3. We consider the case where $(\wtd I_\alpha)$ is an increasing sequence. But this case follows directly from Step 1.

Step 4. We consider the general case. Choose an increasing sequence $(\wtd K_n)$ in $\Jtd$ such that $\wtd K_n\subset \wtd I$ and $K_n\Subset I$ and $\bigcup_n K_n=I$. By Step 3, $\End_{\MA(\wtd I')}(\MH_i)$ is generated by all $\End_{\MA(\wtd K_n')}(\MH_i)$. Since $K_n$ has compact closure in $I$, there exists a finite set $\scr A_n\subset\scr A$ such that $K_n\subset\bigcup_{\alpha\in\scr A_n}I_\alpha$, and that each $I_\alpha$ intersects $K_n$ (so that $\bigcup_{\alpha\in\scr A_n}I_\alpha$ is an interval). By Step 2,  $\End_{\MA(\wtd K_n')}(\MH_i)$ is contained in the von Neumann algebra generated by $\End_{\MA(\wtd I_\alpha')}(\MH_i)$ for all $\alpha\in\scr A_n$. This proves the proposition.
\end{proof}

\begin{df}\label{lb70}
Let $O\in\Opp$. We consider the $xt$-coordinates. A subset $\Lambda\subset\Rpp$ is called an \textbf{arc between the top and bottom corners} of $O$ if the following property holds: By a translation and a (Euclidean) rotation so that the top, bottom, left, right corners of $O$ become the points $(0,b)$, $(a,0)$, $(0,0)$, $(a,b)$ with $a,b>0$, we have
\begin{align*}
\Lambda=\{(x,f(x)):0<x<a)\}
\end{align*}
where $f$ is a smooth function from the open interval $\{x:-\eps<x<a+\eps\}$ (where $\eps>0$) to $\Rbb$ satisfying
\begin{align*}
f(0)=b\qquad f(a)=0\qquad f'<0
\end{align*}
\end{df}

\begin{pp}\label{lb71}
Let $O\in\Opp$, and let $\Lambda$ be an arc between the top and bottom corners of $O$. Let $(O_\alpha)_{\alpha\in\scr A}$ be a collection in $\Opp$ such that $O_\alpha\subset O$ for each $\alpha$, and $\Lambda\subset\bigcup_\alpha O_\alpha$. Then
\begin{align}
\Bcl(O)=\bigvee_\alpha\Bcl(O_\alpha)
\end{align}
\end{pp}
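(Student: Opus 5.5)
The inclusion $\supset$ is immediate from isotony, so the work is to prove that $\Bcl(O)$ is contained in $\bigvee_\alpha\Bcl(O_\alpha)$. The plan is to reduce the statement to the additivity of the localized endomorphism algebras in Prop. \ref{lb62}, combined with the split-type structure $\fk A(\sqrt I,-\sqrt J)=\MA(\sqrt{I}^\vee)'\cap\MA(-\sqrt J^\vee)'$.

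\textbf{Step 1 (reduction to small diamonds).} By compactness of an arc slightly shortened at both ends, and by a limiting argument near the two corners, we may pass to a finite or increasing family. We then refine the covering. Each $O_\alpha$ meeting $\Lambda$ contains small double cones centered on points of $\Lambda$. The arc is spacelike: it is the graph of a decreasing function joining the top and bottom corners, i.e.\ in $u^\pm$ coordinates it is monotone in both $u^+$ and $u^-$. Using this, we replace the family by a chain $D_1,\dots,D_n$ of double cones along $\Lambda$ with the following properties. Say $D_k$ corresponds to $(\wtd I_k,\wtd J_k)$. Then the $\wtd I_k$ overlap consecutively and cover a sub-interval exhausting $\wtd I$. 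The $\wtd J_k$ overlap consecutively in the opposite order and exhaust $\wtd J$. By monotonicity of $\Lambda$, as $u^+$ increases along the chain, $u^-$ decreases. Everything lies in a fixed $\wtd K$ with $\wtd I,\wtd J\subset\wtd K$.

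\textbf{Step 2 (algebraic generation).} Working inside $\End_{\Maa(\wtd K')}(\MH_\sqz)$ and applying $\pi^L_{\wtd K}$, which is a normal faithful representation (Rem. \ref{lb91}), it suffices to show
\begin{align*}
\fk A(\sqrt I,-\sqrt J)=\bigvee_k\fk A(\sqrt{I_k},-\sqrt{J_k})
\end{align*}
on $\MH_0$. First treat two overlapping diamonds $D_1,D_2$ whose union's hull is a double cone $D$. Here $\fk A(\sqrt{I},-\sqrt J)$ is generated by $\MA(\sqrt I)$, $\MA(-\sqrt J)$ and charged intertwiners between the two intervals. Using the description of $\fk A$ as a two-interval relative commutant and the split property, we will write it as $V(\MA(\sqrt I)\vee\MA(-\sqrt J))''$-type objects. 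A cleaner route is to use Prop. \ref{lb34}-style descriptions: $\End_{\Maa(\wtd K')}(\MH_\sqz)$ restricted appropriately equals $L(\xi_0,\cdot)\,\Maa(\cdot)\,L(\xi_0,\cdot)^*$. A two-diamond overlap then reduces to ordinary additivity of $\MA$ together with Prop. \ref{lb62}. Induction over the chain then gives the finite case. The increasing-union case follows from Prop. \ref{lb62} Step 3 type continuity.

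\textbf{Main obstacle.} The hard part is the two-diamond generation lemma: showing that $\fk A(\sqrt{I_1\cup I_2},-\sqrt{J_1\cup J_2})$ is generated by the two smaller relative commutants. This is genuinely not local additivity; it uses that the diamonds overlap along the spacelike arc. Concretely, we would first handle the case where $D_1,D_2$ differ only in the $+$ direction ($J_1=J_2$). This follows from Prop. \ref{lb62} applied to the $+$-variable after conjugating by a unitary identifying the $-$ part, using split and Haag duality. The symmetric case $I_1=I_2$ is handled in the same way. A general overlapping pair along $\Lambda$ is obtained by combining one step of each kind through the intersection diamond $D_1\cap D_2$-hull. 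If the reduction to one-sided steps fails, the fallback is Lem. \ref{lb68}: use local conformal covariance to move diamonds along the arc and exploit that the generated algebra is covariant.
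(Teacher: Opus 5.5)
\textbf{There is a genuine gap, and it starts with the geometry.} An arc between the top and bottom corners in the sense of Def.~\ref{lb70} is \emph{timelike}, not spacelike. The function $f$ is decreasing in coordinates rotated by $45^\circ$, and the axes of those coordinates are the null directions, one of them reversed: the first coordinate is $b^--u^-$ and the second is $u^+-a^+$. Unwinding this, along $\Lambda$ from the bottom corner to the top corner, $u^+$ and $u^-$ \emph{both} increase. This must be so, since the top and bottom corners are timelike separated and no spacelike curve inside $O$ joins them.

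So the chain in your Step~1, with the $\wtd J_k$ overlapping ``in the opposite order'' to the $\wtd I_k$, is not what the hypothesis gives. Your reading turns the statement into a time-slice-type generation problem, which is why your two-diamond lemma looks genuinely hard.

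That lemma is also not proved. Prop.~\ref{lb62} only concerns $\End_{\MA(\wtd I')}(\MH_i)$ for a single arg-valued interval. Via \eqref{eq27} this covers only standard cones, i.e.\ $\fk A(\sqrt I,-\sqrt I)$. The algebra $\fk A(\sqrt I,-\sqrt J)$ with $I\neq J$ is not of that form. Your step ``conjugating by a unitary identifying the $-$ part'' is never specified. In this non-discrete setting there are also no ``charged intertwiners'' available to describe $\fk A(\sqrt I,-\sqrt J)$.

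\textbf{The missing idea} is the tool you list only as a fallback, applied once, globally. Let $(\wtd K,\wtd K)$ be a standard cone containing $\ovl O$. Since $\Lambda$ is the graph $u^+=h(u^-)$ of an increasing function $h:\wtd J\to\wtd I$, you can choose $g\in\SG(K)$ with $g=h$ on $\wtd J$, so that $g(\wtd J)=\wtd I$. Then $(1,g)$ maps $O$ to the standard cone $(\wtd I,\wtd I)$ and maps $\Lambda$ onto the vertical segment $u^+=u^-$. Lem.~\ref{lb68} transports the whole statement, including the $O_\alpha$, along this map.

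Now, around each $p\in\Lambda$, pick a standard cone $O_p=(\wtd I_p,\wtd I_p)$ contained in some $O_\alpha$. These satisfy $\wtd I_p\subset\wtd I$ and $\bigcup_pI_p=I$. So \eqref{eq27} and Prop.~\ref{lb62}, applied to the $\sigma$-twisted $\Maa$-module $\MH_\sqz$, give
\[
\fk A(\sqrt I,-\sqrt I)=\bigvee_p\fk A(\sqrt{I_p},-\sqrt{I_p}).
\]
Applying the normal representation $\pi^L_{\wtd I}$ to both sides yields $\Bcl(O)=\bigvee_p\Bcl(O_p)\subset\bigvee_\alpha\Bcl(O_\alpha)$. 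No chain, compactness argument, or two-diamond lemma is needed. Your reduction via the faithful normal map $\pi^L_{\wtd K}$ is fine, but it only helps once the algebras have been brought into the standard form where Prop.~\ref{lb62} applies.
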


In Thm. \ref{lb146}, this result will be generalized by dropping the assumption $O,O_\alpha\in\Opp$.

\begin{proof}
The relation $\supset$ is obvious. Let us prove $\subset$.

Choose a standard double cone corresponding to some $(\wtd K,\wtd K)$ and containing the closure of $O$. By using the function $f$ in Def. \ref{lb70}, one finds some $g\in\SG(K)$ such that $g(\wtd J)=\wtd I$, and that $(1,g)\Lambda$ is the vertical open interval between the top and bottom corners of the standard double cone $(1,g)O$. (This standard cone corresponds to $(\wtd I,\wtd I)$.) Therefore, by Lem. \ref{lb68}, one may assume at the beginning that $O$ is standard, and $\Lambda$ is the open interval between the top and bottom corners of $O$. In particular, $\Lambda$ lies in the $t$-axis.

For each $p\in\Lambda$, we choose a standard double cone $O_p\in\Opp$ containing $p$ and contained in $O_\alpha$ for some $\alpha$. Then it suffices to prove
\begin{align*}
\Bcl(O)=\bigvee_{p\in\Lambda}\Bcl(O_p)
\end{align*}
Let $(\wtd I,\wtd I)$ correspond to $O$ and $(\wtd I_p,\wtd I_p)$ correspond to $O_p$. Then $\wtd I_p\subset \wtd I$, and $\bigcup_p I_p=I$ by the fact that $\Lambda\subset\bigcup_p O_p$. By \eqref{eq27} and Prop. \ref{lb62}, we have
\begin{align*}
\fk A(\sqrt I,-\sqrt I)=\End_{\Maa(\wtd I')}(\MH_\sqz)=\bigvee_p \End_{\Maa(\wtd I_p')}(\MH_\sqz) =\bigvee_p\fk A(\sqrt{I_p},-\sqrt{I_p})
\end{align*}
Applying the normal representation $\pi^L_{\wtd I}(\cdots)|_{\MH_\sqz\boxtimes\MH_{\ovl\sqz}}$ to both sides yields the desired relation.
\end{proof}

Recall \eqref{eq29} for the meaning of $\SL_x$.

\begin{co}\label{lb75}
Let $O\in\MO$, let $E$ be a (possibly empty) finite subset of $[-\pi,\pi]$, and let $\SL_E=\bigcup_{x\in E}\SL_x$. Then for each collection $(O_\alpha)_{\alpha\in\fk A}$ in $\MO$ satisfying $O\setminus\SL_E=\bigcup_\alpha O_\alpha$, we have
\begin{align*}
\Bcl(O)=\bigvee_{\alpha\in\fk A}\Bcl(O_\alpha)
\end{align*}
\end{co}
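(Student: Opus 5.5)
The inclusion $\supset$ follows from isotony (Thm. \ref{lb72}-(1), already proved from Def. \ref{lb74}). For $\subset$, write $\MM:=\bigvee_\alpha\Bcl(O_\alpha)$. It suffices to show $\Bcl(D)\subset\MM$ for enough double cones $D\subset O$ to generate $\Bcl(O)$.

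\emph{Reduction to $\Opp$.} If $O\notin\Opp$, then by \eqref{eq31} $\Bcl(O)$ is generated by $\Bcl(D)$ for $D\in\Opp$ with $D\subset O$. Since $\SL_\pi=\SL_{-\pi}$, we may replace $E$ by $E\cup\{\pi\}$; the hypothesis $O\setminus\SL_E=\bigcup_\alpha O_\alpha$ then still gives a cover of $O\setminus\SL_{E\cup\{\pi\}}$ by the sets $O_\alpha\setminus\SL_\pi$. Each $D\in\Opp$ with $D\subset O$ satisfies $D\setminus\SL_E\subset\bigcup_\alpha(O_\alpha\cap D)$. It therefore suffices to prove the statement for $O\in\Opp$, with each $O_\alpha$ replaced by the double cones in $\Opp$ contained in $O_\alpha\cap D$. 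The intersection of two double cones is a double cone or empty, and $\Bcl$ of a subcone lies in $\Bcl(O_\alpha)$ by isotony.

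\emph{Case $O\in\Opp$.} Cover $O\setminus\SL_E$ by double cones $D_\beta\in\Opp$, each contained in some $O_\alpha$; this is possible since the $O_\alpha$ are open. The set $O\setminus\SL_E$ is the union of the finitely many ``slabs'' $O\cap\{x_{k}<x<x_{k+1}\}$, where the $x_k$ are consecutive points of $E$ meeting $O$. Choose an arc $\Lambda$ between the top and bottom corners of $O$ (Def. \ref{lb70}). $\Lambda$ is a graph over $x$ with $f'<0$, so it crosses each vertical line $\SL_x$ transversally at exactly one point. Then $\Lambda\setminus\SL_E$ is covered by the $D_\beta$, but the finitely many crossing points $\Lambda\cap\SL_E$ are not, so Prop. \ref{lb71} does not apply directly. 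I would fix this with a two-step trick.

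Near each crossing point $p$, take a small double cone $O_p\subset O$ containing $p$. Pick a second arc through $O_p$, tilted so that it lies in $O_p\setminus\SL_x$ on either side of $p$. Apply Prop. \ref{lb71} to $O_p$ itself using two overlapping smaller double cones, lying to the left and right of $\SL_x$, whose union contains an arc of $O_p$. Concretely, let $O_p$ have corners near $p$. Take $O_p^{\mathrm{L}}$ and $O_p^{\mathrm{R}}$ to be the double cones inside $O_p$ spanned by the left corner with the top and bottom corners, and by the right corner with the top and bottom corners. Their union misses only a vertical segment, so instead use an arc from top to bottom that bends left into $O_p^{\mathrm{L}}$ and then right into $O_p^{\mathrm{R}}$. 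This fails because an arc must cross $\SL_x$.

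The better approach is to use locality-free geometry. Choose the arc $\Lambda$ so that it avoids $\SL_E$ except at isolated points. Then replace Prop. \ref{lb71} by its proof: after the conformal map in that proof, $O$ is standard with $\Lambda$ vertical. The condition becomes $\bigcup_\beta I_\beta\supset I$ minus finitely many points. Here I would invoke Prop. \ref{lb62} with the intervals $I_\beta$ together with small intervals around the missing points. The main obstacle is exactly these finitely many missing points. To handle it, note that $\MA(I)=\MA(I\setminus F)$ for finite $F$ (strong additivity of conformal nets, cf. \cite{GL96}/diffeomorphism covariance). Transport this through $\End_{\Maa(\wtd I')}(\MH_\sqz)=\fk A(\sqrt I,-\sqrt I)$, which by Step 1 of Prop. \ref{lb62} is spatially isomorphic to $\Maa(I)$. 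This yields $\fk A(\sqrt I,-\sqrt I)=\bigvee_\beta\fk A(\sqrt{I_\beta},-\sqrt{I_\beta})$, and applying $\pi^L_{\wtd I}$ gives the claim.
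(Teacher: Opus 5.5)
\textbf{There is a genuine gap in the case $O\in\Opp$.} Your reduction from $O\notin\Opp$ to $O\in\Opp$ is fine and matches the second half of the paper's proof, and the inclusion $\supset$ is indeed just isotony.

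First, you misread Def.~\ref{lb70}. The coordinates there come from rotating the $xt$-plane so that the corners become $(0,b),(a,0),(0,0),(a,b)$. The new axes are therefore the null directions, so the condition $f'<0$ says that both $u^+$ and $u^-$ decrease along $\Lambda$. An arc is thus a \emph{timelike} curve from the top corner to the bottom corner. It is not a graph over the spatial coordinate $x$, and its $x$-coordinate need not be monotone. This misreading is what produces your ``missing points''.

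Second, your repair invokes strong additivity, $\MA(I)=\bigvee$ of the algebras of the components of $I\setminus F$. This is not among the properties in Rem.~\ref{lb1}, and it is not known to hold for arbitrary (possibly non-rational) conformal nets, which is exactly the generality of the paper.

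Even granting strong additivity, the transport step fails. Step~1 of Prop.~\ref{lb62} identifies $\End_{\Maa(\wtd I_\beta')}(\MH_\sqz)$ with $\Ad_V(\Maa(I_\beta))$ only when the unitary vector defining $V$ is localized in some $\wtd I_0$ contained in every $\wtd I_\beta$. That is impossible for intervals lying on both sides of a removed point.

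More fundamentally, what you would need is that $\Bcl(O)$ is generated by double cones lying above and below a point of the arc. This does not follow from chiral strong additivity. In general $\Bcl(O)$ is strictly larger than $\pi^+_{\sqz\boxtimes\ovl\sqz,I}(\MA(I))\vee\pi^-_{\sqz\boxtimes\ovl\sqz,I}(\MA(I))$, reflecting the two-interval inclusion $\MA(\sqrt I)\vee\MA(-\sqrt I)\subset\fk A(\sqrt I,-\sqrt I)$.

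\textbf{The missing idea} is to reduce to equilateral double cones, after which no point is ever missed:
\begin{enumerate}
\item By Prop.~\ref{lb71}, covering an arc of $O$ by small equilateral double cones shows that $\Bcl(O)$ is generated by the algebras $\Bcl(D)$ with $D\subset O$ equilateral.
\item For an equilateral $D$, the top and bottom corners have the same $x$-coordinate. So you can choose an arc of $D$ bowing slightly to one side of that vertical line. Since the arc is open, it does not contain the corners, and with a small enough bow it meets none of the finitely many lines $\SL_x$, $x\in E$.
\item This arc therefore lies in $D\setminus\SL_E\subset\bigcup_\alpha(D\cap O_\alpha)$. Prop.~\ref{lb71} applied to $D$ with this cover gives $\Bcl(D)\subset\bigvee_\alpha\Bcl(O_\alpha)$.
\end{enumerate}
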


\begin{proof}
The only nontrivial part is to prove $\subset$. We first consider the case $O\in\Opp$. By Prop. \ref{lb71}, $\Bcl(O)$ is generated by $\Bcl(D)$ for all equilateral double cones $D\subset O$. Let $D_\alpha=D\cap O_\alpha$, and ignore $\alpha$ if $D_\alpha=\emptyset$. Since $D$ is equilateral, there exists an arc $\Lambda$ between the top and bottom corners of $D$ such that $\Lambda\cap\SL_E=\emptyset$. Then $\Lambda\subset\bigcup_\alpha D_\alpha$, and hence $\Bcl(D)$ is generated by all $\Bcl(D_\alpha)$ due to Prop. \ref{lb71}. Thus $\Bcl(O)$ is generated by all $\Bcl(O_\alpha)$.

Next, assume $O\notin\Opp$. By Def. \ref{lb74}, $\Bcl(O)$ is generated by $\Bcl(D)$ for all $D\in\Opp$ satisfying $D\subset O$. By the above paragraph, $\Bcl(D)$ is generated by all $\Bcl(D_\alpha)$, where $D_\alpha:=D\cap O_\alpha$ can be viewed as in $\Opp$ because $D\in\Opp$. Thus, by the isotony in Thm. \ref{lb72}, $\Bcl(O)$ is again generated by all $\Bcl(O_\alpha)$.
\end{proof}

As an immediate application, we obtain the following improvement of Lem. \ref{lb69}.

\begin{lm}\label{lb86}
Let $O\subset\Rpp$ be a double cone. Let $a,b\geq0$ such that the interval $[-a,b]$ is the connected component containing $0$ of the set
\begin{align*}
\big\{s\in\Rbb:(\varrho(s),\varrho(-s))O\subset\Rpp\big\}
\end{align*}
Then for each $-a\leq s\leq b$ we have
\begin{align*}
\Ucl(\varrho_c(s),\varrho_c(-s))\cdot\Bcl(O)\cdot\Ucl(\varrho_c(s),\varrho_c(-s))^*=\Bcl((\varrho(s),\varrho(-s))O)
\end{align*}
\end{lm}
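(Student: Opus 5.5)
The idea is to deduce Lem. \ref{lb86} from Lem. \ref{lb69} by exhausting $O$ with double cones that are compactly contained in $\Rpp$, using the additivity of Cor. \ref{lb75}. The difference from Lem. \ref{lb69} is that $O$ (and its translates at the endpoints $s=-a,b$) may touch the line $\SL_\pi=\SL_{-\pi}$. Hence $O$ need not lie in $\Opp$, and its algebra is then defined through \eqref{eq31}.

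Fix $s\in[-a,b]$ and write $T_s=(\varrho(s),\varrho(-s))$. I first handle the case $s\in(-a,b)$ with $O\Subset\Rpp$; this is exactly Lem. \ref{lb69}. For the general case I would use the following approximation. Let $\mathscr D$ be the set of double cones $D\Subset O$, i.e. with closure contained in $O$. Each such $D$ lies in $\Opp$, since $O\subset\Rpp$. Moreover $O=\bigcup_{D\in\mathscr D}D$, so Cor. \ref{lb75} with $E=\emptyset$ (or Def. \ref{lb74}/Prop. \ref{lb71} directly) gives $\Bcl(O)=\bigvee_{D\in\mathscr D}\Bcl(D)$.

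The key geometric check is that for each $D\in\mathscr D$ and each $s\in[-a,b]$, the closure of $T_sD$ lies in $T_sO\subset\Rpp$, so $T_sD\in\Opp$. In addition, the whole segment $\{T_rD:r\in[0,s]\}$ (or $[s,0]$) consists of double cones compactly contained in $\Rpp$. The reason is that $T_r\ovl D\subset T_rO\subset\Rpp$ for every $r$ between $0$ and $s$, because $[-a,b]$ is an interval containing $0$ on which $T_rO\subset\Rpp$. Thus $s$ lies in the connected component containing $0$ of $\{r:T_rD\Subset\Rpp\}$, and Lem. \ref{lb69} applied to $D$ yields
\begin{align*}
\Ad_{\Ucl(\varrho_c(s),\varrho_c(-s))}\big(\Bcl(D)\big)=\Bcl(T_sD).
\end{align*}
Taking the von Neumann algebra generated over $D\in\mathscr D$ gives
\begin{align*}
\Ad_{\Ucl(\varrho_c(s),\varrho_c(-s))}\big(\Bcl(O)\big)=\bigvee_{D\in\mathscr D}\Bcl(T_sD).
\end{align*}
Since $\{T_sD:D\in\mathscr D\}$ covers $T_sO$, Cor. \ref{lb75} identifies the right-hand side with $\Bcl(T_sO)$. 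This works whether or not $T_sO\in\Opp$, because Cor. \ref{lb75} was proved for all $O\in\MO$.

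I expect the main subtlety to be bookkeeping rather than analysis. First, note that $O\subset\Rpp$ but possibly $O\notin\Opp$, so $\Bcl(O)$ must be read through Def. \ref{lb74}; the additivity in Cor. \ref{lb75} covers this. Second, verify carefully that compact containment $\ovl D\subset O$ survives translation, since $T_r$ is a homeomorphism of $\Rcl$ and $T_rO\subset\Rpp$ is open. Finally, make sure $\mathscr D$ covers $O$, which holds because any point of the open rectangle $O$ (in $u^\pm$ coordinates) has a small rectangular neighborhood with closure inside $O$.
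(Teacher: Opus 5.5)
Your proposal is correct and follows the paper's proof, which says only that the lemma is an immediate consequence of Lem.~\ref{lb69} and Cor.~\ref{lb75}. Your argument supplies the details behind that sentence: exhausting $O$ by double cones $D$ with $\ovl D\subset O$, noting that $s$ then lies in the open component for $D$ that Lem.~\ref{lb69} requires, and applying Cor.~\ref{lb75} with $E=\emptyset$ to both $O$ and $T_sO$.
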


\begin{proof}
This is an immediate consequence of Lem. \ref{lb69} and Cor. \ref{lb75}.
\end{proof}

\subsection{Locality}\label{lb80}

\begin{lm}\label{lb77}
Suppose that $O_1,O_2\in\Opp$ have spacelike separated closures. Then there exist finite collections $(O_{1,\alpha})_{\alpha\in\scr A}$ and $(O_{2,\beta})_{\beta\in\scr B}$ of double cones in $\Opp$ such that
\begin{align*}
O_1=\bigcup_{\alpha\in\scr A} O_{1,\alpha}\qquad O_2=\bigcup_{\beta\in\scr B} O_{2,\beta}
\end{align*}
and that for each $\alpha\in\scr A,\beta\in\scr B$, the sets $O_{1,\alpha}$ and $O_{2,\beta}$ are compactly contained in a common standard double cone $\wht O_{\alpha,\beta}$.
\end{lm}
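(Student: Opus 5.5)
We will argue geometrically in the $u^+u^-$-coordinates on $\Rpp$. Because $O_1,O_2\in\Opp$, each corresponds to a pair of bounded intervals in $\Rbb$ (viewed as arg-valued intervals) satisfying \eqref{eq26}. Write $O_1=K_1^+\times K_1^-$ and $O_2=K_2^+\times K_2^-$, where $K_i^\pm\Subset\Rbb$ and the diameter of $K_i^+\cup K_i^-$ is less than $2\pi$. By Rem. \ref{lb56}, a double cone $D\subset\Rpp$ is compactly contained in a standard double cone exactly when the closed convex hull of its two defining intervals has length $<2\pi$. Thus the whole problem is reduced to a statement about intervals in $\Rbb$: we must cut $O_1$ and $O_2$ into finitely many subrectangles so that, for every pair of pieces, the convex hull of all four defining intervals (two from each piece) has length strictly less than $2\pi$.

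\emph{Spacelike separation in the representation on $\Rpp$.} Every point of $\ovl{O_1}$ is spacelike to every point of $\ovl{O_2}$, and the closures are compact and connected. By Rem. \ref{lb76}, the relative position is therefore one of \eqref{eq32a} or \eqref{eq32b}, uniformly over all pairs of points, with strict inequalities. Suppose without loss of generality that \eqref{eq32a} holds, so $O_1$ lies to the left. Then $\sup K_1^+<\inf K_2^+$, $\sup K_2^-<\inf K_1^-$, and by compactness $\sup K_2^+-\inf K_1^+<2\pi$ and $\sup K_1^--\inf K_2^-<2\pi$. Since the $K$'s are bounded intervals, these give strict bounds on the spans of $K_1^+\cup K_2^+$ and of $K_1^-\cup K_2^-$.

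\emph{Subdivision.} The obstruction is the possible gap between the $+$-intervals and the $-$-intervals. For instance, $K_1^+\cup K_2^-$ might be far from $K_1^-\cup K_2^+$ in a way that makes the total hull exceed $2\pi$, even though each $O_i$ separately fits inside $\Rpp$. The fix is to choose $\delta>0$ smaller than all the slack appearing in the strict inequalities above and in \eqref{eq26} for $O_1,O_2$. We then subdivide each $K_i^\pm$ into finitely many overlapping open intervals of length $<\delta$, and set $O_{1,\alpha}$ and $O_{2,\beta}$ to be all the products of these pieces; these are again double cones in $\Opp$ covering $O_i$. For a piece $A^+\times A^-$ of $O_1$ and a piece $B^+\times B^-$ of $O_2$ we must check that $\ovl{A^+\cup A^-\cup B^+\cup B^-}$ has length $<2\pi$. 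The pairwise quantities $|a^+-a^-|$, $|b^+-b^-|$, $|a^+-b^+|$ and $|a^--b^-|$ (for points in the pieces) are each $<2\pi$ minus the slack. The cross terms $|a^+-b^-|$ and $|a^--b^+|$ can be controlled using the ordering $a^+<b^+$, $b^-<a^-$. The points $a^\pm,b^\pm$ then have the configuration $b^-<a^-$, $a^+<b^+$, and the hull is $[\min(a^+,b^-),\max(b^+,a^-)]$.

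\emph{Main obstacle.} The key step is to show $\max(b^+,a^-)-\min(a^+,b^-)<2\pi$ for points with $(a^+,a^-)\in O_1$ and $(b^+,b^-)\in O_2$. The two cases $b^+-b^-$ and $a^--a^+$ follow from \eqref{eq26} for $O_2$ and $O_1$ respectively, and the cases $b^+-a^+$ and $a^--b^-$ follow from spacelike separation. All four cases therefore hold pointwise with uniform slack. The issue is that the condition must hold for the full pieces rather than single points, so we expect to use that every point of $\ovl{A^+}\times\ovl{A^-}$ lies in $\ovl{O_1}$, where the strict inequalities persist by compactness. It is precisely to make the pieces small enough for this that the subdivision of size $\delta$ is needed; in fact one may even be able to take the pieces to be the whole $O_i$ after this pointwise check. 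Finally, taking $\wht O_{\alpha,\beta}$ to be the standard double cone of a slightly enlarged hull interval completes the argument.
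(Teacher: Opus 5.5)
Your proof is correct. The $\delta$-subdivision plays no role, and your argument proves more than the lemma asks.

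Once you know, uniformly via \eqref{eq32a}, that every point of $\ovl{K_1^+}$ lies below every point of $\ovl{K_2^+}$ and every point of $\ovl{K_2^-}$ lies below every point of $\ovl{K_1^-}$, the convex hull of $K_1^+\cup K_1^-\cup K_2^+\cup K_2^-$ is $[\min(\inf K_1^+,\inf K_2^-),\max(\sup K_2^+,\sup K_1^-)]$. Each of the four candidate lengths is a coordinate difference for points of $\ovl{O_1}$ and $\ovl{O_2}$:
\begin{itemize}
\item $\sup K_1^--\inf K_1^+$ comes from the left corner of $\ovl{O_1}$, and $\sup K_2^+-\inf K_2^-$ from the right corner of $\ovl{O_2}$. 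Both are $<2\pi$ because $O_i\Subset\Rpp$.
\item The other two are $<2\pi$ by spacelike separation of the closures.
\end{itemize}
So $O_1$ and $O_2$ themselves are compactly contained in one standard double cone, and one can take $|\scr A|=|\scr B|=1$. The ``obstruction'' in your Subdivision paragraph never occurs. Delete the sentence saying the subdivision is ``precisely needed'', and also the vague appeal to ``compactness''. Strictness comes from the hypotheses being on the closures and from $\ovl{O_i}\subset\Rpp$.

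The paper proves the same pointwise inequality. Its Step 1 shows that the diameter of $\{u_1^\pm,u_2^\pm\}$ is never attained by a cross pair $|u_1^+-u_2^-|$ or $|u_2^+-u_1^-|$, which is exactly your case analysis. It then globalizes with a compactness/Lebesgue-number argument: it covers $\ovl{O_1},\ovl{O_2}$ by small equilateral double cones and intersects them with $O_i$.

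Your route replaces that step with the observation that double cones are products $K^+\times K^-$. Hence the extremes of the hull are realized at points, in fact corners, of the closures, and the pointwise bound applies directly to whole pieces. This is shorter and gives a sharper conclusion; the paper's covering argument would only be needed for non-rectangular regions.

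One point to state explicitly: \eqref{eq32a} holds uniformly over all pairs because $x_1-x_2$ is continuous and nonvanishing (Rem. \ref{lb76}) on the connected set $\ovl{O_1}\times\ovl{O_2}$.
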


\begin{proof}
Step 1. We show that if $p_1,p_2\in\Rpp$ are spacelike separated, then $p_1,p_2$ are contained in a common standard double cone. In the $u^+u^-$-coordinates, write $p_1=(a_1,b_1)$ and $p_2=(a_2,b_2)$. Let $R$ be the diameter of $\{a_1,b_1,a_2,b_2\}$. If we can show that $R<2\pi$, then there exists an open interval in $\Rbb$ of length $<2\pi$ containing $a_1,b_1,a_2,b_2$. This interval corresponds to some $\wtd I\in\Jtd$, and $p_1,p_2$ belong to the double cone corresponding to $(\wtd I,\wtd I)$. 

Since the $x$-coordinates of $p_1,p_2$ are $>-\pi$ and $<\pi$, we have
\begin{align*}
|a_1-b_1|<2\pi\qquad |a_2-b_2|<2\pi
\end{align*}
By Rem. \ref{lb76}, we have
\begin{align*}
|a_1-a_2|<2\pi\qquad |b_1-b_2|<2\pi
\end{align*}
We claim that
\begin{align*}
|a_1-b_2|<R\qquad |a_2-b_1|<R
\end{align*}
which immediately implies $R<2\pi$. 

Suppose that $|a_1-b_2|=R$. Then either $a_1\leq \min\{a_2,b_1\}\leq\max\{a_2,b_1\}\leq b_2$, or $b_2\leq \min\{a_2,b_1\}\leq\max\{a_2,b_1\}\leq a_1$. In both cases, $a_2-a_1$ and $b_2-b_1$ have the same sign, contradicting Rem. \ref{lb76}. So we must have $|a_1-b_2|<R$. A similar argument shows $|a_2-b_1|<R$.\\[-1ex]

Step 2. Choose spacelike separated $D_1,D_2\in\Opp$ such that $O_1\Subset D_1$ and $O_2\Subset D_2$. By Step 1 and the compactness of $\ovl{O_2}$ (together with the Lebesgue number lemma), for each $p\in\ovl{O_1}$, there exists a double cone $D_{1,p}\subset D_1$ containing $p$, and some $\delta_p>0$, such that for each double cone $\Delta_2\subset D_2$ intersecting $\ovl{O_2}$ with diameter $<\delta_p$, then $D_{1,p}$ and $\Delta_2$ are compactly contained in a common standard double cone. 

Therefore, by the compactness of $\ovl{O_1}$ and $\ovl{O_2}$, there exist finite collections $(D_{1,\alpha})_{\alpha\in\scr A}$ and $(D_{2,\beta})_{\beta\in\SB}$ of equilateral double cones in $\Opp$ such that
\begin{align*}
\ovl{O_1}\subset\bigcup_{\alpha\in\scr A} D_{1,\alpha}\qquad \ovl{O_2}\subset\bigcup_{\beta\in\scr B} D_{2,\beta}
\end{align*}
and that for each $\alpha\in\scr A,\beta\in\scr B$, we have that $D_{1,\alpha}\subset D_1$, that $D_{2,\beta}\subset D_2$, and that $D_{1,\alpha}$ and $D_{2,\beta}$ are compactly contained in a common standard double cone $\wht D_{\alpha,\beta}$. The proof is finished by taking
\begin{align*}
O_{1,\alpha}=D_{1,\alpha}\cap O_1\qquad O_{2,\beta}=D_{2,\beta}\cap O_2
\end{align*}
and dropping those $O_{1,\alpha}$ and $O_{2,\beta}$ that are empty sets.
\end{proof} 


\begin{thm}\label{lb78}
Suppose that $O_1,O_2\in\MO$ are spacelike separated. Then $[\Bcl(O_1),\Bcl(O_2)]=0$.
\end{thm}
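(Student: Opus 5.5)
The argument reduces everything to a single local computation inside one standard double cone, using Lem. \ref{lb77} and additivity (Cor. \ref{lb75}).

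\textbf{Core case.} Suppose $O_1,O_2\in\Opp$ are spacelike separated and both compactly contained in a common standard double cone $\wht O$ corresponding to $(\wtd K,\wtd K)$. Let $O_1,O_2$ correspond to $(\wtd I_1,\wtd J_1)$ and $(\wtd I_2,\wtd J_2)$, all contained in $\wtd K$. By Def. \ref{lb64}, $\Bcl(O_k)=\pi^L_{\wtd K}(\fk A(\sqrt{I_k},-\sqrt{J_k}))$. Since $\pi^L_{\wtd K}$ is a representation of $\End_{\Maa(\wtd K')}(\MH_\sqz)$ (Thm. \ref{lb35}), it suffices to show that $\fk A(\sqrt{I_1},-\sqrt{J_1})$ and $\fk A(\sqrt{I_2},-\sqrt{J_2})$ commute on $\MH_0$.

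Spacelike separation inside $\Rpp$ means, by Rem. \ref{lb76}, that (say) $\wtd I_2$ lies clockwise-after $\wtd I_1$ and $\wtd J_1$ after $\wtd J_2$ within $\wtd K$. Passing to square roots, the arcs then sit in cyclic order $\sqrt{I_1},\sqrt{I_2}$ inside $\sqrt K$, and $-\sqrt{J_2},-\sqrt{J_1}$ inside $-\sqrt K$. Hence $\sqrt{I_2}\cup(-\sqrt{J_2})$ is contained in one of the two complementary components $I_1^\vee$ or $J_1^\vee$ of $\sqrt{I_1}\cup(-\sqrt{J_1})$; up to the appropriate closure or additivity, it is an interval region in the relative-commutant sense. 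More carefully, $\fk A(\sqrt{I_2},-\sqrt{J_2})\subset\MA(L)$, where $L$ is the interval obtained by joining $\sqrt{I_2}$ and $-\sqrt{J_2}$ through the region between them that avoids $\sqrt{I_1}\cup(-\sqrt{J_1})$. This uses Haag duality: $\fk A(\sqrt{I_2},-\sqrt{J_2})\subset \MA(L'^{\ldots})'$, since the commutant of the two complementary arcs is contained in $\MA$ of their complement's union.

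Concretely, I would choose $L$ to be the component of $\Sbb^1$ minus the two gaps of $(\sqrt{I_2},-\sqrt{J_2})$ that misses $\sqrt{I_1},-\sqrt{J_1}$, show $L$ lies in a gap $I_1^\vee$ of the first pair, and conclude $\fk A(\sqrt{I_2},-\sqrt{J_2})\subset \MA(L)\subset\MA(I_1^\vee)\subset\fk A(\sqrt{I_1},-\sqrt{J_1})'$. The first inclusion is the step I expect to be the main obstacle. The algebra $\fk A$ is a two-interval relative commutant, and it lies only in $\MA(\text{union of the two arcs and one gap})$ by Haag duality: $\MA(I^\vee)'=\MA(\text{complement})$. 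So I should take $L=\sqrt{I_2}\cup(\text{gap})\cup(-\sqrt{J_2})$ for the gap $J_2^\vee$ that contains neither $\sqrt{I_1}$ nor $-\sqrt{J_1}$. This gives $\fk A(\sqrt{I_2},-\sqrt{J_2})\subset\MA(J_2^\vee)'=\MA(\ovl{\Sbb^1\setminus J_2^\vee})^\circ$. That set must be disjoint from $\sqrt{I_1}\cup(-\sqrt{J_1})$ and hence contained in $I_1^\vee$ or $J_1^\vee$. Verifying this combinatorics from the ordering in Rem. \ref{lb76}, which requires the difference between $u$-coordinates to stay below $2\pi$ so that the square roots are ordered correctly, is the delicate point.

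\textbf{General case.} First take $O_1,O_2\in\Opp$ with spacelike separated closures. Apply Lem. \ref{lb77} to decompose them into pieces pairwise lying in common standard cones; each pair commutes by the core case, so the generated algebras commute by Prop. \ref{lb71}/Cor. \ref{lb75}. For arbitrary spacelike separated $O_1,O_2\in\MO$, exhaust each $O_k\setminus\SL_\pi$ by double cones in $\Opp$ with spacelike separated closures (shrinking slightly) and use Def. \ref{lb74} or Cor. \ref{lb75} to conclude. If some $O_k$ straddles $\SL_\pi$ in a way that cannot be handled this way, use Lem. \ref{lb86} to translate along the $x$-axis to a configuration avoiding $\SL_\pi$. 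Alternatively, cut along a second line $\SL_x$, $x\in E$, in Cor. \ref{lb75}, so that all pieces are in $\Opp$ and pairwise spacelike separated with compact closures.
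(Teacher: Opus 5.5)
Your overall route is the paper's: reduce via Def.~\ref{lb74}, Cor.~\ref{lb75} and Lem.~\ref{lb77} to $O_1,O_2\in\Opp$ inside one standard double cone $(\wtd K,\wtd K)$. Then prove that $\fk A(\sqrt{I_1},-\sqrt{J_1})$ and $\fk A(\sqrt{I_2},-\sqrt{J_2})$ commute on $\MH_0$, and apply $\pi^L_{\wtd K}$. Your general-case reduction is fine. The translation via Lem.~\ref{lb86} is not needed, since Def.~\ref{lb74} already writes $\Bcl(O)$ for $O\notin\Opp$ as generated by the $\Bcl(D)$ with $D\in\Opp$, $D\subset O$. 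Your ordering worry is also moot: all four arg-valued intervals lie in one $\wtd K$ of length $<2\pi$, so $\sqrt{\wtd I_1},\sqrt{\wtd I_2},-\sqrt{\wtd J_2},-\sqrt{\wtd J_1}$ are in clockwise order directly.

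The core step, however, is wrong as you finally wrote it. The gap of $\sqrt{I_2}\cup(-\sqrt{J_2})$ containing neither $\sqrt{I_1}$ nor $-\sqrt{J_1}$ is the one running clockwise from $\sqrt{I_2}$ to $-\sqrt{J_2}$. The interior of its complement contains both $\sqrt{I_1}$ and $-\sqrt{J_1}$. So the inclusion $\fk A(\sqrt{I_2},-\sqrt{J_2})\subset\MA(J_2^\vee)'$ is true but useless, and your claim that this set is disjoint from $\sqrt{I_1}\cup(-\sqrt{J_1})$ is false.

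The fix is to use the other gap $G$, running clockwise from $-\sqrt{J_2}$ to $\sqrt{I_2}$, which contains $-\sqrt{J_1}$ and $\sqrt{I_1}$. By definition of $\fk A$ and Haag duality, $\fk A(\sqrt{I_2},-\sqrt{J_2})\subset\MA(G)'=\MA(L)$, where $L=G'$ is exactly your interval joining $\sqrt{I_2}$ to $-\sqrt{J_2}$ through the inner gap. This $L$ lies in the gap $I_1^\vee$ of $\sqrt{I_1}\cup(-\sqrt{J_1})$ running clockwise from $\sqrt{I_1}$ to $-\sqrt{J_1}$. Hence $\fk A(\sqrt{I_2},-\sqrt{J_2})\subset\MA(I_1^\vee)\subset\fk A(\sqrt{I_1},-\sqrt{J_1})'$, and there is no delicate combinatorics left.

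This is the content of Lem.~\ref{lb73}. The paper first enlarges $\wtd I_1,\wtd J_1$ until $\inf\wtd I_1=\sup\wtd I_2$ and $\inf\wtd J_2=\sup\wtd J_1$. Then $I_1^\vee=L$, and the two algebras become exactly $\MA(E_3)\cap\MA(E_2)'$ and $\MA(E_2)\cap\MA(E_1)'$. Your corrected inclusion argument avoids that enlargement.
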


\begin{proof}
By Def. \ref{lb74}, it suffices to consider the case where $O_1,O_2\in\Opp$. By Cor. \ref{lb75} and Lem. \ref{lb77}, it suffices to assume that $O_1,O_2$ are contained in some standard double cone $\wht O$ corresponding to some $(\wtd K,\wtd K)$, and that $\ovl{O_1}$ and $\ovl{O_2}$ are spacelike separated. In fact, we will prove the commutativity without assuming spacelike separation of the closures; only the spacelike separation of $O_1,O_2$ will be used.

Let $O_1,O_2$ correspond to $(\wtd I_1,\wtd J_1)$ and $(\wtd I_2,\wtd J_2)$. Then $\wtd I_1,\wtd J_1,\wtd I_2,\wtd J_2$ are contained in $\wtd K$. By Rem. \ref{lb76} and a possible change of subscripts, we may assume that $\wtd I_2$ is clockwise to $\wtd I_1$ and $\wtd J_2$ is anticlockwise to $\wtd J_1$. Therefore, the following arg-valued intervals are in clockwise order:
\begin{align*}
\sqrt{\wtd I_1}\qquad\sqrt{\wtd I_2}\qquad -\sqrt{\wtd J_2}\qquad -\sqrt{\wtd J_1}
\end{align*}
Moreover, by enlarging $\wtd I_1$ and $\wtd J_1$ (and hence enlarging $O_1$), we assume in addition that by viewing $\Sbb^1=\Rbb/2\pi\Zbb$, we have $\inf\wtd I_1=\sup\wtd I_2$ and $\inf\wtd J_2=\sup\wtd J_1$. 

Applying the following Lem. \ref{lb73} to the case where $E_3\setminus \ovl{E_2}$ is the disjoint union $\sqrt{I_1}\sqcup -\sqrt{J_1}$, and $E_2\setminus \ovl{E_1}$ is the disjoint union $\sqrt{I_2}\sqcup -\sqrt{J_2}$, we see that $\fk A(\sqrt{I_1},-\sqrt{J_1})$ commutes with $\fk A(\sqrt{I_2},-\sqrt{J_2})$. Applying $\pi^L_{\wtd K}(\cdots)|_{\MH_\sqz\boxtimes\MH_{\ovl\sqz}}$ (and noting the Haag duality for $\MA$), we see that $\Bcl(O_1)$ commutes with $\Bcl(O_2)$.
\end{proof}

\begin{lm}\label{lb73}
Let $E_1,E_2,E_3\in\MJ$ such that $E_1\Subset E_2\Subset E_3$. Then $\MA(E_3)\cap\MA(E_2)'$ commutes with $\MA(E_2)\cap\MA(E_1)'$
\end{lm}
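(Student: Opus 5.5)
The plan is to derive the lemma from Haag duality and locality of $\MA$ (Rem. \ref{lb1}) by commutant manipulations. Since $E_1\Subset E_2\Subset E_3$, the open set $E_3\setminus\ovl{E_2}$ consists of two disjoint intervals $F^\ell,F^r$, and $E_2\setminus\ovl{E_1}$ of two disjoint intervals $G^\ell,G^r$. The first step is to get a precise description of the relative commutant $\MA(E_2)\cap\MA(E_1)'$ that shows it lives ``inside $E_2$'' in a form compatible with commuting against $\MA(E_3)\cap\MA(E_2)'$.

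First, $x\in\MA(E_3)\cap\MA(E_2)'$ commutes with $\MA(E_2)$, so by Haag duality it lies in $\MA(E_2')$; it also commutes with $\MA(E_3')$. Consequently $x$ commutes with $\MA(E_3')\vee\MA(E_2)$. Next, take $y\in\MA(E_2)\cap\MA(E_1)'$. The key observation is that $E_3'$ and $E_2$ are disjoint, so $\MA(E_3')$ commutes with $\MA(E_2)\ni y$; moreover $y\in\MA(E_1)'=\MA(E_1')$. It would therefore suffice to show
\begin{align*}
y\in \MA(E_3')\vee\MA(E_2)=\big(\MA(E_3)\cap\MA(E_2)'\big)',
\end{align*}
but this is trivial: $y\in\MA(E_2)$, hence $y$ commutes with $\MA(E_2)'\supset\MA(E_3)\cap\MA(E_2)'$. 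So in fact the argument reduces to a one-line observation: elements of $\MA(E_3)\cap\MA(E_2)'$ commute with all of $\MA(E_2)$, and $\MA(E_2)\cap\MA(E_1)'\subset\MA(E_2)$.

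So the proof I would write is: for $x\in\MA(E_3)\cap\MA(E_2)'$ and $y\in\MA(E_2)\cap\MA(E_1)'$, we have $x\in\MA(E_2)'$ and $y\in\MA(E_2)$, hence $xy=yx$. The hypotheses $E_1\Subset E_2\Subset E_3$ and Haag duality are not even needed; they only matter for how the lemma is applied in the proof of Thm. \ref{lb78} (where $E_3\setminus\ovl{E_2}=\sqrt{I_1}\sqcup-\sqrt{J_1}$ and one uses Haag duality to identify $\fk A(\sqrt{I_1},-\sqrt{J_1})$ with $\MA(E_3)\cap\MA(E_2)'$, and likewise for the other pair). There is no real obstacle; the only thing to double-check is that the definition of relative commutant is as stated, so no subtlety about intersections with $\MA(E_1')$ arises.
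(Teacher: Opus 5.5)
Your proof is correct and is essentially the paper's, which simply says ``This is obvious'': every element of $\MA(E_3)\cap\MA(E_2)'$ lies in $\MA(E_2)'$ and every element of $\MA(E_2)\cap\MA(E_1)'$ lies in $\MA(E_2)$, so they commute. You are also right that neither the compact inclusions nor Haag duality is needed here; Haag duality only enters when the lemma is applied in Thm.~\ref{lb78}, to identify $\fk A(\sqrt{I_1},-\sqrt{J_1})$ with $\MA(E_3)\cap\MA(E_2)'$.
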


\begin{proof}
This is obvious.
\end{proof}

\subsection{PCT symmetry and Haag duality for $O_\leftarrow$ and $O_\rightarrow$}\label{lb161}

\begin{df}\label{lb101}
Recall from Rem. \ref{lb81} that $\Theta_{\sqz,\ovl\sqz}$ is an involutive antiunitary anti-automorphism of the $\Maa$-module $\MH_\sqz\boxtimes\MH_{\ovl\sqz}$. Define
\begin{align*}
\pmb{\Jcl}=\Ucl^-(\varrho_c(-\pi))\cdot\Theta_{\sqz,\ovl\sqz}\cdot \Ucl^-(\varrho_c(\pi))
\end{align*}
called the \textbf{PCT operator} of the closed CFT $(\Bcl,\MHcl)$. Due to the following lemma, we have
\begin{align*}
\Jcl=\Ucl^-(\varrho_c(-2\pi))\cdot\Theta_{\sqz,\ovl\sqz}=\Theta_{\sqz,\ovl\sqz}\cdot \Ucl^-(\varrho_c(2\pi))
\end{align*}
\end{df}

\begin{lm}\label{lb88}
For each $\pm\in\{+,-\}$ and $s\in\Rbb$, we have
\begin{align*}
\Ucl^\pm(\varrho_c(s))\Theta_{\sqz,\ovl\sqz}=\Theta_{\sqz,\ovl\sqz}\Ucl^\pm(\varrho_c(-s))
\end{align*}
\end{lm}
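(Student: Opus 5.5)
The plan is to reduce the lemma to the fact that $\Theta_{\sqz,\ovl\sqz}$ is an anti-morphism of $\Zbb_2$-crossed $\Maa$-modules (Thm. \ref{lb49}). Then apply Prop. \ref{lb42} to the conformal structures $U^\pm_\cl$ rather than to the diagonal one. Since $\MHcl=\MH_\sqz\boxtimes\MH_{\ovl\sqz}$ is untwisted (it lies in $\Rep^{\sigma\sigma^{-1}}(\Maa)=\Rep(\Maa)$), the representations $\pi^\pm_\cl$ of Def. \ref{lb83} are genuine $\MA$-modules. The anti-morphism property says that for each $\wtd I\in\Jtd$ and $X\in\Maa(I)$,
\begin{align*}
\Theta\,\pi_{\cl,\wtd I}(X)=\pi_{\cl,\rk\wtd I}(\fk J_{\Maa}X\fk J_{\Maa})\,\Theta
\end{align*}
where $\Theta=\Theta_{\sqz,\ovl\sqz}$.

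The first key step is to identify $\fk J_{\Maa}$ with $\fk J_\MA\otimes\fk J_\MA$. This holds because the modular objects of $\MA(\Sbb^1_+)\otimes\MA(\Sbb^1_+)$ with respect to $\Omega\otimes\Omega$ are tensor products of those of the factors. Taking $X=x\otimes 1$ and $X=1\otimes x$ then gives, for each $x\in\MA(I)$,
\begin{align*}
\Theta\,\pi^\pm_{\cl,I}(x)=\pi^\pm_{\cl,\rk I}(\fk J x\fk J)\,\Theta
\end{align*}
So $\Theta$ is an anti-morphism of the $\MA$-module $(\MHcl,\pi^\pm_\cl)$ to itself, for each sign separately, in the sense of Def. \ref{lb82}(1) with $G$ trivial.

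The second step applies Cor. \ref{lb84}, or more precisely the argument of Prop. \ref{lb42}, to the $\MA$-module $(\MHcl,\pi^\pm_\cl)$, whose conformal representation is $U^\pm_\cl$ by Def. \ref{lb53}. For $g\in\Gc(I)$, Thm. \ref{lb12} and \eqref{eq20} give
\begin{align*}
\Theta U^\pm_\cl(g)=\Theta\pi^\pm_{\cl,I}(U_0(g))=\pi^\pm_{\cl,\rk I}(U_0(\rk g\rk))\Theta=U^\pm_\cl(\rk g\rk)\Theta
\end{align*}
Lem. \ref{lb14} extends this to all of $\Gc$. Specializing to $g=\varrho_c(s)$ and using Exp. \ref{lb41}, which gives $\rk\varrho_c(s)\rk=\varrho_c(-s)$, we obtain $\Theta\Ucl^\pm(\varrho_c(s))=\Ucl^\pm(\varrho_c(-s))\Theta$. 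Replacing $s$ by $-s$ gives the statement of the lemma.

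The main obstacle is the first step, namely justifying that $\fk J_{\Maa}$ is the tensor product of the $\fk J_\MA$'s and that it therefore maps $x\otimes 1$ to $\fk Jx\fk J\otimes 1$. This requires the standard fact that Tomita--Takesaki data factorize over tensor products of cyclic-separating pairs: the $S$-operator factorizes on the algebraic tensor product, and the polar decomposition respects this. Everything after that is a direct copy of existing arguments.
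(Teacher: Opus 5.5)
Your proposal is correct and follows essentially the same route as the paper. The paper also observes that, because the modular conjugation of $\Maa(\Sbb^1_+)$ with respect to $\Omega\otimes\Omega$ is $\fk J_\MA\otimes\fk J_\MA$, the $\Maa$-anti-morphism $\Theta_{\sqz,\ovl\sqz}$ is an anti-automorphism of each $\MA$-module $(\Hcl,\pi^\pm_\cl)$ in the sense of Def.~\ref{lb82}, and then concludes by Cor.~\ref{lb84}. Your additional details — the computation $\fk J_{\Maa}(x\otimes 1)\fk J_{\Maa}=\fk Jx\fk J\otimes 1$ and the unwinding of Prop.~\ref{lb42} via Lem.~\ref{lb14} — only spell out steps the paper leaves implicit.
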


\begin{proof}
Since $\Theta_{\sqz,\ovl\sqz}$ is an anti-morphism of $\Maa$-modules, and since the modular conjugation for
\begin{align*}
\Maa(\Sbb^1_+)=\MA(\Sbb^1_+)\otimes\MA(\Sbb^1_+)
\end{align*}
and the cyclic separating vector $\Omega\otimes\Omega$ is $\fk J_\MA\otimes\fk J_\MA$, we see from Def. \ref{lb82} that $\Theta_{\sqz,\ovl\sqz}$ is an anti-automorphism of the $\MA$-module $(\MHcl,\pi_\cl^\pm)$ (where $\pi_\cl^\pm$ is defined in Def. \ref{lb83}). The desired relation then follows from Cor. \ref{lb84}.
\end{proof}

\begin{figure}[h]
	\centering
\begin{gather*}
\includegraphics[height=4cm]{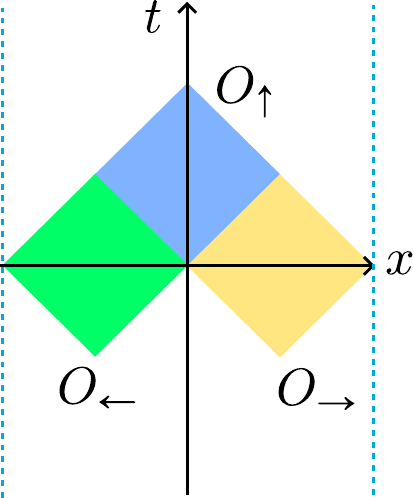}
   \end{gather*}
\caption{.~~The double cones $O_\uparrow,O_\rightarrow,O_\leftarrow$ in $\Rpp$}\label{fig2}
\end{figure}

\begin{thm}\label{lb89}
Let $\pmb{O_\uparrow}$, $\pmb{O_\rightarrow}$, and $\pmb{O_\leftarrow}$ be the equilateral double cones corresponding to
\begin{align*}
(\Std_+,\Std_+)\qquad (\Std_+,\Std_-)\qquad (\Std_-,\Std_+)
\end{align*}
respectively; see Fig. \ref{fig2}. Then
\begin{gather}\label{eq39}
\Ad_{\Jcl}\big(\Bcl(O_\rightarrow))=\Bcl(O_\rightarrow)'=\Bcl(O_\leftarrow)
\end{gather}
\end{thm}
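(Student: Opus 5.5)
The plan is to reduce \eqref{eq39} to the Tomita--Takesaki-type Cor.~\ref{lb87}, applied to $\MH_i=\MH_\sqz$ with $\wtd I=\Std_+$, $\wtd J=\Std_-$. Put
\begin{align*}
\MM:=\Bcl(O_\uparrow)=\pi^L_{\Std_+}\big(\End_{\Maa(\Std_+')}(\MH_\sqz)\big)\big|_{\Hcl}
\end{align*}
(cf.~\eqref{eq50}). Cor.~\ref{lb87} gives $\Theta\MM\Theta=\MM'$, where $\Theta=\Theta_{\sqz,\ovl\sqz}$.

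\textbf{Step 1.} Show that
\begin{align*}
\Bcl(O_\rightarrow)=\Ucl^-(\varrho_c(-\pi))\,\MM\,\Ucl^-(\varrho_c(\pi)).
\end{align*}
Geometrically, $(1,\varrho(-\pi))$ sends $(\Std_+,\Std_+)$ to $(\Std_+,\Std_-)$, i.e.\ $O_\uparrow$ to $O_\rightarrow$. Since $O_\rightarrow\notin\Opp$, global covariance is not yet available. I would factor
\begin{align*}
(1,\varrho(-\pi))=(\varrho(\pi/2),\varrho(-\pi/2))\circ(\varrho(-\pi/2),\varrho(-\pi/2)).
\end{align*}
The time translation by $-\pi/2$ is covered by Lem.~\ref{lb85}. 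It sends $O_\uparrow$ to the double cone with $x\in(-\pi/2,\pi/2)$ and $t\in(-\pi/2,\pi/2)$. The spatial translation by $\pi/2$ then keeps this cone inside $\Rpp$, ending exactly at $O_\rightarrow$, so it is covered by Lem.~\ref{lb86} with the closed endpoint $b=\pi/2$. This gives Step~1.

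\textbf{Step 2.} By Lem.~\ref{lb88}, $\Jcl$ is involutive, and the first equality of \eqref{eq39} follows:
\begin{align*}
\Jcl\Bcl(O_\rightarrow)\Jcl=\Ucl^-(\varrho_c(-\pi))\,\Theta\MM\Theta\,\Ucl^-(\varrho_c(\pi))=\Ucl^-(\varrho_c(-\pi))\,\MM'\,\Ucl^-(\varrho_c(\pi))=\Bcl(O_\rightarrow)'.
\end{align*}

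\textbf{Step 3.} For the second equality, the inclusion $\Bcl(O_\leftarrow)\subset\Bcl(O_\rightarrow)'$ is locality (Thm.~\ref{lb78}), since $O_\leftarrow=O_\rightarrow'$. It remains to show $\Ad_{\Jcl}\Bcl(O_\rightarrow)\subset\Bcl(O_\leftarrow)$; note that $(\rk,\rk)O_\rightarrow=O_\leftarrow$, as expected of PCT. By Def.~\ref{lb74} and Cor.~\ref{lb75}, $\Bcl(O_\rightarrow)$ is generated by $\Bcl(D)$ for small standard cones $D\Subset O_\rightarrow$, each in $\Opp$. It therefore suffices to prove $\Jcl\Bcl(D)\Jcl=\Bcl((\rk,\rk)D)$ for such $D$. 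I would transport $D$ back into $O_\uparrow$ by $\Ucl^-(\varrho_c(\pi))$, exactly as in Step~1, where it becomes some $D_0$ corresponding to $(\wtd K,\wtd K)$. Thm.~\ref{lb107} then gives
\begin{align*}
\Theta\,\pi^L_{\wtd K}(A)\,\Theta=\pi^R_{\rk\wtd K}(\Co A\Co^{-1}),
\end{align*}
an algebra localized near $\rk\wtd K$ from the right.

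\textbf{Main obstacle.} The hard part is to recognize the $\pi^R$-algebras appearing in Step~3, after conjugation by $\Ucl^-(\varrho_c(-\pi))$, as $\Bcl$ of the reflected cone. A naive translation would cross the line $\SL_\pi$, so this must be avoided. The route I would try first is through commutants. By Thm.~\ref{lb47} and Prop.~\ref{lb62}, $\pi^R_{\rk\wtd K}(\cdots)$ equals the commutant of $\pi^L$ of a complementary arg-interval. Together with Step~1 and Haag duality of $\MA$ for the relative commutants $\fk A(\cdot,\cdot)$, this should identify it with $\Bcl(D_1)$ for a cone $D_1\subset O_\leftarrow$ lying in $\Rpp$. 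The remaining translations would then be handled by Lem.~\ref{lb86}, plus the equilateral-cone additivity of Prop.~\ref{lb71} to pass from small cones to all of $O_\leftarrow$.
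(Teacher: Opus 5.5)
Your Steps 1 and 2 are correct and coincide with the first half of the paper's proof. The paper also derives $\Bcl(O_\rightarrow)=\Ad_{\Ucl^-(\varrho_c(-\pi))}\Bcl(O_\uparrow)$ and $\Bcl(O_\leftarrow)=\Ad_{\Ucl^+(\varrho_c(-\pi))}\Bcl(O_\uparrow)$ from Lem.~\ref{lb85} and \ref{lb86}, and then applies Cor.~\ref{lb87}.

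The gap is Step 3: the equality $\Bcl(O_\rightarrow)'=\Bcl(O_\leftarrow)$ is not established, and the route you sketch cannot close it. (Minor point: no standard cone lies inside $O_\rightarrow$, since standard cones are symmetric about the $t$-axis; you presumably mean small cones in $\Opp$.)

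Put $\MM=\Bcl(O_\uparrow)$ and $W=\Ucl^+(\varrho_c(-\pi))\Ucl^-(\varrho_c(\pi))$. The inclusion you want, $\Ad_{\Jcl}\Bcl(O_\rightarrow)\subset\Bcl(O_\leftarrow)$, is exactly $\MM'\subset\Ad_W\MM$. The algebras $\Theta\Bcl(D_0)\Theta$ generate $\MM'$, so your small-cone statement for all $D$ is equivalent to this inclusion; the reduction gains nothing. The inclusion is a Haag-duality statement for $O_\uparrow$, whose causal complement straddles $\SL_\pi$.

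Thm.~\ref{lb47} only tells you that $\MM'$ is a $\pi^R_{\Std_-}$-algebra. Nothing available relates $\pi^R$-algebras to $\Bcl$ of cones except such commutant relations. So identifying them with $\Bcl(D_1)$, $D_1\subset O_\leftarrow$, presupposes the duality you are proving. Alternatively, comparing $\pi^L_{\wtd K}$-algebras on the two sides of $\SL_\pi$ would require control of the $2\pi$-translation, which is not available (cf. Problem~\ref{lb162}).

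The missing idea is that no identification is needed: conjugation by $\Theta$ reverses the locality inclusion.
\begin{itemize}
\item Locality gives $\Bcl(O_\leftarrow)\subset\Bcl(O_\rightarrow)'$, which by Step 1 reads $\Ad_W\MM\subset\MM'$.
\item By Lem.~\ref{lb88} and $\Theta^2=1$, we have $\Theta W\Theta=W^{-1}$. We also have $\Theta\MM\Theta=\MM'$, hence $\Theta\MM'\Theta=\MM$.
\item Applying $\Ad_\Theta$ to the inclusion gives $\Ad_{W^{-1}}\MM'\subset\MM$, i.e. $\MM'\subset\Ad_W\MM$.
\item Hence $\Ad_W\MM=\MM'$, and conjugating by $\Ucl^-(\varrho_c(-\pi))$ yields
\begin{align*}
\Bcl(O_\rightarrow)'=\Ad_{\Ucl^-(\varrho_c(-\pi))}\MM'=\Ad_{\Ucl^+(\varrho_c(-\pi))}\MM=\Bcl(O_\leftarrow).
\end{align*}
\end{itemize}

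The paper runs a slightly longer version of this argument. It uses both locality inclusions and the two operators $\Ucl^\mp(\varrho_c(-\pi))\Theta\,\Ucl^\mp(\varrho_c(\pi))$, whose product is the $x$-translation by $2\pi$. That version also yields Lem.~\ref{lb92}, i.e. invariance of $\Bcl(O_\uparrow)$ under the $2\pi$-translation, which is what the later proof of global conformal covariance relies on.
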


Note that $O_\uparrow$ is standard, and hence compactly contained in $\Rpp$. Although $O_\rightarrow,O_\leftarrow$ are contained in $\Rpp$, they are not compactly contained in $\Rpp$.

\begin{proof}
Step 1. We write $\Theta_{\sqz,\ovl{\sqz}}=\Theta$ and $\Ucl^\pm(\varrho_c(s))=V^\pm(s)$ for simplicity. Clearly
\begin{align*}
(1,\varrho(\pi))O_\rightarrow=O_\uparrow=(\varrho(\pi),1)O_\leftarrow
\end{align*}
Therefore, by Lem. \ref{lb85} and \ref{lb86}, we have
\begin{align}\label{eq34}
\Bcl(O_\rightarrow)=\Ad_{V^-(-\pi)}\big(\Bcl(O_\uparrow)\big)\qquad \Bcl(O_\leftarrow)=\Ad_{V^+(-\pi)}\big(\Bcl(O_\uparrow)\big)
\end{align}
By Def. \ref{lb64}, we have
\begin{align*}
\Bcl(O_\uparrow)=\pi^L_{\Std_+}\big(\End_{\Maa(\Std_-)}(\MH_\sqz)\big)\big|_{\MHcl}
\end{align*}
It follows from Cor. \ref{lb87} that
\begin{align}\label{eq35}
\Ad_\Theta\big(\Bcl(O_\uparrow)\big)=\Bcl(O_\uparrow)'
\end{align}

Combining \eqref{eq35} with \eqref{eq34}, we obtain
\begin{subequations}\label{eq40}
\begin{gather}
\Ad_{V^-(-\pi)\Theta V^-(\pi)}\big(\Bcl(O_\rightarrow)\big)=\Bcl(O_\rightarrow)'\\
\Ad_{V^+(-\pi)\Theta V^+(\pi)}\big(\Bcl(O_\leftarrow)\big)=\Bcl(O_\leftarrow)'
\end{gather}
\end{subequations}
This proves the first half of \eqref{eq39}. \\[-1ex]

Step 2. By \eqref{eq40} and the locality of $\Bcl$ (cf. Thm. \ref{lb78}), we obtain that
\begin{subequations}\label{eq36}
\begin{gather}
\Bcl(O_\leftarrow)\subset \Ad_{V^-(-\pi)\Theta V^-(\pi)}\big(\Bcl(O_\rightarrow)\big)\label{eq36a}\\
\Bcl(O_\rightarrow)\subset \Ad_{V^+(-\pi)\Theta V^+(\pi)}\big(\Bcl(O_\leftarrow)\big)\label{eq36b}
\end{gather}
\end{subequations}
and that the inclusion in \eqref{eq36a} is an equality iff $\Bcl(O_\leftarrow)=\Bcl(O_\rightarrow)'$.

By Lem. \ref{lb88} and $\Theta^2=1$, and noting Rem. \ref{lb104} for the commutativity of $V^+(s)$ and $V^-(t)$, we have
\begin{gather*}
V^-(-\pi)\Theta V^-(\pi)V^+(-\pi)\Theta V^+(\pi)=V^+(2\pi)V^-(-2\pi)\\
V^+(-\pi)\Theta V^+(\pi)V^-(-\pi)\Theta V^-(\pi)=V^+(-2\pi)V^-(2\pi)
\end{gather*}
Combining this with \eqref{eq36}, we get
\begin{subequations}\label{eq37}
\begin{gather}
\Bcl(O_\leftarrow)\subset \Ad_{V^+(2\pi)V^-(-2\pi)}\Bcl(O_\leftarrow)\label{eq37a}\\
\Bcl(O_\rightarrow)\subset \Ad_{V^+(-2\pi)V^-(2\pi)}\Bcl(O_\rightarrow)
\end{gather}
\end{subequations}
and that if the inclusion in \eqref{eq37a} is an equality, then $\Bcl(O_\leftarrow)=\Bcl(O_\rightarrow)'$.

By \eqref{eq34}, we have
\begin{align*}
\Bcl(O_\rightarrow)=\Ad_{V^+(\pi)V^-(-\pi)}\Bcl(O_\leftarrow)
\end{align*}
Therefore, \eqref{eq37} is equivalent to
\begin{subequations}\label{eq38}
\begin{gather}
\Bcl(O_\leftarrow)\subset\Ad_{V^+(\pi)V^-(-\pi)}\Bcl(O_\rightarrow)\label{eq38a}\\
\Bcl(O_\rightarrow)\subset\Ad_{V^+(-\pi)V^-(\pi)}\Bcl(O_\leftarrow)
\end{gather}
\end{subequations}
Combining these two inclusion relations together, we get
\begin{align*}
\Bcl(O_\leftarrow)\subset \Bcl(O_\leftarrow)
\end{align*}
where the inclusion is an equality iff the two inclusions in \eqref{eq38} are both equalities. But the above inclusion is clearly an equality. Therefore, the inclusion in \eqref{eq38a} (hence the one in \eqref{eq37a}) is an equality, and thus $\Bcl(O_\leftarrow)=\Bcl(O_\rightarrow)'$. 
\end{proof}

\subsection{Conformal covariance}\label{lb100}

Recall Thm. \ref{lb89} for the meanings of $O_\rightarrow$ and $O_\leftarrow$.

\begin{lm}\label{lb92}
We have
\begin{align*}
\Ucl(\varrho_c(\pi),\varrho_c(-\pi))\cdot\Bcl(O_\rightarrow)\cdot\Ucl(\varrho_c(\pi),\varrho_c(-\pi))^*=\Bcl(O_\leftarrow)
\end{align*}
\end{lm}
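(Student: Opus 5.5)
This will follow directly from the proof of Thm. \ref{lb89}, reusing its notation $V^\pm(s)=\Ucl^\pm(\varrho_c(s))$ and $\Theta=\Theta_{\sqz,\ovl\sqz}$. Note that $\Ucl(\varrho_c(\pi),\varrho_c(-\pi))=V^+(\pi)V^-(-\pi)$, since the ranges of $V^+$ and $V^-$ commute (Rem. \ref{lb104}). The statement is therefore exactly that the inclusion \eqref{eq38a} is an equality, read in the reverse direction: \eqref{eq38a} says $\Bcl(O_\leftarrow)\subset\Ad_{V^+(\pi)V^-(-\pi)}\Bcl(O_\rightarrow)$.

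The final step of the proof of Thm. \ref{lb89} already shows that both inclusions in \eqref{eq38} are equalities, because composing them gives the trivially equal chain $\Bcl(O_\leftarrow)\subset\Bcl(O_\leftarrow)$. I will make this explicit. Write $W=V^+(\pi)V^-(-\pi)$, so that \eqref{eq38} reads
\[
\Bcl(O_\leftarrow)\subset\Ad_W\Bcl(O_\rightarrow)\qquad\text{and}\qquad \Bcl(O_\rightarrow)\subset\Ad_{W^*}\Bcl(O_\leftarrow).
\]
Applying $\Ad_W$ to the second inclusion gives $\Ad_W\Bcl(O_\rightarrow)\subset\Bcl(O_\leftarrow)$. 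Combined with the first inclusion, this yields $\Bcl(O_\leftarrow)=\Ad_W\Bcl(O_\rightarrow)$, which is the claim.

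It remains to confirm that \eqref{eq38} has been established as stated, and this is the only step needing care. The two inclusions in \eqref{eq38} come from \eqref{eq37} together with the relation $\Bcl(O_\rightarrow)=\Ad_{V^+(\pi)V^-(-\pi)}\Bcl(O_\leftarrow)$, which was deduced from \eqref{eq34}.

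That relation already looks like the lemma but with the direction reversed. I should double-check the direction, because by \eqref{eq34} we have $\Bcl(O_\rightarrow)=\Ad_{V^-(-\pi)V^+(\pi)}\Bcl(O_\leftarrow)$, using $\Ad_{V^-(-\pi)}\Ad_{V^+(\pi)}$ applied to $\Ad_{V^+(-\pi)}\Bcl(O_\uparrow)$. So the paper's relation sends $O_\leftarrow$ to $O_\rightarrow$ by $W$. The lemma claims instead that $W$ sends $O_\rightarrow$ to $O_\leftarrow$, which geometrically is the translation by $2\pi$ combined with the first relation.

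The content of the lemma is therefore $\Ad_{W^2}\Bcl(O_\leftarrow)=\Bcl(O_\leftarrow)$, equivalently $\Ad_{V^+(2\pi)V^-(-2\pi)}\Bcl(O_\leftarrow)=\Bcl(O_\leftarrow)$. This is exactly the equality case of \eqref{eq37a}, which the proof of Thm. \ref{lb89} established. Substituting $\Bcl(O_\leftarrow)=W^*\,\Bcl(O_\rightarrow)\,W$ into this equality and conjugating by $W$ gives the desired identity.
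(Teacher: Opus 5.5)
Your proposal is correct and matches the paper's proof, which simply notes that the last step of the proof of Thm.~\ref{lb89} makes the inclusion \eqref{eq38a} an equality; since $\Ucl(\varrho_c(\pi),\varrho_c(-\pi))=V^+(\pi)V^-(-\pi)$ by Def.~\ref{lb134}, that equality is exactly the lemma. Your further reformulation as $\Ad_{V^+(2\pi)V^-(-2\pi)}$-invariance of $\Bcl(O_\leftarrow)$, i.e.\ the equality case of \eqref{eq37a}, is a valid consistency check of the direction but is not needed.
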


\begin{proof}
At the end of the proof of Thm. \ref{lb89}, we have seen that the inclusion in \eqref{eq38a} is indeed an equality.
\end{proof}

\subsubsection{A computation of relative commutants}

In this subsection, we prove a result on relative commutants for $\Bcl$ that will be used in the proof of the conformal covariance of $\Bcl$. Recall that if $\mc M_1,\mc N_1$ are von Neumann algebras on a Hilbert space $\MH_1$, and if $\mc M_2,\mc N_2$ are von Neumann algebras on $\MH_2$, then $(\mc M_1\otimes\mc M_2)\cap(\mc N_1\otimes\mc N_2)=(\mc M_1\cap\mc N_1)\otimes(\mc M_2\cap\mc N_2)$, cf. \cite[III.4.5.9]{Bla06}.

\begin{lm}\label{lb90}
Let $E_1,E_2,E_3\in\MJ$ such that $E_1\Subset E_2\Subset E_3$. Then the relative commutant of $\MA(E_2)\cap\MA(E_1)'$ in $\MA(E_3)\cap\MA(E_1)'$ is $\MA(E_3)\cap\MA(E_2)'$.
\end{lm}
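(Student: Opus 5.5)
The inclusion $\supset$ is immediate. Since $E_1\subset E_2$, we have $\MA(E_2)'\subset\MA(E_1)'$, so $\MA(E_3)\cap\MA(E_2)'$ lies in $\MA(E_3)\cap\MA(E_1)'$. By Lem. \ref{lb73}, it also commutes with $\MA(E_2)\cap\MA(E_1)'$.

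For $\subset$, the plan is to identify $\MA(E_2)\cap\MA(E_1)'$ as a tensor product via the split property and then compute commutants on the tensor side. Only $E_1\Subset E_2$ will really be used; $E_3$ merely cuts everything down at the end.

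Since $E_1\Subset E_2$, the intervals $E_1$ and $E_2'$ have disjoint closures. By Rem. \ref{lb1} (split property, in the form with $\MH_1=\MH_2=\MH_0$ and the inclusion representations), there is a unitary $\Phi:\MH_0\rightarrow\MH_0\otimes\MH_0$ with
\begin{align*}
\Ad_\Phi(x)=x\otimes 1\quad(x\in\MA(E_1)),\qquad \Ad_\Phi(y)=1\otimes y\quad(y\in\MA(E_2')).
\end{align*}
Hence $\Ad_\Phi(\MA(E_1)\vee\MA(E_2'))=\MA(E_1)\otimes\MA(E_2')$. By Haag duality $\MA(E_2')'=\MA(E_2)$, and the commutant theorem gives $(\MM_1\otimes\MM_2)'=\MM_1'\otimes\MM_2'$. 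Therefore
\begin{align*}
\Ad_\Phi\big(\MA(E_2)\cap\MA(E_1)'\big)=\Ad_\Phi\big((\MA(E_1)\vee\MA(E_2'))'\big)=\MA(E_1)'\otimes\MA(E_2).
\end{align*}

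Now let $x\in\MA(E_3)\cap\MA(E_1)'$ commute with $\MA(E_2)\cap\MA(E_1)'$, and set $\wtd x=\Ad_\Phi(x)$. The commutation hypothesis gives
\begin{align*}
\wtd x\in(\MA(E_1)'\otimes\MA(E_2))'=\MA(E_1)\otimes\MA(E_2)'.
\end{align*}
Since $x\in\MA(E_1)'$ and $\Ad_\Phi(\MA(E_1))=\MA(E_1)\otimes 1$, we also have $\wtd x\in(\MA(E_1)\otimes\Cbb)'=\MA(E_1)'\otimes\fk L(\MH_0)$. Intersecting these two tensor products using the result \cite[III.4.5.9]{Bla06} recalled just before the lemma, and using that $\MA(E_1)$ is a factor (Rem. \ref{lb1}), we get
\begin{align*}
\wtd x\in(\MA(E_1)\cap\MA(E_1)')\otimes(\MA(E_2)'\cap\fk L(\MH_0))=\Cbb\otimes\MA(E_2').
\end{align*}
Pulling back through $\Phi$ gives $x\in\MA(E_2')=\MA(E_2)'$ by Haag duality. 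Together with $x\in\MA(E_3)$, this yields $x\in\MA(E_3)\cap\MA(E_2)'$.

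The only point requiring care is the identification $\Ad_\Phi\big(\MA(E_2)\cap\MA(E_1)'\big)=\MA(E_1)'\otimes\MA(E_2)$. This needs both the split isomorphism $\MA(E_1)\vee\MA(E_2')\simeq\MA(E_1)\otimes\MA(E_2')$ and Haag duality, and the rest is formal.
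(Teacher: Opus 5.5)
Your proof is correct, and it takes a leaner route than the paper.

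\textbf{The paper's route.} It applies the split property to both inclusions $\MA(E_1)\subset\MA(E_2)$ and $\MA(E_2)\subset\MA(E_3)$. This gives a three-fold tensor model $\Cbb\otimes\Cbb\otimes\SA_1\subset\Cbb\otimes\SA_2\otimes\fk L(\MK_1)\subset\SA_3\otimes\fk L(\MK_2)\otimes\fk L(\MK_1)$ of the whole chain. The paper then writes all three relative commutants in this model and reads off the answer using $\SA_1\cap\SA_1'=\Cbb$. This needs $E_2\Subset E_3$ but not Haag duality, and the same bookkeeping is reused for the $n$-step chain in Lem. \ref{lb106}.

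\textbf{Your route.} You split only $E_1\Subset E_2$, in the form $\MA(E_1)\vee\MA(E_2')\simeq\MA(E_1)\otimes\MA(E_2')$. Haag duality and the commutation theorem then identify $\MA(E_2)\cap\MA(E_1)'$ with $\MA(E_1)'\otimes\MA(E_2)$. This proves a stronger statement: the relative commutant of $\MA(E_2)\cap\MA(E_1)'$ in all of $\MA(E_1)'$ is $\MA(E_2)'$. So the hypothesis $E_2\Subset E_3$ is superfluous, and $E_3$ only enters by a final intersection.

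Both arguments rest on the same intersection formula for tensor products of von Neumann algebras and on the factoriality of $\MA(E_1)$.
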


\begin{proof}
By the split property (cf. Rem. \ref{lb1}), the inclusion of $\MA(E_2)\subset\MA(E_3)$ contains a type I intermediate factor, and hence is unitarily equivalent to
\begin{align*}
\Cbb\otimes\scr \MA(E_2)\subset\scr A_3\otimes\fk L(\MH_0)\qquad\curvearrowright\MK_3\otimes\MH_0
\end{align*}
where $\scr A_3$ is a von Neumann algebra acting on a Hilbert space $\MK_3$. Similarly, the inclusion $\MA(E_1)\subset\MA(E_2)$ is unitarily equivalent to
\begin{align*}
\Cbb\otimes\scr \MA(E_1)\subset\scr A_2\otimes\fk L(\MH_0)\qquad\curvearrowright\MK_2\otimes\MH_0
\end{align*} 
where $\scr A_2$ is a von Neumann algebra acting on a Hilbert space $\MK_2$. Set $\MK_1=\MH_0$ and $\scr A_1=\MA(E_1)$. Then the chain $\MA(E_1)\subset\MA(E_2)\subset\MA(E_3)$ is unitarily equivalent to
\begin{align*}
\Cbb\otimes\Cbb\otimes\SA_1\subset\Cbb\otimes\SA_2\otimes\fk L(\MK_1)\subset\SA_3\otimes\fk L(\MK_2)\otimes\fk L(\MK_1) \qquad\curvearrowright\MK_3\otimes\MK_2\otimes\MK_1
\end{align*}

It follows that
\begin{gather*}
\MA(E_2)\cap\MA(E_1)'=\Cbb\otimes\SA_2\otimes\SA_1'\\
\MA(E_3)\cap\MA(E_1)'=\SA_3\otimes\fk L(\MK_2)\otimes\SA_1'\\
\MA(E_3)\cap\MA(E_2)'=\SA_3\otimes\SA_2'\otimes\Cbb
\end{gather*}
Taking the commutant of the first equality, we get
\begin{align*}
\big(\MA(E_2)\cap\MA(E_1)'\big)'=\fk L(\MK_3)\otimes\SA_2'\otimes\SA_1
\end{align*}
Using the fact that $\SA_1=\MA(E_1)$ is a (type III) factor and hence $\SA_1\cap\SA_1'=\Cbb$, we obtain
\begin{align*}
\big(\MA(E_3)\cap\MA(E_1)'\big)\cap\big(\MA(E_2)\cap\MA(E_1)'\big)'=\SA_3\otimes\SA_2'\otimes \Cbb
\end{align*} 
which equals $\MA(E_3)\cap\MA(E_2)'$.
\end{proof}

\begin{pp}\label{lb93}
Let $O,O_1\subset\Rpp$ be double cones with $O_1\subset O$. Assume that $O$ and $O_1$ have the same right corner,
but have different top, bottom, and left corners. Note that $O_2:=O\cap O_1'$ is also a double cone. Then
\begin{align*}
\Bcl(O_1)=\Bcl(O)\cap\Bcl(O_2)'\qquad \Bcl(O_2)=\Bcl(O)\cap\Bcl(O_1)'
\end{align*} 
\end{pp}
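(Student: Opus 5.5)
The plan is to reduce both identities to the relative-commutant computation of Lem. \ref{lb90} for the chiral net $\MA$, transported to $\Hcl$ by the faithful normal representation $\pi^L_{\wtd K}$. I first carry this out when $O\in\Opp$, and then treat general $O\subset\Rpp$ separately; the second step is where I expect the real difficulty.

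\emph{The case $O\in\Opp$.} Let $O,O_1,O_2$ correspond to $(\wtd I,\wtd J)$, $(\wtd I_1,\wtd J_1)$, $(\wtd I_2,\wtd J_2)$ under Conv. \ref{lb122}. By Rem. \ref{lb56} all six arg-valued intervals lie in a common $\wtd K$.
\begin{itemize}
\item Since $O$ and $O_1$ share the right corner $(b^+,a^-)$, we get $\sup\wtd I_1=\sup\wtd I$ and $\inf\wtd J_1=\inf\wtd J$, while $\inf\wtd I_1>\inf\wtd I$ and $\sup\wtd J_1<\sup\wtd J$.
\item The double cone $O_2=O\cap O_1'$ corresponds to $\wtd I_2=(\inf\wtd I,\inf\wtd I_1)$ and $\wtd J_2=(\sup\wtd J_1,\sup\wtd J)$.
\item On $\Sbb^1$ the intervals $\sqrt{\wtd I}$ and $-\sqrt{\wtd J}$ lie in clockwise order inside $\sqrt K\cup(-\sqrt K)$. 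Let $E_1$ be the open gap from $\sup\sqrt{\wtd I}$ to $\inf(-\sqrt{\wtd J})$, and let $E_3$ be the interval $\sqrt I\cup\ovl{E_1}\cup(-\sqrt J)$.
\end{itemize}
Haag duality for $\MA$ (Rem. \ref{lb1}) then gives $\fk A(\sqrt I,-\sqrt J)=\MA(E_3)\cap\MA(E_1)'$.

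By the endpoint relations above, $O_1$ has the \emph{same} inner gap $E_1$. Hence $\fk A(\sqrt{I_1},-\sqrt{J_1})=\MA(E_2)\cap\MA(E_1)'$ with $E_2=\sqrt{I_1}\cup\ovl{E_1}\cup(-\sqrt{J_1})$. The gap between $\sqrt{I_2}$ and $-\sqrt{J_2}$ is $E_2$, so $\fk A(\sqrt{I_2},-\sqrt{J_2})=\MA(E_3)\cap\MA(E_2)'$. The hypothesis that the top, bottom and left corners differ gives $E_1\Subset E_2\Subset E_3$.

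Lem. \ref{lb90} then says that the relative commutant of $\fk A(\sqrt{I_1},-\sqrt{J_1})$ in $\fk A(\sqrt I,-\sqrt J)$ is $\fk A(\sqrt{I_2},-\sqrt{J_2})$. For the other identity, the same split tensor decomposition $\Cbb\otimes\SA_2\otimes\SA_1'$, $\SA_3\otimes\fk L(\MK_2)\otimes\SA_1'$, $\SA_3\otimes\SA_2'\otimes\Cbb$ from that proof shows that the relative commutant of $\SA_3\otimes\SA_2'\otimes\Cbb$ in $\SA_3\otimes\fk L(\MK_2)\otimes\SA_1'$ is $\Cbb\otimes\SA_2\otimes\SA_1'$. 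This uses that $\SA_3\subset\MA(E_3)$ and $\SA_3'\supset\MA(E_3)'$, so that $\SA_3\cap\SA_3'$ is trivial because $\MA(E_3)$ is a factor. Hence the relative commutant of $\fk A(\sqrt{I_2},-\sqrt{J_2})$ in $\fk A(\sqrt I,-\sqrt J)$ is $\fk A(\sqrt{I_1},-\sqrt{J_1})$.

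To transfer these identities to $\Hcl$, note that $\Bcl(O)\cap\Bcl(O_2)'$ lies in $\pi^L_{\wtd K}(\fk A(\sqrt I,-\sqrt J))$. It is therefore the relative commutant there, and the faithful normal representation $\pi^L_{\wtd K}$ (Rem. \ref{lb91}) is a $*$-isomorphism onto its image, so it carries relative commutants to relative commutants. Thus both equalities follow from the two statements about $\MA$.

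\emph{General $O\subset\Rpp$.} If the closure of $O$ misses one of the lines $\SL_{\pm\pi}$, I use Lem. \ref{lb86} (an $x$-translation staying inside $\Rpp$) together with Lem. \ref{lb85}. The aim is to move $O,O_1,O_2$ simultaneously into $\Opp$, apply the previous case, and translate back. Since the same unitary implements all three identities, both equalities are preserved. This works whenever a translate of $\ovl O$ fits inside $\Rpp$.

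\emph{Main obstacle.} The hard case is when $O$ touches both boundary lines, e.g.\ $O=O_\rightarrow$. The inclusions $\Bcl(O_1)\subset\Bcl(O)\cap\Bcl(O_2)'$ and $\Bcl(O_2)\subset\Bcl(O)\cap\Bcl(O_1)'$ follow from isotony and locality (Thm. \ref{lb72}, Thm. \ref{lb78}). For the reverse inclusions I would first try to write $O$ as an increasing union of $O^{(n)}\in\Opp$ sharing the right corner of $O$ (possible when that corner lies in $\Rpp$). With $O_i^{(n)}=O_i\cap O^{(n)}$, Cor. \ref{lb75} gives $\Bcl(O)=\bigvee_n\Bcl(O^{(n)})$. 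The difficulty is that relative commutants do not obviously commute with increasing unions. If this fails, I would pass instead to the picture of Thm. \ref{lb89}: write $\Bcl(O_\rightarrow)=\Ad_{V^-(-\pi)}\Bcl(O_\uparrow)$ via \eqref{eq34}, transport the whole configuration by $V^-(-\pi)$ into the standard cone $O_\uparrow$, and apply the compact case there. This step requires checking that Lem. \ref{lb85} and Lem. \ref{lb86} carry $O_1,O_2$ along compatibly, and that is where I expect the real work to lie.
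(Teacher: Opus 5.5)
Your treatment of the case $O\in\Opp$ is the paper's argument. You realize $\fk A(\sqrt I,-\sqrt J)$, $\fk A(\sqrt{I_1},-\sqrt{J_1})$ and $\fk A(\sqrt{I_2},-\sqrt{J_2})$ as the three relative commutants attached to a chain $E_1\Subset E_2\Subset E_3$, apply Lem.~\ref{lb90}, and pull back through the faithful representation $\pi^L_{\wtd K}$. The only difference is cosmetic. For the identity your chain does not give directly, the paper re-chooses the chain, taking $E_1$ to be the component of $\Sbb^1\setminus(\sqrt I\cup-\sqrt J)$ determined by the left corner (which $O$ shares with $O_2$). You instead read off the opposite relative commutant in the same split decomposition, and both work. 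There is one small slip: to get $\SA_3\cap\SA_3'=\Cbb$ you need $\SA_3'\otimes 1\subset\MA(E_3)'$, which holds because $\MA(E_3)\cong\SA_3\otimes\fk L(\MK_2)\otimes\fk L(\MK_1)$ is a factor. The inclusion you wrote points the other way.

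The real issue is your last paragraph. As written, the proof leaves a case open, but that case is empty.
\begin{itemize}
\item If $O$ is $K^+\times K^-$ in the $u^\pm$-coordinates, its $x$-extent is an interval of length $(|K^+|+|K^-|)/2<2\pi$. So the closure of a double cone $O\subset\Rpp$ can meet at most one of the boundary lines $x=\pi$, $x=-\pi$ of the strip.
\item Your example is mistaken: $O_\rightarrow$ has $x$-range $(0,\pi)$ and touches only $x=\pi$, at its right corner. Moreover, $(\varrho(-\pi/2),\varrho(\pi/2))O_\rightarrow$ is already a standard double cone.
\item Hence a small $x$-translation away from the touched side keeps $O$ inside $\Rpp$ along the whole path and lands it in $\Opp$.
\item Since $O_1,O_2\subset O$, the same translation is admissible for them in Lem.~\ref{lb86}. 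It also preserves spacelike complements and the shared right corner.
\end{itemize}
This is exactly the paper's one-line reduction (``By Lem.~\ref{lb86}, we may assume $O\Subset\Rpp$''). Adding this observation completes your proof, and the increasing-union and $V^-(-\pi)$ detours are unnecessary.
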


\begin{proof}
By Lem. \ref{lb86}, we may assume in addition that $O\Subset\Rpp$, and hence $O$ is contained in a standard double cone corresponding to some $(\wtd K,\wtd K)$. Let $O,O_i$ correspond to $(\wtd I,\wtd J)$ and $(\wtd I_i,\wtd J_i)$. Then $\wtd I_2$ is clockwise to $\wtd I_1$, $\wtd J_1$ is clockwise to $\wtd J_2$, $I_1\sqcup I_2$ is dense in $I$, and $J_1\sqcup J_2$ is dense in $J$. Applying Lem. \ref{lb90} to the case where $E_2\setminus \ovl E_1=\sqrt{I_2}\sqcup -\sqrt{J_2}$ and $E_3\setminus \ovl E_2=\sqrt{I_1}\sqcup -\sqrt{J_1}$, we obtain
\begin{align*}
\fk A(\sqrt{I_1},-\sqrt{J_1})=\fk A(\sqrt I,-\sqrt J)\cap\fk A(\sqrt{I_2},-\sqrt{J_2})'
\end{align*}
Since the representation $\pi^L_{\wtd K}(\cdots)|_{\Hcl}$ on $\fk A(\sqrt K,-\sqrt K)=\End_{\Maa(\wtd K')}(\MH_\sqz)$ is faithful (cf. Rem. \ref{lb91}), applying this representation to the above identity yields the first relation. The second relation can be proved in the same manner (by choosing different $E_1,E_2,E_3$).
\end{proof}

\subsubsection{Proof of the conformal covariance}

\begin{lm}\label{lb94}
Let $O\subset\Rpp$ be an equilateral double cone such that in the $xt$-coordinates, the right corner of $O$ is $(\pi,0)$, and the diameter $r$ of $O$ satisfies $r<\pi$. Then
\begin{align*}
\Ucl(\varrho_c(r),\varrho_c(-r))\cdot\Bcl(O)\cdot\Ucl(\varrho_c(r),\varrho_c(-r))^*=\Bcl((\varrho(r),\varrho(-r))O)
\end{align*}
\end{lm}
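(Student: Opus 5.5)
The idea is to avoid translating $O$ directly across the line $\SL_\pi$. Instead, I would express $\Bcl(O)$ through relative commutants and commutants of double cones whose translates by $r$ stay inside $\Rpp$. The two ingredients are the Haag duality $\Bcl(O_\rightarrow)'=\Bcl(O_\leftarrow)$ from Thm. \ref{lb89} and the local translation covariance of Lem. \ref{lb86}. Write $T=\Ucl(\varrho_c(r),\varrho_c(-r))$, and write $X+r$ for the translate of a region $X$ by $r$ units along the $x$-axis.

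\textbf{Geometry.} $O_\rightarrow$ is the diamond with left corner $(0,0)$ and right corner $(\pi,0)$, and $O_\leftarrow$ is the diamond with corners $(-\pi,0)$ and $(0,0)$. Since $O$ is equilateral with right corner $(\pi,0)$ and diameter $r<\pi$, we have $O\subset O_\rightarrow$, sharing its right corner. Set $O_2:=O_\rightarrow\cap O'$; this is the diamond in $O_\rightarrow$ with $x$-range $(0,\pi-r)$.

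\textbf{Step 1: express $\Bcl(O)$.} By Prop. \ref{lb93}, applied to $O\subset O_\rightarrow$ (both in $\Rpp$), we get $\Bcl(O)=\Bcl(O_\rightarrow)\cap\Bcl(O_2)'$. Using Thm. \ref{lb89} this becomes
\begin{align*}
\Bcl(O)=\big(\Bcl(O_\leftarrow)\vee\Bcl(O_2)\big)'.
\end{align*}

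\textbf{Step 2: translate by $r$.} The regions $O_\leftarrow+r$ (with $x$-range $(r-\pi,r)$) and $O_2+r$ (with $x$-range $(r,\pi)$) stay in $\Rpp$ throughout the translation. Hence Lem. \ref{lb86} gives
\begin{align*}
\Ad_T\Bcl(O)=\big(\Bcl(W)\vee\Bcl(O_2+r)\big)',\qquad W:=O_\leftarrow+r.
\end{align*}

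\textbf{Step 3: reorganize the generators by additivity.} By Cor. \ref{lb75} with $E=\{0\}$, $\Bcl(W)$ is generated by $\Bcl(W\cap O_\leftarrow)$ and $\Bcl(W\cap O_\rightarrow)$. Here $W\cap O_\rightarrow$ is the diamond with $x$-range $(0,r)$. Again by Cor. \ref{lb75}, now with $E=\{r\}$, the algebras $\Bcl(W\cap O_\rightarrow)$ and $\Bcl(O_2+r)$ together generate $\Bcl(O_\rightarrow)$. Combining these with Thm. \ref{lb89},
\begin{align*}
\Ad_T\Bcl(O)=\big(\Bcl(W\cap O_\leftarrow)\vee\Bcl(O_\rightarrow)\big)'=\Bcl(O_\leftarrow)\cap\Bcl(W\cap O_\leftarrow)'.
\end{align*}

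\textbf{Step 4: identify the result.} The translate $(\varrho(r),\varrho(-r))O=O+r$ is the diamond in $O_\leftarrow$ with $x$-range $(-\pi,r-\pi)$, sharing the left corner of $O_\leftarrow$. Its relative complement in $O_\leftarrow$ is exactly $W\cap O_\leftarrow$, the diamond with $x$-range $(r-\pi,0)$, which shares the right corner of $O_\leftarrow$. So Prop. \ref{lb93}, applied to the pair $W\cap O_\leftarrow\subset O_\leftarrow$ with common right corner, gives
\begin{align*}
\Bcl(O_\leftarrow)\cap\Bcl(W\cap O_\leftarrow)'=\Bcl(O+r).
\end{align*}
This is the claim.

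\textbf{Main obstacle.} The delicate part is the bookkeeping in Steps 2 and 3. One must check that each application of Lem. \ref{lb86} has the whole segment of translations $[0,r]$ inside the admissible interval. One must also check that the unions in Cor. \ref{lb75} really cover the relevant diamonds minus a single vertical line. In particular, $W\cap O_\leftarrow$, $W\cap O_\rightarrow$ and $O_2+r$ must be genuine double cones meeting exactly along the lines $\SL_0$ and $\SL_r$. I expect this to be a routine picture-check, since all the operator-algebraic input comes from Thm. \ref{lb89}, Prop. \ref{lb93} and Cor. \ref{lb75}.
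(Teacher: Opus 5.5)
Your Steps 1, 2 and 4 are correct. Prop.~\ref{lb93} applies to $O\subset O_\rightarrow$ and to $W\cap O_\leftarrow\subset O_\leftarrow$, since each pair has a common right corner. The translates of $O_\leftarrow$ and $O_2$ by $s\in[0,r]$ stay in $\Rpp$, so Lem.~\ref{lb86} applies.

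\textbf{The gap is in Step 3, and the covering check there is not routine: it fails.} Cor.~\ref{lb75} requires the family to cover \emph{all} of $W\setminus\SL_0$. But $W\cap O_\leftarrow$ and $W\cap O_\rightarrow$ are the diamonds over the bases $(r-\pi,0)$ and $(0,r)$ of the time-zero slice. They have heights $\pi-r$ and $r$, and their closures meet only in the point $(0,0)$. Their union therefore misses, for example, a neighbourhood of the top corner $(r-\frac\pi2,\frac\pi2)$ of $W$, which has height $\pi$. Likewise $(W\cap O_\rightarrow)\cup(O_2+r)$ misses the regions near the corners $(\frac\pi2,\pm\frac\pi2)$ of $O_\rightarrow$. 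Prop.~\ref{lb71} does not help either, since no arc from bottom to top corner is covered.

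What you actually need is a different additivity: a double cone is generated by two subcones that share its left and right corners respectively and meet in a single point. This is where the substance of the lemma lies. Given your Steps 1--2 and Thm.~\ref{lb89}, the half $\Bcl(O_\rightarrow)\subset\Bcl(W)\vee\Bcl(O_2+r)$ of your Step 3 identity is equivalent to $\Ad_T\Bcl(O)\subset\Bcl(O_\leftarrow)$. That says $O$, translated across $\SL_\pi$, lands in $O_\leftarrow$, which is exactly the point where the paper invokes Lem.~\ref{lb92}.

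The gap can be closed in either of two ways.

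\emph{Corner additivity from the split property.} In the tensor decomposition in the proof of Lem.~\ref{lb90}, $\SA_2$ is a factor, because $\MA(E_2)\simeq\SA_2\otimes\fk L(\MH_0)$. Hence $\MA(E_3)\cap\MA(E_1)'=(\MA(E_3)\cap\MA(E_2)')\vee(\MA(E_2)\cap\MA(E_1)')$. Transporting this through the faithful $\pi^L_{\wtd K}$, as in Prop.~\ref{lb93}, gives $\Bcl(D)=\Bcl(D_1)\vee\Bcl(D_2)$ in the setting of that proposition. For $O_\rightarrow\notin\Opp$, first translate slightly using Lem.~\ref{lb86}.

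\emph{Locality instead of joins.} $O+r$ is spacelike to both $W$ and $O_2+r$, so Step 2 gives $\Bcl(O+r)\subset\Ad_T\Bcl(O)$. Conversely, Step 4 with Thm.~\ref{lb89} gives $\Bcl(O+r)=\big(\Bcl(O_\rightarrow)\vee\Bcl(W\cap O_\leftarrow)\big)'$. Translate by $-r$ using Lem.~\ref{lb86}. Since $O$ is spacelike to $O_\rightarrow-r$ and to $(W\cap O_\leftarrow)-r$, you get $\Ad_T\Bcl(O)\subset\Bcl(O+r)$.

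The paper's proof is of the second kind, with one difference: it obtains $\Ad_T\Bcl(O)\subset\Bcl(O_\leftarrow)$ from Lem.~\ref{lb92}. It then gets $\Ad_T\Bcl(O)\subset\Bcl(W\cap O_\leftarrow)'$ from locality and Lem.~\ref{lb86}, concludes with Prop.~\ref{lb93}, and proves the reverse inclusion by the symmetric argument.
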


\begin{proof}
We write $\Ucl(\varrho_c(s),\varrho_c(-s))=V(s)$ and $\delta(s)=(\varrho(s),\varrho(-s))$ for simplicity. 
Since $\delta(r-\pi)O$ is contained in $O_\rightarrow$, Lem. \ref{lb86} implies
\begin{align*}
\Ad_{V(r-\pi)}\big(\Bcl(O)\big)\subset\Bcl(O_\rightarrow)
\end{align*}
and hence
\begin{align*}
\Ad_{V(r)}\big(\Bcl(O)\big)= \Ad_{V(\pi)}\Ad_{V(r-\pi)}\big(\Bcl(O)\big)\subset\Ad_{V(\pi)}\big(\Bcl(O_\rightarrow)\big)=\Bcl(O_\leftarrow)
\end{align*}
where the last equality is due to Lem. \ref{lb92}. 

Let
\begin{align*}
O_2:=\delta(r)O\qquad O_1=O_\leftarrow\cap O_2'
\end{align*}
Since $O$ and $\delta(-r)O_1$ are spacelike separated (because they are contained in $O_\rightarrow$ and $O_\leftarrow$, respectively), by the locality (Thm. \ref{lb78}), $\Bcl(O)$ commutes with $\Bcl(\delta(-r)O_1)$, and hence commutes with $\Ad_{V(-r)}\big(\Bcl(O_1)\big)$ by Lem. \ref{lb86}. Thus
\begin{align*}
\Ad_{V(r)}\big(\Bcl(O)\big)\subset\Bcl(O_1)'
\end{align*}
By Prop. \ref{lb93} and $O_2=O_\leftarrow\cap O_1'$, we obtain
\begin{align*}
\Ad_{V(r)}\big(\Bcl(O)\big)\subset\Bcl(O_2)
\end{align*}
A similar argument shows $\Ad_{V(-r)}\big(\Bcl(O_2)\big)\subset\Bcl(O)$. Thus $\Ad_{V(r)}\big(\Bcl(O)\big)=\Bcl(O_2)$.
\end{proof}

Recall \eqref{eq29} for the notation $\scr L_x$.

\begin{lm}\label{lb95}
Let $O\subset\Rpp$ be an equilateral double cone whose diameter in the $xt$-coordinates is $r<\pi$. Choose $s\in\Rbb$ such that $(\rho(s),\rho(-s))O$ does not intersect $\SL_{\pm\pi}$ (and hence can be viewed as contained in $\Rpp$). Then
\begin{align}\label{eq43}
\Ucl(\varrho_c(s),\varrho_c(-s))\cdot\Bcl(O)\cdot\Ucl(\varrho_c(s),\varrho_c(-s))^*=\Bcl((\varrho(s),\varrho(-s))O)
\end{align}
\end{lm}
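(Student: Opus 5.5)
We prove \eqref{eq43} by breaking the translation $\delta(s):=(\varrho(s),\varrho(-s))$ into finitely many steps. Each step is either a translation that keeps the cone inside $\Rpp$, handled by Lem. \ref{lb86}, or a translation across the line $\SL_{\pm\pi}$, handled by Lem. \ref{lb94}. Write $V(s)=\Ucl(\varrho_c(s),\varrho_c(-s))$. Since $V$ is a one-parameter group, $V(s_1+s_2)=V(s_1)V(s_2)$. So it suffices to find $0=s_0,s_1,\dots,s_n=s$ (or a decreasing sequence if $s<0$) and prove $\Ad_{V(s_k-s_{k-1})}\Bcl(\delta(s_{k-1})O)=\Bcl(\delta(s_k)O)$ for each $k$. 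Throughout, the intermediate cones $\delta(s_{k})O$ are kept inside $\Rpp$, so they are regarded as double cones in $\Rpp$ and the lemmas apply.

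Assume $s>0$; the case $s<0$ is symmetric. As $t$ increases from $0$ to $s$, the $x$-coordinates of $\delta(t)O$ increase by $t$, reduced mod $2\pi$ into $(-\pi,\pi]$. The cone leaves $\Rpp$ only while it meets $\SL_\pi$. This happens in finitely many time windows, each of length exactly $r$ (the $x$-width of an equilateral cone of diameter $r$). Fix one such window.
\begin{itemize}
\item Let $t_0$ be the first time at which the right corner of $\delta(t_0)O$ reaches $(\pi,\cdot)$. On the interval from the previous step up to $t_0$, Lem. \ref{lb86} applies, and it allows closed endpoints, so it covers $t=t_0$ itself.
\item Apply Lem. \ref{lb94} to $\delta(t_0)O$. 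This cone is equilateral of diameter $r<\pi$, and its right corner lies on $x=\pi$, but possibly at some time $t\neq0$.
\item Conjugate by the time translation $\Ucl(\varrho_c(u),\varrho_c(u))$. By Lem. \ref{lb85} this moves any double cone vertically covariantly, and it commutes with $V(r)$ because $\Ucl$ is a representation of the abelian group generated by these one-parameter subgroups. Hence we may reduce to the case where the right corner is exactly $(\pi,0)$. Lem. \ref{lb94} then carries $\Bcl(\delta(t_0)O)$ to $\Bcl(\delta(t_0+r)O)$.
\item At time $t_0+r$ the left corner sits on $x=-\pi$, so $\delta(t_0+r)O$ is again a double cone in $\Rpp$, and Lem. \ref{lb86} applies from $t_0+r$ onward until the next crossing.
\end{itemize}
Since $\delta(s)O$ avoids $\SL_{\pm\pi}$ by hypothesis, $s$ never lies strictly inside a crossing window. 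The final step is therefore an application of Lem. \ref{lb86}. The case $s<0$ uses the reflected crossing: the left corner touches $x=-\pi$, and we apply Lem. \ref{lb94} backwards, i.e. in the form $\Ad_{V(-r)}\Bcl(\delta(r)O')=\Bcl(O')$ for the appropriate cone $O'$. That is just Lem. \ref{lb94} read in reverse.

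The main obstacle is the bookkeeping in the middle step. We must check that the cone touching $\SL_\pi$ at its right corner still counts as a double cone of $\Rpp$ (it is not compactly contained in $\Rpp$, but Lem. \ref{lb86} and Lem. \ref{lb94} are stated for cones contained in $\Rpp$, not compactly contained). We must also check that the time-translation reduction really places the right corner at $(\pi,0)$ as Lem. \ref{lb94} requires. Once these points are in place, the argument is a finite composition of the two lemmas.
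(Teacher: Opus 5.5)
Your proposal is correct and follows essentially the same route as the paper. The paper also uses Lem.~\ref{lb85} to normalize the vertical position of the corners, decomposes the $x$-translation into steps covered by Lem.~\ref{lb86} and Lem.~\ref{lb94}, and treats $s<0$ symmetrically. The only cosmetic difference is that the paper performs the time-translation normalization once at the outset (left and right corners on the $x$-axis, which then persists under $x$-translations), whereas you do it at each crossing. The two are equivalent, because time translations commute with $V(s)$.
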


\begin{proof}
The case $s<0$ follows easily from the case $s\geq0$. So we assume $s\geq0$. By Lem. \ref{lb85}, it suffices to assume that the left and right corners of $O$ lie on the $x$-axis. Since the translation of $O$ appearing in this lemma can be decomposed into translations of the types considered in Lem. \ref{lb86} and Lem. \ref{lb94}, the result follows from those two lemmas.
\end{proof}

\begin{lm}\label{lb96}
For each $O\in\MO$ and $s\in\Rbb$, the relation \eqref{eq43} holds.
\end{lm}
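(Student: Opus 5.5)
We want to show that \eqref{eq43} holds for every $O\in\MO$ and every $s\in\Rbb$. Throughout, write $V(s)=\Ucl(\varrho_c(s),\varrho_c(-s))$ and $\delta(s)=(\varrho(s),\varrho(-s))$. Since $V$ is a one-parameter group and $\delta$ is a group action on $\Rcl$, the set of $s$ for which \eqref{eq43} holds for all $O$ is closed under composition and inverses. So it suffices to prove \eqref{eq43} for $|s|$ small, say $|s|\le\eps$, with $\eps$ independent of $O$; repeated application then covers every $s$.

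Step 1 treats small equilateral double cones. Let $O$ be equilateral with diameter $r<\pi/2$, and suppose $O$ and $\delta(s)O$ avoid $\SL_{\pm\pi}$. Then Lem. \ref{lb95} gives \eqref{eq43} directly. Suppose instead that $O$ or $\delta(s)O$ meets $\SL_\pi$. Pick a translation $u$ such that $\delta(u)O$ and $\delta(u+s)O$ both lie in $\Rpp$ away from $\SL_{\pm\pi}$; this is possible since the diameter is small and $|s|$ is small. Two applications of Lem. \ref{lb95} (to $O$ with $u$, and to $\delta(u)O$ with $s$) then give \eqref{eq43} for $O$ with $s+u$, and similarly for $O$ with $u$. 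Composing gives the result for $s$. However, Lem. \ref{lb95} requires the \emph{source} cone to lie in $\Rpp$, whereas here $O$ itself may meet $\SL_\pi$. For that case I will first handle cones that do not meet $\SL_\pi$ at all, which Lem. \ref{lb95} covers for arbitrary $s$. Then, for a cone $O$ whose closure meets $\SL_\pi$, I will use Def. \ref{lb74} together with the additivity Cor. \ref{lb75} with $E=\{\pi\}$: $\Bcl(O)$ is generated by the $\Bcl(D)$ with $D\subset O\setminus\SL_\pi$, and $D\in\Opp$ may be taken equilateral of small diameter.

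Step 2 covers general $O\in\MO$. By Cor. \ref{lb75} (with $E=\{\pi\}$, or $E=\emptyset$), $\Bcl(O)$ is generated by $\Bcl(D)$ for small equilateral double cones $D\subset O\setminus\SL_\pi$. Each such $D$ lies in $\Rpp$, so by Lem. \ref{lb95}, $\Ad_{V(s)}\Bcl(D)=\Bcl(\delta(s)D)$ for every $s$ with $\delta(s)D\cap\SL_{\pm\pi}=\emptyset$. These $D$ satisfy $\delta(s)D\subset\delta(s)O$. Moreover, after discarding those $D$ whose translates meet $\SL_\pi$, the family $\{\delta(s)D\}$ still covers $\delta(s)O\setminus(\SL_\pi\cup\delta(s)\SL_\pi)$; this uses $E=\{\pi,\pi+s\}$ reduced mod $2\pi$. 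Applying Cor. \ref{lb75} to $\delta(s)O$ with this finite $E$ and this family yields $\Ad_{V(s)}\Bcl(O)\subset\Bcl(\delta(s)O)$, with generation, hence equality. The reverse inclusion follows by the same argument with $-s$. This argument in fact works for all $s$ at once, so the small-$s$ reduction is not needed in the end.

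The main obstacle is the covering bookkeeping in Step 2. We must check two things. First, the cones $D\subset O\setminus(\SL_\pi\cup\SL_{\pi-s})$ used on the source side already generate $\Bcl(O)$; Cor. \ref{lb75} with $E=\{\pi,\pi-s\}$ handles this. Second, their translates are exactly the cones covering $\delta(s)O\setminus(\SL_{\pi+s}\cup\SL_\pi)$. Both need only that Cor. \ref{lb75} allows an arbitrary finite $E$, so the whole proof reduces to Lem. \ref{lb95} and Cor. \ref{lb75}.
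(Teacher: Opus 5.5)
Your final argument is correct and is essentially the paper's proof. You cover $O\setminus(\SL_\pi\cup\SL_{\pi-s})$ by small equilateral double cones, apply Lem.~\ref{lb95} to each cone, and then invoke Cor.~\ref{lb75} for $O$ with $E=\{\pi,\pi-s\}$ and for $\delta(s)O$ with $E=\{\pi,\pi+s\}$ (reduced mod $2\pi$). The paper does exactly this, choosing finite $E,F\ni\pi$ with $\delta(s)\SL_E=\SL_F$. The small-$s$ reduction and the composition detour in your Step~1 are unnecessary, as you note yourself, and can simply be dropped.
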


\begin{proof}
Write $\delta(s)=(\varrho(s),\varrho(-s))$. Choose finite sets $E,F\subset(-\pi,\pi]$ containing $\pi$ such that $\delta(s)\SL_E=\SL_F$, where $\SL_E=\bigcup_{x\in E}\SL_x$ and $\SL_F=\bigcup_{x\in F}\SL_x$. Let $(O_\alpha)$ be a collection of equilateral double cones whose union is $O\setminus\SL_E$, and whose diameters in the $xt$-coordinates are $<\pi$. Then each $O_\alpha$ and $\delta(s)O_\alpha$ can be viewed as subsets of $\Rpp$. By Lem. \ref{lb95}, \eqref{eq43} holds when $O$ is replaced by $O_\alpha$. 

By Cor. \ref{lb75}, $\Bcl(O)$ is generated by all $\Bcl(O_\alpha)$. Since $(\delta(s)O)\setminus\SL_F=\bigcup_\alpha \delta(s)O_\alpha$, Cor. \ref{lb75} also implies that $\Bcl(\delta(s)O)$ is generated by all $\Bcl(\delta(s)O_\alpha)$. This finishes the proof.
\end{proof}

\begin{lm}\label{lb97}
Let $O\in\MO$ be corresponding to some $(\wtd I,\wtd J)$. Let $E,F\in\MJ$. Suppose that there exist $\wtd K,\wtd L\in\MJ$ satisfying
\begin{align*}
\wtd I\subset\wtd K\qquad E\subset K\qquad \wtd J\subset \wtd L\qquad F\subset L
\end{align*}
Then for each $g^+\in\Gc(E)$ and $g^-\in\Gc(F)$ we have
\begin{align*}
\Ucl(g^+,g^-)\Bcl(O)\Ucl(g^+,g^-)^*=\Bcl((g^+,g^-)O)
\end{align*}
\end{lm}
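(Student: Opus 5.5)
The plan is to reduce Lem. \ref{lb97} to Lem. \ref{lb68} (local covariance inside a standard double cone) by translating $O$ along the $x$-axis, using the global translation covariance already established in Lem. \ref{lb96} and Lem. \ref{lb85}. Write $\delta(s)=(\varrho(s),\varrho(-s))$ and $\tau(s)=(\varrho(s),\varrho(s))$. The idea is to find a rigid motion $h=(\varrho_c(a),\varrho_c(b))$, a composition of $x$- and $t$-translations, that moves the picture so the relevant intervals all lie in one arg-valued interval. Then $\Ucl(h)\Ucl(g^+,g^-)\Ucl(h)^*=\Ucl(\varrho_c(a)g^+\varrho_c(a)^{-1},\varrho_c(b)g^-\varrho_c(b)^{-1})$ is a product of elements supported in the rotated intervals, and Lem. \ref{lb68} applies. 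Conjugating back by $\Ucl(h)$ then gives the claim, since Lem. \ref{lb96} and Lem. \ref{lb85} give $\Ad_{\Ucl(h)}\Bcl(O_0)=\Bcl(hO_0)$ for every $O_0\in\MO$.

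The first step is the geometric choice. Pick $a,b\in\Rbb$ such that $\varrho(a)\wtd K$ and $\varrho(b)\wtd L$ are both contained in a common arg-valued interval $\wtd M$ of length $<2\pi$. This is not possible in general, since $|K|,|L|$ may each be close to $2\pi$. So first reduce by Lem. \ref{lb14}. Cover $K$ by small intervals and factor $g^+$ as a finite product of elements $g_k^+\in\Gc(E_k)$, with each $E_k$ short and contained in $K$, and factor $g^-$ similarly. Apply the statement to one factor at a time, replacing $O$ successively by its images. For this it is cleanest to first shrink $\wtd K,\wtd L$. If $E_k\cap I=\emptyset$, the factor $g_k^+$ fixes $\wtd I$ pointwise on a neighbourhood that may be taken to be short, and we can use the interval $\wtd I\cup\wtd E_k$ or separate pieces. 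I expect this bookkeeping to be the main obstacle. For each factor we need an arg-valued interval $\wtd K_k\supset\wtd I_{\text{current}}\cup \wtd E_k$, and it must have length $<\pi$ if $|I|,|J|$ are small. Otherwise we further subdivide $O$ using Cor. \ref{lb75}: $\Bcl(O)$ is generated by the $\Bcl(O_\alpha)$ for small equilateral $O_\alpha$, and the images $(g^+,g^-)O_\alpha$ cover $(g^+,g^-)O$ up to finitely many lines, which are handled by the $\SL_F$ clause of Cor. \ref{lb75}.

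So, concretely: using Cor. \ref{lb75}, reduce to $O$ with $|I|,|J|$ tiny. Using Lem. \ref{lb14}, reduce to $E,F$ tiny, with $\wtd K\supset\wtd I$ of length $<\pi$ containing $E$ and $\wtd L\supset \wtd J$ of length $<\pi$ containing $F$. If $g^+$ fixes $I$, simply ignore $I$ and use a short interval around $E$. Now choose $a,b$ with $\varrho(a)\wtd K,\varrho(b)\wtd L\subset\wtd M$ for some $\wtd M$ of length $<2\pi$. Then $hO$ is contained in the standard double cone $(\wtd M,\wtd M)$, and the conjugated group elements lie in $\Gc(M)$. Lem. \ref{lb68} then gives $\Ad_{\Ucl(hgh^{-1})}\Bcl(hO)=\Bcl(hgO)$, and undoing $h$ with Lem. \ref{lb96} and Lem. \ref{lb85} finishes the proof. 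One point to check is compatibility of the double circle picture, namely that $hO$ corresponds to $(\varrho(a)\wtd I,\varrho(b)\wtd J)$ and that $(g^+,g^-)O$ corresponds to $(g^+\wtd I,g^-\wtd J)$ under the chosen lifts. This is immediate from the definition of the action on $\Rcl$.
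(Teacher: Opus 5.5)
Your opening plan is the paper's proof. You abandon it, however, because of a false claim: a common $\wtd M$ always exists.

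\textbf{The direct argument.} Since $\wtd K,\wtd L$ are arg-valued intervals over intervals in $\MJ$ (which are non-dense), both have length $<2\pi$. Enlarge the shorter one; this keeps $\wtd I\subset\wtd K$, $\wtd J\subset\wtd L$, $E\subset K$, $F\subset L$ and makes the lengths equal. Then some $s\in\Rbb$ satisfies $\varrho(s)\wtd L=\wtd K$. Take $h=(1,\varrho_c(s))$ and $\wtd M=\wtd K$.
\begin{itemize}
\item The cone $D=(1,\varrho(s))O$ corresponds to $(\wtd I,\varrho(s)\wtd J)$, so it lies in the standard double cone of $(\wtd K,\wtd K)$.
\item We have $g^+\in\Gc(E)\subset\Gc(K)$ and $\varrho_c(s)g^-\varrho_c(-s)\in\Gc(K)$.
\item Hence Lem.~\ref{lb68} applies to $D$.
\item Conjugate back by $\Ucl^-(\varrho_c(s))=\Ucl(1,\varrho_c(s))$. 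This is a $t$-translation by $s/2$ composed with an $x$-translation by $-s/2$, so Lem.~\ref{lb85} and Lem.~\ref{lb96} apply to every double cone, and the claim follows.
\end{itemize}
The sizes $|K|,|L|$ are irrelevant because the two intervals are made to coincide; they never need to sit side by side inside $\wtd M$.

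\textbf{The reduction you put in its place does not close as written.}
\begin{itemize}
\item An interval $\wtd K\supset\wtd I\cup E$ of length $<\pi$ does not exist when $I$ and $E$ are nearly antipodal.
\item Your fallback of ``ignoring $I$'' yields a $\wtd K$ not containing $\wtd I$. Then $hO$ is no longer inside $(\wtd M,\wtd M)$ and Lem.~\ref{lb68} gives nothing. You would have to show separately that $\Ad_{\Ucl^+(g^+)}$ preserves $\Bcl(O)$. The ready-made results for this (Cor.~\ref{lb135}, Cor.~\ref{lb99}) come later and rely on Thm.~\ref{lb98}, i.e.\ on the present lemma.
\item Lem.~\ref{lb14} only says that the $\Gc(E_k)$ for a cover of $\Sbb^1$ generate $\Gc$. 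It does not fragment $g^+\in\Gc(E)$ into factors supported in small subintervals of $K$. Also, after applying some factors, the image of $O$ need not remain small.
\item $(g^+,g^-)$ does not map the lines $\SL_x$ to vertical lines, so the exceptional-set clause of Cor.~\ref{lb75} cannot be invoked for the image cover. Covering all of $O$ by small cones, with empty exceptional set, would avoid this.
\end{itemize}
All of this becomes unnecessary once you drop the spurious length-$<\pi$ requirement and use the direct argument above.
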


\begin{proof}
Note that $g^+\in\Gc(K)$ and $g^-\in\Gc(L)$. By enlarging either $\wtd K$ or $\wtd L$, we assume they have the same length. Choose $s\in\Rbb$ such that $\varrho(s)\wtd L=\wtd K$. Let
\begin{align*}
h^-=\varrho_c(s)g^-\varrho_c(-s)\qquad D=(1,\varrho(s))O
\end{align*}
Then $h^-\in\Gc(K)$, and $D$ is contained in the standard double cone corresponding to $(\wtd K,\wtd K)$. Therefore, by Lem. \ref{lb68},
\begin{align*}
\Ucl(g^+,h^-)\Bcl(D)\Ucl(g^+,h^-)^*=\Bcl((g^+,h^-)D)
\end{align*}
Thus, writing $\Ucl^-(\varrho_c(s))$ as $V$, and applying $\Ad_{V^*}$ to the above identity, we have
\begin{align*}
\Ucl(g^+,g^-)V^*\Bcl(D)V\Ucl(g^+,g^-)^*=V^*\Bcl((g^+,h^-)D)V
\end{align*}
By Lem. \ref{lb85} and \ref{lb96}, we have $V^*\Bcl(D)V=\Bcl(O)$ and
\begin{align*}
&V^*\Bcl((g^+,h^-)D)V=\Bcl((g^+,\varrho_c(-s)h^-)D)\\
=&\Bcl((g^+,g^-\varrho_c(-s))D)=\Bcl((g^+,g^-)O)
\end{align*}
This finishes the proof.
\end{proof}

\begin{thm}\label{lb98}
For each $O\in\MO$ and $g^+,g^-\in\Gc$, we have
\begin{align*}
\Ucl(g^+,g^-)\Bcl(O)U_\cl(g^+,g^-)^*=\Bcl((g^+,g^-)O)
\end{align*}
\end{thm}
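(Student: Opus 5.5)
The plan is to reduce to the local result Lem.~\ref{lb97} by a generation argument. By Lem.~\ref{lb14}, applied to a finite cover of $\Sbb^1$ by intervals, $\Gc$ is algebraically generated by $\bigcup_{E}\Gc(E)$ for $E$ ranging over arbitrarily small intervals. Since $\Ucl$ is a group homomorphism on $\Gc\times\Gc$ and the action of $\Gc\times\Gc$ on $\MO$ is a group action, the set of pairs $(g^+,g^-)$ for which the covariance identity holds for \emph{all} $O\in\MO$ is closed under products. It is also closed under inverses, by applying the identity to $(g^+,g^-)^{-1}O$. So it suffices to verify the identity for generators.

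Since $(g^+,g^-)=(g^+,1)(1,g^-)$, it is enough to treat $g^+\in\Gc(E)$ with $g^-=1$, and symmetrically $g^+=1$ with $g^-\in\Gc(F)$. Here $E,F$ can be taken to be intervals of small length, say $|E|<\pi/2$. Also, $\varrho_c(s)$ lies in the generated set, so by Lem.~\ref{lb85} and Lem.~\ref{lb96} the pairs $(\varrho_c(s),\varrho_c(\pm s))$, and hence $(\varrho_c(s),1)$ and $(1,\varrho_c(s))$, already satisfy the identity for every $O$. This lets us conjugate generators by rotations freely.

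Now fix $O\in\MO$ corresponding to $(\wtd I,\wtd J)$, and take $g^+\in\Gc(E)$, $g^-=1\in\Gc(F)$ with $F$ arbitrary. We need $\wtd K\supset\wtd I$ whose underlying interval contains $E$, and $\wtd L\supset\wtd J$ with $F\subset L$. If $|I|+|E|<2\pi$ minus a margin, we can take $K$ to be an interval containing $I\cup E$, going around the appropriate side, with the arg function extending $\arg_I$. The choice of $L$ is trivial since $g^-=1$: take any $F\subset J$ and $\wtd L=\wtd J$.

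The obstacle arises when $I$ is so long that $I\cup E$ cannot be covered by a proper interval, i.e.\ $E$ meets $I'$ on both sides. In that case we first subdivide. By Cor.~\ref{lb75} (taking $\SL_E=\emptyset$, or finitely many lines), $\Bcl(O)$ is generated by $\Bcl(O_\alpha)$ for smaller double cones $O_\alpha$ covering $O$, each corresponding to $(\wtd I_\alpha,\wtd J_\alpha)$ with $|I_\alpha|$ small. Since $(g^+,1)$ maps $O_\alpha$ to double cones covering $(g^+,1)O$, Cor.~\ref{lb75} also shows that $\Bcl((g^+,1)O)$ is generated by the images. For each small $O_\alpha$ with $|I_\alpha|<\pi$, the union of $I_\alpha$ and $E$ lies in a proper interval, so Lem.~\ref{lb97} applies.

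I expect this subdivision bookkeeping to be the only delicate step, ensuring that the images $(g^+,1)O_\alpha$ cover $(g^+,1)O$ up to finitely many vertical lines. It holds because $(g^+,1)$ is a homeomorphism of $\Rcl$. The reason is that Cor.~\ref{lb75} only requires the union to equal the region minus a finite union of lines, and here we can even take an exact cover.
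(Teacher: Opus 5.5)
Your proposal is correct and is essentially the paper's proof. You reduce via Lem.~\ref{lb14} to generators supported in short intervals, use Cor.~\ref{lb75} with an exact cover of $O$ by small double cones (which, as you note, is what guarantees that the images still cover $(g^+,g^-)O$) to reduce to $O$ with short $\wtd I,\wtd J$, and conclude with Lem.~\ref{lb97}. The detour through rotations via Lem.~\ref{lb85} and Lem.~\ref{lb96} is unnecessary, and splitting $(g^+,g^-)=(g^+,1)(1,g^-)$ is a harmless variation.
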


\begin{proof}
By Lem. \ref{lb14}, $\Gc$ is algebraically generated by $\Gc(E)$ for all $E\in\MJ$ whose length is $<\pi$. Therefore, it suffices to prove the desired relation for $g^+\in\Gc(E),g^-\in\Gc(F)$ where $E,F\in\MJ$ have lengths $<\pi$. Moreover, by Cor. \ref{lb75}, it suffices to consider the case where the arg-valued intervals $(\wtd I,\wtd J)$ corresponding to $O$ have lengths $<\pi$. Then one can find $\wtd K,\wtd L\in\Jtd$ satisfying the requirement of Lem. \ref{lb97}. The conclusion now follows from that lemma.
\end{proof}

\subsection{Some consequences of the conformal covariance}

Recall Def. \ref{lb83} for the meanings of $\pi^+$ and $\pi^-$.

\subsubsection{Some extension properties}

\begin{co}\label{lb135}
If $O\in\MO$ corresponds to $(\wtd I,\wtd J)$, then
\begin{align*}
\pi^+_{\sqz\boxtimes\ovl\sqz,I}(\MA(I))\subset \Bcl(O)\qquad \pi^-_{\sqz\boxtimes\ovl\sqz,J}(\MA(J))\subset \Bcl(O)
\end{align*}
\end{co}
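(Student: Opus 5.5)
We first treat the case where $O$ is standard, corresponding to $(\wtd I,\wtd I)$. Then the general case follows from conformal covariance (Thm. \ref{lb98}), with a small amount of bookkeeping.

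\emph{Standard case.} By \eqref{eq50} we have $\Bcl(O)=\pi^L_{\wtd I}\big(\End_{\Maa(\wtd I')}(\MH_\sqz)\big)|_{\Hcl}$. By \eqref{eq28}, for each $x\in\MA(I)$ the element $\pi_{\sqz,\wtd I}(x\otimes 1)$ lies in $\End_{\Maa(\wtd I')}(\MH_\sqz)$, and
\begin{align*}
\pi^L_{\wtd I}\big(\pi_{\sqz,\wtd I}(x\otimes 1)\big)\big|_{\Hcl}=\pi_{\sqz\boxtimes\ovl\sqz,\wtd I}(x\otimes 1)=\pi^+_{\cl,I}(x).
\end{align*}
The last equality uses that $\Hcl$ is untwisted, so $\arg_I$ plays no role. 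Hence $\pi^+_{\cl,I}(\MA(I))\subset\Bcl(O)$. The same argument with $1\otimes y$ gives $\pi^-_{\cl,I}(\MA(I))\subset\Bcl(O)$.

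\emph{General case.} Let $O$ correspond to $(\wtd I,\wtd J)$. Choose $g^+,g^-\in\Gc$ such that $g^+\wtd K=\wtd I$ and $g^-\wtd K=\wtd J$ for a suitable $\wtd K$. For example, first take $\wtd K$ of any fixed length, then use Möbius-type elements of $\SG$ mapping $\wtd K$ onto $\wtd I$ and onto $\wtd J$. Then $O=(g^+,g^-)D$, where $D$ is the standard double cone corresponding to $(\wtd K,\wtd K)$. The double-circle correspondence is compatible with the $\SG\times\SG$ action, so $(g^+,g^-)D$ corresponds to $(g^+\wtd K,g^-\wtd K)$ up to the $(2\pi,-2\pi)\Zbb$ ambiguity. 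That ambiguity does not affect the underlying intervals $I,J$, and these are all that matter in the statement.

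By Thm. \ref{lb98}, $\Bcl(O)=\Ad_{\Ucl(g^+,g^-)}\Bcl(D)$. Now apply Cor. \ref{lb30} to the untwisted $\MA$-module $(\Hcl,\pi^+_\cl)$ with representation $\Ucl^+$. This gives
\begin{align*}
\Ucl^+(g^+)\,\pi^+_{\cl,K}(x)\,\Ucl^+(g^+)^*=\pi^+_{\cl,I}\big(U_0(g^+)xU_0(g^+)^*\big).
\end{align*}
Here we use $U_0(g^+)\MA(K)U_0(g^+)^*=\MA(I)$. We also need $\Ucl^-(g^-)$ to commute with $\pi^+_{\cl}$, which is Rem. \ref{lb104}. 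Hence $\Ad_{\Ucl(g^+,g^-)}\pi^+_{\cl,K}(\MA(K))=\pi^+_{\cl,I}(\MA(I))$, and symmetrically $\Ad_{\Ucl(g^+,g^-)}\pi^-_{\cl,K}(\MA(K))=\pi^-_{\cl,J}(\MA(J))$. Combining these with the standard case yields both inclusions.

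The main point to check is that the representations $\Ucl^\pm$ of Def. \ref{lb53} are exactly the ones produced by Thm. \ref{lb12} for the modules $(\Hcl,\pi^\pm_\cl)$, so that Cor. \ref{lb30} applies. This holds by definition. The existence of the required $g^\pm$ is routine.
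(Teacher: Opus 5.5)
Your proof is correct and follows the paper's argument. You handle the standard case via \eqref{eq50} and \eqref{eq28}, then reduce the general case using Thm.~\ref{lb98}, Rem.~\ref{lb104} and Cor.~\ref{lb30}; the only cosmetic difference is that the paper moves just one light-ray coordinate, using $\Ucl(1,\gamma)=\Ucl^-(\gamma)$ with $\gamma\wtd J=\wtd I$ to send $O$ onto the standard cone of $(\wtd I,\wtd I)$, whereas you move both coordinates via $(g^+,g^-)$ starting from a standard cone $(\wtd K,\wtd K)$.
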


\begin{proof}
We first consider the case $\wtd I=\wtd J$. Let $x\in\MA(I)$. By \eqref{eq28}, we have
\begin{align*}
\pi^+_{\sqz\boxtimes\ovl\sqz,I}(x)=\pi_{\sqz\boxtimes\ovl\sqz,I}(x\otimes 1)=\pi^L_{\wtd I}(\pi_{\sqz,\wtd I}(x\otimes 1))|_{\MH_\sqz\boxtimes\MH_{\ovl\sqz}}
\end{align*}
where $\pi_{\sqz,\wtd I}(x\otimes 1)$ belongs to $\End_{\Maa(\wtd I')}(\MH_\sqz)$. Therefore, $\pi^+_{\sqz\boxtimes\ovl\sqz,I}(x)$ belongs to $\Bcl(O)=\pi^L_{\wtd I}(\End_{\Maa(\wtd I')}(\MH_\sqz))|_{\MH_\sqz\boxtimes\MH_{\ovl\sqz}}$ (cf. Def. \ref{lb64}). This proves the first inclusion. Similarly, the second inclusion follows from
\begin{align*}
\pi^-_{\sqz\boxtimes\ovl\sqz,I}(x)=\pi^R_{\wtd I}(\pi_{\sqz,\wtd I}(1\otimes x))|_{\MH_\sqz\boxtimes\MH_{\ovl\sqz}}
\end{align*}

Now we consider the general case. Choose $\gamma\in\Gc$ such that $\gamma\wtd J=\wtd I$. Let $D\in\Opp$ correspond to $(\wtd I,\wtd I)$. By Def. \ref{lb134}, we have $\Ucl(1,\gamma)=\Ucl^-(\gamma)$. Thus, Thm. \ref{lb98} implies
\begin{align*}
\Ad_{\Ucl^-(\gamma)}\big(\Bcl(O)\big)=\Bcl(D)
\end{align*}
By Rem. \ref{lb104}, $\Ad_{\Ucl^-(\gamma)}$ sends $\pi^+_{\sqz\boxtimes\ovl\sqz,I}(\MA(I))$ to itself. By Cor. \ref{lb30}, $\Ad_{\Ucl^-(\gamma)}$ sends $\pi^-_{\sqz\boxtimes\ovl\sqz,J}(\MA(J))$ to $\pi^-_{\sqz\boxtimes\ovl\sqz,I}(\MA(I))$. Thus, the general case reduces to the special case $\wtd I=\wtd J$.
\end{proof}

\begin{co}\label{lb99}
If $O\in\MO$ corresponds to $(\wtd I,\wtd J)$ and $g^+\in\Gc(I),g^-\in\Gc(J)$, then 
\begin{align*}
\Ucl(g^+,g^-)\in \pi^+_{\sqz\boxtimes\ovl\sqz,I}(\MA(I))\vee \pi^-_{\sqz\boxtimes\ovl\sqz,J}(\MA(J))
\end{align*}
and hence $\Ucl(g^+,g^-)\in\Bcl(O)$ by Cor. \ref{lb135}. 
\end{co}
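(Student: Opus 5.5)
The plan is to write $\Ucl(g^+,g^-)=\Ucl^+(g^+)\Ucl^-(g^-)$ (Def. \ref{lb134}) and show each factor lies in the corresponding algebra. The point to use is that $\Ucl^\pm$ is, by Def. \ref{lb53}, exactly the canonical conformal representation of the untwisted $\MA$-module $(\MHcl,\pi^\pm_\cl)$ given by Thm. \ref{lb12}.

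\textbf{Step 1: $\Ucl^+(g^+)$ lies in the $+$-algebra.} Since $g^+\in\Gc(I)$, Def. \ref{lb53} (equivalently, Thm. \ref{lb12} applied to $(\MHcl,\pi^+_\cl)$) gives
\begin{align*}
\Ucl^+(g^+)=\pi^+_{\sqz\boxtimes\ovl\sqz,I}(U_0(g^+)).
\end{align*}
By conformal covariance of $\MA$ we have $U_0(\Gc(I))\subset\MA(I)$. Hence $\Ucl^+(g^+)\in\pi^+_{\sqz\boxtimes\ovl\sqz,I}(\MA(I))$.

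\textbf{Step 2: $\Ucl^-(g^-)$ lies in the $-$-algebra.} The same argument with $g^-\in\Gc(J)$ gives $\Ucl^-(g^-)=\pi^-_{\sqz\boxtimes\ovl\sqz,J}(U_0(g^-))\in\pi^-_{\sqz\boxtimes\ovl\sqz,J}(\MA(J))$.

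\textbf{Step 3: conclusion.} The product $\Ucl^+(g^+)\Ucl^-(g^-)$ of these two elements lies in the von Neumann algebra they generate, which is the join in the statement. Cor. \ref{lb135} shows this join is contained in $\Bcl(O)$, so $\Ucl(g^+,g^-)\in\Bcl(O)$.

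The only point needing care is bookkeeping. Because $\MHcl$ is untwisted, $\pi^\pm_{\cl,I}$ depends only on $I$ and not on $\arg_I$. Thus the non-uniqueness of the pair $(\wtd I,\wtd J)$ describing $O$ is harmless, since only the intervals $I$ and $J$ enter. I expect no genuine obstacle.
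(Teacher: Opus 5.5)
Your proposal is correct and matches the paper's proof essentially line by line. Like the paper, you factor $\Ucl(g^+,g^-)=\Ucl^+(g^+)\Ucl^-(g^-)$, identify each factor as $\pi^\pm_{\sqz\boxtimes\ovl\sqz}(U_0(g^\pm))$ via Thm.~\ref{lb12}, and then invoke Cor.~\ref{lb135} for the final inclusion into $\Bcl(O)$.
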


It follows that for each $m,n\in\Zbb$, we have
\begin{align}
\Ucl(g^+,g^-)\in\Bcl((\varrho(2m\pi),\varrho(2n\pi))O)
\end{align}
because $(\varrho(2m\pi),\varrho(2n\pi))O$ also corresponds to arg-valued intervals whose underlying intervals are $I$ and $J$.

\begin{proof}
We have
\begin{align*}
\Ucl^+(g^+)=U^+_{\sqz\boxtimes\ovl\sqz,I}(g^+)\xlongequal{\text{Thm.\ref{lb12}}}\pi^+_{\sqz\boxtimes\ovl\sqz,I}(U_0(g^+))\in\pi^+_{\sqz\boxtimes\ovl\sqz,I}(\MA(I))
\end{align*}
and, similarly, $\Ucl^-(g^-)$ belongs to $\pi^-_{\sqz\boxtimes\ovl\sqz,J}(\MA(J))$. Thus $U_\cl(g^+,g^-)=\Ucl^+(g^+)\Ucl^-(g^-)$ belongs to $\pi^+_{\sqz\boxtimes\ovl\sqz,I}(\MA(I))\vee\pi^-_{\sqz\boxtimes\ovl\sqz,J}(\MA(J))$.
\end{proof}

\subsubsection{An enhanced version of the additivity property}

The result of this subsection is of independent interest, although it will not be used elsewhere in this paper.

\begin{df}
Let $O\in\MO$. A subset $\Lambda\subset O$ is called an \textbf{arc between the top and bottom corners} of $O$ if, after choosing $s\in\Rbb$ such that $D:=(\varrho(s),\varrho(-s))O$ belongs to $\Opp$, the set $(\varrho(s),\varrho(-s))\Lambda$ is an arc between the top and bottom corners of $D$ in the sense of Def. \ref{lb70}.
\end{df}

\begin{thm}\label{lb146}
Let $O\in\MO$, and let $\Lambda$ be an arc between the top and bottom corners of $O$. Let $(O_\alpha)_{\alpha\in\scr A}$ be a collection in $\MO$ such that $O_\alpha\subset O$ for each $\alpha$, and $\Lambda\subset\bigcup_\alpha O_\alpha$. Then
\begin{align}
\Bcl(O)=\bigvee_\alpha\Bcl(O_\alpha)
\end{align}
\end{thm}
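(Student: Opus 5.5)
The plan is to reduce to Prop. \ref{lb71} by a translation along the $x$-axis, using the global conformal covariance now available from Thm. \ref{lb98}. Write $\delta(s)=(\varrho(s),\varrho(-s))$ and $V(s)=\Ucl(\varrho_c(s),\varrho_c(-s))$. By definition of an arc between the top and bottom corners of $O$, there is $s\in\Rbb$ with $D:=\delta(s)O\in\Opp$ and $\delta(s)\Lambda$ an arc between the top and bottom corners of $D$ in the sense of Def. \ref{lb70}. By Thm. \ref{lb98},
\begin{align*}
\Ad_{V(s)}\big(\Bcl(O)\big)=\Bcl(D)\qquad \Ad_{V(s)}\big(\Bcl(O_\alpha)\big)=\Bcl(\delta(s)O_\alpha)
\end{align*}
Since $\Ad_{V(s)}$ preserves generated von Neumann algebras, it then suffices to prove the theorem for $D$, the arc $\delta(s)\Lambda$, and the collection $(\delta(s)O_\alpha)$. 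Thus we may assume $O\in\Opp$ from the start.

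The remaining issue is that the $O_\alpha$ need not lie in $\Opp$. They are subsets of $O\Subset\Rpp$, so each $O_\alpha$ is a double cone contained in $O$. Its closure lies in $\ovl O\subset\Rpp$, hence is compact in $\Rpp$. So each $O_\alpha\Subset\Rpp$, i.e.\ $O_\alpha\in\Opp$ automatically. (If one is worried that $O_\alpha$ viewed in $\Rcl$ might correspond to a different lift, one can note that any double cone contained in $O$ is the image of a rectangle inside the rectangle of $O$, hence lies in $\Opp$.) Prop. \ref{lb71} then applies verbatim and yields $\Bcl(O)=\bigvee_\alpha\Bcl(O_\alpha)$.

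The main (mild) obstacle is to make sure the translated arc is still an arc in the sense of Def. \ref{lb70}, and that the reduction is consistent. This holds by definition of arcs for general $O\in\MO$, since the definition was set up precisely via such a translation. The only other point is that Thm. \ref{lb98} is stated for arbitrary $O\in\MO$, which covers the $O_\alpha$ before translation. The inclusion $\supset$ is just isotony from Thm. \ref{lb72}.
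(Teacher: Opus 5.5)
Your proposal is correct and follows essentially the same route as the paper, which simply records the theorem as an easy consequence of Prop.~\ref{lb71} and Thm.~\ref{lb98}. You transport $O$, $\Lambda$ and the $O_\alpha$ into $\Opp$ by an $x$-translation, using Thm.~\ref{lb98}. You then observe that double cones contained in a member of $\Opp$ are automatically in $\Opp$, and apply Prop.~\ref{lb71}; this is exactly the intended argument, with the details filled in.
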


\begin{proof}
This is an easy consequence of Prop. \ref{lb71} and Thm. \ref{lb98}.
\end{proof}

\subsection{PCT symmetry}\label{lb103}

Recall the PCT operator $\Jcl$ defined in Def. \ref{lb101}. Recall the reflection $(\rk,\rk)$ of $\Rcl$ described in \eqref{eq44}. Recall Def. \ref{lb51} for the definition of $\rk g\rk$ where $g\in\Gc$.

\begin{thm}\label{lb102}
The anti-unitary operator $\Jcl$ is involutive (i.e. $\Jcl^2=1$), and satisfies
\begin{subequations}\label{eq45}
\begin{gather}
\Jcl\Ucl(g^+,g^-)\Jcl=U_\cl(\rk g^+\rk,\rk g^-\rk)\label{eq45a}\\
\Jcl\Bcl(O)\Jcl=\Bcl((\rk,\rk)O)\label{eq45b}
\end{gather}
\end{subequations}
for each $g^+,g^-\in\Gc$ and $O\in\MO$.
\end{thm}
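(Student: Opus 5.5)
The proof splits into three steps: the involutivity of $\Jcl$, the relation \eqref{eq45a}, and the relation \eqref{eq45b}, in that order. The ingredients are Lem. \ref{lb88}, Prop. \ref{lb42}, Cor. \ref{lb87}, the conformal covariance (Thm. \ref{lb98}), and the additivity of Cor. \ref{lb75}. Throughout we write $\Theta=\Theta_{\sqz,\ovl\sqz}$, $V^\pm(s)=\Ucl^\pm(\varrho_c(s))$, and use $\Jcl=V^-(-\pi)\Theta V^-(\pi)$.

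\emph{Involutivity.} By Lem. \ref{lb88}, $\Theta V^-(\pi)=V^-(-\pi)\Theta$, and $\Theta^2=1$ by Rem. \ref{lb81}. Hence
\begin{align*}
\Jcl^2=V^-(-\pi)\Theta V^-(\pi)V^-(-\pi)\Theta V^-(\pi)=V^-(-\pi)\Theta^2V^-(\pi)=1.
\end{align*}

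\emph{Relation \eqref{eq45a}.} Since $\Theta$ is an anti-morphism of $\Maa$-modules, the argument of Lem. \ref{lb88} shows it is an anti-automorphism of each $\MA$-module $(\Hcl,\pi^\pm_\cl)$. Prop. \ref{lb42} then gives $\Theta\Ucl^\pm(g)\Theta=\Ucl^\pm(\rk g\rk)$ for every $g\in\Gc$. Consequently $\Ad_\Theta$ sends $\Ucl(g^+,g^-)$ to $\Ucl(\rk g^+\rk,\rk g^-\rk)$.

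Conjugating $\Theta$ by $V^-(\pm\pi)$ affects only the minus component, because $\Ucl^+$ commutes with $\Ucl^-$ by Rem. \ref{lb104}. For the minus component we compute
\begin{align*}
\Jcl\,\Ucl^-(g)\,\Jcl=V^-(-\pi)\,\Theta\, V^-(\pi)\Ucl^-(g)V^-(-\pi)\,\Theta\, V^-(\pi)=V^-(-\pi)\,\Ucl^-\big(\rk(\varrho_c(\pi)g\varrho_c(-\pi))\rk\big)\,V^-(\pi).
\end{align*}
By Exp. \ref{lb41} and the fact that $g\mapsto\rk g\rk$ is a homomorphism, we have $\rk\varrho_c(\pi)g\varrho_c(-\pi)\rk=\varrho_c(-\pi)\,\rk g\rk\,\varrho_c(\pi)$. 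The outer conjugation by $V^-(\mp\pi)$ then cancels, leaving $\Ucl^-(\rk g\rk)$. This proves \eqref{eq45a}.

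\emph{Relation \eqref{eq45b}, reduction to one cone.} Fix a double cone $O_0$ for which $\Jcl\Bcl(O_0)\Jcl=\Bcl((\rk,\rk)O_0)$ is known. For an arbitrary $O\in\MO$ with $O=(g^+,g^-)O_0$, Thm. \ref{lb98} and \eqref{eq45a} give
\begin{align*}
\Jcl\Bcl(O)\Jcl=\Ucl(\rk g^+\rk,\rk g^-\rk)\,\Bcl((\rk,\rk)O_0)\,\Ucl(\rk g^+\rk,\rk g^-\rk)^*=\Bcl\big((\rk g^+\rk,\rk g^-\rk)(\rk,\rk)O_0\big),
\end{align*}
and the right-hand side equals $\Bcl((\rk,\rk)O)$ because $(\rk g\rk)(\rk u)=\rk(g u)$ on $\Rbb$. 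Since $\Gc\times\Gc$ acts transitively on double cones, it suffices to treat the single cone $O_0=O_\rightarrow$. Note that $(\rk,\rk)O_\rightarrow$ corresponds to $(\Std_-,\Std_+)$, i.e. $(\rk,\rk)O_\rightarrow=O_\leftarrow$. Thm. \ref{lb89} gives $\Ad_{\Jcl}\Bcl(O_\rightarrow)=\Bcl(O_\leftarrow)$, which is exactly the required identity.

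\emph{Expected obstacle.} The main risk is bookkeeping. First, one must check the correspondence $(\rk,\rk)O_\rightarrow=O_\leftarrow$ with the sign conventions for $\rk$ on $\Rbb$ and on $\Jtd$. Second, a double cone $O$ corresponds to many pairs $(\wtd I,\wtd J)$, so one must check that the transformed cone $(g^+,g^-)O_0$ is well defined as a subset of $\Rcl$; this holds because the action is on $\Rcl$ directly. If transitivity onto non-equilateral cones is a concern, one can instead use Cor. \ref{lb75} to reduce \eqref{eq45b} to equilateral cones, each of which is a conformal image of $O_\rightarrow$.
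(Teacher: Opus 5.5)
Your outline matches the paper's: involutivity first, then \eqref{eq45a} via Prop.~\ref{lb42}, then \eqref{eq45b} by transporting Thm.~\ref{lb89} with Thm.~\ref{lb98} and \eqref{eq45a}. The involutivity and \eqref{eq45b} steps are correct. Your involutivity computation does not even need Lem.~\ref{lb88}, and the worry about non-equilateral cones is moot, since $\Gc$ acts transitively on $\Jtd$.

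However, the minus-component computation in \eqref{eq45a} contains a genuine error. The outer conjugation by $V^-(\mp\pi)$ does not cancel the inner one; it doubles it. Starting from your own intermediate expression,
\begin{align*}
\Jcl\,\Ucl^-(g)\,\Jcl=V^-(-\pi)\,\Ucl^-\big(\varrho_c(-\pi)\,\rk g\rk\,\varrho_c(\pi)\big)\,V^-(\pi)=\Ucl^-\big(\varrho_c(-2\pi)\,\rk g\rk\,\varrho_c(2\pi)\big),
\end{align*}
because $\Theta$ reverses rotations, so the two half-turns add up to a full turn. Equivalently, $\Jcl=\Theta V^-(2\pi)=V^-(-2\pi)\Theta$, so $\Ad_{\Jcl}=\Ad_\Theta\circ\Ad_{V^-(2\pi)}$.

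You therefore still need the fact that $V^-(2\pi)=\Ucl^-(\varrho_c(2\pi))$ commutes with $\Ucl^-(\Gc)$. Since $V^-(2\pi)$ need not be a scalar on $\Hcl$, this is a real input, not bookkeeping.

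The paper supplies it as follows. Because $U_0(\varrho_c(2\pi))=1$ and $(\Hcl,\pi^-_\cl)$ is untwisted, Cor.~\ref{lb30} shows that $V^-(2\pi)$ is an automorphism of the $\MA$-module $(\Hcl,\pi^-_\cl)$. Prop.~\ref{lb42} then gives that it commutes with every $\Ucl^-(g)$.

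Alternatively, $\varrho_c(2\pi)=(\varrho(2\pi),1)$ is central in $\Gc$. This holds because $\varrho(2\pi)$ is central in $\SG$ and $U_0(\varrho_c(2\pi))=1$ on the vacuum module.

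With this fact inserted, your argument goes through and coincides with the paper's.
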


\begin{proof}
We write $\Theta_{\sqz,\ovl{\sqz}}=\Theta$ and $\Ucl^\pm(\varrho_c(s))=V^\pm(s)$ for simplicity. Then $\Jcl=V^-(-2\pi)\Theta=\Theta V^-(2\pi)$. By Lem. \ref{lb88}, we have
\begin{align*}
\Jcl^2=\Theta V^-(2\pi)\Theta V^-(2\pi)=\Theta^2 V^-(-2\pi)V^-(2\pi)=1
\end{align*}

From the proof of Lem. \ref{lb88}, we see that $\Theta$ is anti-automorphism of the $\MA$-module $(\Hcl,\pi^\pm_\cl)$. Therefore, by Prop. \ref{lb42}, we have
\begin{align*}
\Theta \Ucl^+(g^+)\Theta=\Ucl^+(\rk g^+\rk)\qquad \Theta \Ucl^-(g^-)\Theta=\Ucl^-(\rk g^-\rk)
\end{align*}
Multiplying these two identities yields
\begin{align*}
\Theta\Ucl(g^+,g^-)\Theta=U_\cl(\rk g^+\rk,\rk g^-\rk)
\end{align*}
For each $\pm\in\{+,-\}$, by Cor. \ref{lb30}, $V^\pm(2\pi)$ is an automorphism of the $\MA$-module $(\Hcl,\pi^\pm_\cl)$. So it commutes with $\Ucl^\pm(g^\pm)$ by Prop. \ref{lb42}. By Rem. \ref{lb104}, $V^\pm(2\pi)$ commutes with $\Ucl^\mp(g^\mp)$. Thus \eqref{eq45a} is proved.

Choose $g^+,g^-\in\Gc$ such that $O_\rightarrow=(g^+,g^-)O$. By Thm. \ref{lb98}, we have
\begin{align*}
\Bcl(O_\rightarrow)=\Ad_{\Ucl(g^+,g^-)}\big(\Bcl(O)\big)
\end{align*}
and hence
\begin{align*}
\Ad_{\Jcl}\big(\Bcl(O_\rightarrow)\big)=\Ad_{\Jcl}\Ad_{\Ucl(g^+,g^-)}\big(\Bcl(O)\big)\xlongequal{\eqref{eq45a}}\Ad_{\Ucl(\rk g^+\rk,\rk g^-\rk)}\Ad_{\Jcl}\big(\Bcl(O)\big)  \tag{$\star$}\label{eq56}
\end{align*}
By Thm. \ref{lb89}, the left hand side above equals $\Bcl(O_\leftarrow)$. Since $O_\leftarrow=(\rk,\rk)O_\rightarrow$, we have
\begin{align*}
O_\leftarrow=(\rk,\rk)(g^+,g^-)O=(\rk g^+\rk,\rk g^-\rk)(\rk,\rk)O
\end{align*}
and hence, by Thm. \ref{lb98},
\begin{align*}
\Bcl(O_\leftarrow)=\Ad_{\Ucl(\rk g^+\rk,\rk g^-\rk)}\Bcl((\rk,\rk)O)  \tag{$\star\star$}\label{eq57}
\end{align*}
So the right hand sides of \eqref{eq56} and \eqref{eq57} are equal, finishing the proof of \eqref{eq45b}.
\end{proof}

\subsection{Haag duality for multi-double-cones}\label{lb145}

\begin{df}\label{lb105}
For $O_1,\dots,O_n\in\MO$ whose closures are mutually spacelike separated, let
\begin{align*}
\pmb{\Bcl(O_1\cup\cdots\cup O_n)}:=\big(\Bcl(O_1)\cup\cdots\cup\Bcl(O_n)\big)''
\end{align*}
Note that $(O_1\cup\cdots\cup O_n)'$ is also the disjoint union of some $D_1,\dots,D_n\in\MO$ whose closures are mutually spacelike separated.
\end{df}

\begin{figure}[h]
	\centering
\begin{gather*}
\includegraphics[height=3.5cm]{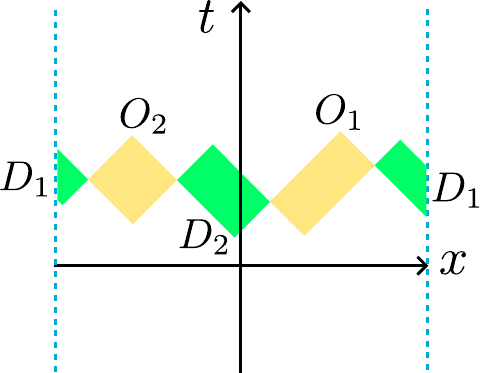}
   \end{gather*}
\caption{.~~The causal complement $D_1\cup D_2$ of $O_1\cup O_2$, projected onto $\Rpp$}\label{fig3}
\end{figure}

\begin{thm}[\textbf{Bulk-bulk Haag duality}]\label{lb115}
Let $n\geq 1$, and let $O_1,\dots,O_n$ and $D_1,\dots,D_n$ be as in Def. \ref{lb105}. Then
\begin{align*}
\big(\Bcl(O_1\cup\cdots\cup O_n)\big)'=\Bcl(D_1\cup\cdots\cup D_n)
\end{align*}
\end{thm}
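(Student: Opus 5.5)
The plan is to first prove the case $n=1$ using Thm.~\ref{lb89} and conformal covariance, and then pass to general $n$ by induction, using relative commutants computed via the split property. The inclusion $\Bcl(D_1\cup\cdots\cup D_n)\subset\Bcl(O_1\cup\cdots\cup O_n)'$ is immediate from locality (Thm.~\ref{lb78}), since each $D_l$ is spacelike separated from each $O_k$. So only the reverse inclusion needs work.

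\emph{Case $n=1$.} Let $O\in\MO$ correspond to $(\wtd I,\wtd J)$. Choose $g^\pm\in\Gc$ with $g^+\wtd I=\Std_+$ and $g^-\wtd J=\Std_-$, so that $(g^+,g^-)O=O_\rightarrow$. Since $(g^+,g^-)$ acts coordinatewise in the $u^+u^-$-picture and respects \eqref{eq32}, it preserves spacelike separation. Hence $(g^+,g^-)O'=O_\rightarrow'=O_\leftarrow$, the latter corresponding to $(\bpr\Std_+,\Std_+')=(\Std_-,\Std_+)$. By Thm.~\ref{lb98},
\begin{align*}
\Bcl(O)'=\Ad_{\Ucl(g^+,g^-)^*}\big(\Bcl(O_\rightarrow)'\big)=\Ad_{\Ucl(g^+,g^-)^*}\big(\Bcl(O_\leftarrow)\big)=\Bcl(O'),
\end{align*}
where the middle equality is Thm.~\ref{lb89}. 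This also yields the irreducibility statement used in Thm.~\ref{lb72}.

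\emph{Case $n\geq2$.} I would argue by induction on $n$, using a relative-commutant lemma generalizing Prop.~\ref{lb93}. Take $O_1,O_2$ adjacent in the cyclic order, with the region $D$ between them being one of the $D_k$. Let $E$ be the double cone whose $u^\pm$-projections are the smallest arg-intervals containing those of $O_1,D,O_2$ (so $E$ is the ``convex hull'' of $O_1\cup D\cup O_2$). Then $E\setminus\ovl{O_1\cup O_2\cup D}$ consists of two further double cones $T_1,T_2$, causally above and below, that are timelike to the $O_i$. The claims to prove are
\begin{align*}
\Bcl(E)\cap\Bcl(O_1)'\cap\Bcl(O_2)'=\Bcl(D),\qquad E'=D_3\cup\cdots\cup D_n\text{-type region with one fewer cone}.
\end{align*}
The second claim is elementary geometry. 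For the first, I would use Thm.~\ref{lb98} to move $E$ into $\Opp$ inside a standard cone $(\wtd K,\wtd K)$, and then apply the faithful representation $\pi^L_{\wtd K}$ (Rem.~\ref{lb91}). This reduces the claim to an identity of relative commutants among the algebras $\fk A(\sqrt I,-\sqrt J)$ inside $\MA(\sqrt K\cup-\sqrt K)$-type inclusions. Each such algebra is a relative commutant $\MA(\text{interval})\cap\MA(\text{interval})'$ attached to finitely many disjoint intervals. As in the proof of Lem.~\ref{lb90}, the split property turns a chain of such inclusions into tensor products with type I factors, and $\SA\cap\SA'=\Cbb$ for factors then gives the desired equality.

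With this lemma, the induction runs as follows. Let $X\in\Bcl(O_1\cup\cdots\cup O_n)'$. Using additivity (Thm.~\ref{lb146}, Cor.~\ref{lb75}) together with the identity $\Bcl(E)=\Bcl(O_1)\vee\Bcl(O_2)\vee\Bcl(D)\vee\Bcl(T_1)\vee\Bcl(T_2)$, one sees that $X$ commutes with $\Bcl(E)\cap\Bcl(O_1\cup O_2)'$. One then compares the configuration $(E,O_3,\dots,O_n)$, to which the induction hypothesis applies, with the original one, and reassembles $X$ inside $\Bcl(D)\vee\Bcl(D_3\cup\cdots)$.

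The main obstacle I expect is this last reassembly. One must show that the commutant of $\Bcl(E)\vee\Bcl(O_3)\vee\cdots$ together with $\Bcl(E)\cap\Bcl(O_1\cup O_2)'=\Bcl(D)$ generates $\Bcl(O_1\cup\cdots\cup O_n)'$. This requires a split/tensor decomposition of the whole configuration at once, not just of a chain. My first attempt would be to realize $\Hcl$, via the split property of $\MA\otimes\MA$ applied to the pairwise separated intervals underlying $E$ and $E'$, as a tensor product on which $\Bcl(E)$ and $\Bcl(E')$ act on separate factors. The required equality would then follow from the commutation theorem for tensor products, $(\mc M_1\otimes\mc M_2)'=\mc M_1'\otimes\mc M_2'$, combined with the $n=1$ duality $\Bcl(E)'=\Bcl(E')$.
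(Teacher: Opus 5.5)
Your $n=1$ case is correct and is the paper's argument (Thm.~\ref{lb89} transported by Thm.~\ref{lb98}). The gap is in $n\geq 2$.

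\textbf{The intermediate claim is false.} You claim every $X\in\Bcl(O_1\cup\cdots\cup O_n)'$ commutes with $\Bcl(E)\cap\Bcl(O_1\cup O_2)'$. By your own lemma this algebra is $\Bcl(D)$. By locality, $\Bcl(D)\subset\Bcl(O_1\cup\cdots\cup O_n)'$, so $X$ may be any element of the noncommutative factor $\Bcl(D)$.

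\textbf{The reassembly does not follow.} Your two ingredients are $\Bcl(E)\cap(\Bcl(O_1)\vee\Bcl(O_2))'=\Bcl(D)$ and the induction hypothesis $(\Bcl(E)\vee\Bcl(O_3)\vee\cdots\vee\Bcl(O_n))'=\Bcl(D_2\cup\cdots\cup D_n)$, with $D=D_1$. They do not formally combine into $\bigcap_k\Bcl(O_k)'=\bigvee_l\Bcl(D_l)$. That is a distributivity statement needing extra structure.

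\textbf{The structure you propose cannot exist.} The intervals underlying $E$ and $E'$ are complementary, not separated, so there is no split gap. Moreover, a factorization $\Hcl\simeq\MH_1\otimes\MH_2$ with $\Bcl(E)=\mc M_1\otimes 1$ and $\Bcl(E')=\Bcl(E)'=1\otimes\mc M_2$ would give $\mc M_1'\bar\otimes\fk L(\MH_2)=\Cbb\otimes\mc M_2$. This forces $\mc M_1=\fk L(\MH_1)$, contradicting that $\Bcl(E)$ is type III.

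A smaller point: $E\setminus\ovl{O_1\cup O_2\cup D}$ consists of two L-shaped regions, not double cones. Also, Thm.~\ref{lb146} does not apply to this cover, because the five pieces are disjoint open sets and no arc between the top and bottom corners of $E$ lies in their union.

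\textbf{The missing idea} is to use the $n=1$ duality for a single cone only. Set $\Delta=O_n'$. Then
$\Bcl(O_1\cup\cdots\cup O_n)'=\Bcl(\Delta)\cap\Bcl(O_1\cup\cdots\cup O_{n-1})'$,
and $\Delta$ is one double cone containing $O_1,\dots,O_{n-1}$ and all of $D_1,\dots,D_n$, ordered $D_1,O_1,D_2,\dots,O_{n-1},D_n$. So the whole problem becomes a relative commutant computation inside one cone algebra:
\begin{enumerate}
\item Make $\Delta$ standard, corresponding to $(\wtd\Gamma,\wtd\Gamma)$, by Thm.~\ref{lb98}.
\item Pull back along the faithful normal representation $\pi^L_{\wtd\Gamma}$ (Rem.~\ref{lb91}). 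Then $\Bcl(\Delta)$, $\Bcl(O_i)$, $\Bcl(D_j)$ correspond to $\MA(F_1)\cap\MA(E_n)'$, $\MA(E_i)\cap\MA(F_{i+1})'$, $\MA(F_j)\cap\MA(E_j)'$. Here $F_1\Supset E_1\Supset F_2\Supset\cdots\Supset F_n\Supset E_n$ is a nested chain with $F_j\setminus\ovl{E_j}=\sqrt{K_j}\sqcup-\sqrt{L_j}$ and $E_i\setminus\ovl{F_{i+1}}=\sqrt{I_i}\sqcup-\sqrt{J_i}$.
\item The split property of $\MA$ turns the whole chain into one tensor product, where the relative commutant is read off slot by slot (Lem.~\ref{lb106}).
\end{enumerate}
This is exactly the ``decomposition of the whole configuration at once'' you say is needed. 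It is carried out inside $\Bcl(\Delta)$ rather than on $\Hcl$, and it needs no induction.

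Your induction could instead be repaired using a genuine gap, at more cost than the direct route. You would need a type I factor between $\Bcl(E)$ and $(\Bcl(O_3)\vee\cdots\vee\Bcl(O_n))'$. It can be obtained from the split property of $\MA$ after moving a slightly enlarged $\wht E\Supset E$, still spacelike to $O_3,\dots,O_n$, into a standard cone. You would combine this with the relative bicommutant identity $\Bcl(E)\cap\Bcl(D)'=\Bcl(O_1)\vee\Bcl(O_2)$.
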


\begin{proof}
When $n=1$, this follows from Thm. \ref{lb89} and \ref{lb98}. Now assume $n\geq2$, and let $\Delta=O_n'$. Then
\begin{align*}
O_1\cup\cdots\cup O_{n-1}\subset \Delta\qquad \Delta\cap O_1'\cap\cdots\cap O_{n-1}'=D_1\sqcup\cdots\sqcup D_n
\end{align*}
The proved special case implies $\Bcl(O_n)'=\Bcl(\Delta)$. Thus, it suffices to show
\begin{align}\label{eq46}
\Bcl(\Delta)\cap\Bcl(O_1\cup\cdots\cup O_{n-1})'=\Bcl(D_1\cup\cdots\cup D_n)
\end{align}

By Thm. \ref{lb98}, it suffices to assume that $\Delta$ is standard, and hence corresponding to $(\wtd\Gamma,\wtd\Gamma)$ for some $\wtd\Gamma\in\Jtd$. By changing the subscripts, we assume that the following double subcones of $\Delta$ are listed from right to left:
\begin{align*}
D_1,O_1,D_2,O_2,\dots,O_{n-1},D_n
\end{align*}
Let $O_i$ correspond to $(\wtd I_i,\wtd J_i)$ and $D_j$ to $(\wtd K_j,\wtd L_j)$, where all arg-valued intervals are contained in $\wtd\Gamma$. Then, the following arg-valued intervals are listed in the clockwise order:
\begin{gather*}
\sqrt{\wtd K_1},\sqrt{\wtd I_1},\sqrt{\wtd K_2},\sqrt{\wtd I_2},\dots,\sqrt{\wtd I_{n-1}},\sqrt{\wtd K_n}\\
-\sqrt{\wtd L_n},-\sqrt{\wtd J_{n-1}},-\sqrt{\wtd L_{n-1}},-\sqrt{\wtd J_{n-2}},\dots,-\sqrt{\wtd J_1},-\sqrt{\wtd L_1}
\end{gather*}
Applying the following Lem. \ref{lb106} to the case
\begin{gather*}
F_j\setminus\ovl{E_j}=\sqrt{K_j}\sqcup -\sqrt{L_j}\qquad E_i\setminus\ovl{F_{i+1}}=\sqrt{I_i}\sqcup -\sqrt{J_i} 
\end{gather*}
(where $i\leq n-1$ and $j\leq n$), and using the faithfulness of the normal representation $\pi^L_\Gamma(\cdots)|_{\Hcl}$ as in the proof of Prop. \ref{lb93}, we obtain \eqref{eq46}.
\end{proof}

\begin{lm}\label{lb106}
Let $F_1,E_1,\dots,F_n,E_n\in\MJ$ such that
\begin{align*}
F_1\Supset E_1\Supset F_2\Supset E_2\Supset\cdots\Supset F_n\Supset E_n
\end{align*}
Then the relative commutant of
\begin{align*}
\bigvee_{i=1}^{n-1}\big(\MA(E_i)\cap\MA(F_{i+1})'\big)
\end{align*}
in
\begin{align*}
\MA(F_1)\cap\MA(E_n)'
\end{align*}
is equal to
\begin{align*}
\bigvee_{j=1}^n\big(\MA(F_j)\cap\MA(E_j)'\big)
\end{align*}
\end{lm}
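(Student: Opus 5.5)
The plan is to imitate the proof of Lem. \ref{lb90} and use the split property to turn the nested chain into a tensor product, where relative commutants can be computed factor by factor. The chain $E_n\Subset F_n\Subset E_{n-1}\Subset\cdots\Subset E_1\Subset F_1$ has $2n$ members and every consecutive inclusion is compact. Applying the split property (Rem. \ref{lb1}) to each consecutive inclusion gives a type I factor in between. Iterating as in Lem. \ref{lb90}, the whole chain becomes unitarily equivalent to a chain of tensor products on $\MK_{2n}\otimes\cdots\otimes\MK_1$. For the innermost algebra we take $\MK_1=\MH_0$ and $\SA_1=\MA(E_n)$. Each step $X\Subset Y$ of the chain contributes a new tensor factor $\SA_Y$ on a new space $\MK_Y$, with $\MA(Y)=\SA_Y\otimes\fk L(\text{all inner factors})$ and $\Cbb$ on all outer factors. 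Call the factors, from the inside out, $\SA_{E_n},\SA_{F_n},\SA_{E_{n-1}},\dots,\SA_{F_1}$.

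In this picture I would compute each relative commutant, exactly as in Lem. \ref{lb90}. For a compact pair $X\Subset Y$ in the chain, $\MA(Y)\cap\MA(X)'$ is $\SA_Y$ in the $Y$-slot, $\SA_{X}'$ in the $X$-slot, $\Cbb$ in all outer slots, and $\Cbb$ in the slots strictly inside $X$. For the last point, the factor $\SA_{X}$ commutes with nothing from the inner $\fk L$'s except scalars, and the commutant of $\SA_X\otimes\fk L(\cdots)$ is $\SA_X'\otimes\Cbb$. This gives the following.
\begin{itemize}
\item The generators $\MA(E_i)\cap\MA(F_{i+1})'$ sit in the slots $(E_i,F_{i+1})$ as $\SA_{E_i}\otimes\SA_{F_{i+1}}'$.
\item The generators $\MA(F_j)\cap\MA(E_j)'$ sit in the slots $(F_j,E_j)$ as $\SA_{F_j}\otimes\SA_{E_j}'$.
\item The ambient algebra $\MA(F_1)\cap\MA(E_n)'$ equals $\SA_{F_1}\otimes\fk L\otimes\cdots\otimes\fk L\otimes\SA_{E_n}'$.
\end{itemize}

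The slots used by the different generators are disjoint, so each join is a tensor product.
\begin{align*}
\mc N:=\bigvee_{i=1}^{n-1}\big(\MA(E_i)\cap\MA(F_{i+1})'\big)&=\Cbb_{F_1}\otimes\bigotimes_{i=1}^{n-1}\big(\SA_{E_i}\otimes\SA_{F_{i+1}}'\big)\otimes\Cbb_{E_n}\\
\bigvee_{j=1}^{n}\big(\MA(F_j)\cap\MA(E_j)'\big)&=\bigotimes_{j=1}^{n}\big(\SA_{F_j}\otimes\SA_{E_j}'\big)
\end{align*}
Taking commutants slotwise, $\mc N'$ is $\fk L$ in the $F_1$- and $E_n$-slots, $\SA_{E_i}'$ in the $E_i$-slots and $\SA_{F_{i+1}}$ in the $F_{i+1}$-slots. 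Intersecting with the ambient algebra, slot by slot via the tensor-intersection formula \cite[III.4.5.9]{Bla06}, gives $\SA_{F_1}$ in the $F_1$-slot, $\SA_{E_n}'$ in the $E_n$-slot, $\SA_{E_i}'$ in the $E_i$-slots and $\SA_{F_j}$ in the $F_j$-slots. That is exactly the tensor product of the $\SA_{F_j}\otimes\SA_{E_j}'$, as claimed.

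The main obstacle is the bookkeeping. The factors $\SA_Y$ in the iterated split decomposition need not be factors themselves, and the formula $(\bigotimes\mc M_k)'=\bigotimes\mc M_k'$ for von Neumann tensor products (Tomita's commutation theorem) must be applied to a finite product. The claim that a relative commutant is $\Cbb$ in the inner slots also needs care; for the innermost slot it uses that $\SA_1=\MA(E_n)$ is a factor, just as in Lem. \ref{lb90}. In the intermediate slots it comes from the $\fk L$'s there, so no factoriality of the $\SA_Y$ is required. I would set this up as an induction on the length of the chain, adding one split at a time, to keep the identifications rigorous.
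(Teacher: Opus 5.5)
Your proposal is correct and follows the paper's proof: you apply the split property iteratively to decompose the chain into tensor slots, then compute the relative commutant slot by slot via Tomita's commutation theorem and the tensor-intersection formula (a computation the paper calls obvious and you spell out). One minor point: unlike Lem.~\ref{lb90}, this lemma does not need $\MA(E_n)$ to be a factor, because $\mc N'$ is all of $\fk L(\MH_0)$ in the $E_n$-slot, so that slot of the intersection is simply $\SA_{E_n}'$.
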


\begin{proof}
Similar to the proof of Lem. \ref{lb90}, the chain $\MA(F_1)\supset\MA(E_1)\supset\cdots\supset\MA(E_n)$ is unitarily equivalent to
\begin{align*}
&\SB_1\otimes\fk L(\MH_1)\otimes\fk L(\MK_2)\otimes\fk L(\MH_2)\otimes\cdots\otimes\fk L(\MK_n)\otimes\fk L(\MH_n)\\
\supset&\Cbb\otimes\SA_1\otimes\fk L(\MK_2)\otimes\fk L(\MH_2)\otimes\cdots\otimes\fk L(\MK_n)\otimes\fk L(\MH_n)\\
\supset& \Cbb\otimes\Cbb\otimes\SB_2\otimes\fk L(\MH_2)\otimes\cdots\otimes\fk L(\MK_n)\otimes\fk L(\MH_n)\\
\supset& \Cbb\otimes\Cbb\otimes\Cbb\otimes\SA_2\otimes\cdots\otimes\fk L(\MK_n)\otimes\fk L(\MH_n)\\
\supset&\cdots\\
\supset&\Cbb\otimes\Cbb\otimes\Cbb\otimes\Cbb\otimes\cdots\otimes\SB_n\otimes\fk L(\MH_n)\\
\supset&\Cbb\otimes\Cbb\otimes\Cbb\otimes\Cbb\otimes\cdots\otimes\Cbb\otimes\SA_n
\end{align*}
where each $\SA_i$ is a von Neumann algebra on $\MH_i$, and $\SB_j$ is a von Neumann algebra on $\MK_j$. Then our goal reduces to showing that the relative commutant of
\begin{align*}
\Cbb\otimes\SA_1\otimes\SB_2'\otimes\SA_2\otimes\SB_3'\otimes\cdots\otimes \SA_{n-1}\otimes\SB_n'\otimes\Cbb
\end{align*}
in
\begin{align*}
\SB_1\otimes\fk L(\MH_1)\otimes\fk L(\MK_2)\otimes\fk L(\MH_2)\otimes\fk L(\MK_3)\otimes\cdots\otimes \fk L(\MH_{n-1})\otimes\fk L(\MK_n)\otimes\SA_n'
\end{align*}
is equal to
\begin{align*}
\SB_1\otimes\SA_1'\otimes\SB_2\otimes\SA_2'\otimes\SB_3\otimes\cdots\otimes \SA_{n-1}'\otimes\SB_n\otimes\SA_n'
\end{align*}
But this is obvious.
\end{proof}

\section{The open CFT}\label{lb150}

Fix a conformal net $\MA$ with central charge $c$. In this chapter, all objects of $\RepA$ are assumed to be nonzero.

\subsection{Boundary intervals and double cones in $\Rop$}

\subsubsection{Boundary intervals in $\Rop$}

\begin{df}
In the $xt$-coordinates, define subsets of $\Roo$:
\begin{gather*}
\pmb{\Rop}=\{(x,t)\in\Roo:0\leq x\leq\pi\}\qquad \pmb{\Rzp}=\{(x,t)\in\Roo:0< x<\pi\}\\ 
\pmb{\partial_+\Rop}=\{0\}\times\Rbb\qquad\pmb{\partial_-\Rop}=\{\pi\}\times\Rbb
\end{gather*}
also viewed as subsets of $\Rcl$. Two points $p_1,p_2\in\Rop$ are called \textbf{spacelike separated} if they are so as elements of $\Rcl$, and hence can be described by Rem. \ref{lb76}.
\end{df}

\begin{df}\label{lb110}
The group action of $\SG$ on $\Rop$, which lifts to a group action of $\Gc$ on $\Rop$, is defined as follows. Consider the $u^+u^-$-coordinates. By viewing each $g\in\SG$ as a diffeomorphism of $\Rbb$ (cf. \eqref{eq2}), for each $(u^+,u^-)$ in $\Rop$, let
\begin{align*}
g(u^+,u^-)=(g(u^+),g(u^-))
\end{align*}
\end{df}

\begin{rem}\label{lb111}
The above action of $g$ on $\Rop$ is well-defined. In particular, note that in the $u^+u^-$-coordinates we have
\begin{align}\label{eq48}
\partial_+\Rop=\{(u,u):u\in\Rbb\}\qquad \partial_-\Rop=\{(u+2\pi,u):u\in\Rbb\}
\end{align}
Then clearly $g(\partial_+\Rop)=\partial_+\Rop$. Using the fact that $g(u+2\pi)=g(u)+2\pi$, cf. \eqref{eq2}, one immediately obtains $g(\partial_-\Rop)=\partial_-\Rop$.
\end{rem}

\begin{df}\label{lb113}
For each $\pm\in\{+,-\}$, let
\begin{gather*}
\pmb{\Jtd_\pm}=\{\text{non-empty open intervals in $\partial_\pm\Rop$ of lengths $<2\pi$}\}
\end{gather*}
Unless otherwise stated, we assume the equivalences
\begin{align}\label{eq49}
\Jtd_+\simeq\Jtd\simeq\Jtd_-
\end{align}
defined in the $xt$-coordinates by
\begin{subequations}\label{eq63}
\begin{align}
\{0\}\times\Rbb\simeq\Rbb\simeq\{\pi\}\times\Rbb\qquad (0,t)\simeq t\simeq (\pi,\pi+t)
\end{align}
equivalently, defined in the $u^+u^-$-coordinates (cf. \eqref{eq48}) by
\begin{align}\label{eq63b}
(u,u)\simeq u\simeq (u+2\pi,u)
\end{align}
\end{subequations}
\end{df}

For example, $\Std_+$ corresponds in the $xt$-coordinates to $\{(0,t):0<t<\pi\}$ in $\Jtd_+$, and $\Std_-$ corresponds in the $xt$-coordinates to $\{(\pi,t):0<t<\pi\}$ in $\Jtd_-$.

\begin{df}
By Def. \ref{lb110} and Rem. \ref{lb111}, the group actions of $\SG,\Gc$ on subsets of $\Rop$ restrict to group actions of $\SG,\Gc$ on $\Jtd_+$ and $\Jtd_-$, both equivalent to the actions of $\SG,\Gc$ on $\Jtd$ via the equivalence \eqref{eq49}.
\end{df}

\subsubsection{Double cones in $\Rzp$}\label{lb151}

\begin{df}
For each double cone $O\subset\Rpp$, by choosing unique $(\wtd I,\wtd J)$ satisfying \eqref{eq26} and corresponding to $O$, we define
\begin{align*}
\pmb{\sqrt O}=\text{the double cone corresponding to }\big(\sqrt{\wtd I},-\sqrt{\wtd J}\big)
\end{align*}
In other words, in the $u^+u^-$-coordinates,
\begin{align*}
\sqrt O=\Big\{\Big(\frac{u^+}2,\frac{u^-}2-\pi\Big):(u^+,u^-)\in O\Big\}
\end{align*}
Clearly $\sqrt O$ is a double cone contained in $\Rzp$.
\end{df}

\begin{df}
Recall Def. \ref{lb109} for the meaning of $\Opp$. Let
\begin{gather*}
\pmb{\Ozp}=\{\text{double cones }O\Subset\Rzp\}\\
\pmb{\ZOpp}=\{\sqrt O:O\in\Opp\}
\end{gather*}
Then clearly $\ZOpp\subset\Ozp$.
\end{df}

\begin{rem}\label{lb79}
One easily checks that double cones in $\Rzp$ are precisely those corresponding to some $(\wtd I,\wtd J)$ satisfying
\begin{align*}
0<\arg_I z-\arg_J\zeta<2\pi\qquad\text{for each }z\in I,\zeta\in J
\end{align*}
Therefore, $\Ozp$ consists precisely of double cones corresponding to some $(\wtd I,\wtd J)$ such that $\wtd J$ is clockwise to $\wtd I$, and that the closures of $I$ and $J$ are disjoint.

Moreover, $\ZOpp$ consists of $O\in\Ozp$ whose side lengths in the $u^+u^-$-coordinates are $<\pi$, that is, those corresponding to some $(\wtd I,\wtd J)$ such that $I,J$ have lengths $<\pi$. \hqed
\end{rem}

\subsection{The left boundary net $\Bopi$ and the right boundary net $\Bopi'$ for the boundary condition $i$}\label{lb131}

\begin{df}\label{lb116}
For each $\MH_i,\MH_j\in\RepA$, the \textbf{state space with boundary conditions $i,j$} is defined to be
\begin{align*}
\pmb{\Hopij}=\MH_i\boxtimes\MH_{\ovl j}
\end{align*}
where the fusion product is in $\RepA$.
\end{df}

Recall Thm. \ref{lb35} and \ref{lb63} for the definitions of $\pi^L,\pi^R$.

\begin{df}\label{lb112}
For each $\MH_i\in\RepA$, the \textbf{left resp. right boundary nets with boundary condition $i$} are collections of von Neumann algebras
\begin{gather}
\pmb{\Bopi}=(\Bopi(\wtd I))_{\wtd I\in\Jtd}\qquad\text{resp.}\qquad \pmb{\Bopi'}=(\Bopi'(\wtd I))_{\wtd I\in\Jtd}
\end{gather}
on $\Hopii=\MH_i\boxtimes\MH_{\ovl i}$ defined by
\begin{subequations}\label{eq52}
\begin{gather}
\Bopi(\wtd I)=\pi^L_{\wtd I}(\End_{\MA(I')}(\MH_i))\big|_{\MH_i\boxtimes\MH_{\ovl i}}\\
\Bopi'(\wtd I)=\pi^R_{\wtd I}(\End_{\MA(I')}(\MH_{\ovl i}))\big|_{\MH_i\boxtimes\MH_{\ovl i}}
\end{gather}
\end{subequations}
\end{df}

Recall the group action $U=U_{\op(i,j)}=U_{i\boxtimes\ovl j}$ of $\Gc$ on $\Hopij$ defined by Thm. \ref{lb12}. Recall the antiunitary map
\begin{align*}
\Theta_{i,\ovl j}:\Hopij\rightarrow \Hopji
\end{align*}
defined by Thm. \ref{lb49}, which satisfies
\begin{align*}
\Theta_{j,\ovl i}\Theta_{i,\ovl j}=1
\end{align*}
and hence is involutive when $i=j$.

\begin{thm}\label{lb114}
Let $\MH_i\in\RepA$. Then the nets $\Bopi$ and $\Bopi'$ satisfy the following properties for each $\wtd I,\wtd J\in\Jtd$.
\begin{enumerate}[label=(\arabic*)]
\item (Isotony) If $\wtd I\subset\wtd J$, then 
\begin{align*}
\Bopi(\wtd I)\subset\Bopi(\wtd J)\qquad \Bopi'(\wtd I)\subset\Bopi'(\wtd J)
\end{align*}
\item (Extension property) We have
\begin{align*}
\pi_{i\boxtimes\ovl i,I}(\MA(I))\subset \Bopi(\wtd I)\cap\Bopi'(\wtd I)
\end{align*}
\item (Conformal covariance) For each $g\in\Gc$, the operator $U(g)=U_{\opii}(g)$ satisfies
\begin{align*}
U(g)\Bopi(\wtd I)U(g)^*=\Bopi(g\wtd I)\qquad U(g)\Bopi'(\wtd I)U(g)^*=\Bopi'(g\wtd I)
\end{align*}
Moreover, if $g\in\Gc(I)$, then 
\begin{align*}
U_{\opii}(g)\in\pi_{i\boxtimes\ovl i,I}(\MA(I))
\end{align*}
and hence $U_{\opii}(g)\in\Bopi(\wtd I)$ by the extension property.
\item (PCT symmetry) The involutive anti-unitary operator $\Theta_{i,\ovl i}$ on $\Hopii$ satisfies
\begin{gather*}
\Theta_{i,\ovl i}U_{\op(i,i)}(g)\Theta_{i,\ovl i}=U_{\op(i,i)}(\rk g\rk) \\
\Theta_{i,\ovl i}\Bopi(\wtd I)\Theta_{i,\ovl i}=\Bopi'(\rk\wtd I)
\end{gather*}
for each $g\in\Gc$.
\item (Positive energy) The self-adjoint generator of the one-parameter group $u\mapsto U_{\op(i,i)}(\varrho_c(u))$ is positive.
\item (\textbf{Boundary-boundary Haag duality}) $\Bopi'$ is the \textbf{dual net} of $\Bopi$ in the sense of \cite[Sec. 4]{Gui21b}, that is, for each $\wtd I\in\Jtd$, 
\begin{align*}
\Bopi(\wtd I)'=\Bopi'(\wtd I')
\end{align*}
\item (Irreducibility) $\Bopi(\wtd I)$ and $\Bopi'(\wtd I)$ are type III factors. Moreover, $\bigcup_{\wtd I\in\Jtd}\Bopi(\wtd I)$ and $\bigcup_{\wtd I\in\Jtd}\Bopi'(\wtd I)$ together generate $\fk L(\Hopii)$.
\end{enumerate}
\end{thm}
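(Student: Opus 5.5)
The plan is to specialize the general machinery of Sec.~\ref{lb48} and the conjugate-module section to the trivial group $G=\{e\}$. For untwisted $\MA$-modules, $\pi_{i,\wtd I}$ depends only on $I$ (Rem.~\ref{lb11}). Hence $\End_{\MA(\wtd I')}(\MH_i)=\End_{\MA(\bpr\wtd I)}(\MH_i)=\End_{\MA(I')}(\MH_i)$, and the clockwise/anticlockwise distinctions in Thm.~\ref{lb35}, \ref{lb63}, \ref{lb47}, \ref{lb107} collapse. Each item of the theorem should then follow from one earlier result applied with $\MH_j=\MH_{\ovl i}$.

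\emph{Items (1)--(3) and (5).} For isotony, note that $I\subset J$ gives $\End_{\MA(I')}(\MH_i)\subset\End_{\MA(J')}(\MH_i)$, and Rem.~\ref{lb61} says that $\pi^L_{\wtd J}$ restricts to $\pi^L_{\wtd I}$ there; the same holds for $\pi^R$. For the extension property, \eqref{eq28} gives $\pi_{i\boxtimes\ovl i,I}(x)=\pi^L_{\wtd I}(\pi_{i,I}(x))$ with $\pi_{i,I}(x)\in\End_{\MA(I')}(\MH_i)$, and \eqref{eq30} gives $\pi_{i\boxtimes\ovl i,I}(x)=\pi^R_{\wtd I}(\pi_{\ovl i,I}(x))$. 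For covariance, I will invoke Thm.~\ref{lb108} with $\wtd J=g\wtd I$; its first two identities give $U_i(g)\End_{\MA(I')}(\MH_i)U_i(g)^*=\End_{\MA((gI)')}(\MH_i)$, and similarly for $\MH_{\ovl i}$. The local statement $U(g)=\pi_{i\boxtimes\ovl i,I}(U_0(g))$ for $g\in\Gc(I)$ is exactly Thm.~\ref{lb12}. Positive energy follows from \cite{Wei06} applied to the representation $\pi_{i\boxtimes\ovl i}$, as in Thm.~\ref{lb72}.

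\emph{Item (4), PCT.} By Thm.~\ref{lb49}, $\Theta_{i,\ovl i}$ is an anti-morphism, so Prop.~\ref{lb42} yields $\Theta U(g)\Theta=U(\rk g\rk)$. Involutivity is Rem.~\ref{lb81}. Next I apply Thm.~\ref{lb107} with $j=\ovl i$ and $\wtd J=\rk\wtd I$. It gives $\Co_i\End_{\MA(I')}(\MH_i)\Co_i^{-1}=\End_{\MA((\rk I)')}(\MH_{\ovl i})$ together with $\Theta\pi^L_{\wtd I}(A)\Theta=\pi^R_{\rk\wtd I}(\Co A\Co^{-1})$ on $\MH_{\ovl{\ovl i}}\boxtimes\MH_{\ovl i}=\MH_i\boxtimes\MH_{\ovl i}$. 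Together these say $\Theta\Bopi(\wtd I)\Theta=\Bopi'(\rk\wtd I)$.

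\emph{Items (6) and (7).} Haag duality is Thm.~\ref{lb47} with $\MH_j=\MH_{\ovl i}$: the commutant of $\pi^L_{\wtd I}(\End_{\MA(I')}(\MH_i))$ is $\pi^R_{\wtd I'}(\End_{\MA(I)}(\MH_{\ovl i}))$. Since the interior of the complement of $I'$ is $I$, the latter is $\Bopi'(\wtd I')$. For irreducibility, both algebras are type III factors by Rem.~\ref{lb91}. Then $\Bopi(\wtd I)\vee\Bopi(\wtd I)'=(\Bopi(\wtd I)\cap\Bopi(\wtd I)')'=\fk L(\Hopii)$, and by (6) the second algebra is $\Bopi'(\wtd I')$, so the two nets together generate $\fk L(\Hopii)$.

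I do not expect a real obstacle. The only care needed is bookkeeping of arg-valued intervals and the identification $\MH_{\ovl{\ovl i}}=\MH_i$ in Thm.~\ref{lb107}. One must check that the complements $(\rk\wtd I)'$ versus $\bpr(\rk\wtd I)$ give the same endomorphism algebra in the untwisted case, which holds because $\pi_{\ovl i,\wtd I}$ is independent of $\arg_I$.
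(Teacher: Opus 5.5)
Your proposal is correct and follows essentially the same route as the paper. It derives each item from Rem.~\ref{lb61}, \eqref{eq28}/\eqref{eq30}, Thm.~\ref{lb108}, Thm.~\ref{lb12}, Prop.~\ref{lb42}, Thm.~\ref{lb107} (the paper cites its consequence Cor.~\ref{lb87}), Thm.~\ref{lb47}, and Rem.~\ref{lb91}, all specialized to $\MH_j=\MH_{\ovl i}$. The only cosmetic difference is in (6): you apply Thm.~\ref{lb47} directly for arbitrary $\wtd I$, which is valid since $\bpr(\wtd I')=\wtd I$ and untwisted endomorphism algebras do not depend on arg functions, whereas the paper applies it at $\wtd I=\Std_+$ and then transports by conformal covariance.
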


As a consequence of the boundary-boundary Haag duality and the isotony, we obtain the \textbf{boundary-boundary locality}: if $\wtd J$ is clockwise to $\wtd I$, then
\begin{align*}
[\Bopi(\wtd I),\Bopi'(\wtd J)]=0
\end{align*}

\begin{proof}
Isotony: This is due to Rem. \ref{lb61} and the obvious fact that $\End_{\MA(I_1')}(\MH_i)\subset \End_{\MA(I_2')}(\MH_i)$ and $\End_{\MA(I_1')}(\MH_{\ovl i})\subset \End_{\MA(I_2')}(\MH_{\ovl i})$.

Extension property: Let $x\in\MA(I)$. By \eqref{eq28}, we have
\begin{align*}
\pi_{i\boxtimes\ovl i,I}(x)=\pi^L_{\wtd I}(\pi_{i,I}(x))\big|_{\MH_i\boxtimes\MH_{\ovl i}}
\end{align*}
where $\pi_{i,I}(x)$ belongs to $\End_{\MA(I')}(\MH_i)$. So $\pi_{i\boxtimes\ovl i,I}(x)\in\Bopi(\wtd I)$ by Def. \ref{lb112}. Similarly, \eqref{eq30} implies
\begin{align*}
\pi_{i\boxtimes\ovl i,I}(x)=\pi^R_{\wtd I}(\pi_{\ovl i,I}(x))\big|_{\MH_i\boxtimes\MH_{\ovl i}}
\end{align*}
where $\pi_{\ovl i,I}(x)\in\End_{\MA(I')}(\MH_{\ovl i})$, and hence $\pi_{i\boxtimes\ovl i,I}(x)\in\Bopi'(\wtd I)$.

Conformal covariance: The first part is due to Thm. \ref{lb108}. Suppose that $g\in\Gc(I)$. Then $U_{\opii}(g)$ equals $\pi_{i\boxtimes\ovl i,I}(U_0(g))$ by Thm. \ref{lb12}, and hence belongs to $\pi_{i\boxtimes\ovl i,I}(\MA(I))$.

PCT symmetry: The first relation is due to Prop. \ref{lb42}. The second is due to Cor. \ref{lb87}.

Positive energy: Similar to the proof of Thm. \ref{lb72}, this is due to \cite{Wei06}.

Haag duality: By Def. \ref{lb112},
\begin{align*}
\Bopi'(\Std_-)=\pi^R_{\Std_-}(\End_{\MA(\Sbb^1_+)}(\MH_{\ovl i}))\big|_{\MH_i\boxtimes\MH_{\ovl i}}=\Big(\pi^L_{\Std_+}(\End_{\MA(\Sbb^1_-)}(\MH_i))\big|_{\MH_i\boxtimes\MH_{\ovl i}}\Big)'
\end{align*}
where the last identity is due to Thm. \ref{lb47}. Since the last term is $\Bopi(\Std_+)'$, we obtain $\Bopi(\wtd I)'=\Bopi'(\wtd I')$ for $\wtd I=\Std_+$. The general case follows easily from this one using the conformal covariance established above.

Irreducibility: Similar to the proof of Thm. \ref{lb72}, the fact that $\Bopi(\wtd I)$ and $\Bopi'(\wtd I)$ are type III factors follows from Rem. \ref{lb91}. Thus $\Bopi(\wtd I)\cup\Bopi(\wtd I)'$ generates $\fk L(\Hopii)$. By the boundary-boundary Haag duality, $\Bopi(\wtd I)\cup\Bopi'(\wtd I')$ generates $\fk L(\Hopii)$.
\end{proof}

\begin{rem}\label{lb117}
One should view $\Bopi$ and $\Bopi'$ as nets with domains $\Jtd_+$ and $\Jtd_-$, respectively, by using the equivalence $\Jtd_+\simeq\Jtd\simeq\Jtd_-$ in Def. \ref{lb113}. Now, for each $\wtd I\in\Jtd$, if we view $\wtd I$ as in $\Jtd_+$, and if we view its clockwise complement $\wtd I'$ as in $\Jtd_-$, then one easily checks that
\begin{align*}
\wtd I'=\Int\{p\in\partial_-\Rop:p\text{ is spacelike separated from }\wtd I\}
\end{align*}
where the interior $\Int$ is with respect to $\partial_-\Rop$. In other words, $\wtd I'$ is the intersection of $\partial_-\Rop$ and the spacelike complement of $\wtd I$; see Fig. \ref{fig4}. This justifies the name ``boundary-boundary Haag duality'' in Thm. \ref{lb114}.
\end{rem}

\begin{figure}[h]
	\centering
\begin{gather*}
\includegraphics[height=3cm]{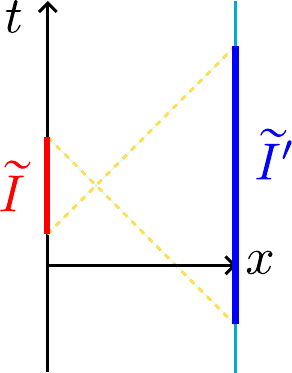}
   \end{gather*}
\caption{.~~Boundary-boundary Haag duality in $\Rop$}\label{fig4}
\end{figure}

\subsection{Boundary nets $\Bopi$ and $\Bopj'$ acting on the open-string state space $\Hop{i,j}=\MH_i\boxtimes\sbsc{\MA}\MH_{\ovl j}$}\label{lb130}

Recall from Def. \ref{lb116} that $\Hopij=\MH_i\boxtimes\MH_{\ovl j}$.

\begin{df}\label{lb118}
Let $\MH_i,\MH_j\in\RepA$ and $\wtd I\in\Jtd$. Define normal representations
\begin{gather*}
\pmb{\varpi^L_{\opij,\wtd I}}:\Bopi(\wtd I)\rightarrow\fk L(\Hopij)\qquad \pmb{\varpi^R_{\opij,\wtd I}}:\Bopj'(\wtd I)\rightarrow\fk L(\Hopij)
\end{gather*}
abbreviated to $\pmb{\varpi^L_{\wtd I}}$ and $\pmb{\varpi^R_{\wtd I}}$ when no confusion arises, such that the following diagrams commute:
\begin{gather*}
\begin{tikzcd}[ampersand replacement=\&]
\End_{\MA(I')}(\MH_i) \arrow[rr,"{\pi^L_{\wtd I}\big|_{\MH_i\boxtimes\MH_{\ovl j}}}"] \arrow[rd,"{\pi^L_{\wtd I}\big|_{\MH_i\boxtimes\MH_{\ovl i}}}"',"\simeq"] \&              \& \fk L(\MH_i\boxtimes\MH_{\ovl j}) \\
                        \& \Bopi(\wtd I) \arrow[ru,"\varpi^L_{\wtd I}"'] \&  
\end{tikzcd}\\
\begin{tikzcd}[ampersand replacement=\&]
\End_{\MA(I')}(\MH_{\ovl j}) \arrow[rr,"{\pi^R_{\wtd I}\big|_{\MH_i\boxtimes\MH_{\ovl j}}}"] \arrow[rd,"{\pi^R_{\wtd I}\big|_{\MH_j\boxtimes\MH_{\ovl j}}}"',"\simeq"] \&              \& \fk L(\MH_i\boxtimes\MH_{\ovl j}) \\
                        \& \Bopj'(\wtd I) \arrow[ru,"\varpi^R_{\wtd I}"'] \&  
\end{tikzcd}
\end{gather*}
\end{df}

\begin{rem}
The above definition makes sense because $\pi^L_{\wtd I}\big|_{\MH_i\boxtimes\MH_{\ovl i}}$ and $\pi^R_{\wtd I}\big|_{\MH_j\boxtimes\MH_{\ovl j}}$ are faithful, cf. Rem. \ref{lb91}. Note also that $\varpi^L_{\opii,\wtd I}$ and $\varpi^R_{\opjj,\wtd I}$ are inclusion maps.
\end{rem}

\begin{thm}\label{lb119}
Let $\MH_i,\MH_j\in\Rep(A)$. Then the following properties hold for each $\wtd I,\wtd J\in\Jtd$.
\begin{enumerate}[label=(\arabic*)]
\item (Isotony) If $\wtd I\subset\wtd J$, then
\begin{gather*}
\varpi^L_{\wtd J}\big|_{\Bopi(\wtd I)}=\varpi^L_{\wtd I}\qquad \varpi^R_{\wtd J}\big|_{\Bopj'(\wtd I)}=\varpi^R_{\wtd I}
\end{gather*}
\item (Extension property) For each $x\in\MA(I)$ and $y\in\MA(J)$, recalling that
\begin{gather*}
\pi_{i\boxtimes\ovl i}(x)\in\Bopi(\wtd I)\qquad \pi_{j\boxtimes\ovl j}(y)\in\Bopj'(\wtd J)
\end{gather*}
by the extension property in Thm. \ref{lb114}, we have
\begin{gather*}
\varpi^L_{\wtd I}(\pi_{i\boxtimes\ovl i,I}(x))=\pi_{i\boxtimes\ovl j,I}(x)\qquad \varpi^R_{\wtd J}(\pi_{j\boxtimes\ovl j,J}(y))=\pi_{i\boxtimes\ovl j,J}(y)
\end{gather*}
when acting on $\Hopij$.
\item (Conformal covariance) For each $g\in\Gc$ and each
\begin{align*}
A\in\Bopi(\wtd I)\qquad B\in\Bopj'(\wtd I)
\end{align*}
noting that $U(g)AU(g)^*$ belongs to $\Bopi(g\wtd I)$ and $U(g)BU(g)^*$ belongs to $\Bopj'(g\wtd I)$ by Thm. \ref{lb114}, we have
\begin{gather*}
U(g)\varpi^L_{\wtd I}(A)U(g)^*=\varpi^L_{g\wtd I}(U(g)AU(g)^*)\qquad\text{on }\Hopij\\ 
U(g)\varpi^R_{\wtd I}(B)U(g)^*=\varpi^R_{g\wtd I}(U(g)BU(g)^*)\qquad\text{on }\Hopij
\end{gather*}
\item (PCT symmetry) The anti-unitary map $\Theta_{i,\ovl j}:\Hopij\rightarrow\Hopji$ (with inverse $\Theta_{j,\ovl i}:\Hopji\rightarrow\Hopij$, cf. Rem. \ref{lb81}) satisfies
\begin{gather*}
\Theta_{i,\ovl j}\cdot U_{\op(i,j)}(g)\cdot\Theta_{j,\ovl i}=U_{\op(j,i)}(\rk g\rk)\qquad\text{on }\Hopji \\
\Theta_{i,\ovl j}\cdot\varpi^L_{\wtd I}(A)\cdot\Theta_{j,\ovl i}=\varpi^R_{\rk\wtd I}(\Theta_{i,\ovl i}A\Theta_{i,\ovl i})\qquad\text{on }\Hopji
\end{gather*}
for each $g\in\Gc$ and $A\in\Bopi(\wtd I)$, noting that $\Theta_{i,\ovl i}A\Theta_{i,\ovl i}\in\Bopi'(\rk\wtd I)$ by Thm. \ref{lb114}.
\item (Positive energy) The self-adjoint generator of the one-parameter group $u\mapsto U_{\op(i,j)}(\varrho_c(u))$ is positive.
\item(\textbf{Boundary-boundary Haag duality}) The following von Neumann algebras on $\Hopij$ are commutants of each other:
\begin{align}\label{eq51}
\varpi^L_{\wtd I}\big(\Bopi(\wtd I)\big)\qquad\varpi^R_{\wtd I'}\big(\Bopj'(\wtd I')\big)
\end{align}
\end{enumerate}
\end{thm}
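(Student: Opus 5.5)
The plan is to reduce every item to statements about $\pi^L,\pi^R$ on the fusion product $\MH_i\boxtimes\MH_{\ovl j}$ that were already proved in Sec.~\ref{lb48}. By Def.~\ref{lb118}, $\varpi^L_{\wtd I}$ is determined by $\varpi^L_{\wtd I}\big(\pi^L_{\wtd I}(A)|_{\Hopii}\big)=\pi^L_{\wtd I}(A)|_{\Hopij}$ for $A\in\End_{\MA(I')}(\MH_i)$, and $\varpi^R_{\wtd I}$ is determined analogously. Most properties are therefore transported directly from the corresponding properties of $\pi^L,\pi^R$.

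\textbf{Isotony} follows from Rem.~\ref{lb61}, applied simultaneously on $\Hopii$ and on $\Hopij$. For the \textbf{extension property}, I apply \eqref{eq28} on both $\MH_i\boxtimes\MH_{\ovl i}$ and $\MH_i\boxtimes\MH_{\ovl j}$ to the element $\pi_{i,I}(x)\in\End_{\MA(I')}(\MH_i)$. Both sides then equal $\pi^L_{\wtd I}(\pi_{i,I}(x))$ restricted to the respective space. For $\varpi^R$ I use \eqref{eq30} in the same way. \textbf{Conformal covariance} is Thm.~\ref{lb108} applied on both spaces. Write $A=\pi^L_{\wtd I}(A_0)|_{\Hopii}$. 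Then
\[
U(g)AU(g)^*=\pi^L_{g\wtd I}\big(U_i(g)A_0U_i(g)^*\big)\big|_{\Hopii}.
\]
Applying $\varpi^L_{g\wtd I}$ to the right-hand side gives $\pi^L_{g\wtd I}\big(U_i(g)A_0U_i(g)^*\big)\big|_{\Hopij}$. By Thm.~\ref{lb108} on $\Hopij$, this equals $U(g)\varpi^L_{\wtd I}(A)U(g)^*$. \textbf{Positive energy} again comes from \cite{Wei06}.

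For \textbf{PCT}, the first relation is Prop.~\ref{lb42}, since $\Theta_{i,\ovl j}$ is an anti-morphism. For the second, I use Thm.~\ref{lb107} twice with the same $A_0$. On $\Hopij$ it gives
\[
\Theta_{i,\ovl j}\,\pi^L_{\wtd I}(A_0)\,\Theta_{i,\ovl j}^{-1}=\pi^R_{\rk\wtd I}(\Co A_0\Co^{-1})\big|_{\Hopji}.
\]
On $\Hopii$ it gives $\Theta_{i,\ovl i}A\Theta_{i,\ovl i}=\pi^R_{\rk\wtd I}(\Co A_0\Co^{-1})|_{\Hopii}$, which lies in $\Bopi'(\rk\wtd I)$. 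By Def.~\ref{lb118}, applying $\varpi^R_{\rk\wtd I}$ (now in the $\op(j,i)$ setting) to it gives $\pi^R_{\rk\wtd I}(\Co A_0\Co^{-1})|_{\Hopji}$. The two sides therefore match. The one point needing care is that $\Co A_0\Co^{-1}$ lies in $\End_{\MA(\bpr\rk\wtd I)}(\MH_{\ovl i})$. Since $\MH_i$ is untwisted, this algebra equals $\End_{\MA((\rk I)')}(\MH_{\ovl i})$, so it is the correct domain.

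The \textbf{boundary-boundary Haag duality} is the substantive item. I will invoke Thm.~\ref{lb47} with $\MH_j$ replaced by $\MH_{\ovl j}$. It says that $\pi^L_{\wtd I}(\End_{\MA(\wtd I')}(\MH_i))|_{\Hopij}$ and $\pi^R_{\wtd I'}(\End_{\MA(\wtd I)}(\MH_{\ovl j}))|_{\Hopij}$ are mutual commutants. By Def.~\ref{lb118}, the first algebra is exactly $\varpi^L_{\wtd I}(\Bopi(\wtd I))$. The second is exactly $\varpi^R_{\wtd I'}(\Bopj'(\wtd I'))$, because $\Bopj'(\wtd I')=\pi^R_{\wtd I'}(\End_{\MA(I)}(\MH_{\ovl j}))|_{\MH_j\boxtimes\MH_{\ovl j}}$ and the twist is trivial. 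This gives \eqref{eq51}.

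The main obstacle I expect is only the bookkeeping of arg-valued intervals versus plain intervals. For untwisted modules, $\End_{\MA(\wtd I')}$ and $\End_{\MA(\bpr\wtd I)}$ both equal $\End_{\MA(I')}$, and this identification is what lets the cited theorems apply verbatim with the domains used in Def.~\ref{lb112} and Def.~\ref{lb118}. I would check this identification once, at the start of the proof.
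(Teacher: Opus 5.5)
Your proposal is correct and follows the paper's proof essentially item by item. The paper uses the same ingredients: isotony from Rem.~\ref{lb61}, the extension property from \eqref{eq28} and \eqref{eq30}, covariance from Thm.~\ref{lb108} on both spaces, PCT from Prop.~\ref{lb42} and Thm.~\ref{lb107}, positivity from Weiner's result, and Haag duality from Thm.~\ref{lb47} applied to $\MH_i\boxtimes\MH_{\ovl j}$. Your extra bookkeeping only makes explicit what the paper leaves implicit: you apply Thm.~\ref{lb107} on both $\Hopij$ and $\Hopii$, and you note that for untwisted modules $\End_{\MA(\wtd K)}$ depends only on $K$.
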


The Haag duality in this theorem admits an interpretation analogous to that in Rem. \ref{lb117} and Fig. \ref{fig4}. Moreover, as a consequence of the isotony and the Haag duality, we obtain the \textbf{boundary-boundary locality}: if $\wtd J$ is clockwise to $\wtd I$, then
\begin{align*}
\big[\varpi^L_{\wtd I}\big(\Bopi(\wtd I)\big),\varpi^R_{\wtd J}\big(\Bopj'(\wtd J)\big)\big]=0
\end{align*}

\begin{proof}
Isotony: This is due to Rem. \ref{lb61}.

Extension property: By \eqref{eq28}, the element $\pi_{i,I}(x)\in\End_{\MA(I')}(\MH_i)$ is sent by $\pi^L_{\wtd I}|_{\MH_i\boxtimes\MH_{\ovl i}}$ and $\pi^L_{\wtd I}|_{\MH_i\boxtimes\MH_{\ovl j}}$ to $\pi_{i\boxtimes\ovl i}(x)$ and $\pi_{i\boxtimes\ovl j}(x)$, respectively. Thus $\varpi^L_{\wtd I}(\pi_{i\boxtimes\ovl i,I}(x))=\pi_{i\boxtimes\ovl j,I}(x)$ by Def. \ref{lb118}. A similar argument proves $\varpi^R_{\wtd J}(\pi_{j\boxtimes\ovl j,J}(y))=\pi_{i\boxtimes\ovl j,J}(y)$.

Conformal covariance: This is due to Thm. \ref{lb108}.

PCT symmetry: The first relation is due to Prop. \ref{lb42}. The second is due to Thm. \ref{lb107}.

Positive energy: Similar to the proof of Thm. \ref{lb72}, this is due to \cite{Wei06}.

Haag duality: By Def. \ref{lb118}, the two von Neumann algebras in \eqref{eq51} are equal to $\pi^L_{\wtd I}(\End_{\MA(I')}(\MH_i))$ and $\pi^R_{\wtd I'}(\End_{\MA(I)}(\MH_{\ovl j}))$ respectively. The latter two von Neumann algebras (acting on $\MH_i\boxtimes\MH_{\ovl j}$) are commutants of each other by Thm. \ref{lb47}.
\end{proof}

\subsection{Bulk net $\Bcl\big\vert_{\ZOpp}$ acting on $\Hopij$}

Fix $\MH_i,\MH_j\in\RepA$. The goal of this section and the next is to define the action of the restricted bulk net $\Bcl|_{\Ozp}$ on the open-string state space $\Hopij$, and to establish its key properties. In the present section, we construct this action for the restriction of $\Bcl|_{\Ozp}$ to $\ZOpp$, recalling that $\ZOpp\subset\Ozp$ (cf. Subsec. \ref{lb151}). In the next section, we extend the construction to the full net $\Bcl|_{\Ozp}$.

\subsubsection{Construction of the action $\Bcl\big|_{\ZOpp}\curvearrowright\Hopij$}

\begin{df}\label{lb127}
For each $O\in\Opp$, corresponding to $(\wtd I,\wtd J)$ satisfying \eqref{eq26}, choose $g_+,g_-\in\SG$ such that by viewing $g_\pm$ as in \eqref{eq2}, and by viewing $\wtd I,\wtd J$ as subsets of the universal cover $\Rbb$ of $\Rbb/2\pi\Zbb=\Sbb^1$, we have
\begin{gather*}
g_+(\theta)=\theta/2\qquad \text{for each }\theta\in \wtd I\\
g_-(\theta)=\theta/2-\pi\qquad \text{for each }\theta\in \wtd J
\end{gather*}
Note that $(g_+,g_-)O=\sqrt O$. Lift $g_+,g_-$ to elements of $\Gc$, still denoted by $g_+,g_-$. By the conformal covariance in Thm. \ref{lb72}, we can define (spatial) isomorphisms of von Neumann algebras
\begin{gather*}
\Qbf_{\sqrt O}:\Bcl(O)\rightarrow\Bcl(\sqrt O)\qquad A\mapsto \Ucl(g_+,g_-)A\Ucl(g_+,g_-)^*
\end{gather*}
This definition is independent of the choice $g_\pm$. 
\end{df}

\begin{proof}[Explanation]
To prove the independence of $\Qbf_{\sqrt O}$ on $g_\pm$, assume that $h_\pm$ satisfy the same properties as $g_\pm$. Then $h_+^{-1}g_+$ fixes each point of $\wtd I$. Hence $h_+^{-1}g_+\in\Gc(I')$ by Rem. \ref{lb129}. Similarly, $h_-^{-1}g_-\in\Gc(J')$. Then, by choosing suitable arg-functions on $I',J'$, the spacelike complement $O'$ of $O$ corresponds to $((I',\arg_{I'}),(J',\arg_{J'}))$. By the conformal covariance in Thm. \ref{lb72}, we have
\begin{align*}
\Ucl(h_+,h_-)^{-1}\Ucl(g_+,g_-)\in\Bcl(O')
\end{align*}
Therefore, by the locality in Thm. \ref{lb72},  $\Ad_{\Ucl(g_+,g_-)}$ equals $\Ad_{\Ucl(h_+,h_-)}$ when applied to $\Bcl(O)$.
\end{proof}

\begin{rem}\label{lb128}
Clearly, if $O_1\subset O_2$ are in $\Opp$, then $\Qbf_{\sqrt{O_2}}\big|_{\Bcl(O_1)}=\Qbf_{\sqrt{O_1}}$.
\end{rem}

\begin{df}\label{lb120}
For each $O\in\Opp$, which corresponds uniquely to $(\wtd I,\wtd J)$ satisfying \eqref{eq26}, choose $\wtd K_+,\wtd K_-\in\Jtd$ such that
\begin{align}\label{eq62}
\sqrt{\wtd I},\wtd K_+, -\sqrt{\wtd J},\wtd K_- \qquad\text{are in clockwise order}
\end{align}
Choose $K_+$-unitary $\xi_0\in\MH_i(K_+)$ and $K_-$-unitary $\eta_0\in\MH_{\ovl j}(K_-)$, and define a faithful normal representation
\begin{align*}
\pmb{\varpi_{\opij,\sqrt O}}:\Bcl(\sqrt O)\rightarrow\fk L(\Hopij)
\end{align*}
abbreviated to $\pmb{\varpi_{\sqrt O}}$ when no confusion arises, such that by setting
\begin{align*}
V(\xi_0,\eta_0)=L(\xi_0,\wtd K_+)R(\eta_0,\wtd K_-)\equiv R(\eta_0,\wtd K_-)L(\xi_0,\wtd K_+): \MH_0\rightarrow\Hopij
\end{align*}
and by choosing any $\wtd K\in\Jtd$ containing $\wtd I,\wtd J$ (which exists by Rem. \ref{lb56}), the following diagram commutes:
\begin{equation}\label{eq61}
\begin{tikzcd}[row sep=1cm,column sep=3cm, ampersand replacement=\&]
\fk A(\sqrt I,-\sqrt J) \arrow[r,"{\Ad_{V(\xi_0,\eta_0)}}"] \arrow[d,"{\pi^L_{\wtd K}\big|_{\MH_\sqz\boxtimes\MH_{\ovl\sqz}}}"',"\simeq"]     \& \fk L(\MH_i\boxtimes\MH_{\ovl j}) \\
 \Bcl(O) \arrow[r,"\Qbf_{\sqrt O}","\simeq"'] \& \Bcl(\sqrt O) \arrow[u,"\varpi_{\sqrt O}"']
\end{tikzcd}
\end{equation}
\end{df}

Note that the isomorphism from the upper-left corner to the lower-left corner is given by Def. \ref{lb64}. 

\begin{rem}\label{lb148}
The definition of $\varpi_{\sqrt O}$ is independent of $\wtd K$, as indicated in the explanation to Def. \ref{lb64}, and also by Rem. \ref{lb128}. It is independent of $\wtd K_+,\wtd K_-$ due to the isotony property for $L,R$ operators (cf. Cor. \ref{lb31}). It is also independent of the choice of $\xi_0,\eta_0$, as explained in the following proof.
\end{rem}

\begin{proof}
Choose $K_+$-unitary $\xi_1\in\MH_i(K_+)$ and $K_-$-unitary $\eta_1\in\MH_{\ovl j}(K_-)$. Then $L(\xi_0,\wtd K_+)^*L(\xi_1,\wtd K_+)|_{\MH_0}$ commutes with $\MA(K_+')$, and hence belongs to $\MA(K_+)$ by the Haag duality for $\MA$. Thus
\begin{align*}
L(\xi_1,\wtd K_+)|_{\MH_0}=L(\xi_0,\wtd K_+)x|_{\MH_0}
\end{align*} 
for some (necessarily unitary) $x\in\MA(K_+)$. Thus
\begin{align*}
V(\xi_1,\eta_0)=V(\xi_0,\eta_0)x
\end{align*}
By Def. \ref{lb121}, elements of $\MA(K_+)$ commute with those of $\fk A(\sqrt I,-\sqrt J)$. Therefore $\Ad_{V(\xi_1,\eta_0)}=\Ad_{V(\xi_0,\eta_0)}$ on $\fk A(\sqrt I,-\sqrt J)$. A similar argument shows $\Ad_{V(\xi_1,\eta_1)}=\Ad_{V(\xi_1,\eta_0)}$ on $\fk A(\sqrt I,-\sqrt J)$. Hence $\Ad_{V(\xi_1,\eta_1)}=\Ad_{V(\xi_0,\eta_0)}$ on $\fk A(\sqrt I,-\sqrt J)$.
\end{proof}


\subsubsection{Basic properties of the action $\Bcl\big|_{\ZOpp}\curvearrowright\Hopij$}

\begin{pp}\label{lb124}
The following properties hold for each $O,O_1,O_2\in\Opp$.
\begin{enumerate}[label=(\arabic*)]
\item (Isotony) If $O_1\subset O_2$, then
\begin{align*}
\varpi_{\sqrt{O_2}}\big|_{\Bcl(\sqrt{O_1})}=\varpi_{\sqrt{O_1}}
\end{align*}
\item (Bulk-bulk locality) Suppose that $O_1,O_2$ are spacelike separated (and hence $\sqrt{O_1},\sqrt{O_2}$ are spacelike separated). Then
\begin{align*}
\big[\varpi_{\sqrt{O_1}}(\Bcl(\sqrt{O_1})),\varpi_{\sqrt{O_2}}(\Bcl(\sqrt{O_2}))\big]=0
\end{align*}
\item (Extension property) Let $O$ correspond to $(\wtd I,\wtd J)$. For each $x\in\MA(\sqrt I)$ and $y\in\MA(-\sqrt J)$, noting that
\begin{align*}
\pi^+_{\sqz\boxtimes\ovl\sqz,\sqrt I}(x)\in\Bcl(\sqrt O)\qquad \pi^-_{\sqz\boxtimes\ovl\sqz,-\sqrt J}(y)\in\Bcl(\sqrt O)
\end{align*}
by the extension property in Thm. \ref{lb72}, we have
\begin{gather*}
\varpi_{\sqrt O}\big(\pi^+_{\sqz\boxtimes\ovl\sqz,\sqrt I}(x)\big)=\pi_{i\boxtimes \ovl j,\sqrt I}(x)\\
\varpi_{\sqrt O}\big(\pi^-_{\sqz\boxtimes\ovl\sqz,-\sqrt J}(y)\big)=\pi_{i\boxtimes \ovl j,-\sqrt J}(y)
\end{gather*}
\item (Conformal covariance) Suppose that $g\in\Gc$ and $O\in\Opp$ satisfy $(g,g)\sqrt O\in\ZOpp$. Then for each $A\in\Bcl(\sqrt O)$, noting that $\Ucl(g,g)A\Ucl(g,g)^*$ belongs to $\Bcl((g,g)\sqrt O)$ by the conformal covariance in Thm. \ref{lb72}, we have
\begin{align*}
U_{i\boxtimes\ovl j}(g)\varpi_{\sqrt O}(A)U_{i\boxtimes\ovl j}(g)^*=\varpi_{(g,g)\sqrt O}(\Ucl(g,g)A\Ucl(g,g)^*)
\end{align*}
\item (Bulk-boundary Haag duality) Let $\wtd E_+$ and $\wtd E_-$ be the intersections of the spacelike complement of $\sqrt O$ with $\partial_+\Rop$ and $\partial_-\Rop$ respectively, i.e., for each $\pm\in\{+,-\}$,
\begin{gather*}
\wtd E_\pm=\Int\{p\in\partial_\pm\Rop:p\text{ is spacelike separated from elements of }\sqrt O\}
\end{gather*}
where the interior is with respect to $\partial_\pm\Rop$. Then $\wtd E_+\in\Jtd_+$ and $\wtd E_-\in\Jtd_-$. Moreover, by viewing $\wtd E_+,\wtd E_-\in\Jtd$ via the equivalence \eqref{eq49}, we have
\begin{align*}
\varpi^L_{\wtd E_+}\big(\Bopi(\wtd E_+)\big)\vee \varpi^R_{\wtd E_-}\big(\Bopj'(\wtd E_-)\big)=\varpi_{\sqrt O}\big(\Bcl(\sqrt O)\big)'
\end{align*}
\end{enumerate}
\end{pp}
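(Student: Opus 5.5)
The plan is to unwind Def.~\ref{lb120} so that every claim reduces to a statement about the single unitary
\begin{align*}
V:=V(\xi_0,\eta_0)=L(\xi_0,\wtd K_+)R(\eta_0,\wtd K_-):\MH_0\rightarrow\Hopij .
\end{align*}
$V$ is unitary by Prop.~\ref{lb33}, since $\xi_0,\eta_0$ are unitary vectors. By \eqref{eq61}, $\varpi_{\sqrt O}$ is then nothing but $\Ad_V$ applied to $\fk A(\sqrt I,-\sqrt J)$, after identifying $\Bcl(\sqrt O)$ with that algebra via $\Qbf_{\sqrt O}\circ\pi^L_{\wtd K}$.

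\emph{Isotony and extension.} For isotony with $O_1\subset O_2$, I choose $\wtd K_\pm$ admissible for $O_2$. These are automatically admissible for $O_1$ in \eqref{eq62}. The claim then follows from $\fk A(\sqrt{I_1},-\sqrt{J_1})\subset\fk A(\sqrt{I_2},-\sqrt{J_2})$, Rem.~\ref{lb128}, Rem.~\ref{lb61}, and the independence statement in Rem.~\ref{lb148}.

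For the extension property, write $x=\Qbf_{\sqrt{\wtd I}}(x')$ with $x'\in\MA(I)$. By Cor.~\ref{lb30} applied to $\Ucl(g_+,g_-)$, the map $\Qbf_{\sqrt O}$ sends $\pi^+_{\cl,I}(x')$ to $\pi^+_{\cl,\sqrt I}(x)$. Moreover, \eqref{eq28} together with \eqref{eq24} gives
\begin{align*}
\pi^+_{\cl,I}(x')=\pi^L_{\wtd K}\big(\pi_{\sqz,\wtd K}(x'\otimes1)\big)=\pi^L_{\wtd K}(x).
\end{align*}
It therefore remains to show $VxV^*=\pi_{i\boxtimes\ovl j,\sqrt I}(x)$. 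This holds because $L(\xi_0,\wtd K_+)$ and $R(\eta_0,\wtd K_-)$ intertwine $\pi_{\sqrt I}$: the interval $\sqrt{I}$ lies in $K_+'$ and in ${}^\backprime K_-$, with the arg-functions fixed by \eqref{eq62}, so Thm.~\ref{lb15} (Prop.~\ref{lb22} together with the locality) applies. The $\pi^-$ case is identical.

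\emph{Bulk--bulk locality.} First I reduce to the case where $O_1,O_2$ lie in a common standard double cone $D$, using Lem.~\ref{lb77} and Cor.~\ref{lb75}. This reduction is legitimate because $\Qbf$ and $\varpi$ are normal and isotonous, so commutation of generators suffices. Inside $D$, the algebras $\Bcl(\sqrt{O_1})$ and $\Bcl(\sqrt{O_2})$ commute in $\Bcl(\sqrt D)$ by Thm.~\ref{lb72}. Admissible $\wtd K_\pm$ exist for $D$ because $\pm\sqrt K$ has length $<\pi$. Since $\varpi_{\sqrt D}$ is a homomorphism, isotony then gives the commutation.

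\emph{Bulk--boundary Haag duality.} The key choice is to take $\wtd K_+$ and $\wtd K_-$ to be exactly the two gaps of $\Sbb^1\setminus(\sqrt I\cup-\sqrt J)$, i.e.\ $K_\pm=\sqrt I^\vee,\,(-\sqrt J)^\vee$ as in Def.~\ref{lb121}. A direct check in $u^\pm$ coordinates, via \eqref{eq63b} and Rem.~\ref{lb79}, should show that $\wtd E_+=\wtd K_+$ and $\wtd E_-=\wtd K_-$ under \eqref{eq49}.

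Next, Prop.~\ref{lb34} and Rem.~\ref{lb36} give
\begin{align*}
\varpi^L_{\wtd E_+}\big(\Bopi(\wtd E_+)\big)=L(\xi_0,\wtd K_+)\,\pi_{\wtd K_+}(\MA(K_+))\,L(\xi_0,\wtd K_+)^*\big|_{\Hopij}.
\end{align*}
Because $R(\eta_0,\wtd K_-)$ commutes with $\pi_{\wtd K_+}$ (locality), this equals $V\MA(K_+)V^*$. Symmetrically, $\varpi^R_{\wtd E_-}(\Bopj'(\wtd E_-))=V\MA(K_-)V^*$. Taking commutants and using Haag duality for $\MA$,
\begin{align*}
\big(V(\MA(K_+)\vee\MA(K_-))V^*\big)'=V\big(\MA(K_+)'\cap\MA(K_-)'\big)V^*=V\fk A(\sqrt I,-\sqrt J)V^*=\varpi_{\sqrt O}(\Bcl(\sqrt O)),
\end{align*}
which is the claimed duality.

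\emph{Conformal covariance (main obstacle).} I expect the hard part to be conformal covariance, because of the bookkeeping between $\Qbf_{\sqrt O}$, $\pi^L_{\wtd K}$ and the square-root lifts. My plan is to first use Lem.~\ref{lb14} to reduce to $g$ localized in a small interval. Then I use Thm.~\ref{lb15}(6) to get
\begin{align*}
U_{i\boxtimes\ovl j}(g)\,V(\xi_0,\eta_0)\,U_0(g)^*=V(g\xi_0g^{-1},g\eta_0g^{-1}),
\end{align*}
where the new vectors are unitary by Exp.~\ref{lb147} and the intervals $g\wtd K_\pm$ remain admissible for $(g,g)\sqrt O$. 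It then remains to check that $\Ad_{U_0(g)}$ on $\fk A(\sqrt I,-\sqrt J)$ corresponds to $\Ad_{\Ucl(g,g)}$ on $\Bcl(\sqrt O)$ under $\Qbf\circ\pi^L$. For this I would write
\begin{align*}
(g,g)\circ(g_+,g_-)=(g_+^1,g_-^1)\circ(h^+,h^-),
\end{align*}
with $(h^+,h^-)$ acting on $O$ inside a standard cone, and then apply Lem.~\ref{lb68} together with Rem.~\ref{lb67}, which identifies $\Ad_{\Ucl(h^+,h^-)}$ with $\Ad_{U_0(\sqrt{h^+}\cdot(-\sqrt{h^-}))}$.
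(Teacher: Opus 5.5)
\textbf{Overall.} Your arguments for (1), (2), (3) and (5) are correct and follow the paper's proof essentially step by step. In (3), since $\Ucl(g_+,g_-)=\Ucl^+(g_+)\Ucl^-(g_-)$, you need Rem.~\ref{lb104} to drop $\Ucl^-(g_-)$ before applying Cor.~\ref{lb30}. The genuine gap is in (4), and it lies in the globalization rather than in the local computation.

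\textbf{The gap in (4).} The reduction ``by Lem.~\ref{lb14} to $g$ localized in a small interval'' fails for two reasons.

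First, it is not automatic. At this stage (4) is only asserted, and $\varpi$ is only defined, for cones in $\ZOpp$. For a factorization $g=g_1\cdots g_n$, the intermediate cones $(g_k\cdots g_n,g_k\cdots g_n)\sqrt O$ may have side length $\geq\pi$, and then the identity for the individual factors is not even defined. You must first show that the set of good $g$ is closed under products. The paper does this by cutting $O$ into small pieces, using Cor.~\ref{lb75}, uniform continuity, isotony and normality (Step 4 of the proof of Lem.~\ref{lb125}).

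Second, and more seriously, your local step needs $O$ and $O_1$, where $\sqrt{O_1}=(g,g)\sqrt O$, to lie in a common standard double cone $(\wtd K,\wtd K)$ with $h^\pm\in\Gc(K)$, so that the element-wise form of Lem.~\ref{lb68} applies. Localizing $g$ in a small interval $E$ only gives $|g(\theta)-\theta|<|E|$. The available room, however, is controlled by the gaps $K_\pm$ between $\sqrt I$ and $-\sqrt J$, and these are arbitrarily small when $\sqrt O$ is close to $\partial\Rop$. Concretely, suppose $|K_+|=\epsilon$, and $g\in\Gc(E)$ moves a neighborhood of $\ovl{K_+}$ by $\delta>\epsilon$ in the positive direction while fixing the far endpoints of $\sqrt I$ and $-\sqrt J$. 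Then $(g,g)\sqrt O\in\ZOpp$, but $\wtd I\cup\wtd J\cup\wtd I_1\cup\wtd J_1$ has diameter $2\pi+2(\delta-\epsilon)$. Since the intermediate cones of an arbitrary factorization can come arbitrarily close to $\partial\Rop$, no smallness of the $E_k$ fixed in advance suffices.

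\textbf{What is needed.} The missing ingredient is a continuity/compactness argument:
\begin{itemize}
\item prove (4) only for $g$ close to the identity relative to the given cone, so that $O$ and $O_1$ share a standard cone;
\item then propagate along a one-parameter family with compact parameter set.
\end{itemize}
The paper uses flows $\alpha^X_s$ of $2\pi$-periodic vector fields. These generate $\SG$ as a semigroup by simplicity of $\DiffS$, and the paper combines this with the Lebesgue number lemma. A path from $1$ to $g$ in $\SG$, after cutting $O$ into small pieces, would serve equally well.

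\textbf{Your local mechanism.} Once the globalization is in place, your local step is a legitimate alternative to the paper's Steps 1--3. You transport $V(\xi_0,\eta_0)$ by Thm.~\ref{lb15}(6). You should also record that $g^{-1}\sqrt{h^+}\cdot(-\sqrt{h^-})$ fixes $\sqrt I\cup -\sqrt J$ pointwise. Hence $\Ad_{U_0(g)}$ and $\Ad_{U_0(\sqrt{h^+}\cdot(-\sqrt{h^-}))}$ differ by an element of $\MA(K_+)\vee\MA(K_-)$, and so agree on $\fk A(\sqrt I,-\sqrt J)$.

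The paper instead splits a small $g$ into two factors. The first fixes $\sqrt I\cup-\sqrt J$ and is handled by exactly your transport of $V$. The second is supported in $\pm\sqrt K$; for it the paper writes $U_{i\boxtimes\ovl j}(g)=\varpi_{\sqrt D}(\Ucl(g_1,g_2))$ via the extension property and uses $\Ucl(g_2,g_1)\in\Bcl(\sqrt D')$. This avoids the $\Qbf$-bookkeeping altogether.
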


\begin{proof}
Isotony: This is obvious.

Bulk-bulk locality: Since $\varpi_{\sqrt O}$ is normal for each $O$, by the additivity of $\Bcl$ (Cor. \ref{lb75}) and Lem. \ref{lb77}, it suffices to assume that $O_1,O_2\subset O$ for some standard $O\in\Opp$. In that case, we have $\varpi_{\sqrt{O_i}}(\Bcl(\sqrt{O_i}))=\varpi_{\sqrt O}(\Bcl(\sqrt{O_i}))$. Thus the commutativity follows from the locality in Thm. \ref{lb72}.

Extension property: See Lem. \ref{lb136}.

Conformal covariance: See Lem. \ref{lb125}.

Bulk-boundary Haag duality: Let $(\wtd I,\wtd J)$ satisfy \eqref{eq26} and correspond to $O$. Choose $\wtd K_+,\wtd K_-\in\Jtd$ such that $K_+,K_-$ are the two components of the interior of $\Sbb^1\setminus(\sqrt I\cup-\sqrt J)$, and satisfying \eqref{eq62}. By viewing $\wtd K_+\in\Jtd_+$ and $\wtd K_-\in\Jtd_-$, one checks (e.g. by projecting $\wtd E_+,\wtd E_-$ onto the $u^-$-axis, cf. \eqref{eq63b}) that
\begin{align}\label{eq79}
\wtd E_+=\wtd K_+\qquad \wtd E_-=\wtd K_-
\end{align}
This proves the first part of the Haag duality.

By Def. \ref{lb120}, we have
\begin{align*}\label{eq64}
\varpi_{\sqrt O}\big(\Bcl(\sqrt O)\big)=\Ad_{V(\xi_0,\eta_0)}\big(\fk A(\sqrt I,-\sqrt J)\big)  \tag{a}
\end{align*}
By Prop. \ref{lb34}, we have
\begin{align*}
\End_{\MA(E_+')}(\MH_i)=\Ad_{L(\xi_0,\wtd E_+)|_{\MH_0}}(\MA(E_+))
\end{align*}
Therefore, since $\wtd E_-$ is clockwise to $\wtd E_+$, by \eqref{eq15} we have
\begin{align*}
\pi^L_{\wtd E_+}\big(\End_{\MA(E_+')}(\MH_i)\big)\big|_{\MH_i\boxtimes\MH_{\ovl j}}=\Ad_{R(\eta_0,\wtd E_-)}\big(\End_{\MA(E_+')}(\MH_i)\big)=\Ad_{V(\xi_0,\eta_0)}(\MA(E_+))
\end{align*}
It follows from Def. \ref{lb118} that
\begin{align*}\label{eq65}
\varpi^L_{\wtd E_+}\big(\Bopi(\wtd E_+)\big)=\Ad_{V(\xi_0,\eta_0)}(\MA(E_+))  \tag{b}
\end{align*}
A similar argument shows that
\begin{align*}\label{eq66}
\varpi^R_{\wtd E_-}\big(\Bopj'(\wtd E_-)\big)=\Ad_{V(\xi_0,\eta_0)}(\MA(E_-))  \tag{c}
\end{align*}
Combining \eqref{eq64}, \eqref{eq65}, \eqref{eq66} with the relation (cf. Def. \ref{lb121})
\begin{align*}
\MA(E_+)\vee\MA(E_-)=\fk A(\sqrt I,-\sqrt J)'
\end{align*} 
the proof of the second part of the Haag duality is finished.
\end{proof}

\begin{lm}\label{lb136}
The extension property in Prop. \ref{lb124} holds.
\end{lm}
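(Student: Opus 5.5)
We have to prove: for $x\in\MA(\sqrt I)$, $\varpi_{\sqrt O}(\pi^+_{\cl,\sqrt I}(x))=\pi_{i\boxtimes\ovl j,\sqrt I}(x)$, and similarly for $y\in\MA(-\sqrt J)$. The plan is to chase $\pi^+_{\cl,\sqrt I}(x)$ backwards through diagram \eqref{eq61} to an element of $\fk A(\sqrt I,-\sqrt J)$, and then apply $\Ad_{V(\xi_0,\eta_0)}$.

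\textbf{Step 1: pull back along $\Qbf_{\sqrt O}$.} Set $g_\pm$ as in Def. \ref{lb127}, and set $x_0:=\Ad_{U_0(g_+)^*}(x)\in\MA(I)$, where $g_+\wtd I=\sqrt{\wtd I}$. By Cor. \ref{lb30} applied to $\pi^+_\cl$, and since $\Ucl^-(g_-)$ commutes with $\pi^+_\cl$ (Rem. \ref{lb104}), we get
\[
\Qbf_{\sqrt O}\big(\pi^+_{\cl,I}(x_0)\big)=\Ucl(g_+,g_-)\,\pi^+_{\cl,I}(x_0)\,\Ucl(g_+,g_-)^*=\pi^+_{\cl,\sqrt I}(x).
\]
By Cor. \ref{lb135}, $\pi^+_{\cl,I}(x_0)\in\Bcl(O)$.

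\textbf{Step 2: identify the preimage in $\fk A(\sqrt I,-\sqrt J)$.} Choose $\wtd K\supset\wtd I,\wtd J$. By \eqref{eq28}, $\pi^+_{\cl,I}(x_0)=\pi^L_{\wtd K}\big(\pi_{\sqz,\wtd K}(x_0\otimes 1)\big)$. By \eqref{eq24}, $\pi_{\sqz,\wtd K}(x_0\otimes1)=\pi_{0,\sqrt K}(\Qbf_{\sqrt{\wtd K}}(x_0))$. Because $\Qbf_{\sqrt{\wtd K}}$ restricted to $\MA(I)$ is implemented by a diffeomorphism agreeing with $\theta\mapsto\theta/2$ on $\wtd I$, it equals $\Ad_{U_0(g_+)}$ there. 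Hence the element is just $x\in\MA(\sqrt I)\subset\fk A(\sqrt I,-\sqrt J)$. The check that the two implementing diffeomorphisms give the same adjoint action is the independence argument in Def. \ref{lb50}: the relevant unitaries differ by an element of $\MA(I')$.

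\textbf{Step 3: apply the top arrow.} By Step 2, $\varpi_{\sqrt O}(\pi^+_{\cl,\sqrt I}(x))=V(\xi_0,\eta_0)\,x\,V(\xi_0,\eta_0)^*$. We have $\sqrt I$ anticlockwise to $\wtd K_+$ and clockwise to $\wtd K_-$ (in the sense of \eqref{eq62}). So by Thm. \ref{lb15}(4), Prop. \ref{lb22} and Lem. \ref{lb19}-type intertwining (untwisted case), $x$ passes through both $L(\xi_0,\wtd K_+)$ and $R(\eta_0,\wtd K_-)$:
\[
V(\xi_0,\eta_0)\,x=\pi_{i\boxtimes\ovl j,\sqrt I}(x)\,V(\xi_0,\eta_0).
\]
Concretely, $L(\xi_0,\wtd K_+)$ intertwines $\MA(K_+')\supset\MA(\sqrt I)$, and then so does $R(\eta_0,\wtd K_-)$ acting on $\MH_i$. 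Unitarity of $V$ (Prop. \ref{lb33}) then gives the first identity. The $y$ case is identical, using $\pi^-$, $g_-$ and the second half of \eqref{eq24}.

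\textbf{Main obstacle.} The main obstacle is bookkeeping the arg-values in Step 3. Since all modules here are untwisted, $\pi_{\sqrt I}$ is independent of the arg choice (Rem. \ref{lb11}), so the only care needed is to verify $\sqrt I\subset K_+'\cap K_-'$, where $'$ is the interval complement. This holds by \eqref{eq62}.
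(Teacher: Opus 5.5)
Your proposal is correct and is essentially the paper's own proof. You chase $\pi^+_{\sqz\boxtimes\ovl\sqz,\sqrt I}(x)$ through the three arrows of diagram \eqref{eq61}: the bottom arrow via Cor.~\ref{lb30} and Rem.~\ref{lb104}, the left arrow via \eqref{eq28} and \eqref{eq24}, and the top arrow via the intertwining properties \eqref{eq68} of $L(\xi_0,\wtd K_+)$ and $R(\eta_0,\wtd K_-)$. The only cosmetic difference is that you work with $\pi_{\sqz,\wtd K}$ and check $\Qbf_{\sqrt{\wtd K}}|_{\MA(I)}=\Ad_{U_0(g_+)}$ directly, whereas the paper uses $\pi_{\sqz,\wtd I}$ and Rem.~\ref{lb61} to pass from $\pi^L_{\wtd K}$ to $\pi^L_{\wtd I}$.
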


\begin{proof}
We prove the relation for $x$. The relation for $y$ can be proved in a similar way. Choose $g_+,g_-\in\Gc$ as in Def. \ref{lb127}. Let $\wht x\in\MA(I)$ such that the isomorphism $\Qbf_{\sqrt{\wtd I}}:\MA(I)\rightarrow\MA(\sqrt I)$ in Def. \ref{lb50} sends $\wht x$ to $x$, that is,
\begin{align*}\label{eq69}
\Qbf_{\sqrt{\wtd I}}(\wht x)\equiv U_0(g_+)\wht xU_0(g_+)^*=x \tag{$\star$}
\end{align*}
Choose $\xi_0,\eta_0$ and $\wtd K$ as in Def. \ref{lb120}. In view of the commutative diagram \eqref{eq61}, the proof will be finished if we can show for the element $x\in\MA(\sqrt I)$ that
\begin{align*}
\begin{tikzcd}[row sep=1cm,column sep=3cm, ampersand replacement=\&]
x \arrow[r,maps to,"{\Ad_{V(\xi_0,\eta_0)}}"] \arrow[d,maps to,"{\pi^L_{\wtd K}\big|_{\MH_\sqz\boxtimes\MH_{\ovl\sqz}}}"']     \& \pi_{i\boxtimes\ovl j,\sqrt I}(x) \\
\pi^+_{\sqz\boxtimes\ovl\sqz,I}(\wht x) \arrow[r,maps to,"\Qbf_{\sqrt O}"] \& \pi^+_{\sqz\boxtimes\ovl\sqz,\sqrt I}(x) 
\end{tikzcd}
\end{align*}

The top horizontal arrow is due to \eqref{eq68}. By \eqref{eq69} and Def. \ref{lb52}, we have
\begin{align*}
\pi_{\sqz,\wtd I}(\wht x\otimes 1)=x
\end{align*}
By viewing $x\in \fk A(\sqrt I,-\sqrt J)\subset\End_{\Maa(\wtd K')}(\MH_\sqz)$ (cf. \eqref{eq70}), it follows that
\begin{align*}
&\pi^L_{\wtd K}(x)\big|_{\MH_\sqz\boxtimes\MH_{\ovl\sqz}}=\pi^L_{\wtd K}(\pi_{\sqz,\wtd I}(\wht x\otimes 1))\big|_{\MH_\sqz\boxtimes\MH_{\ovl\sqz}}\\
\xlongequal{\text{Rem.\ref{lb61}}}&\pi^L_{\wtd I}(\pi_{\sqz,\wtd I}(\wht x\otimes 1))\big|_{\MH_\sqz\boxtimes\MH_{\ovl\sqz}}\xlongequal{\eqref{eq28}}\pi_{\sqz\boxtimes\ovl\sqz,I}(\wht x\otimes 1)
\end{align*}
where the last term is exactly $\pi^+_{\sqz\boxtimes\ovl\sqz,I}(\wht x)$ by Def. \ref{lb83}. This proves the left vertical arrow. By Def. \ref{lb127},
\begin{align*}
\Qbf_{\sqrt O}\big(\pi^+_{\sqz\boxtimes\ovl\sqz,I}(\wht x)\big)=\Ad_{\Ucl(g_+,g_-)}\big(\pi^+_{\sqz\boxtimes\ovl\sqz,I}(\wht x)\big)
\end{align*}
Recall from Def. \ref{lb134} that $\Ucl(g_+,g_-)=U^+_{\sqz\boxtimes\ovl\sqz}(g_+)U^-_{\sqz\boxtimes\ovl\sqz}(g_-)$. By Rem. \ref{lb104}, $U^-_{\sqz\boxtimes\ovl\sqz}(g_-)$ commutes with $\pi^+_{\sqz\boxtimes\ovl\sqz,I}(\wht x)$. Hence
\begin{align*}
&\Qbf_{\sqrt O}\big(\pi^+_{\sqz\boxtimes\ovl\sqz,I}(\wht x)\big)=\Ad_{U^+_{\sqz\boxtimes\ovl\sqz}(g_+)}\big(\pi^+_{\sqz\boxtimes\ovl\sqz,I}(\wht x)\big)\\
\xlongequal{\text{Cor.\ref{lb30}}}&\pi^+_{\sqz\boxtimes\ovl\sqz,\sqrt I}(U_0(g_+)\wht x U_0(g_+)^*)\xlongequal{\eqref{eq69}}\pi^+_{\sqz\boxtimes\ovl\sqz,\sqrt I}(x)
\end{align*}
This proves the bottom horizontal arrow.
\end{proof}

\begin{lm}\label{lb125}
The conformal covariance in Prop. \ref{lb124} holds.
\end{lm}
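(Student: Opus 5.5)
The plan is to reduce everything to the case where $g$ is localized in an interval, and there to compute directly with the diagram \eqref{eq61}. By Lem. \ref{lb14}, $\Gc$ is generated by the subgroups $\Gc(F)$ with $F$ short. However, the hypothesis that $(g,g)\sqrt O\in\ZOpp$ is not preserved under factorization. So I will first use the additivity (Cor. \ref{lb75} together with Prop. \ref{lb124}-(1)) and the normality of $\varpi$ to shrink $O$ into small double cones. Then I connect $g$ to the identity by a path in $\Gc$ and write $g=g_n\cdots g_1$, where each $g_k$ is localized in a short interval and each partial product keeps the (small) double cone inside $\ZOpp$. This is a Lebesgue-number argument as in Lem. \ref{lb69}. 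It therefore suffices to treat $g\in\Gc(F)$ with $O_1:=(g,g)\sqrt O\in\ZOpp$.

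For such $g$ the strategy is to split $g$ according to the geometry. Write $\sqrt O\leftrightarrow(\sqrt{\wtd I},-\sqrt{\wtd J})$ and pick $\wtd K_\pm$ as in \eqref{eq62}, chosen small enough to be valid for both $\sqrt O$ and $O_1$. After further localization, I may assume that $F$ is contained in one of the following: an interval $\wtd L_+\supset\sqrt{\wtd I}$ disjoint from $K_\pm$; an interval $\wtd L_-\supset-\sqrt{\wtd J}$ disjoint from $K_\pm$; or a neighbourhood of $K_+$ or of $K_-$ disjoint from $\sqrt I\cup -\sqrt J$. Each case is handled separately.

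In the last case, $g$ fixes $\sqrt O$. By Thm. \ref{lb12}, $U_{i\boxtimes\ovl j}(g)=\pi_{i\boxtimes\ovl j,F}(U_0(g))$. By Prop. \ref{lb124}-(5), specifically (b),(c) in its proof, this element lies in $\varpi^L(\Bopi)\vee\varpi^R(\Bopj')$, hence commutes with $\varpi_{\sqrt O}(\Bcl(\sqrt O))$. Likewise $\Ucl(g,g)\in\Bcl(O'')$ for a region spacelike to $\sqrt O$ (Thm. \ref{lb72}-(4)), and so it commutes with $\Bcl(\sqrt O)$. Both sides of the desired identity then reduce to $\varpi_{\sqrt O}(A)$.

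In the case $F\subset L_+$, I will use \eqref{eq61}. Since $g$ fixes $K_\pm$, the same $\xi_0,\eta_0$ serve for $O_1$. By the extension property (Lem. \ref{lb136}), $U_{i\boxtimes\ovl j}(g)=\pi_{i\boxtimes\ovl j,\sqrt{L}}(U_0(g))=\varpi(\pi^+_{\cl}(U_0(g)))$, and also $\Ucl(g,g)=\Ucl(g,1)\Ucl(1,g)$ with $\Ucl(1,g)$ commuting with $\Bcl(\sqrt O)$. The claim then follows by multiplicativity of the representation $\varpi_{\sqrt{D}}$, where $D$ is a large double cone in $\Opp$ with $\sqrt D\supset\sqrt O\cup O_1$ and $\sqrt D$ has $u^+$-projection containing $L_+$. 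Here the isotony in Prop. \ref{lb124}-(1) identifies $\varpi_{\sqrt O}$ and $\varpi_{O_1}$ as restrictions of $\varpi_{\sqrt D}$. The case $F\subset L_-$ is symmetric, using $\pi^-$.

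The main obstacle is the reduction step: arranging a factorization of $g$ into short-supported pieces that keeps every intermediate cone inside $\ZOpp$, together with the accompanying choice of enlarged cones $D$ so that isotony applies. I would try first to use the flow of a compactly supported vector field, as in Lem. \ref{lb69}, rather than an abstract appeal to Lem. \ref{lb14}.
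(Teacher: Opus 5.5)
Your local analysis is correct and is essentially the paper's Steps 1--2.

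\begin{itemize}
\item Case (c) is a special case of the paper's Step 1. There the paper treats any $g$ fixing $\sqrt{\wtd I}\cup-\sqrt{\wtd J}$ pointwise, via $U_{i\boxtimes\ovl j}(g)V(\xi_0,\eta_0)=V(g\xi_0g^{-1},g\eta_0g^{-1})U_0(g)$ and Rem.~\ref{lb148}. Your argument instead uses the extension property of the boundary nets together with Prop.~\ref{lb124}-(5). This is a legitimate alternative, since (5) is proved without using (4).
\item Cases (a) and (b) are the two chiral halves of Step 2. The paper handles $g=g_1g_2$ with $g_1\in\Gc(\sqrt K)$ and $g_2\in\Gc(-\sqrt K)$ at once, using $\Ucl(g_2,g_1)\in\Bcl(\sqrt D')$.
\end{itemize}

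The gap is the reduction. This is where the real content of the lemma lies (the paper's Steps 3--4), and you only assert it.

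\begin{itemize}
\item Lem.~\ref{lb14} gives algebraic generation but no control over intermediate products.
\item The argument of Lem.~\ref{lb69} works only for a one-parameter group, because there one cuts off the generating vector field. The increments $\gamma(t_k)\gamma(t_{k-1})^{-1}$ along an arbitrary path in $\Gc$ are not of that form.
\item Your second ``further localization'' has the same problem. Splitting into pieces supported in $L_\pm$ or near $K_\pm$, uniformly for all intermediate cones, again needs controlled factorization.
\end{itemize}

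The paper resolves this as follows.
\begin{enumerate}
\item Let $\mbf G$ be the set of $g$ for which the identity holds for every admissible $O$.
\item $\mbf G$ is a semigroup, by uniform continuity and additivity (Cor.~\ref{lb75}) on small cones.
\item Every flow $\alpha^X_s$ lies in $\mbf G$. This uses compactness of $[-R,R]$ and, for small $|s|$, the factorization $\alpha^X_s=\alpha^Y_s\cdot h$. Here $Y$ is a cutoff of $X$ supported in $(\sqrt{\wtd K}\cup-\sqrt{\wtd K})+2\pi\Zbb$, handled by Step 2, and $h$ fixes the cone's intervals, handled by Step 1. This is why Step 1 is proved for arbitrary such $h$, not only for ones localized in a single gap.
\item Finally, simplicity of $\DiffS$ gives that exponentials of vector fields generate $\SG$.
\end{enumerate}

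If you want to keep your path-based route, you must supply a controlled fragmentation lemma; on the circle this is elementary.
\begin{itemize}
\item $\SG$ is convex, so $t\mapsto(1-t)\id+tg$ is a path; phases in $\Gc$ are irrelevant for the identity.
\item Take a $C^1$-small increment $h=\id+f$ and a periodic partition of unity $(\phi_k)$ subordinate to your cover. Set $h_k=\id+(\phi_1+\cdots+\phi_k)f$; then $h=h_n$.
\item Each $h_kh_{k-1}^{-1}$ is supported in a slight enlargement of $\mathrm{supp}\,\phi_k$, and all $h_k$ are $C^0$-close to $\id$.
\item With margins built into the cover (chosen per increment, adapted to the current cone), your three cases then apply to every factor, and all intermediate cones stay in $\ZOpp$ after shrinking $O$.
\end{itemize}
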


\begin{proof}
Step 1. In this step, we prove the relation
\begin{align*}
U_{i\boxtimes\ovl j}(g)\varpi_{\sqrt O}(A)U_{i\boxtimes\ovl j}(g)^*=\varpi_{(g,g)\sqrt O}(\Ucl(g,g)A\Ucl(g,g)^*)  \tag{a}\label{eq77}
\end{align*}
when $O\in\Opp$ corresponds to $(\wtd I,\wtd J)$, $A\in\Bcl(\sqrt O)$, and $g\in\Gc$ fixes pointwise $\sqrt {\wtd I}\cup -\sqrt {\wtd J}$ (by viewing $\Sbb^1=\Rbb/2\pi\Zbb$), i.e., $g\in\Gc((\sqrt I)')\cap\Gc((-\sqrt J)')$). Note that $(g,g)\sqrt O=\sqrt O\in\ZOpp$.

By the conformal covariance in Thm. \ref{lb72}, we have $\Ucl(g,g)\in\Bcl(\sqrt O')$. Therefore, by the locality in Thm. \ref{lb72}, the RHS of \eqref{eq77} equals $\varpi_{\sqrt O}(A)$. Let us prove that the LHS equals $\varpi_{\sqrt O}(A)$.

Let $\wtd K_\pm$ and $\xi_0\in\MH_i(K_+),\eta_0\in\MH_{\ovl j}(K_-)$ be as in Def. \ref{lb120}, and assume more over that the interior of $\Sbb^1\setminus(\sqrt I\cup-\sqrt J)$ is the disjoint union of $K_+$ and $K_-$. By the commuting diagram \eqref{eq61}, there exists $\fk a\in \fk A(\sqrt I,-\sqrt J)$ such that
\begin{align*}
\varpi_{\sqrt O}(A)=\Ad_{V(\xi_0,\eta_0)}(\fk a)
\end{align*}
By the conformal covariance in Thm. \ref{lb15} and the fact that $gK_+=K_+$ and $gK_-=K_-$, we have
\begin{align*}
U_{i\boxtimes\ovl j}(g)V(\xi_0,\eta_0)=V(\xi_1,\eta_1)U_0(g)
\end{align*}
where $\xi_1=g\xi_0g^{-1}\in\MH_i(K_+)$ is $K_+$-unitary, and $\eta_1=g\eta_0g^{-1}\in\MH_{\ovl j}(K_-)$ is $K_-$-unitary, cf. Exp. \ref{lb147}. On the other hand, since $g$ fixes points of $\sqrt I\cup -\sqrt J$, we can write $g=g_1g_2$ where $g_1\in\Gc(K_+)$ and $g_2\in\Gc(K_-)$. So $U_0(g)$ belongs to $\MA(K_+)\vee\MA(K_-)$, and hence commutes with any element of $\fk A(\sqrt I,-\sqrt J)$. Thus
\begin{align*}
U_{i\boxtimes\ovl j}(g)\varpi_{\sqrt O}(A)U_{i\boxtimes\ovl j}(g)^*=\Ad_{U_{i\boxtimes\ovl j}(g)}\Ad_{V(\xi_0,\eta_0)}(\fk a)=\Ad_{V(\xi_1,\eta_1)}\Ad_{U_0(g)}(\fk a)=\Ad_{V(\xi_1,\eta_1)}(\fk a)
\end{align*}
where the last term equals $\varpi_{\sqrt O}(A)$ by Rem. \ref{lb148}.\\[-1ex]

Step 2. In this step, we prove \eqref{eq77} for $O\in\Opp$ and $A\in\Bcl(\sqrt O)$ and $g\in\Gc$, under the additional assumption that $O$ is contained in a standard double cone $D$ corresponding to $(\wtd K,\wtd K)$ where $\wtd K\in\Jtd$, and that $g$ fixes points outside $(\sqrt {\wtd K}\cup-\sqrt {\wtd K})+2\pi\Zbb$ (by viewing $\Sbb^1=\Rbb/2\pi\Zbb$). Note that in this case, we have $(g,g)\sqrt O\subset (g,g)\sqrt{D}=\sqrt{D}\in\ZOpp$. By the isotony in Prop. \ref{lb124}, it suffices to prove
\begin{align*}\label{eq78}
U_{i\boxtimes\ovl j}(g)\varpi_{\sqrt{D}}(A)U_{i\boxtimes\ovl j}(g)^*=\varpi_{\sqrt{D}}(\Ucl(g,g)A\Ucl(g,g)^*)  \tag{b}
\end{align*}

By assumption, we can write $g=g_1g_2=g_2g_1$ where $g_1\in\Gc(\sqrt K)$ and $g_2\in\Gc(-\sqrt K)$. Therefore, by Thm. \ref{lb12} and the extension property in Prop. \ref{lb124} (proved in Lem. \ref{lb136}), we have
\begin{align*}
U_{i\boxtimes\ovl j}(g_1)=\pi_{i\boxtimes\ovl j,\sqrt K}(U_0(g_1))=\varpi_{\sqrt D}\big(\pi^+_{\sqz\boxtimes\ovl\sqz,\sqrt K}(U_0(g_1))\big)=\varpi_{\sqrt D}(\Ucl^+(g_1))
\end{align*} 
and, similarly,
\begin{align*}
 U_{i\boxtimes\ovl j}(g_2)=\pi_{i\boxtimes\ovl j,-\sqrt K}(U_0(g_2))=\varpi_{\sqrt D}\big(\pi^-_{\sqz\boxtimes\ovl\sqz,-\sqrt K}(U_0(g_2))\big)=\varpi_{\sqrt D}(\Ucl^-(g_2))
\end{align*}
Therefore
\begin{align*}
U_{i\boxtimes\ovl j}(g)=\varpi_{\sqrt D}(\Ucl^+(g_1)\Ucl^-(g_2))=\varpi_{\sqrt D}(\Ucl(g_1,g_2))
\end{align*}

By the conformal covariance in Thm. \ref{lb72}, the element $\Ucl(g_2,g_1)$ belongs to $\Bcl(\sqrt D')$, and hence commutes with $A$ by the locality in Thm. \ref{lb72}. Therefore, from the fact that $(g,g)=(g_1,g_2)(g_2,g_1)$ we conclude
\begin{align*}
\Ad_{\Ucl(g_1,g_2)}(A)=\Ad_{\Ucl(g,g)}(A)
\end{align*}
and hence the LHS of \eqref{eq78} equals
\begin{align*}
\Ad_{\varpi_{\sqrt D}(\Ucl(g_1,g_2))}(\varpi_{\sqrt D}(A))=\varpi_{\sqrt D}(\Ad_{\Ucl(g_1,g_2)}(A))=\varpi_{\sqrt D}(\Ad_{\Ucl(g,g)}(A))
\end{align*}
where the last term is the RHS of \eqref{eq78}.\\[-1ex]

Step 3. Let $X$ be a (smooth) $2\pi$-periodic vector field on $\Rbb$. The flow $\alpha^X:s\in\Rbb\mapsto\alpha^X_s$ generated by this vector field is a one-parameter subgroup of $\SG$, cf. \eqref{eq2} for the description of elements of $\SG$. For each $s$, we choose a lift of $\alpha^X_s$ to $\Gc$, also denoted by $\alpha^X_s$.

In this step, we prove that for each $O\in\Opp$, there exists $\eps>0$ such that for each $-\eps\leq s\leq\eps$, we have $(\alpha^X_s,\alpha^X_s)\sqrt O\in\ZOpp$, and that \eqref{eq77} holds for any $A\in\Bcl(\sqrt O)$ and $g=\alpha^X_s$. The structure of the proof is similar to that of Lem. \ref{lb69}.

Let $O$ correspond to $(\wtd I,\wtd J)$. Since $O$ is compactly contained in a standard double cone, by viewing $\Sbb^1=\Rbb/2\pi\Zbb$, there exists $\wtd K\in\Jtd$ such that $\wtd I,\wtd J\Subset \wtd K$. Let
\begin{align*}
\wtd E_3=\sqrt{\wtd K}\qquad \wtd F_3=-\sqrt{\wtd K}
\end{align*}
and choose $\wtd E_1,\wtd E_2,\wtd F_1,\wtd F_2\in\Jtd$ and $\eps>0$ such that
\begin{gather*}
\wtd E_1\Subset\wtd E_2\Subset \wtd E_3\qquad \wtd F_1\Subset\wtd F_2\Subset \wtd F_3\\
\ovl{\bigcup_{|s|\leq\eps}\alpha^X_s\big(\sqrt{\wtd I}}\big)\subset \wtd E_1\qquad  \ovl{\bigcup_{|s|\leq\eps}\alpha^X_s\big(-\sqrt{\wtd J}}\big)\subset \wtd F_1
\end{gather*}
In particular, we have $\sqrt{\wtd I}\subset\wtd E_1$ and $-\sqrt{\wtd J}\subset\wtd F_1$.

Let $Y$ be a $2\pi$-periodic vector field on $\Rbb$ which equals $X$ on $\wtd E_2\cup\wtd F_2$, and which vanishes outside $(\wtd E_3\cup\wtd F_3)+2\pi\Zbb$. For each $s$, we choose a lift of $\alpha^Y_s$ to $\Gc$, also denoted by $\alpha^Y_s$. 

By shrinking $\eps$, we have
\begin{gather*}
\ovl{\bigcup_{|s|\leq\eps}\alpha^X_s(\wtd E_1)}\subset \wtd E_2\qquad \ovl{\bigcup_{|s|\leq\eps}\alpha^X_s(\wtd F_1)}\subset \wtd F_2
\end{gather*}
We now fix $s$ satisfying $|s|\leq\eps$. Then
\begin{gather*}
h:=(\alpha_s^Y)^{-1}\alpha_s^X
\end{gather*}
fixes pointwise $\wtd E_1\cup\wtd F_1$, and hence satisfies the assumption in Step 1. Therefore, for each $A\in\Bcl(\sqrt O)$, Step 1 implies
\begin{align*}
\Ad_{U_{i\boxtimes\ovl j}(h)}\big(\varpi_{\sqrt O}(A)\big)=\varpi_{(h,h)\sqrt O}(\Ad_{\Ucl(h,h)}(A))
\end{align*}
and $(h,h)\sqrt O=\sqrt O$.

Since $\alpha_s^Y$ satisfies the assumption in Step 2, we have
\begin{align*}
&\Ad_{U_{i\boxtimes\ovl j}(\alpha_s^Y)}\big(\varpi_{(h,h)\sqrt O}(\Ad_{\Ucl(h,h)}(A))\big)=\varpi_{(\alpha_s^X,\alpha_s^X)\sqrt O}\big(\Ad_{\Ucl(\alpha_s^Y,\alpha_s^Y)}\Ad_{\Ucl(h,h)}(A))\big)\\
=&\varpi_{(\alpha_s^X,\alpha_s^X)\sqrt O}\big(\Ad_{\Ucl(\alpha_s^X,\alpha_s^X)}(A))\big)
\end{align*}
where $(\alpha_s^X,\alpha_s^X)\sqrt O\in\ZOpp$. Combining the above two computations together, we obtain \eqref{eq77} for $g=\alpha^X_s$.\\[-1ex]

Step 4. Let $\mbf G$ be the set of all $g\in\Gc$ such that the relation \eqref{eq77} holds for each $O\in\Opp$ satisfying $(g,g)\sqrt O\in\ZOpp$, and for each $A\in\Bcl(\sqrt O)$. The proof of the conformal covariance in Prop. \ref{lb124} will be finished by showing that $\mbf G=\Gc$.

By \cite{Eps70,Her71,Thu74}, $\DiffS$ is a simple group. Thus, since the subsemigroup generated (algebraically) by the exponentials of vector fields of $\Sbb^1$ is a normal subgroup, it is equal to $\DiffS$. It follows that the subsemigroup generated (algebraically) by the exponentials of $2\pi$-periodic vector fields of $\Rbb$ is equal to $\SG$. Therefore, it suffices to show that $\mbf G$ is a subsemigroup of $\Gc$ containing any lift of $\alpha^X_s$ where $X$ is a $2\pi$-periodic vector field of $\Rbb$, and $s\in\Rbb$.

Suppose that $g_1,g_2\in\mbf G$. By uniform continuity, there exists $\delta>0$ such that for each $O\in\Opp$ with side lengths $<\delta$, the double cones $(g_2,g_2)\sqrt O$ and $(g_1g_2,g_1g_2)\sqrt O$ have side lengths $<\pi$ in the $u^+u^-$-coordinates, and hence belong to $\ZOpp$, cf. Rem. \ref{lb79}. Then \eqref{eq77} holds for all $O\in\Opp$ with side lengths $<\delta$, and for $A\in\Bcl(\sqrt O)$ and $g=g_1g_2$. By the additivity of $\Bcl$ (Cor. \ref{lb75}), \eqref{eq77} holds for all $O\in\Opp$ with $(g_1g_2,g_1g_2)\sqrt O\in\ZOpp$, and for $A\in\Bcl(\sqrt O)$ and $g=g_1g_2$. Thus $g_1g_2\in\mbf G$. This proves that $\mbf G$ is a subsemigroup.

Now let $X$ a $2\pi$-periodic vector field of $\Rbb$. For each $s\in\Rbb$, choose an arbitrary lift of $\alpha^X_s$, also denoted by $\alpha^X_s$. (Note that we have $\alpha_{s_1}^X\alpha_{s_2}^X=\alpha_{s_1+s_2}^X$ up to a phase, and any phase does not affect the proof of \eqref{eq77}.) Choose any $R>0$. Let us prove that $\alpha^X_s\in\mbf G$ for each $|s|\leq R$. Choose any $O\in\Opp$. 

In the special case where $(\alpha^X_s,\alpha^X_s)\sqrt O\in\ZOpp$ for each $|s|\leq R$, Step 3 implies that for each $s\in[-R,R]$, there exists $\eps_s>0$ such that \eqref{eq77} holds for $g=\alpha^X_t$ whenever $|t|\leq\eps_s$, and for any $A\in\Bcl((\alpha_s,\alpha_s)\sqrt O)$. Thus \eqref{eq77} holds for $g=\alpha^X_{s_2-s_1}$ and $A\in\Bcl((\alpha_{s_1},\alpha_{s_1})\sqrt O)$ whenever $s_1,s_2\in[s-\eps_s,s+\eps_s]$. It follows from the compactness of $[-R,R]$ (and the Lebesgue number lemma) that \eqref{eq77} holds for $g=\alpha^X_s$ and $A\in\Bcl(\sqrt O)$, where $s\in[-R,R]$.

In the general case, each $p\in O$ is contained in a double cone $O_p\subset O$ such that $(\alpha_s^X,\alpha^X_s)\sqrt{O_p}$ has side lengths $<\pi$ (and hence belongs to $\ZOpp$) for each $s\in[-R,R]$. Therefore, by the previous special case, \eqref{eq77} holds for $g=\alpha^X_s$ whenever $s\in[-R,R]$, and for each $A\in\Bcl(\sqrt{O_p})$. By the additivity of $\Bcl$ (Cor. \ref{lb75}), \eqref{eq77} also holds for all $A\in\Bcl(\sqrt O)$.
\end{proof}

\subsection{Extending the action $\Bcl\big\vert_{\ZOpp}\curvearrowright\Hopij$ to $\Bcl\big\vert_{\Ozp}\curvearrowright\Hopij$}\label{lb154}

Fix $\MH_i,\MH_j\in\RepA$.

\begin{thm}\label{lb149}
The collection $\varpi_{\opij}=(\varpi_{\opij,O})_{O\in\ZOpp}$ defined in Def. \ref{lb120} can be extended uniquely to a collection of normal representations
\begin{align*}
\varpi_{\opij,O}:\Bcl(O)\rightarrow\fk L(\MH_{\opij})\qquad\text{for each }O\in\Ozp
\end{align*}
abbreviated to $\pmb{\varpi_O}$ when no confusion arises, such that the following property holds:
\begin{enumerate}
\item[(1)] (Isotony) If $O_1,O_2\in\Ozp$ and $O_1\subset O_2$, then $\varpi_{O_2}\big|_{\Bcl(O_1)}=\varpi_{O_1}$.
\end{enumerate}
Moreover, the following properties hold for each $O,O_1,O_2\in\Ozp$.
\begin{enumerate}[label=(\arabic*),start=2]
\item (\textbf{Bulk-bulk locality}) Suppose that $O_1,O_2$ are spacelike separated. Then
\begin{align*}
\big[\varpi_{O_1}(\Bcl(O_1)),\varpi_{O_2}(\Bcl(O_2))\big]=0
\end{align*}
\item (Extension property) Let $O$ correspond to $(\wtd I,\wtd J)$. For each $x\in\MA(I)$ and $y\in\MA(J)$, noting that
\begin{align*}
\pi^+_{\sqz\boxtimes\ovl\sqz,I}(x)\in\Bcl(O)\qquad \pi^-_{\sqz\boxtimes\ovl\sqz,J}(y)\in\Bcl(O)
\end{align*}
by the extension property in Thm. \ref{lb72}, we have
\begin{gather*}
\varpi_{O}\big(\pi^+_{\sqz\boxtimes\ovl\sqz,I}(x)\big)=\pi_{i\boxtimes \ovl j,I}(x)\\
\varpi_{O}\big(\pi^-_{\sqz\boxtimes\ovl\sqz,J}(y)\big)=\pi_{i\boxtimes \ovl j,J}(y)
\end{gather*}
\item (Conformal covariance) Suppose that $g\in\Gc$ and $O\in\Ozp$. Then for each $A\in\Bcl(O)$, noting that $\Ucl(g,g)A\Ucl(g,g)^*$ belongs to $\Bcl((g,g)O)$ by the conformal covariance in Thm. \ref{lb72}, we have
\begin{align*}
U_{i\boxtimes\ovl j}(g)\varpi_{O}(A)U_{i\boxtimes\ovl j}(g)^*=\varpi_{(g,g)O}(\Ucl(g,g)A\Ucl(g,g)^*)
\end{align*}
\item (\textbf{Bulk-boundary Haag duality}) Let $\wtd E_+$ and $\wtd E_-$ be the intersections of the spacelike complement of $O$ with $\partial_+\Rop$ and $\partial_-\Rop$ respectively, i.e., for each $\pm\in\{+,-\}$,
\begin{gather*}
\wtd E_\pm=\Int\{p\in\partial_\pm\Rop:p\text{ is spacelike separated from elements of }O\}
\end{gather*}
where the interior is with respect to $\partial_\pm\Rop$. Then $\wtd E_+\in\Jtd_+$ and $\wtd E_-\in\Jtd_-$. Moreover, by viewing $\wtd E_+,\wtd E_-\in\Jtd$ via the equivalence \eqref{eq49}, we have
\begin{align*}
\varpi^L_{\wtd E_+}\big(\Bopi(\wtd E_+)\big)\vee \varpi^R_{\wtd E_-}\big(\Bopj'(\wtd E_-)\big)=\varpi_{O}\big(\Bcl(O)\big)'
\end{align*}
\end{enumerate}
\end{thm}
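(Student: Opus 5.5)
We plan to extend $\varpi$ from $\ZOpp$ to $\Ozp$ by conformal covariance, and then to derive the listed properties from Prop. \ref{lb124} together with the additivity of $\Bcl$. The key geometric input is Rem. \ref{lb79}: every $O\in\Ozp$ corresponds to $(\wtd I,\wtd J)$ with $\wtd J$ clockwise to $\wtd I$ and $\ovl I\cap\ovl J=\emptyset$. Hence there is $g\in\Gc$ with $(g,g)O\in\ZOpp$; for instance, take $g$ shrinking both $I$ and $J$ to arcs of length $<\pi$ while preserving their cyclic order. We therefore define
\begin{align*}
\varpi_O(A):=U_{i\boxtimes\ovl j}(g)^*\,\varpi_{(g,g)O}\big(\Ucl(g,g)A\Ucl(g,g)^*\big)\,U_{i\boxtimes\ovl j}(g)
\end{align*}
Independence of $g$ follows from the conformal covariance in Prop. \ref{lb124}(4): if $h$ is another choice, apply that property to $hg^{-1}$ acting on $(g,g)O\in\ZOpp$. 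For $O\in\ZOpp$ we may take $g=1$, so this extends the given collection.

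Isotony will be handled by a generating argument. Given $O_1\subset O_2$ in $\Ozp$, cover $O_1$ by small double cones $D\subset O_1$ with side lengths $<\delta$. For each such $D$ we find a single $g$ sending both $D$ and a neighbourhood in $O_2$ into $\ZOpp$, which is possible by smallness; isotony for $D$ then reduces to Prop. \ref{lb124}(1). By Cor. \ref{lb75} (transported to $\Ozp$ via $\sqrt{\cdot}$ and Thm. \ref{lb98}), $\Bcl(O_1)$ is generated by these $\Bcl(D)$, and normality gives isotony for $O_1$. The same covering argument yields uniqueness, since $\varpi_O$ is determined on $\bigvee\Bcl(D)$ for small $D$, and every small $D\in\Ozp$ lies in $\ZOpp$ by Rem. \ref{lb79}. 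Conformal covariance for general $g$ is immediate from the definition combined with the independence already shown.

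The extension property follows by transporting Prop. \ref{lb124}(3) through $g$, using Cor. \ref{lb30} for $U_{i\boxtimes\ovl j}(g)$ and $\Ucl(g,g)$. Bulk-bulk locality for spacelike separated $O_1,O_2\in\Ozp$ reduces, after covering both by small cones via Lem. \ref{lb77}-type compactness, to pairs contained in a common element of $\ZOpp$ up to a common $g$; there it follows from Prop. \ref{lb124}(2).

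The main obstacle is bulk-boundary Haag duality. The plan is to reduce it to Prop. \ref{lb124}(5) using $g$. The boundary regions $\wtd E_\pm$ transform covariantly, since $(g,g)$ preserves $\partial_\pm\Rop$ and spacelike separation, so $(g,g)\wtd E_\pm$ are the corresponding regions for $(g,g)O$. Moreover, $\Ad_{U(g)}$ intertwines $\varpi^L_{\wtd E_+}$ and $\varpi^R_{\wtd E_-}$ with those at $g\wtd E_\pm$ by Thm. \ref{lb119}(3). We must also check that $\wtd E_\pm$ are bounded intervals of length $<2\pi$ for general $O\in\Ozp$; this follows as in \eqref{eq79}, since they equal the components of $\Sbb^1\setminus(\ovl I\cup\ovl J)$ with suitable args. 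The delicate point is verifying that the identification \eqref{eq49} of $\Jtd_\pm$ with $\Jtd$ is $\Gc$-equivariant in a way compatible with $U_{i\boxtimes\ovl j}$ acting on both boundary nets. We expect this to follow directly from Def. \ref{lb110} and Rem. \ref{lb111}.
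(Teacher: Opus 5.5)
Your proposal is correct and follows the paper's proof essentially step by step. You define $\varpi_O$ by transport along any $g$ with $(g,g)O\in\ZOpp$, get well-definedness and covariance from Prop.~\ref{lb124}(4), get uniqueness from isotony plus Cor.~\ref{lb75}, and reduce locality and the bulk-boundary Haag duality to Prop.~\ref{lb124} via covariance. Two minor differences: the paper derives the extension property from isotony and additivity of $\MA$ rather than by conformal transport, and the $\Gc$-equivariance of \eqref{eq49} you flag as delicate is already recorded in Def.~\ref{lb113}.

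Your covering argument for isotony is unnecessary, and as phrased with ``a neighbourhood in $O_2$'' it looks circular. Any $g$ with $(g,g)O_2\in\ZOpp$ automatically sends every $O_1\subset O_2$ in $\Ozp$ into $\ZOpp$ by Rem.~\ref{lb79}. Isotony then follows at once from Prop.~\ref{lb124}(1) and the independence of $g$.
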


The bulk-boundary Haag duality implies the \textbf{bulk-boundary locality}: If $\wtd E_+\in\Jtd_+$ and $\wtd E_-\in\Jtd_-$ are spacelike separated from $O$, then both $\varpi^L_{\wtd E_+}\big(\Bopi(\wtd E_+)\big)$ and $\varpi^R_{\wtd E_-}\big(\Bopj'(\wtd E_-)\big)$ commute with $\varpi_O\big(\Bcl(O)\big)$.

\begin{proof}
The uniqueness follows directly from the isotony, and from the additivity of $\Bcl$ (Cor. \ref{lb75}). To prove the existence, for each $O\in\Ozp$, choose any $g\in\Gc$ such that $(g,g)O\in\ZOpp$, and define $\varpi_O$ by
\begin{align*}
\varpi_{O}(A)=U_{i\boxtimes\ovl j}(g)^*\varpi_{(g,g)O}(\Ucl(g,g)A\Ucl(g,g)^*)U_{i\boxtimes\ovl j}(g)
\end{align*}
for each $A\in\Bcl(O)$. The conformal covariance in Prop. \ref{lb124} implies that this is well-defined (i.e. independent of the choice of $g$), and that the conformal covariance in the present theorem holds true. Moreover, if $O\in\ZOpp$, this definition agrees with the one in Def. \ref{lb120} since one can choose $g=1$ in this case.

The isotony in this theorem follows from that in Prop. \ref{lb124} and our construction of $\varpi_O(A)$ in the previous paragraph. The extension property in this theorem follows from that in Prop. \ref{lb124}, together with the additivity of $\MA$. 

Bulk-bulk locality: By the additivity of $\Bcl$ (Cor. \ref{lb75}), it suffices to assume that the closures of $O_1$ and $O_2$ are spacelike separated. Let $(\wtd I_1,\wtd J_1)$ correspond to $O_1$ and $(\wtd I_2,\wtd J_2)$ correspond to $O_2$. By Rem. \ref{lb76} (or Fig. \ref{fig1}) and a possible exchange of the subscripts $1$ and $2$, the following arg-valued intervals are in clockwise order:
\begin{align*}
\wtd I_2,\wtd I_1,\wtd J_1,\wtd J_2
\end{align*}
and $I_2,I_1,J_1,J_2$ have mutually disjoint closures. By the conformal covariance in this theorem, after applying some $g\in\Gc$ to these four arg-valued intervals, we assume that $O_1=\sqrt{D_1}$ and $O_2=\sqrt{D_2}$ where $D_1,D_2\in\Opp$ have spacelike separated closures. Thus, the bulk-bulk locality in this theorem follows from that in Prop. \ref{lb124}.

Bulk-boundary Haag duality: Let $(\wtd I,\wtd J)$ correspond to $O$ such that $\wtd J$ is clockwise to $\wtd I$ (cf. Rem. \ref{lb79}). Choose $\wtd K_+,\wtd K_-\in\Jtd$ such that $K_+,K_-$ are the two components of the interior of $\Sbb^1\setminus(I\cup J)$, and that
\begin{align*}
\wtd I,\wtd K_+, \wtd J,\wtd K_- \qquad\text{are in clockwise order}
\end{align*}
Similar to \eqref{eq79}, one has
\begin{align}
\wtd E_+=\wtd K_+\qquad \wtd E_-=\wtd K_-
\end{align}
This proves the first part of the Haag duality. The second part follows from the Haag duality in Prop. \ref{lb124}, together with the conformal covariance in this theorem and the conformal covariance in Thm. \ref{lb119}.
\end{proof}

\begin{figure}[h]
	\centering
\begin{gather*}
\includegraphics[height=3.5cm]{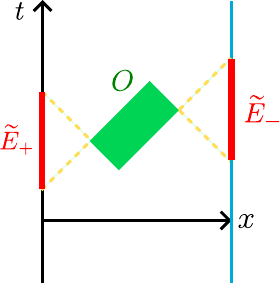} \qquad  \qquad  \qquad \includegraphics[height=3.5cm]{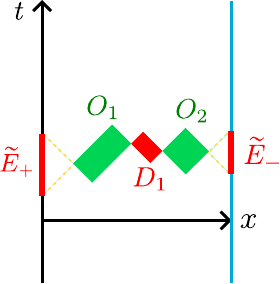}
   \end{gather*}
\caption{.~~Bulk-boundary Haag duality in $\Rop$}\label{fig5}
\end{figure}

\begin{thm}[\textbf{Bulk-boundary Haag duality}]\label{lb153}
Let $n\geq 1$. Let $O_1,\dots,O_n\in\Ozp$ whose closures are mutually spacelike separated, and which are contained in a common double cone $O\in\Ozp$. Let $\Sigma$ be the spacelike complement of ${O_1},\dots,{O_n}$ in $\Rop$. Let $D_1,\dots,D_{n-1}\in\Ozp$ such that ${D_1},\dots,{D_{n-1}}$ are the connected components of $\Sigma$ that are contained in $\Rzp$. Let
\begin{align*}
\wtd E_+=\Sigma\cap \partial_+\Rop\qquad \wtd E_-=\Sigma\cap \partial_-\Rop
\end{align*}
Then $\wtd E_+\in\Jtd_+$ and $\wtd E_-\in\Jtd_-$, and the following two von Neumann algebras on $\Hopij$ are commutants of each other:
\begin{gather*}
\bigvee_{j=1}^n\varpi_{{O_j}}\big(\Bcl({O_j})\big)\\
\varpi^L_{\wtd E_+}\big(\Bopi(\wtd E_+)\big)\vee \varpi^R_{\wtd E_-}\big(\Bopj'(\wtd E_-)\big)\vee\bigvee_{j=1}^{n-1}\varpi_{{D_j}}\big(\Bcl({D_j})\big)
\end{gather*}
\end{thm}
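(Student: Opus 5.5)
The plan is to reduce to the configuration of Prop. \ref{lb124} and Lem. \ref{lb106}, mirroring the proof of Thm. \ref{lb115}. The first step is geometric. Since $O\in\Ozp$ contains $O_1,\dots,O_n$, the conformal covariance in Thm. \ref{lb149}, combined with the conformal covariance in Thm. \ref{lb119} (which moves $\wtd E_\pm$ consistently), lets us apply a suitable $g\in\Gc$ and assume $O=\sqrt{\Delta}$ for some standard $\Delta\in\Opp$ corresponding to $(\wtd\Gamma,\wtd\Gamma)$. Write $O_k=\sqrt{O_k^\flat}$ and $D_j=\sqrt{D_j^\flat}$ with $O_k^\flat,D_j^\flat\subset\Delta$. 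Relabel so that, ordered from right to left inside $O$ (equivalently along the $u^+$ axis), the regions read $O_1,D_1,O_2,\dots,D_{n-1},O_n$. Let $O_k$ correspond to $(\wtd I_k,\wtd J_k)$ and $D_j$ to $(\wtd K_j,\wtd L_j)$. As in the proof of Thm. \ref{lb115}, these arg-valued intervals then come in clockwise order
\begin{gather*}
\wtd I_1,\wtd K_1,\wtd I_2,\dots,\wtd K_{n-1},\wtd I_n,\quad \wtd J_n,\wtd L_{n-1},\dots,\wtd L_1,\wtd J_1,
\end{gather*}
all lying in $\sqrt{\wtd\Gamma}\cup-\sqrt{\wtd\Gamma}$. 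As in \eqref{eq79}, $\wtd E_+$ and $\wtd E_-$ are the two components $K_+,K_-$ of $\Sbb^1\setminus\big(\bigcup_k(\ovl{I_k}\cup\ovl{J_k})\cup\bigcup_j(\ovl{K_j}\cup \ovl{L_j})\big)$ adjacent to the ends. Precisely: $E_+$ lies between $I_n$ and $J_n$, and $E_-$ lies between $J_1$ and $I_1$. This gives the first assertion.

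The second step is to transport everything via $V=V(\xi_0,\eta_0)$ with $\xi_0\in\MH_i(E_+)$ and $\eta_0\in\MH_{\ovl j}(E_-)$ unitary, exactly as in the proof of the Haag duality in Prop. \ref{lb124}. Enlarge each $O_k,D_j$ slightly if needed; by additivity (Cor. \ref{lb75}) and normality this only costs closures, and we may assume the intervals exhaust $\Sbb^1$ up to endpoints. By diagram \eqref{eq61} and Rem. \ref{lb148}, computed with the common $V$ for every subcone of $O$ (isotony, Prop. \ref{lb124}(1)), we get
\begin{gather*}
\varpi_{O_k}(\Bcl(O_k))=\Ad_V\big(\fk A(I_k,J_k)\big),\qquad \varpi_{D_j}(\Bcl(D_j))=\Ad_V\big(\fk A(K_j,L_j)\big).
\end{gather*}
Here $\fk A(I,J)$ is now formed with respect to the squared intervals. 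As in (b) and (c) of that proof, we also get
\begin{gather*}
\varpi^L_{\wtd E_+}(\Bopi(\wtd E_+))=\Ad_V(\MA(E_+)),\qquad \varpi^R_{\wtd E_-}(\Bopj'(\wtd E_-))=\Ad_V(\MA(E_-)).
\end{gather*}
Since $V$ is unitary, the claim reduces to a statement purely about $\MA$ on $\MH_0$. Namely, the commutant of $\bigvee_k\fk A(I_k,J_k)$ should equal $\MA(E_+)\vee\MA(E_-)\vee\bigvee_j\fk A(K_j,L_j)$.

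The third step proves this net-theoretic identity. Choose nested intervals $F_1\Supset E_1\Supset\cdots$ in the pattern of Lem. \ref{lb106}. Set $F_1=\Sbb^1\setminus\ovl{E_-}$, cut successively so that $F_k\setminus\ovl{E_k}=I_k\sqcup J_k$ and $E_k\setminus\ovl{F_{k+1}}=K_k\sqcup L_k$, and end with $E_n=E_+$. Then $\fk A(I_k,J_k)=\MA(F_k)\cap\MA(E_k)'$ and $\fk A(K_j,L_j)=\MA(E_j)\cap\MA(F_{j+1})'$. Lem. \ref{lb106}, with the roles of the two families swapped, gives the commutant inside $\MA(F_1)\cap\MA(E_n)'$. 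Finally, Haag duality of $\MA$ gives $\MA(F_1)'=\MA(E_-)$ and $\MA(E_n)=\MA(E_+)$. We take commutants of the full algebra, using the tensor-product model from the split property: the commutant of the $O$-algebra generated together with the two outer factors is split off, so the full commutant equals the relative commutant joined with $\MA(E_+)\vee\MA(E_-)$.

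The main obstacle is this last bookkeeping. Lem. \ref{lb106} is stated with $F_1\Supset\cdots\Supset E_n$ as a chain of genuine intervals, but here the outer pieces $E_\pm$ bound the chain from both sides. I would check the reduction carefully in the tensor model: the chain gives $\fL(\MH_0)\cong\SB_1'\otimes(\cdots)\otimes\SA_n$-factors, with $\MA(E_-)$ and $\MA(E_+)$ corresponding to $\SB_1'$ and $\SA_n$. I would also verify that the enlargement to closures is harmless, using additivity of $\MA$ and Cor. \ref{lb75}.
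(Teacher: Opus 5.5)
Your proposal is correct and is essentially the paper's argument, just unrolled on $\MH_0$ via $\Ad_{V(\xi_0,\eta_0)}$: the paper shrinks $O$ to the smallest double cone containing $O_1,\dots,O_n$ (which is what your $V$, built from $\wtd E_\pm$, implicitly requires), uses Thm.~\ref{lb149} for $\varpi_O(\Bcl(O))'$, and then uses the faithfulness of $\varpi_O$ together with the relative commutant inside $\Bcl(O)$ coming from Lem.~\ref{lb106}. The step you flag as the main obstacle goes away if you use Lem.~\ref{lb106} in its stated direction, $M\cap Y'=X$ with $M=\MA(F_1)\cap\MA(E_n)'$, $X=\bigvee_k\fk A(I_k,J_k)$, $Y=\bigvee_j\fk A(K_j,L_j)$, since then $X'=(M\cap Y')'=M'\vee Y=\MA(E_-)\vee\MA(E_+)\vee Y$ by Haag duality, with no need for a swapped version or a separate tensor-model check. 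Also, no enlargement of the $O_k$ is needed, since the $D_j$ and $\wtd E_\pm$ are by definition exactly the gaps, so the intervals already tile $\Sbb^1$ up to endpoints.
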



\begin{proof}
The case $n=1$ follows from Thm. \ref{lb149}. Now assume $n>1$. We assume that $O_1,\dots,O_n$ are listed from left to right. By shrinking $O$, we assume that the left corners of $O_1$ and $O$ are the same, and the right corners of $O_n$ and $O$ are the same. Then $\wtd E_\pm$ is the intersection of the spacelike complement of $ O$ (in $\Rop$) with $\partial_\pm\Rop$, which belongs to $\Jtd_\pm$ by Thm. \ref{lb149}. Moreover, by Thm. \ref{lb149}, the commutant of $\varpi_{ O}\big(\Bcl( O)\big)$ is generated by $\varpi^L_{\wtd E_+}\big(\Bopi(\wtd E_+)\big)$ and $\varpi^R_{\wtd E_-}\big(\Bopj'(\wtd E_-)\big)$. Therefore, it suffices to show that
\begin{align*}
\varpi_{ O}\big(\Bcl( O)\big)\cap\Big(\bigvee_{j=1}^{n-1}\varpi_{ O}\big(\Bcl({D_j})\big)\Big)'=\bigvee_{j=1}^n\varpi_{ O}\big(\Bcl({O_j})\big)
\end{align*}
This follows from the faithfulness of the normal representation $\varpi_{ O}$, together with the fact that the relative commutant of $\bigvee_{j=1}^{n-1}\Bcl({D_j})$ in $\Bcl( O)$ is equal to $\bigvee_{j=1}^n\Bcl({O_j})$ (as implied by Thm. \ref{lb115} or by its proof).
\end{proof}

\subsection{Nets of boundary algebroids changing the boundary conditions}\label{lb156}

For each $\MH_i\in\RepA$, in Sec. \ref{lb131} and \ref{lb130}, we have considered the nets $\Bopi,\Bopi'$ of (left and right) boundary operators preserving the boundary condition $i$. From Def. \ref{lb112}, for each $\wtd I\in\Jtd$ we have isomorphisms
\begin{align*}
\Bopi(\wtd I)\simeq\End_{\MA(I')}(\MH_i)\qquad \Bopi'(\wtd I)\simeq\End_{\MA(I')}(\MH_{\ovl i})
\end{align*}

More generally, for each $\MH_i,\MH_k\in\RepA$, we view
\begin{align*}
\wtd I\in\Jtd\mapsto\pmb{\Bop{i\rightarrow k}(\wtd I)}=\Hom_{\MA(I')}(\MH_i,\MH_k)
\end{align*}
as the net of von Neumann algebroids of left boundary operators changing the boundary condition from $i$ to $k$. Similarly, for each $\MH_j,\MH_l\in\RepA$, we view
\begin{align*}
\wtd I\in\Jtd\mapsto\pmb{\Bop{j\rightarrow l}'(\wtd I)}=\Hom_{\MA(I')}(\MH_{\ovl j},\MH_{\ovl l})
\end{align*}
as the net of von Neumann algebroids of right boundary operators changing the boundary condition from $j$ to $l$.

The nets $\Bop{i\rightarrow k}$ and $\Bop{j\rightarrow l}'$ act on boundary state spaces as follows: For each $\MH_b\in\RepA$, define
\begin{gather*}
\varpi_{\wtd I}^L:\Bop{i\rightarrow k}(\wtd I)\rightarrow \fk L(\Hop{i,b},\Hop{k,b})\\
A\mapsto R(\psi,\wtd I')AR(\psi,\wtd I')^*|_{\MH_i\boxtimes\MH_{\ovl b}}
\end{gather*}
where $\psi\in\MH_{\ovl b}(I')$ is chosen to be $I'$-unitary. This definition is independent of the choice of $\psi$. Indeed, if we choose $I$-unitary vectors $\xi_i\in\MH_i(I)$ and $\xi_k\in\MH_k(I)$, then by Prop. \ref{lb34},
\begin{align*}
A=L(\xi_k,\wtd I)xL(\xi_i,\wtd I)^*|_{\MH_i}
\end{align*}
for some $x\in\MA(I)$. Hence, by the locality in Thm. \ref{lb15},
\begin{align*}
\varpi_{\wtd I}^L\big(L(\xi_k,\wtd I)xL(\xi_i,\wtd I)^*|_{\MH_i}\big)\Big|_{\MH_i\boxtimes\MH_{\ovl b}}=L(\xi_k,\wtd I)xL(\xi_i,\wtd I)^*\Big|_{\MH_i\boxtimes\MH_{\ovl b}}
\end{align*}

Similarly, we define
\begin{gather*}
\varpi_{\wtd I}^R:\Bop{j\rightarrow l}'(\wtd I)\rightarrow \fk L(\Hop{b,j},\Hop{b,l})\\
B\mapsto L(\phi,\bpr\wtd I)BL(\phi,\bpr\wtd I)^*|_{\MH_b\boxtimes\MH_{\ovl j}}
\end{gather*}
where $\phi\in\MH_b(I')$ is $I'$-unitary. For each $x\in\MA(I)$ and each $I$-unitary $\eta_{\ovl j}\in\MH_{\ovl j}(I),\eta_{\ovl l}\in\MH_{\ovl l}(I)$, one checks
\begin{align*}
\varpi_{\wtd I}^R\big(R(\eta_{\ovl l},\wtd I)xR(\eta_{\ovl j},\wtd I)^*|_{\MH_{\ovl j}}\big)\Big|_{\MH_b\boxtimes\MH_{\ovl j}}=R(\eta_{\ovl l},\wtd I)xR(\eta_{\ovl j},\wtd I)^*\Big|_{\MH_b\boxtimes\MH_{\ovl j}}
\end{align*}

The following trick allows us to generalize directly many properties in the previous sections of this chapter to $\Bop{i\rightarrow k}$ and $\Bop{j\rightarrow l}'$: By \eqref{eq67}, the fusion product of two isometric morphisms of $\MA$-modules is isometric. Therefore, we may view $\MH_i\boxtimes\MH_{\ovl j}$ and $\MH_k\boxtimes\MH_{\ovl l}$ as $\MA$-submodules of $\MH_{i\oplus k}\boxtimes\MH_{\ovl j\oplus \ovl l}$. By viewing
\begin{align*}
\Bop{i\rightarrow k}(\wtd I)\subset \Bop{i\oplus k}(\wtd I)\qquad \Bop{j\rightarrow l}'(\wtd I)\subset \Bop{j\oplus l}'(\wtd I)
\end{align*}
the representation $\varpi^L_{\wtd I}$ of $\Bop{i\oplus k}(\wtd I)$ restricts to that of $\Bop{i\rightarrow k}(\wtd I)$, and the representation $\varpi^R_{\wtd I}$ of $\Bop{j\oplus l}'(\wtd I)$ restricts to that of $\Bop{j\rightarrow l}'(\wtd I)$. Using this observation, one can for instance generalize the bulk-boundary locality in Thm. \ref{lb149} to $\Bop{i\rightarrow k}$, $\Bop{j\rightarrow l}'$, and the bulk net $\Bcl$.

\begin{appendices}

\section{Consequences of Eq. \eqref{eq68} and conditions (1)--(4) of Thm. \ref{lb15}}\label{lb140}

Let $\MA$ be a conformal net, and let $G\rightarrow\Aut(\MA)$ be a group homomorphism. Throughout this chapter, we do not assume Convention \ref{lb10}, except in Sec. \ref{lb139}.

The goal of this chapter is to show that Eq. \eqref{eq68} together with properties (1)--(4) of Thm. \ref{lb15} already implies the remaining properties in that theorem, as well as the compatibility conditions in Rem. \ref{lb7} and the balancing condition. This provides a more algebraic approach to these results, in contrast to the more geometric proofs in \cite{MS26a,MS26b}, which rely on path continuation.

Neither the remaining properties of Thm. \ref{lb15}, nor the compatibility conditions in Rem. \ref{lb7}, nor the balancing condition is needed for the construction of open/closed CFTs carried out in this paper. The only exception is the conformal covariance property in Thm. \ref{lb15}, which is used in Ch. \ref{lb150} to verify the conformal covariance of the open CFT. Readers interested only in the construction of open/closed CFTs may therefore safely skip this chapter.

\subsection{$G$-covariance}\label{lb24}

We begin the proof of the $G$-covariance by proving the following special case:

\begin{lm}\label{lb23}
Let $\MH_i\in\RepGA$ and $\phi\in G$. Then for each $\wtd I\in\Jtd$ and $\xi\in\MH_i(I)$, the following identities hold as bounded linear operators $\MH_0\rightarrow\MH_{\phi i}$.
\begin{align}\label{eq13}
\Gamma_\phi L(\xi,\wtd I)\big|_{\MH_0}=L(\Gamma_\phi\xi,\wtd I)\phi\big|_{\MH_0}\qquad \Gamma_\phi R(\xi,\wtd I)\big|_{\MH_0}=R(\Gamma_\phi\xi,\wtd I)\phi\big|_{\MH_0}
\end{align}
\end{lm}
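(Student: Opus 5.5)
The plan is to use the characterization of $L(\xi,\wtd I)|_{\MH_0}$ as the unique element of $\Hom_{\MA(\wtd I')}(\MH_0,\MH_i)$ sending $\Omega$ to $\xi$. Uniqueness comes from the Reeh--Schlieder property: an intertwiner of $\MA(I')$ out of $\MH_0$ is determined by its value on the cyclic vector $\Omega$. This is already implicit in the definition of $\MH_i(I)$ and in the state-field correspondence of Thm. \ref{lb15}. Similarly, $R(\xi,\wtd I)|_{\MH_0}$ is the unique element of $\Hom_{\MA(\bpr\wtd I)}(\MH_0,\MH_i)$ sending $\Omega$ to $\xi$. It therefore suffices to show that both sides of each identity in \eqref{eq13} lie in the same intertwiner space and agree on $\Omega$.

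For the first identity, I would take $x\in\MA(I')$ and compute, using \eqref{eq4},
\begin{align*}
\Gamma_\phi L(\xi,\wtd I)\,x|_{\MH_0}=\Gamma_\phi\pi_{i,\wtd I'}(x)L(\xi,\wtd I)|_{\MH_0}=\pi_{\phi i,\wtd I'}(\phi x\phi^{-1})\Gamma_\phi L(\xi,\wtd I)|_{\MH_0}.
\end{align*}
Hence $\Gamma_\phi L(\xi,\wtd I)\phi^{-1}|_{\MH_0}$ intertwines $\phi x\phi^{-1}$ on $\MH_0$ with $\pi_{\phi i,\wtd I'}(\phi x\phi^{-1})$. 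Since $\phi$ preserves $\MA(I')$, this operator lies in $\Hom_{\MA(\wtd I')}(\MH_0,\MH_{\phi i})$. It sends $\Omega$ to $\Gamma_\phi\xi$, because $\phi^{-1}\Omega=\Omega$ and $L(\xi,\wtd I)\Omega=\xi$. This also shows again that $\Gamma_\phi\xi\in\MH_{\phi i}(I)$, consistent with Rem. \ref{lb13}. By uniqueness, $\Gamma_\phi L(\xi,\wtd I)\phi^{-1}|_{\MH_0}=L(\Gamma_\phi\xi,\wtd I)|_{\MH_0}$. Multiplying on the right by $\phi$ gives the first identity.

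The second identity follows by the same argument with $\wtd I'$ replaced by $\bpr\wtd I$, using $R(\xi,\wtd I)|_{\MH_0}\in\Hom_{\MA(\bpr\wtd I)}(\MH_0,\MH_i)$ and $R(\xi,\wtd I)\Omega=\xi$.

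The main point requiring care is the convention issue. We must not identify $\Gamma_\phi|_{\MH_0}$ with $\phi$, since Conv. \ref{lb10} is not assumed here. This is harmless, however, because the statement has $\phi$ acting on $\MH_0$ explicitly on the right-hand side, and only $\Gamma_\phi$ on $\MH_i$ is used. The rest is routine bookkeeping of arg-functions: $\pi_{\phi i,\wtd I'}$ is taken with the same $\wtd I'$, exactly as in \eqref{eq4}.
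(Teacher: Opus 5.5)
Your proposal is correct and follows the same route as the paper. The paper places $\Gamma_\phi L(\xi,\wtd I)\phi^{-1}|_{\MH_0}$ in $\Hom_{\MA(\wtd I')}(\MH_0,\MH_{\phi i})$ by citing Rem.~\ref{lb13}, which is exactly the intertwining computation via \eqref{eq4} that you do inline. It then concludes, as you do, by uniqueness of the intertwiner sending $\Omega$ to $\Gamma_\phi\xi$, and treats the $R$-identity the same way.
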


\begin{proof}
By Rem. \ref{lb13}, the operator $\Gamma_\phi L(\xi,\wtd I)\phi^{-1}|_{\MH_0}$ belongs to $\Hom_{\MA(\wtd I')}(\MH_0,\MH_{\phi i})$, and hence is the unique element in that set sending $\Omega$ to
\begin{align*}
\Gamma_\phi L(\xi,\wtd I)\phi^{-1}\Omega=\Gamma_\phi L(\xi,\wtd I)\Omega=\Gamma_\phi\xi
\end{align*}
So it equals $L(\Gamma_\phi\xi,\wtd I)|_{\MH_0}$. The second identity can be proved in the same way.
\end{proof}

\begin{rem}\label{lb27}
In Lem. \ref{lb23}, the operators $\Gamma_\phi L(\xi,\wtd I)\big|_{\MH_0}$ and $L(\Gamma_\phi\xi,\wtd I)\phi\big|_{\MH_0}$ originally have codomains $\MH_{\phi(i\boxtimes 0)}$ and $\MH_{\phi i}\boxtimes\MH_0$, respectively. To ensure that the first identity in \eqref{eq13} holds, we identify $\MH_{\phi(i\boxtimes 0)}\simeq\MH_{\phi i}\simeq\MH_{\phi i\boxtimes0}$ using the unitors in \eqref{eq12d}. Similarly, to obtain the second identity in \eqref{eq13}, we identify $\MH_{\phi(0\boxtimes i)}\simeq\MH_{\phi i}\simeq\MH_{0\boxtimes\phi i}$ using the unitors in \eqref{eq12c}. This observation will be needed in the proofs of \eqref{eq12c} and \eqref{eq12d} in Subsec. \ref{lb25}.
\end{rem}

\begin{thm}\label{lb28}
For each $\MH_i,\MH_j\in\RepGA$ and $\phi\in G$, there is a (necessarily unique) unitary isomorphism of $G$-twisted $\MA$-modules
\begin{align*}
\mbf D_{\phi,i,j}:\MH_{\phi i}\boxtimes\MH_{\phi j}\rightarrow \MH_{\phi(i\boxtimes j)}
\end{align*}
satisfying Def. \ref{lb16}.
\end{thm}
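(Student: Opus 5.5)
The plan is to build $\mbf D_{\phi,i,j}$ as the unique unitary
\[
\mbf D_{\phi,i,j}:\ L(\Gamma_\phi\xi,\wtd I)\Gamma_\phi\eta\longmapsto\Gamma_\phi L(\xi,\wtd I)\eta,
\]
with $\wtd J$ clockwise to $\wtd I$, $\xi\in\MH_i(I)$ and $\eta\in\MH_j(J)$. The first step is to check that this assignment preserves inner products. Given $\xi_1,\xi_2\in\MH_i(I)$ and $\eta_1,\eta_2\in\MH_j$, the locality argument in the proof of Prop. \ref{lb21} (via Thm. \ref{lb15}) gives
\[
\bk{L(\xi_2,\wtd I)\eta_2|L(\xi_1,\wtd I)\eta_1}=\bk{\eta_2|\pi_{j,\wtd I}(x)\eta_1},\qquad x:=L(\xi_2,\wtd I)^*L(\xi_1,\wtd I)|_{\MH_0}\in\End_{\MA(I')}(\MH_0)=\MA(I).
\]
I would prove this by testing on $\eta=R(\eta',\wtd J)\Omega$ and using density. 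Next, Lem. \ref{lb23} gives $L(\Gamma_\phi\xi_a,\wtd I)|_{\MH_0}=\Gamma_\phi L(\xi_a,\wtd I)\phi^{-1}|_{\MH_0}$. Hence the corresponding element for the primed side is $\phi x\phi^{-1}$. By \eqref{eq4},
\[
\bk{\Gamma_\phi\eta_2|\pi_{\phi j,\wtd I}(\phi x\phi^{-1})\Gamma_\phi\eta_1}=\bk{\eta_2|\pi_{j,\wtd I}(x)\eta_1}.
\]
So the inner products agree, and the map is isometric. The density of fusion products (Thm. \ref{lb15}(3)) then gives a unitary $\mbf D^{\wtd I}$ with dense domain and range.

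Next, show that the map does not depend on $\wtd I$ and that the $R$-formula also holds. For $\wtd J$ clockwise to $\wtd I$, Prop. \ref{lb20} gives $L(\xi,\wtd I)\eta=R(\eta,\wtd J)\xi$ and $L(\Gamma_\phi\xi,\wtd I)\Gamma_\phi\eta=R(\Gamma_\phi\eta,\wtd J)\Gamma_\phi\xi$. So $\mbf D^{\wtd I}$ satisfies the second relation of \eqref{eq11} on vectors with $\xi\in\MH_i(I)$. Now run the same isometry argument with $R$-operators to get a unitary $\mbf D'^{\wtd J}$ defined through $R(\cdot,\wtd J)$. The two unitaries agree on the dense span of $R(\MH_j(J),\wtd J)\MH_i(I)$, so they coincide. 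Hence $\mbf D^{\wtd I}=\mbf D'^{\wtd J}$ whenever $\wtd J$ is clockwise to $\wtd I$. Fixing $\wtd J$ shows that $\mbf D^{\wtd I}$ is the same for all $\wtd I$ anticlockwise to $\wtd J$. By chaining such pairs, and using isotony (Cor. \ref{lb31}) for nested intervals, every $\wtd I_1,\wtd I_2\in\Jtd$ gets connected, and the map is independent of all choices. I expect this connectivity bookkeeping, in particular the winding over $2\pi$ shifts of $\arg$, to be the main obstacle. To handle it I would use Lem. \ref{lb17}-type relations, which show that shifting $\wtd I$ by $\varrho(2\pi)$ is compatible on both sides through $\phi$ and $\Gamma_\phi$.

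Finally, check the intertwining property. Take $y\in\MA(J)$ with $\wtd J$ clockwise to $\wtd I$. Then
\[
\Gamma_\phi L(\xi,\wtd I)\pi_{\wtd J}(y)\eta=\Gamma_\phi\pi_{\wtd J}(y)L(\xi,\wtd I)\eta=\pi_{\wtd J}(\phi y\phi^{-1})\Gamma_\phi L(\xi,\wtd I)\eta.
\]
The corresponding identity holds on the source side by \eqref{eq4} applied to $\Gamma_\phi\eta$. So $\mbf D$ intertwines $\pi_{\wtd J}$ for every such $\wtd J$, and since every $\wtd J$ is clockwise to some $\wtd I$, $\mbf D$ is a morphism of $G$-twisted modules. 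Uniqueness is immediate from the density in Thm. \ref{lb15}(3).
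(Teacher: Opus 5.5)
Your proposal is correct and follows the paper's three-step proof: the isometry computation via Prop.~\ref{lb21}, \ref{lb22} and Lem.~\ref{lb23}, independence of $\wtd I$ by chaining, and the same intertwining computation with $\wtd J$ clockwise to $\wtd I$. The one variation is in the independence step. You compare two intervals through a common clockwise partner $\wtd J$, using an auxiliary $R$-defined unitary and Prop.~\ref{lb20}; the paper instead compares intervals that share a common subinterval $\wtd I_0$ directly via isotony (Cor.~\ref{lb31}), which saves you the second isometry computation.

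Your expected difficulty with $2\pi$-winding does not arise. Chaining in the universal cover already connects $\wtd I$ to $\varrho(2\pi)\wtd I$, so no Lem.~\ref{lb17}-type argument is needed. That lemma would not help anyway, since it only controls restrictions to $\MH_0$, not to $\MH_j$.
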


\begin{proof}
Step 1. It suffices to construct $\mbf D_{\phi,i,j}$ satisfying the first relation of \eqref{eq11}, since the second one will follow immediately from Prop. \ref{lb20}.  

For each $\xi_1,\xi_2\in\MH_i(I)$ and $\eta_1,\eta_2\in\MH_j$, by Prop. \ref{lb21},
\begin{align*}
&\bk{L(\Gamma_\phi\xi_2,\wtd I)\Gamma_\phi\eta_2|L(\Gamma_\phi\xi_1,\wtd I)\Gamma_\phi\eta_1}=\bk{\Gamma_\phi\eta_2|L(\Gamma_\phi\xi_2,\wtd I)^*L(\Gamma_\phi\xi_1,\wtd I)\Gamma_\phi\eta_1}\\
=&\bk{\Gamma_\phi\eta_2|L(L(\Gamma_\phi\xi_2,\wtd I)^*\Gamma_\phi\xi_1,\wtd I)\Gamma_\phi\eta_1}
\end{align*}
Set $x=L(\Gamma_\phi\xi_2,\wtd I)^*L(\Gamma_\phi\xi_1,\wtd I)|_{\MH_0}$, which belongs to $\MA(I)$. Then the above expression equals
\begin{align*}
\bk{\eta_2|\Gamma_\phi^{-1}L(x\Omega,\wtd I)\Gamma_\phi\eta_1}=\bk{\eta_2|\Gamma_\phi^{-1}\pi_{\wtd I}(x)\Gamma_\phi\eta_1}=\bk{\eta_2|\pi_{\wtd I}(\phi^{-1}x\phi)\eta_1}
\end{align*}
where Prop. \ref{lb22} and \eqref{eq4} are used. By Lem. \ref{lb23},
\begin{align*}
\phi^{-1}x\phi=(L(\Gamma_\phi\xi_2,\wtd I)\phi|_{\MH_0})^*(L(\Gamma_\phi\xi_1,\wtd I)\phi|_{\MH_0})=L(\xi_2,\wtd I)^*L(\xi_1,\wtd I)|_{\MH_0}
\end{align*}
Thus
\begin{align*}
\bk{L(\Gamma_\phi\xi_2,\wtd I)\Gamma_\phi\eta_2|L(\Gamma_\phi\xi_1,\wtd I)\Gamma_\phi\eta_1}=\bk{\eta_2|\pi_{\wtd I}(L(\xi_2,\wtd I)^*L(\xi_1,\wtd I)|_{\MH_0})\eta_1}
\end{align*}
A similar (and indeed simpler) computation shows
\begin{align*}
\bk{L(\xi_2,\wtd I)\eta_2|L(\xi_1,\wtd I)\eta_1}=\bk{\eta_2|\pi_{\wtd I}(L(\xi_2,\wtd I)^*L(\xi_1,\wtd I)|_{\MH_0})\eta_1}
\end{align*}
Thus, by the density of fusion products in Thm. \ref{lb15}, there is a unitary map
\begin{gather*}
\mbf D^{\wtd I}_{\phi,i,j}:\MH_{\phi i}\boxtimes\MH_{\phi j}\rightarrow \MH_{\phi(i\boxtimes j)}\qquad L(\Gamma_\phi\xi,\wtd I)\Gamma_\phi\eta\mapsto \Gamma_\phi L(\xi,\wtd I)\eta
\end{gather*}
for each $\xi\in\MH_i(I),\eta\in\MH_j$.\\[-1ex]

Step 2. If $\wtd I_1,\wtd I_2\in\Jtd$ are close in the sense that there exists $\wtd I_0\in\Jtd$ satisfying $\wtd I_0\subset\wtd I_1$ and $\wtd I_0\subset\wtd I_2$, then $\mbf D^{\wtd I_1}_{\phi,i,j}$ and $\mbf D^{\wtd I_2}_{\phi,i,j}$ agree on vectors of the form $L(\MH_{\phi i}(I_0),\wtd I_0)\MH_{\phi j}$, and hence agree on $\MH_{\phi i}\boxtimes\MH_{\phi j}$ by the density of fusion products. If $\wtd I_1$ and $\wtd I_2$ are not close, one can connect them by a chain of arg-valued intervals such that each interval in the chain is close to the preceding one. It follows that $\mbf D^{\wtd I_1}_{\phi,i,j}$ and $\mbf D^{\wtd I_2}_{\phi,i,j}$ also agree in this general case. We have thus proved that $\mbf D^{\wtd I}_{\phi,i,j}$ is independent of the choice of $\wtd I$, and hence can be denoted by $\mbf D_{\phi,i,j}$.\\[-1ex]

Step 3. For each $\wtd J\in\Jtd$ and $y\in\MA(J)$, and for each $\wtd I$ anticlockwise to $\wtd J$ and each $\xi\in\MH_i(I),\eta\in\MH_j$, we compute using \eqref{eq4} that
\begin{align*}
&\mbf D_{\phi,i,j}\pi_{\wtd J}(y)L(\Gamma_\phi\xi,\wtd I)\Gamma_\phi\eta=\mbf D_{\phi,i,j}L(\Gamma_\phi\xi,\wtd I)\pi_{\wtd J}(y)\Gamma_\phi\eta\\
=&\mbf D_{\phi,i,j}L(\Gamma_\phi\xi,\wtd I)\Gamma_\phi\pi_{\wtd J}(\phi^{-1}y\phi)\eta=\Gamma_\phi L(\xi,\wtd I)\pi_{\wtd J}(\phi^{-1}y\phi)\eta\\
=&\Gamma_\phi \pi_{\wtd J}(\phi^{-1}y\phi)L(\xi,\wtd I)\eta=\pi_{\wtd J}(y)\Gamma_\phi L(\xi,\wtd I)\eta=\pi_{\wtd J}(y)\mbf D_{\phi,i,j}L(\Gamma_\phi\xi,\wtd I)\Gamma_\phi\eta
\end{align*}
This proves that $\mbf D_{\phi,i,j}$ is a morphism of $G$-twisted $\MA$-modules.
\end{proof}

\subsection{The compatibility conditions in Rem. \ref{lb7}}\label{lb25}

\subsubsection{Equations \eqref{eq12}}

\begin{proof}[\textbf{Proof of \eqref{eq12a}}]
Choose any $\wtd I\in\Jtd,\xi\in\MH_i(I),\eta\in\MH_j$, we have
\begin{align*}
&(\fk T_\phi(S)\boxtimes\fk T_\phi(T))L(\Gamma_\phi\xi,\wtd I)\Gamma_\phi\eta=(\Gamma_\phi S\Gamma_\phi^{-1}\boxtimes\Gamma_\phi T\Gamma_\phi^{-1})L(\Gamma_\phi\xi,\wtd I)\Gamma_\phi\eta\\
=&L(\Gamma_\phi S\xi,\wtd I)\Gamma_\phi T\eta
\end{align*}
where Prop. \ref{lb26} is used. Thus, by Def. \ref{lb16},
\begin{align*}
\mbf D_{\phi,\wtd i,\wtd j}(\fk T_\phi(S)\boxtimes\fk T_\phi(T))L(\Gamma_\phi\xi,\wtd I)\Gamma_\phi\eta=\Gamma_\phi L(S\xi,\wtd I)T\eta
\end{align*}
Def. \ref{lb16} and Prop. \ref{lb26} also imply
\begin{align*}
&\fk T_\phi(S\boxtimes T)\mbf D_{\phi,i,j}L(\Gamma_\phi\xi,\wtd I)\Gamma_\phi\eta=\Gamma_\phi (S\boxtimes T)\Gamma_\phi^{-1}\cdot\Gamma_\phi L(\xi,\wtd I)\eta\\
=&\Gamma_\phi L(S\xi,\wtd I)T\eta
\end{align*}
This proves \eqref{eq12a}, thanks to the density of fusion products.
\end{proof}

\begin{proof}[\textbf{Proof of \eqref{eq12b}}]
Choose any $\wtd I,\wtd K\in\Jtd$ with $\wtd K$ clockwise to $\wtd I$. Then for each $\xi\in\MH_i(I),\eta\in\MH_j,\chi\in\MH_k(K)$,
\begin{align*}
&\mbf D_{\phi,i,j\boxtimes k}(\idt\boxtimes\mbf D_{\phi,j,k})L(\Gamma_\phi\xi,\wtd I)R(\Gamma_\phi\chi,\wtd K)\Gamma_\phi\eta\\
=&\mbf D_{\phi,i,j\boxtimes k}L(\Gamma_\phi\xi,\wtd I)\mbf D_{\phi,j,k}R(\Gamma_\phi\chi,\wtd K)\Gamma_\phi\eta\\
=&\mbf D_{\phi,i,j\boxtimes k}L(\Gamma_\phi\xi,\wtd I)\Gamma_\phi R(\chi,\wtd K)\eta=\Gamma_\phi L(\xi,\wtd I)R(\chi,\wtd K)\eta
\end{align*}
where the naturality in Thm. \ref{lb15} is used. Similarly,
\begin{align*}
&\mbf D_{\phi,i\boxtimes j,k}(\mbf D_{\phi,i,j}\boxtimes \idt)R(\Gamma_\phi\chi,\wtd K)L(\Gamma_\phi\xi,\wtd I)\Gamma_\phi\eta\\
=&\mbf D_{\phi,i\boxtimes j,k}R(\Gamma_\phi\chi,\wtd K)\mbf D_{\phi,i,j}L(\Gamma_\phi\xi,\wtd I)\Gamma_\phi\eta\\
=&\mbf D_{\phi,i\boxtimes j,k}R(\Gamma_\phi\chi,\wtd K)\Gamma_\phi L(\xi,\wtd I)\eta=\Gamma_\phi R(\chi,\wtd K)L(\xi,\wtd I)\eta
\end{align*}
This finishes the proof, thanks to the locality and the density of fusion products in Thm. \ref{lb15}.
\end{proof}

\begin{proof}[\textbf{Proof of \eqref{eq12c}}]
For each $\wtd I\in\Jtd,\xi\in\MH_i(I),\chi\in\MH_0$, we compute
\begin{align*}
(\mbf B_\phi\boxtimes\idt)R(\Gamma_\phi\xi,\wtd I)\phi\chi=R(\Gamma_\phi\xi,\wtd I)\mbf B_\phi\phi\chi=R(\Gamma_\phi\xi,\wtd I)\Gamma_\phi\chi
\end{align*}
where the naturality in Thm. \ref{lb15} and the definition of $\mbf B_\phi$ are used. By Def. \ref{lb16},
\begin{align*}
\mbf D_{\phi,0,i}(\mbf B_\phi\boxtimes\idt)R(\Gamma_\phi\xi,\wtd I)\phi\chi=\mbf D_{\phi,0,i}R(\Gamma_\phi\xi,\wtd I)\Gamma_\phi\chi=\Gamma_\phi R(\xi,\wtd I)\chi
\end{align*}
Identify $\MH_0\boxtimes\MH_{\phi i}\simeq\MH_{\phi i}\simeq\MH_{\phi(0\boxtimes i)}$ using the unitors. Then, by Lem. \ref{lb23} and Rem. \ref{lb27}, we have
\begin{align*}
\mbf D_{\phi,0,i}(\mbf B_\phi\boxtimes\idt)R(\Gamma_\phi\xi,\wtd I)\phi\chi=R(\Gamma_\phi\xi,\wtd I)\phi\chi
\end{align*}
By the density of fusion products, $\mbf D_{\phi,0,i}(\mbf B_\phi\boxtimes\idt)$ equals the identity.
\end{proof}

\begin{proof}[\textbf{Proof of \eqref{eq12d}}]
This is similar to the proof of \eqref{eq12c}, and hence is omitted.
\end{proof}

\subsubsection{Equations \eqref{eq73}}

\begin{proof}[\textbf{Proof of \eqref{eq73a}}]
Choose any $\wtd I\in\Jtd$ and $\xi\in\MH_i(I),\eta\in\MH_j$. 
By the definition of $\mbf A$ (in Rem. \ref{lb7}) and Prop. \ref{lb26},
\begin{align*}
(\mbf A_i\boxtimes\mbf A_j)L(\xi,\wtd I)\eta=L(\mbf A_i\xi,\wtd I)\mbf A_j\eta=L(\Gamma_e\xi,\wtd I)\Gamma_e\eta
\end{align*}
By Def. \ref{lb16},
\begin{align*}
\mbf D_{e,i,j}(\mbf A_i\boxtimes\mbf A_j)L(\xi,\wtd I)\eta=\Gamma_eL(\xi,\wtd I)\eta
\end{align*}
Since $L(\xi,\wtd I)\eta\in\MH_{i\boxtimes j}$, by the definition of $\mbf A$,
\begin{align*}
\mbf A_{i\boxtimes j}L(\xi,\wtd I)\eta=\Gamma_eL(\xi,\wtd I)\eta
\end{align*}
The proof is thus finished by the density of fusion products in Thm. \ref{lb15}.
\end{proof}

\begin{proof}[\textbf{Proof of \eqref{eq73b}}]
By the definitions of $\mbf A$ and $\mbf B$ in Rem. \ref{lb7},
\begin{align*}
\mbf B_e=\Gamma_e\circ e^{-1}|_{\MH_0}=\Gamma_e|_{\MH_0}=\mbf A_0
\end{align*}
This finishes the proof.
\end{proof}

\subsubsection{Equations \eqref{eq72}}

\begin{proof}[\textbf{Proof of \eqref{eq72a}}]
Choose any $\wtd I\in\Jtd$ and $\xi\in\MH_i(I),\eta\in\MH_j$. Then the vector
\begin{align*}
\chi:=L(\Gamma_\phi\Gamma_\psi\xi,\wtd I)\Gamma_\phi\Gamma_\psi\eta
\end{align*}
belongs to $\MH_{\phi(\psi i)}\boxtimes\MH_{\phi(\psi j)}$, the top left corner of \eqref{eq72a}. By the definition of $\mbf C$ in Rem. \ref{lb7}-(c), we have
\begin{align*}
\mbf C_{\phi,\psi,i}\Gamma_\phi\Gamma_\psi\xi=\Gamma_{\phi\psi}\xi\qquad \mbf C_{\phi,\psi,j}\Gamma_\phi\Gamma_\psi\eta=\Gamma_{\phi\psi}\eta
\end{align*}
Therefore, by Prop. \ref{lb26}, we have
\begin{align*}
(\mbf C_{\phi,\psi,i}\boxtimes\mbf C_{\phi,\psi,j})\chi=L(\Gamma_{\phi\psi}\xi,\wtd I)\Gamma_{\phi\psi}\eta
\end{align*}
By Def. \ref{lb16},
\begin{align*}
\mbf D_{\phi\psi,i,j}(\mbf C_{\phi,\psi,i}\boxtimes\mbf C_{\phi,\psi,j})\chi=\Gamma_{\phi\psi}L(\xi,\wtd I)\eta
\end{align*}

On the other hand, by Def. \ref{lb16},
\begin{align*}
\mbf D_{\phi,\psi i,\psi j}\chi=\Gamma_\phi L(\Gamma_\psi\xi,\wtd I)\Gamma_\psi\eta
\end{align*}
Therefore, since $\fk T_\phi(\mbf D_{\psi,i,j})=\Gamma_\phi\circ\mbf D_{\psi,i,j}\circ\Gamma_\phi^{-1}$, Def. \ref{lb16} implies
\begin{align*}
\fk T_\phi(\mbf D_{\psi,i,j})\mbf D_{\phi,\psi i,\psi j}\chi=\Gamma_\phi\circ\Gamma_\psi L(\xi,\wtd I)\eta
\end{align*}
By the definition of $\mbf C$,
\begin{align*}
\mbf C_{\phi,\psi,i\boxtimes j}\fk T_\phi(\mbf D_{\psi,i,j})\mbf D_{\phi,\psi i,\psi j}\chi=\Gamma_{\phi\psi}L(\xi,\wtd I)\eta
\end{align*}
This finishes the proof, thanks to the density of fusion products in Thm. \ref{lb15}.
\end{proof}

\begin{proof}[\textbf{Proof of \eqref{eq72b}}]
Choose any $\chi\in\MH_0$. By the definition of $\mbf B$ in Rem. \ref{lb7}, we have $\mbf B_\phi \cdot\phi\psi\chi=\Gamma_\phi\cdot \psi\chi$, and hence
\begin{align*}
\fk T_\phi(\mbf B_\psi)\cdot\mbf B_\phi\cdot \phi\psi\chi=\fk T_\phi(\mbf B_\psi)\cdot\Gamma_\phi\cdot\psi\chi=\Gamma_\phi\mbf B_\psi\cdot\psi \chi=\Gamma_\phi\Gamma_\psi\chi
\end{align*}
By the definitions of $\mbf C$ and $\mbf B$,
\begin{align*}
\mbf C_{\phi,\psi,0}\cdot\fk T_\phi(\mbf B_\psi)\cdot\mbf B_\phi\cdot \phi\psi\chi=\mbf C_{\phi,\psi,0}\cdot\Gamma_\phi\Gamma_\psi\chi=\Gamma_{\phi\psi}\chi=\mbf B_{\phi\psi}\cdot\phi\psi\chi
\end{align*}
This finishes the proof.
\end{proof}


\subsection{Braiding}

The braiding in $\RepGA$ is defined to be the unique operation satisfying (5) of Thm. \ref{lb15}. In this subsection, we explain how (5) follows directly from \eqref{eq68} and (1)--(4) of Thm. \ref{lb15}.

\begin{lm}\label{lb18}
Suppose that $\MH_i\in\Rep^\phi(\MA)$. Then for each $\wtd I\in\Jtd$ and $\xi\in\MH_i(I)$, we have
\begin{align*}
L(\xi,\wtd I)\big|_{\MH_0}=R(\xi,\wtd I)\phi\big|_{\MH_0}
\end{align*}
\end{lm}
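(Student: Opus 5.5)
We need to prove $L(\xi,\wtd I)|_{\MH_0}=R(\xi,\wtd I)\phi|_{\MH_0}$. The plan is to show that both sides lie in the same intertwiner space $\Hom_{\MA(\wtd I')}(\MH_0,\MH_i)$ and send $\Omega$ to the same vector $\xi$. The Reeh--Schlieder uniqueness noted after Lem. \ref{lb17} then forces them to be equal.

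By \eqref{eq68}, $L(\xi,\wtd I)|_{\MH_0}$ belongs to $\Hom_{\MA(\wtd I')}(\MH_0,\MH_i)$, and $L(\xi,\wtd I)\Omega=\xi$ by the state-field correspondence. On the other side, $R(\xi,\wtd I)|_{\MH_0}$ belongs to $\Hom_{\MA(\bpr\wtd I)}(\MH_0,\MH_i)$, again by \eqref{eq68}. Since $\MH_i$ is $\phi$-twisted, Lem. \ref{lb17} applies with $T=R(\xi,\wtd I)|_{\MH_0}$ and gives $R(\xi,\wtd I)\phi|_{\MH_0}\in\Hom_{\MA(\wtd I')}(\MH_0,\MH_i)$. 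Moreover, $\phi\Omega=\Omega$ because $\phi$ is an automorphism of $\MA$, so
\begin{align*}
R(\xi,\wtd I)\phi\Omega=R(\xi,\wtd I)\Omega=\xi .
\end{align*}

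It remains to justify uniqueness. Let $T_1,T_2\in\Hom_{\MA(\wtd I')}(\MH_0,\MH_i)$ with $T_1\Omega=T_2\Omega$. For every $x\in\MA(I')$ we get
\begin{align*}
T_1x\Omega=\pi_{i,\wtd I'}(x)T_1\Omega=\pi_{i,\wtd I'}(x)T_2\Omega=T_2x\Omega .
\end{align*}
The Reeh--Schlieder property (Rem. \ref{lb1}) says $\MA(I')\Omega$ is dense in $\MH_0$, so $T_1=T_2$. Applying this with $T_1=L(\xi,\wtd I)|_{\MH_0}$ and $T_2=R(\xi,\wtd I)\phi|_{\MH_0}$ proves the lemma.

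The only delicate point is the bookkeeping of arg functions. One has to check that Lem. \ref{lb17} converts $\bpr\wtd I$-intertwining into $\wtd I'$-intertwining in the right direction. That lemma is stated exactly in the form needed here: $T\in\Hom_{\MA(\bpr\wtd I)}$ implies $T\phi\in\Hom_{\MA(\wtd I')}$. So no real obstacle is expected.
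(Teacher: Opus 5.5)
Your proof is correct and follows the paper's route: the paper calls this lemma a direct reformulation of Lem.~\ref{lb17}, together with the remark right after it that the elements of $\Hom_{\MA(\wtd I')}(\MH_0,\MH_i)$ and $\Hom_{\MA(\bpr\wtd I)}(\MH_0,\MH_i)$ sending $\Omega$ to $\xi$ are unique (by Reeh--Schlieder) and satisfy $S=T\phi$, since $\phi\Omega=\Omega$. Your argument is that reasoning with the uniqueness step written out explicitly.
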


\begin{proof}
This is a direct reformulation of Lem. \ref{lb17}.
\end{proof}

\begin{thm}\label{lb29}
For each $\phi\in G,\MH_i\in\Rep^\phi(\MA),\MH_j\in\RepGA$, there is a unique unitary map (which is also a $G$-twisted $\MA$-module isomorphism) $\mbb B_{i,j}:\MH_i\boxtimes\MH_j\rightarrow\MH_{\phi j}\boxtimes\MH_i$ such that
\begin{align*}
\mbb B_{i,j}L(\xi,\wtd I)\eta=R(\xi,\wtd I)\Gamma_\phi\eta\qquad\text{for each }\xi\in\MH_i(I),\eta\in\MH_j
\end{align*}
\end{thm}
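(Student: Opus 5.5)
We follow the template of the proof of Thm. \ref{lb28}. First fix $\wtd I\in\Jtd$ and construct a unitary $\mbb B^{\wtd I}_{i,j}$ on vectors $L(\xi,\wtd I)\eta$; then show it is independent of $\wtd I$ and is a morphism. Uniqueness is immediate from the density of fusion products in Thm. \ref{lb15}-(3).

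\textbf{Isometry.} For $\xi_1,\xi_2\in\MH_i(I)$ and $\eta_1,\eta_2\in\MH_j$, Prop. \ref{lb21} and Prop. \ref{lb22} give
\begin{align*}
\bk{L(\xi_2,\wtd I)\eta_2|L(\xi_1,\wtd I)\eta_1}=\bk{\eta_2|\pi_{\wtd I}(x)\eta_1},\qquad x=L(\xi_2,\wtd I)^*L(\xi_1,\wtd I)|_{\MH_0}\in\MA(I).
\end{align*}
In the same way,
\begin{align*}
\bk{R(\xi_2,\wtd I)\Gamma_\phi\eta_2|R(\xi_1,\wtd I)\Gamma_\phi\eta_1}=\bk{\Gamma_\phi\eta_2|\pi_{\wtd I}(y)\Gamma_\phi\eta_1},\qquad y=R(\xi_2,\wtd I)^*R(\xi_1,\wtd I)|_{\MH_0}.
\end{align*}
By Lem. \ref{lb18}, $L(\xi_k,\wtd I)|_{\MH_0}=R(\xi_k,\wtd I)\phi|_{\MH_0}$, so $x=\phi^{-1}y\phi$. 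Then \eqref{eq4} gives
\begin{align*}
\Gamma_\phi^{-1}\pi_{\wtd I}(y)\Gamma_\phi=\pi_{\wtd I}(\phi^{-1}y\phi)=\pi_{\wtd I}(x),
\end{align*}
so the two inner products agree. Both families of vectors span dense subspaces: the first by Thm. \ref{lb15}-(3), the second by the same property together with the surjectivity of $\Gamma_\phi$. Hence there is a unitary $\mbb B^{\wtd I}_{i,j}$ with the required formula for this fixed $\wtd I$.

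\textbf{Independence of $\wtd I$.} Argue as in Step 2 of Thm. \ref{lb28}. If $\wtd I_0\subset\wtd I_1,\wtd I_2$, then Isotony (Cor. \ref{lb31}) gives $L(\xi,\wtd I_0)=L(\xi,\wtd I_k)$ and $R(\xi,\wtd I_0)=R(\xi,\wtd I_k)$ for $\xi\in\MH_i(I_0)$. So $\mbb B^{\wtd I_1}$ and $\mbb B^{\wtd I_2}$ agree on the dense set $L(\MH_i(I_0),\wtd I_0)\MH_j$. For arbitrary $\wtd I_1,\wtd I_2$, connect them by a chain of pairwise close arg-valued intervals. I expect this to be the delicate point. $\wtd I$ and $\varrho(2\pi)\wtd I$ are connected by such a chain, yet $L(\xi,\cdot)$ and $R(\xi,\cdot)$ change by the twist (Lem. \ref{lb17}, proof of Prop. \ref{lb33}). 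The chaining argument already accounts for this, because along the chain the formula is proved for each arg-valued interval separately. Consistency therefore holds automatically, and no separate check of the $2\pi$-rotation is needed.

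\textbf{Morphism property.} Take $y\in\MA(J)$ with $\wtd I$ anticlockwise to $\wtd J$. The operator $L(\xi,\wtd I)$ intertwines $\pi_{\wtd J}$, since $\wtd J\subset\wtd I'$ after shrinking. Likewise $R(\xi,\wtd I)$ intertwines $\pi_{\wtd J}$ on the relevant side, namely the $\bpr\wtd I$ side; if needed, replace $\wtd J$ by its $\varrho(\pm2\pi)$-translate and use the twisting relation \eqref{eq8} for $\MH_{\phi j}$. Using $\Gamma_\phi\pi_{\wtd J}(y)=\pi_{\wtd J}(\phi y\phi^{-1})\Gamma_\phi$ from \eqref{eq4}, compute $\mbb B\pi_{\wtd J}(y)L(\xi,\wtd I)\eta$ and check that it equals $\pi_{\wtd J}(y)\mbb B L(\xi,\wtd I)\eta$. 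Matching the $\phi$-conjugation here against the twist of $\MH_{\phi j}$ is the main bookkeeping obstacle. Density then concludes the argument.
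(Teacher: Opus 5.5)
Your proposal is correct. The isometry step and the $\wtd I$-independence step are the paper's own argument. The paper records your two inner-product formulas as \eqref{eq22}, derived exactly as you do, and then matches them using \eqref{eq4} and Lem.~\ref{lb18}. Your remark that the chain of close intervals handles $\varrho(2\pi)$ automatically is also right.

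The only real difference is the morphism step. The paper tests $\mbb B$ against $\pi_{\wtd I}(x)$ with $x\in\MA(I)$ localized in the \emph{same} interval as $\xi$. By Prop.~\ref{lb22} and Prop.~\ref{lb21}(a), $\pi_{\wtd I}(x)L(\xi,\wtd I)=L(\pi_{\wtd I}(x)\xi,\wtd I)$ and $R(\pi_{\wtd I}(x)\xi,\wtd I)=\pi_{\wtd I}(x)R(\xi,\wtd I)$. Hence $\mbb B$ commutes with $\pi_{\wtd I}(x)$ on the dense set $L(\MH_i(I),\wtd I)\MH_j$. Independence of $\wtd I$ then gives this for every arg-valued interval, and no twist ever enters.

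Your route, with $\wtd J$ clockwise to $\wtd I$, also works, but you left the bookkeeping undone. It needs one ingredient you did not name: $\MH_{\phi j}\boxtimes\MH_i$ is $\phi\omega$-twisted, by the theorem following Lem.~\ref{lb19}. Take $\MH_j\in\Rep^\omega(\MA)$; the general case follows by direct sums and naturality. Since $\wtd J\subset\wtd I'$ and $\varrho(2\pi)\wtd J\subset\bpr\wtd I$, one gets
\begin{align*}
\mbb B\,\pi_{\wtd J}(y)L(\xi,\wtd I)\eta&=\mbb B\,L(\xi,\wtd I)\pi_{\wtd J}(y)\eta=R(\xi,\wtd I)\Gamma_\phi\pi_{\wtd J}(y)\eta=R(\xi,\wtd I)\pi_{\wtd J}(\phi y\phi^{-1})\Gamma_\phi\eta\\
&=R(\xi,\wtd I)\pi_{\varrho(2\pi)\wtd J}(\phi\omega y\omega^{-1}\phi^{-1})\Gamma_\phi\eta=\pi_{\varrho(2\pi)\wtd J}(\phi\omega y\omega^{-1}\phi^{-1})R(\xi,\wtd I)\Gamma_\phi\eta\\
&=\pi_{\wtd J}(y)\,\mbb B\,L(\xi,\wtd I)\eta
\end{align*}
The steps use, in order:
\begin{itemize}
\item $L(\xi,\wtd I)$ intertwines $\MA(\wtd I')$;
\item \eqref{eq4};
\item $\MH_{\phi j}$ is $\phi\omega\phi^{-1}$-twisted;
\item $R(\xi,\wtd I)$ intertwines $\MA(\bpr\wtd I)$;
\item $\MH_{\phi j}\boxtimes\MH_i$ is $\phi\omega$-twisted.
\end{itemize}
So the two twists cancel as you expected. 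Your route mirrors Step 3 of Thm.~\ref{lb28}, but because $\mbb B$ turns an $L$ into an $R$, the two commutant sides differ by $\varrho(2\pi)$, and that is what forces the twist relations. The paper's same-interval test avoids this entirely and does not need the fusion-twist theorem.
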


\begin{proof}
The uniqueness is clear from the density of fusion products in Thm. \ref{lb15}. By the computation in the proof of Thm. \ref{lb28}, for each $\xi_1,\xi_2\in\MH_i(I)$ and $\eta_1,\eta_2\in\MH_j$, we have
\begin{subequations}\label{eq22}
\begin{gather}
\bk{L(\xi_2,\wtd I)\eta_2|L(\xi_1,\wtd I)\eta_1}=\bk{\eta_2|\pi_{\wtd I}(L(\xi_2,\wtd I)^*L(\xi_1,\wtd I)|_{\MH_0})\eta_1}\\
\bk{R(\xi_2,\wtd I)\eta_2|R(\xi_1,\wtd I)\eta_1}=\bk{\eta_2|\pi_{\wtd I}(R(\xi_2,\wtd I)^*R(\xi_1,\wtd I)|_{\MH_0})\eta_1}
\end{gather}
\end{subequations}
Therefore, by \eqref{eq4} and Lem. \ref{lb18},
\begin{align*}
&\bk{R(\xi_2,\wtd I)\Gamma_\phi\eta_2|R(\xi_1,\wtd I)\Gamma_\phi\eta_1}=\bk{\eta_2|\Gamma_\phi^{-1}\pi_{\wtd I}(R(\xi_2,\wtd I)^*R(\xi_1,\wtd I)|_{\MH_0})\Gamma_\phi\eta_1}\\
=&\bk{\eta_2|\pi_{\wtd I}(\phi^{-1}R(\xi_2,\wtd I)^*R(\xi_1,\wtd I)\phi|_{\MH_0})\eta_1}=\bk{\eta_2|\pi_{\wtd I}(L(\xi_2,\wtd I)^*L(\xi_1,\wtd I)|_{\MH_0})\eta_1}\\
=&\bk{L(\xi_2,\wtd I)\eta_2|L(\xi_1,\wtd I)\eta_1}
\end{align*}
Therefore, there is a unitary map $\mbb B_{i,j}^{\wtd I}:\MH_i\boxtimes\MH_j\rightarrow\MH_{\phi j}\boxtimes\MH_i$ sending each $L(\xi,\wtd I)\eta$ to $R(\xi,\wtd I)\Gamma_\phi\eta$. As in Step 2 of the proof of Thm. \ref{lb28}, this map is independent of $\wtd I$, and hence can be written as $\Bbb_{i,j}$.

For each $x\in\MA(I)$, by Prop. \ref{lb22} and \ref{lb21} we have
\begin{align*}
&\Bbb_{i,j}\pi_{\wtd I}(x)L(\xi,\wtd I)\eta=\Bbb_{i,j}L(\pi_{\wtd I}(x)\xi,\wtd I)\eta=R(\pi_{\wtd I}(x)\xi,\wtd I)\Gamma_\phi\eta\\
=&\pi_{\wtd I}(x)R(\xi,\wtd I)\Gamma_\phi\eta=\pi_{\wtd I}(x)\Bbb_{i,j}L(\xi,\wtd I)\eta
\end{align*} 
This proves that $\Bbb_{i,j}$ is a $G$-twisted $\MA$-module morphism.
\end{proof}

\subsection{Conformal covariance}

\begin{proof}[\textbf{Proof of Thm. \ref{lb15}-(6)}]
Fix $\wtd I\in\Jtd$. By Lem. \ref{lb14}, it suffices to prove \eqref{eq14} when $\xi\in\MH_i(I)$ and $g\in\GA(J)$, where $\wtd J\in\Jtd$ is such that $\wtd I$ and $\wtd J$ can be covered by some $\wtd K\in\Jtd$. By Rem. \ref{lb13}, we have
\begin{align*}
U(g)L(\xi,\wtd I)U(g)^{-1}|_{\MH_0}\in\Hom_{\MA(g\wtd I')}(\MH_0,\MH_i)
\end{align*}
and hence
\begin{align*}
g\xi g^{-1}:=U(g)L(\xi,\wtd I)U(g)^{-1}\Omega\in\MH_i(gI)
\end{align*}
By Thm. \ref{lb12} and Prop. \ref{lb22}, we have
\begin{gather*}
U(g)=\pi_{\wtd K}(U_0(g))=L(U_0(g)\Omega,\wtd K)
\end{gather*}
when acting on any object of $\RepGA$. Therefore, by Prop. \ref{lb21} and isotony (Cor. \ref{lb31}),
\begin{align*}
&U(g)L(\xi,\wtd I)U(g)^{-1}=L(U_0(g)\Omega,\wtd K)L(\xi,\wtd K)L(U_0(g^{-1})\Omega,\wtd K)\\
=&L(U_0(g)\Omega,\wtd K)L(L(\xi,\wtd K)U_0(g^{-1})\Omega,\wtd K)=L(L(U_0(g)\Omega,\wtd K)L(\xi,\wtd K)U_0(g^{-1})\Omega,\wtd K)\\
=&L(U(g)L(\xi,\wtd K)U(g)^{-1}\Omega,\wtd K)=L(g\xi g^{-1},\wtd K)
\end{align*}
This proves the first
identity in \eqref{eq14}. The second identity follows by a similar argument.
\end{proof}

\subsection{Balancing}\label{lb139}

Assume Convention \ref{lb10}. (In fact, it suffices to assume the identifications $\MH_{\phi(\psi i)}=\MH_{(\phi\psi)i}$ and $\MH_{\phi i\boxtimes\phi j}=\MH_{\phi(i\boxtimes j)}$ via $\mbf C$ and $\mbf D$, respectively. Accordingly, we will use the identity $\Gamma_\phi\Gamma_\omega=\Gamma_{\phi\omega}$ as well as \eqref{eq75}.)

The following proposition generalizes its untwisted counterpart in \cite[Lem. 1.5]{Gui25}.

\begin{pp}\label{lb60}
Let $\phi,\omega\in G$, $\MH_i\in\Rep^\phi(\MA)$, $\MH_j\in\Rep^\omega(\MA)$, $\wtd I\in\Jtd$, $\xi\in\MH_{i}(I)$, and $\eta\in\MH_{j}$. Then, as elements of $\MH_{\phi\omega i}\boxtimes\MH_{\phi\omega j}$, we have
\begin{align*}
\Gamma_{\phi\omega}L(\xi,\varrho(2\pi)\wtd I)\eta=\Bbb_{\phi\omega j,\phi i}\Bbb_{\phi i,\omega j}L(\Gamma_\phi\xi,\wtd I)\Gamma_\omega\eta
\end{align*}
\end{pp}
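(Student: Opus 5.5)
Expand the right-hand side with the braiding formula, Thm.~\ref{lb15}-(5), then reduce the resulting mixed $L$/$R$ expression on $\varrho(2\pi)\wtd I$ to the left-hand side. Since $\Gamma_\phi\xi\in\MH_{\phi i}(I)$ (Rem.~\ref{lb13}) and $\MH_{\phi i}\in\Rep^{\phi\phi\phi^{-1}}(\MA)=\Rep^\phi(\MA)$, the first braiding gives
\begin{align*}
\Bbb_{\phi i,\omega j}L(\Gamma_\phi\xi,\wtd I)\Gamma_\omega\eta=R(\Gamma_\phi\xi,\wtd I)\Gamma_\phi\Gamma_\omega\eta=R(\Gamma_\phi\xi,\wtd I)\Gamma_{\phi\omega}\eta
\end{align*}
which lies in $\MH_{\phi\omega j}\boxtimes\MH_{\phi i}$. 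For the second braiding, $\MH_{\phi\omega j}\in\Rep^{\phi\omega\omega^{-1}\phi^{-1}\cdots}$; concretely it is $(\phi\omega)\omega(\phi\omega)^{-1}$-twisted. Its effect is most easily computed by rewriting $R(\Gamma_\phi\xi,\wtd I)\Gamma_{\phi\omega}\eta$ as an $L$-vector of $\MH_{\phi\omega j}$ localized anticlockwise, then applying (5).

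By density and boundedness, I will assume $\eta\in\MH_j(J)$ with $\wtd J$ anticlockwise to $\wtd I$, i.e.\ $\wtd I$ clockwise to $\wtd J$. Then $\Gamma_{\phi\omega}\eta\in\MH_{\phi\omega j}(J)$, and Prop.~\ref{lb20} gives $R(\Gamma_\phi\xi,\wtd I)\Gamma_{\phi\omega}\eta=L(\Gamma_{\phi\omega}\eta,\wtd J)\Gamma_\phi\xi$. Applying (5) to $\MH_{\phi\omega j}$, whose twist is $\psi:=\phi\omega\omega\omega^{-1}\phi^{-1}=\phi\omega\phi^{-1}$, yields $\Bbb_{\phi\omega j,\phi i}L(\Gamma_{\phi\omega}\eta,\wtd J)\Gamma_\phi\xi=R(\Gamma_{\phi\omega}\eta,\wtd J)\Gamma_\psi\Gamma_\phi\xi=R(\Gamma_{\phi\omega}\eta,\wtd J)\Gamma_{\phi\omega}\xi$. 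Thus the RHS equals $R(\Gamma_{\phi\omega}\eta,\wtd J)\Gamma_{\phi\omega}\xi$. By $G$-covariance \eqref{eq75}, this is $\Gamma_{\phi\omega}R(\eta,\wtd J)\xi$.

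It then remains to show $R(\eta,\wtd J)\xi=L(\xi,\varrho(2\pi)\wtd I)\eta$. This holds by Prop.~\ref{lb20} once $\wtd J$ is clockwise to $\varrho(2\pi)\wtd I$. Since $\wtd J$ is anticlockwise to $\wtd I$, we have $\arg J<\arg I<\arg J+2\pi$, so $\arg I+2\pi>\arg J$ and $\arg J+2\pi>\arg I$. Hence $\wtd J$ is clockwise to $\varrho(2\pi)\wtd I$, and we get
\begin{align*}
R(\eta,\wtd J)\xi=L(\xi,\varrho(2\pi)\wtd I)\eta,
\end{align*}
finishing the proof.

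The main obstacle is the bookkeeping of twists and identifications. I need to check that the twist of $\MH_{\phi\omega j}$ makes the $\Gamma$ produced by the second braiding combine to $\Gamma_{\phi\omega}$ via Convention~\ref{lb10}, and that the codomains $\MH_{\phi\omega i}\boxtimes\MH_{\phi\omega j}$ match under the identification \eqref{eq5b} used in \eqref{eq75}. If the naive twist computation is off, e.g.\ if the $R$-braiding requires $\Gamma_{\phi\omega\phi^{-1}}\Gamma_\phi=\Gamma_{\phi\omega}$ instead, I would recompute $\psi$ directly from Def.~\ref{lb58}, using that $\MH_{\phi k}$ is $\phi\cdot\mathrm{tw}(k)\cdot\phi^{-1}$-twisted.
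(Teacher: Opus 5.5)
Your proof is correct and is essentially the paper's argument. The paper takes $\wtd J=(\varrho(2\pi)\wtd I)'=\bpr\wtd I$ and applies the two braidings with Prop.~\ref{lb20} in between exactly as you do; the only difference is that it converts back to an $L$-operator via Prop.~\ref{lb20} before applying $G$-covariance, whereas you apply $G$-covariance to the $R$-operator first.

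There is one slip in your last paragraph. By the paper's definition, $\wtd J$ anticlockwise to $\wtd I$ means $\arg I<\arg J<\arg I+2\pi$. You wrote the reverse inequality, which would in fact contradict your conclusion. The correct inequality is equivalent to $\arg J<\arg I+2\pi<\arg J+2\pi$, i.e.\ to $\wtd J$ being clockwise to $\varrho(2\pi)\wtd I$, so your conclusion stands.
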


\begin{proof}
Let $\wtd I_1=\varrho(2\pi)\wtd I$. Assume without loss of generality that $\eta\in\MH_j(I')=\MH_j(I_1')$. Then by Prop. \ref{lb20} and the braiding in Thm. \ref{lb15}, and noting $\Gamma_\phi\Gamma_\omega=\Gamma_{\phi\omega}$,
\begin{align*}
&\Bbb_{\phi\omega j,\phi i}\Bbb_{\phi i,\omega j}L(\Gamma_\phi\xi,\wtd I)\Gamma_\omega\eta=\Bbb_{\phi\omega j,\phi i}R(\Gamma_\phi\xi,\wtd I)\Gamma_{\phi\omega}\eta=\Bbb_{\phi\omega j,\phi i}R(\Gamma_\phi\xi,\wtd I_1'')\Gamma_{\phi\omega}\eta\\
=&\Bbb_{\phi\omega j,\phi i}L(\Gamma_{\phi\omega}\eta,\wtd I_1')\Gamma_\phi\xi=R(\Gamma_{\phi\omega}\eta,\wtd I_1')\Gamma_{\phi\omega\phi^{-1}}\Gamma_\phi\xi=R(\Gamma_{\phi\omega}\eta,\wtd I_1')\Gamma_{\phi\omega}\xi\\
=&L(\Gamma_{\phi\omega}\xi,\wtd I_1)\Gamma_{\phi\omega}\eta=\Gamma_{\phi\omega}L(\xi,\varrho(2\pi)\wtd I)\eta
\end{align*}
where the $G$-covariance in Thm. \ref{lb15} is used for the last equality.
\end{proof}

Recall that for each $\MH_i\in\Rep^\phi(\MA)$, the balancing $\vartheta_i\in\Hom_\MA(\MH_i,\MH_{\phi i})$ is defined by $\Gamma_\phi\circ U_i(\varrho_\MA(2\pi))$.

\begin{thm}
For each $\MH_i\in\Rep^\phi(\MA)$ and $\MH_j\in\Rep^\omega(\MA)$, we have 
\begin{gather*}
\fk T_\phi(\vartheta_i)=\vartheta_{\phi i}\qquad \vartheta_{i\boxtimes j}=\Bbb_{\phi\omega j,\phi i}\Bbb_{\phi i,\omega j}(\vartheta_i\boxtimes\vartheta_j)
\end{gather*}
\end{thm}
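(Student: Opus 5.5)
We verify both identities on vectors of the form $L(\xi,\wtd I)\eta$, or simply on $\xi\in\MH_i(I)$, and then extend by density of fusion products (Thm. \ref{lb15}-(3)). Both sides are morphisms of $G$-twisted $\MA$-modules, so agreement on a dense set suffices.

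\textbf{First identity.} By definition, $\vartheta_{\phi i}=\Gamma_{\phi\phi\phi^{-1}}U_{\phi i}(\varrho_\MA(2\pi))=\Gamma_\phi U_{\phi i}(\varrho_\MA(2\pi))$, since $\MH_{\phi i}\in\Rep^{\phi\phi\phi^{-1}}(\MA)=\Rep^\phi(\MA)$. On the other hand,
\begin{align*}
\fk T_\phi(\vartheta_i)=\Gamma_\phi\Gamma_\phi U_i(\varrho_\MA(2\pi))\Gamma_\phi^{-1}.
\end{align*}
By Rem. \ref{lb59}, $\Gamma_\phi U_i(g)=U_{\phi i}(g)\Gamma_\phi$. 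Hence $\fk T_\phi(\vartheta_i)=\Gamma_\phi U_{\phi i}(\varrho_\MA(2\pi))$, which proves the first identity. Here we use Conv. \ref{lb10} to identify $\Gamma_\phi\Gamma_\phi$ with a map into $\MH_{\phi\phi i}$.

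\textbf{Second identity.} Note that $\MH_i\boxtimes\MH_j\in\Rep^{\phi\omega}(\MA)$, so $\vartheta_{i\boxtimes j}=\Gamma_{\phi\omega}U_{i\boxtimes j}(\varrho_\MA(2\pi))$.

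Take $\xi\in\MH_i(I)$ and $\eta\in\MH_j$. Rotation covariance (Rem. \ref{lb32}) gives
\begin{align*}
U_{i\boxtimes j}(\varrho_\MA(2\pi))L(\xi,\wtd I)\eta=L(U_i(\varrho_\MA(2\pi))\xi,\varrho(2\pi)\wtd I)\,U_j(\varrho_\MA(2\pi))\eta.
\end{align*}
Applying $\Gamma_{\phi\omega}$ and Prop. \ref{lb60} (with $\xi$ replaced by $U_i(\varrho_\MA(2\pi))\xi$, which lies in $\MH_i(I)$, and $\eta$ by $U_j(\varrho_\MA(2\pi))\eta$), we obtain
\begin{align*}
\vartheta_{i\boxtimes j}L(\xi,\wtd I)\eta=\Bbb_{\phi\omega j,\phi i}\Bbb_{\phi i,\omega j}\,L\big(\Gamma_\phi U_i(\varrho_\MA(2\pi))\xi,\wtd I\big)\,\Gamma_\omega U_j(\varrho_\MA(2\pi))\eta.
\end{align*}
The arguments here are exactly $\vartheta_i\xi$ and $\vartheta_j\eta$. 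By Prop. \ref{lb26}, the right-hand side therefore equals $\Bbb_{\phi\omega j,\phi i}\Bbb_{\phi i,\omega j}(\vartheta_i\boxtimes\vartheta_j)L(\xi,\wtd I)\eta$. Density then finishes the proof.

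\textbf{Main obstacle.} The delicate points are bookkeeping rather than substance. First, we must check the target spaces: $\MH_{\phi i}\boxtimes\MH_{\omega j}$ must be the correct source of $\Bbb_{\phi i,\omega j}$, and the codomain must be identified with $\MH_{\phi\omega i}\boxtimes\MH_{\phi\omega j}=\MH_{\phi\omega(i\boxtimes j)}$ via $\mbf D$. Second, we must confirm that $U_i(\varrho_\MA(2\pi))\xi\in\MH_i(I)$, that is, that rotation by $2\pi$ preserves $\MH_i(I)$. This follows from Rem. \ref{lb32}, since $\varrho(2\pi)I=I$.
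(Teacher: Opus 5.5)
Your proposal is correct and follows the paper's own argument. You prove the first identity directly from Rem.~\ref{lb59}, and the second by combining Prop.~\ref{lb26}, Prop.~\ref{lb60}, and rotation covariance (Rem.~\ref{lb32}) on the dense set of vectors $L(\xi,\wtd I)\eta$; the only difference is that you run the chain of equalities in the opposite direction (and density plus boundedness alone already suffices for the extension step).
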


\begin{proof}
By Rem. \ref{lb59},
\begin{align*}
\fk T_\phi(\vartheta_i)=\Gamma_\phi\vartheta_i\Gamma_\phi^{-1}|_{\MH_{\phi i}}=\Gamma_\phi\Gamma_\phi U_i(\varrho_\MA(2\pi))\Gamma_\phi^{-1}|_{\MH_{\phi i}}=\Gamma_\phi U_{\phi i}(\varrho_\MA(2\pi))|_{\MH_{\phi i}}=\vartheta_{\phi i}
\end{align*}
For each $\xi,\eta$ as in Prop. \ref{lb60}, by Prop. \ref{lb26},
\begin{align*}
&\Bbb_{\phi\omega j,\phi i}\Bbb_{\phi i,\omega j}(\vartheta_i\boxtimes\vartheta_j)L(\xi,\wtd I)\eta=\Bbb_{\phi\omega j,\phi i}\Bbb_{\phi i,\omega j}L(\vartheta_i\xi,\wtd I)\vartheta_j\eta\\
=&\Bbb_{\phi\omega j,\phi i}\Bbb_{\phi i,\omega j}L(\Gamma_\phi\varrho_\MA(2\pi)\xi,\wtd I)\Gamma_\omega\varrho_\MA(2\pi)\eta\\
=&\Gamma_{\phi\omega}L(\varrho_\MA(2\pi)\xi,\varrho(2\pi)\wtd I)\varrho_\MA(2\pi)\eta=\Gamma_{\phi\omega}\varrho_\MA(2\pi)L(\xi,\wtd I)\eta
\end{align*}
where the conformal covariance in Thm. \ref{lb15} is used in the last equality.
\end{proof}

\end{appendices}

\noindent {\small \sc Yau Mathematical Sciences Center, Tsinghua University, Beijing, China.}

\noindent {\textit{E-mail}}: binguimath@gmail.com\qquad bingui@tsinghua.edu.cn
\end{document}